\documentclass[acmsmall,screen,nonacm]{acmart}
\AtBeginDocument{%
  }

\newtoggle{supplementary}
\toggletrue{supplementary}

\setcopyright{acmlicensed}
\copyrightyear{2018}
\acmYear{2018}
\acmDOI{XXXXXXX.XXXXXXX}

\acmJournal{JACM}
\acmVolume{37}
\acmNumber{4}
\acmArticle{111}
\acmMonth{8}

\usepackage{mathpartir} 
\usepackage{mdframed}
\usepackage{listings}
\usepackage{enumerate}  
\usepackage{mathtools}  
\usepackage{scalerel} 

\usepackage{macros}
\usepackage{preamble}
\begin{document}

\title{Logical Foundations of Two-Sided Type Theory}

\author{Celia Mengyue Li}
\email{celia.li@bristol.ac.uk}
\affiliation{
  \institution{University of Bristol}
  \country{UK}
}

\author{Steven Ramsay}
\email{steven.ramsay@bristol.ac.uk}
\affiliation{%
  \institution{University of Bristol}
  \country{UK}
}


\begin{abstract}
  Two-sided type systems, introduced in POPL'24, are an extension of the traditional notion of type system that allows for stating and deriving typing judgements in which (a) assumptions can be made about the types of arbitrary terms and not only variables, and (b) conclusions can be made about any number of type assignments, and not exactly one.  In this work, we investigate the logical foundations of two-sided type systems in the sense of the propositions-as-types paradigm.  We introduce new two-sided type systems \tintty{} and \tintcty{} that correspond with Wansing's bilateral logic \tint{} and its extension with Nelson's strong negation respectively.  Going beyond the propositional case, we introduce \holtty{} as an extension of Guevers' \holty{}, and we show its expressive adequacy, its consistency and that it satisfies both the existence property and its dual.
\end{abstract}

\maketitle

\section{Introduction}

Two-sided type systems are a generalisation of the usual conception of type system, designed to allow the refutation as well as the verification of type assignments \cite{ramsay-walpole-popl24,li-etal-popl26}.  A two-sided typing judgement generally has shape
$
  M_1:A_1,\,\ldots,\,M_k:A_k \types N_1:B_1,\,\ldots,\,N_m:B_m
$
in which \emph{all} the $M_i$ and $N_j$ are arbitrary terms.  So, in a two-sided judgement: (i) one may make assumptions about the types of a compound expression $M_i:A_i$, not necessarily a variable, and (ii) one may conclude any number of typings $N_j:B_j$, not only a singleton.  The meaning of such a judgement smoothly generalises that of a traditional judgement, namely: if \emph{all} of the $M_i$ each evaluates to a value of type $A_i$ then \emph{some} $N_j$ will evaluate to a value of type $B_j$ (or diverge).  

This generalisation opens up interesting possibilities for the kinds of properties that can be reasoned about using type assignment.  The following example is taken from \cite{ramsay-walpole-popl24}. Consider a typical definition of addition over natural number numerals in a simple PCF-like functional language:
\[
  \abbv{add} \coloneqq \fixtm{f(x,y)}{\ifztm{y}{x}{\succtm\,(f\,(x,\predtm\,y))}}
\]  
In the two-sided system of \emph{loc cit}, as usual one can show that, whenever the inputs $x$ and $y$ are numerals, then their addition will also evaluate to a numeral:
\[
  x:\natty,\,y:\natty \types \abbv{add}\,(x,y) : \natty
\]
However, one may also use the freedom of the two-sided judgement to show that, whenever the addition of $x$ and $y$ evaluates to a numeral, then $x$ and $y$ must have been numerals:
\[
  \abbv{add}\,(x,y) : \natty \types x : \natty \qquad \abbv{add}\,(x,y) : \natty \types y : \natty
\]

When discussing informally the behaviour of this program, it is quite natural to state that $\abbv{add}\,(x,(\abs{y}{y}))$ will \emph{not} evaluate to a numeral, but this is not something that can typically be made precise in a traditional type system.  However, using the freedom to conclude any number of type assignments allows for this statement to be formalised in the two-sided system: 
\[
  \abbv{add}\,(x,(\abs{y}{y})) : \natty \types 
\] 
That is, for any $x$, if $\abbv{add}\,(x,(\abs{y}{y}))$ evaluates to a numeral then absurdity follows.  Thus, a typing derivation rooted at this judgement constitutes a certificate of the defectiveness of the program $\abbv{add}\,(x,(\abs{y}{y}))$.  The subsequent work of \citet{li-etal-popl26} develops a more feature-full system that certifies the defectiveness of textbook examples of Erlang-like programs.  

\begin{figure}
  \begin{mdframed}
  \begin{minipage}[c]{.65\textwidth}
    \begin{mathpar}
      \infer[Id]{ }
      {
        \Gamma,\,x:A \types x:A,\,\Delta
      }
      \\
      \infer[App]{
        \Gamma \types M : B \to A,\,\Delta
        \and
        \Gamma \types N : A,\,\Delta
      }
      {
        \Gamma \types M\,N : A
      }
      \\
      \infer[CoApp]{
        \Gamma,\,M:B \cotofun A \types \Delta
        \and
        \Gamma,\,N:B \types \Delta
      }
      {
        \Gamma,\,M\,N:A \types \Delta
      }
      \\
      \infer[Abs]{
        \Gamma,\,x:B \types M : A,\,\Delta 
      }
      {
        \Gamma \types \abs{x}{M} : B \to A,\,\Delta
      }
      \and
      \infer[CoAbs]{
        \Gamma,\,M:A \types x:B,\,\Delta
      }
      {
        \Gamma,\,\abs{x}{M} : B \cotofun A \types \Delta
      }
    \end{mathpar}
  \end{minipage}
  \hfill
  \raisebox{-0.5\totalheight}{\rule{0.5pt}{5.5cm}} 
  \hfill
  \begin{minipage}[c]{.3\textwidth}
    \begin{mathpar}
      \infer[CompL]{
        \Gamma \types M:A,\, \Delta
      }
      {
        \Gamma,\,M:A^c \types \Delta
      }
      \\
      \infer[CompL']{
        \Gamma,\,M:A^c \types \Delta
      }
      {
        \Gamma \types M:A,\,\Delta
      }
      \\
      \infer[CompR]{
        \Gamma,\,M:A \types \Delta
      }
      {
        \Gamma \types M:A^c,\,\Delta
      }
      \\
      \infer[CompR']{
        \Gamma \types M:A^c,\, \Delta
      }
      {
        \Gamma,\,M:A \types \Delta
      }
    \end{mathpar}
  \end{minipage}
  \end{mdframed}
  \caption{Rules of the two-sided systems for pure $\lambda$-terms from \cite{ramsay-walpole-popl24,li-etal-popl26}.}\label{fig:prev-typing-rules}
\end{figure}

\paragraph{Rules of Two-Sided Systems}
We represent on the left-hand side of Figure~\ref{fig:prev-typing-rules} the simple set of core rules for pure $\lambda$-terms that was identified in \cite{ramsay-walpole-popl24} and refined in \cite{li-etal-popl26}.  On the right-hand side we enumerate the complementation rules that were the principal subject of \cite{li-etal-popl26}\footnote{Actually, only \rlnm{CompL} and \rlnm{CompR} are given explicitly in \cite{li-etal-popl26}.  However, rules \rlnm{CompL'} and \rlnm{CompR'} are admissible via the subtyping equivalence $(A^c)^c \equiv A$.  We add them here explicitly since we will not consider subtyping in this work.}.  The idea is that the rules \rlnm{Id}, \rlnm{App} and \rlnm{Abs} describe how to prove a typing of a specific term shape on the right-hand side of the judgement, as usual, and the rules \rlnm{Id}, \rlnm{CoApp} and \rlnm{CoAbs} describe how to refute a typing of a specific term shape that occurs on the left-hand side of the judgement.  

A key component is the type $B \cotofun A$.  To understand this type, it is instructive to first introduce the type $B \coto A$, which is the type of those functions $f$ that satisfy the following property: for all inputs $x$, if $f\,x$ evaluates to a value of type $A$, then $x$ was necessarily a value of type $B$.  Then the type $B \cotofun A$ is the complement of this type, i.e. it describes those terms that do \emph{not} behave according to $B \coto A$.  Hence, to \emph{refute} that some term $\abs{x}{M}$ behaves in accordance with $B \cotofun A$ is to \emph{prove} that it behaves in accordance with $B \coto A$, that is: to demonstrate that whenever it returns an $A$, then its argument must have been a $B$.  This is exactly the content of the \rlnm{CoAbs} rule.  Conversely, when \emph{refuting} that an arbitrary application $M\,N$ evaluates to a value of type $A$, it suffices to show that $M$ is a function that, to return an $A$ necessitates an argument that is a $B$, and then to \emph{refute} that $N$ is a $B$.  This is exactly the content of the \rlnm{CoApp} rule.


\paragraph{Propositions-as-Types}
Although the natural symmetry of the rules is appealing, and (when extended with appropriate rules for typing constants) one can use them to prove and refute interesting properties of program expressions quite effectively, one is very quickly led to a natural question regarding their logical content in the \emph{propositions-as-types} sense.  Thus, in this work, we aim to answer the question \emph{is there a sense in which two-sided type systems correspond to logics}?

On the face of it, the situation is unclear.  Two-sided systems were introduced as a kind of sequent calculi for \emph{typing formulas} $M:A$ and
it seems incorrect to view them as a description for the assignment of terms to sequent calculi for \emph{propositional formulas} $A$, since, for example, the term $M\,N$ in the conclusion of \rlnm{CoApp} is an elimination form.

If we consider only derivations using rules \rlnm{Id}, \rlnm{App} and \rlnm{Abs} of Figure~\ref{fig:prev-typing-rules}, then there is a clear correspondence with \emph{constructive} proofs in minimal propositional logic.  So, we might expect that derivations using the rules \rlnm{Id}, \rlnm{CoApp} and \rlnm{CoAbs} correspond to some kind of \emph{constructive} refutations in minimal propositional logic.  However, the complementation rules \rlnm{CompL}, \rlnm{CompL'}, \rlnm{CompR}, and \rlnm{CompR'} are left unaccounted for.  Together they allow for a kind of double-negation elimination, since one may use them to prove $\types \abs{x}{x} : (A^c)^c \to A$ for any type $A$.  Double-negation elimination is notoriously \emph{non-constructive}, but the situation here is more nuanced.  Although introducing or eliminating a complement \emph{typing formula} $M:A^c$ behaves like a negation in switching between the sides of the sequent, if we consider only the \emph{propositional formula} $A^c$, it does not behave like the logical negation of $A$, for example, there is no sense in which its inhabitants amount to functions from $A$ to the empty type.  

Of course, just as not every (traditional) type system corresponds to a logic, we would not expect every two-sided type system to correspond to a logic either.  In what follows we will show that, by restricting the kinds of typing rules and judgements that we consider, we can build a two-sided type system in which derivations $\,\types M:A$ correspond to constructive proofs and derivations $M:A \types\,$ correspond to constructive refutations, even in the presence of complementation.  The logical side of the correspondence is Heinrich Wansing's bilateral propositional logic \tint{} \cite{wansing2016falsification}, whose proof system consists of mutually inductive \emph{proof} and \emph{dual proof} (refutation) judgements.  To incorporate the complementation rules, we look to David Nelson's seminal 1949 work on Constructible Falsity \cite{nelson1949constructible}, which introduced the \emph{strong negation}, $\sneg{A}$, as a means to repair the asymmetry between verification and falsification in intuitionistic logic.  A constructive proof of the strong negation of $A$ amounts to a constructive refutation of $A$, and a constructive refutation of the strong negation of $A$ amounts to a constructive proof of $A$.  Thus, the provability of $\sneg{(\sneg{A})} \to A$ does not imply collapse.

Moreover, we show that the underlying ideas extend beyond the propositional case by introducing a two-sided extension of Geuvers' \holty{} \cite{geuvers-types06}, which is a type system that corresponds to a version of Church's Simple Type Theory \cite{church-jsl40}.  We show that, as one might expect of a constructive logic, for any appropriate context $\Gamma$, our system enjoys the existence property:
\begin{description}
  \item[\rlnm{Existence}] If $\Gamma \types M : \SigmaType{a}{K}{A}$ then there are $B$, $N$ such that $\Gamma \types B : K$ and $\Gamma \types N : A[B/a]$.
\end{description}
Moreover, its bilateral nature, mediated by the strong negation, ensures it also satisfies the dual property.  Namely, in any appropriate context $\Gamma$, a refutation of $\PiType{a}{K}{A}$ yields directly a construction $B$ and a refutation of $A[B/a]$.
\begin{description}
  \item[\rlnm{Dual Existence}] If $\Gamma;\, M:\PiType{a}{K}{A} \types\,$ then there are $B$, $N$ such that $\Gamma \types B : K$ and $\Gamma;\,N : A[B/a] \types $
\end{description}
In many cases of interest in computer science, this construction associated with the dual existence property is of inherent interest.  For example, suppose we define reachability on a rooted, directed graph with edge relation $E$ and root $I$ by the usual second order formula:
\[
  \abbv{Reachable}_{I,E} \coloneqq \tyabs{y}{\mathsf{node}}{\PiType{z}{\mathsf{node} \to \starsort}{z\,I \wedge (\PiType{uv}{\mathsf{node}}{z\,u \wedge E\,u\,v \to z\,v}) \to z\,y}}
\]
Then a refutation of this property $\sneg{(\abbv{Reachable}_{I,E}\,J)}$ for some target node $J$ yields, via the Dual Existence property, the construction of a set of nodes that both contains the root and is closed under the edge relation, and which nevertheless excludes $J$: i.e. an inductive invariant.

\paragraph{Contributions.}
In this paper we establish an understanding of two-sided type systems in terms of proof and refutation in systems of constructive logic.
\begin{enumerate}[I.]
  \item We identify the dual notions of constructive proof and refutation (dual proof) in Wansing's bilateral logic \tint{} \cite{wansing2016falsification}, and Nelson's strong negation \cite{nelson1949constructible}, as the logical counterparts of the two-sided typing judgements and the complementation operator, respectively.
  \item We introduce new two-sided type systems \tintty{} and \tintcty{} which extend the core rules of Figure~\ref{fig:prev-typing-rules} to encompass a rich set of logical operators. 
  \begin{enumerate}[(a)]
    \item We show that derivability in our system \tintty{} and provability in Wansing's system \tint{} correspond precisely (Theorem~\ref{thm:wansing-equivalence}).
    \item We show that derivability in our system \tintcty{} and provability in an extension of Wansing's system \tint{} with strong negation correspond precisely (Corollary~\ref{thm:wansing-equivalence-sneg}).
  \end{enumerate}
  \item We introduce a new two-sided system \holtty{} as an extension of Geuvers' \holty{}, which is itself a system corresponding to Church's Simple Type Theory.
  \begin{enumerate}[(a)]
    \item Due to the bilateral nature of types in our system, the usual encoding of logical operators using higher-order quantification is insufficient.  We give a new encoding that accounts for the dual nature of each operator, and show that \holtty{} is therefore, in a certain sense, expressively adequate (Theorem~\ref{thm:hol2-expr-adequacy}).
    \item By introducing a mapping from \holtty{} to \holty{}, we show that our system is strongly normalising (Theorem~\ref{thm:hol2-sn}), consistent (Theorem~\ref{thm:hol2-consistency}), and satisfies both the existence and dual existence properties (Theorem~\ref{thm:hol2-existence}).
  \end{enumerate}
\end{enumerate}

\paragraph{Notation.}
In the rest of this paper, since we only construct two-sided type systems for the proof theory of logics, we will leave behind the notation for types used in \cite{ramsay-walpole-popl24,li-etal-popl26} and seen in Figure~\ref{fig:prev-typing-rules}.  For simplicity, we will instead adopt notation that is closer to that for formulas.  So, we will use $\sneg{A}$ in preference to $A^c$ for type corresponding to the strong negation of $A$, and we will use $A \coto B$ in preference to $A \cotofun B$ for the type corresponding to $A$ coimplies $B$\footnote{It is unfortunate that $A \coto B$, which is semi-standard notation for coimplication, clashes with the so-called `necessity' arrow of \cite{ramsay-walpole-popl24,li-etal-popl26}.  However, since we will not consider that type any further it should not lead to a misunderstanding.}.  

\paragraph{Structure} The rest of the paper is structured as follows.  In Section~\ref{sec:background}, we introduce the ideas of strong negation and bilateral logics.  In Section~\ref{sec:two-lambda-int} we define our two-sided type systems \tintty{} and \tintcty{} and show their correspondence with \tint{} and \tintc{}.  In Section~\ref{sec:2hol} we define our two-sided system \holtty{} and, in Section~\ref{sec:hol2-power} we demonstrate its expressive power.  In Section~\ref{sec:hol2-consistency} we prove consistency and in Section~\ref{sec:hol2-proofs} we prove the existence property and give some examples of its use in practice.  Finally, in Section~\ref{sec:concl-related} we conclude and discuss related work.
\section{Constructible Refutation, Strong Negation and Bilateralism}\label{sec:background}
Intuitionistic logic is centred on proof. A proposition is explained by
the constructions that count as its proofs, and the logical connectives
are read as instructions specifying the form of evidence that must be
supplied. For example, to prove $A \wedge B$ is to give both a proof of $A$
and a proof of $B$; to prove $A \vee B$ is to choose one of the two
disjuncts and give a proof of it; and to prove $A \to B$ is to give a
construction which transforms any proof of $A$ into a proof of $B$.

One important consequence of this constructive meaning of proof is that
intuitionistic logic satisfies the \emph{disjunction property}:
\bgroup
\settowidth{\titleindent}{\textit{(Constructible Refutation)}}
\begin{myprop}
  \item[\textit{(Disjunction Property)}]
  if we have a proof of $A \vee B$, then either we have\\ a proof of $A$,
  or we we have a proof of $B$
\end{myprop}
\egroup
In other words, a proof of a disjunction is not only evidence
that one of the disjuncts holds, but contains enough computational
information to recover which disjunct has been established.

If refutation is to be understood constructively in the same sense, then
one might expect an analogous principle. That said, a refutation of a conjunction
should not only rule out the conjunction as a whole; it should identify
where the conjunction fails. This suggests the following property\footnote{This property is known under the name \emph{Construbtible Falsity}
  in some literature (e.g. \cite{cantor2024dual}). Here, we choose the name
  \emph{Constructible Refutation}, in order to distinguish the property from
  Nelson's notion of constructible falsity in \cite{nelson1949constructible}
  and the subsequent Nelson family logics.}:
\bgroup
\settowidth{\titleindent}{\textit{(Constructible Refutation)}}
\begin{myprop}
  \item[\textit{(Constructible Refutation)}]
  if we have a refutation of $A \wedge B$, then either we have\\ a refutation of $A$,
  or we we have a refutation of $B$
\end{myprop}
\egroup

However, this property is not validated by ordinary intuitionistic logic.
The reason is that intuitionistic logic has no primitive treatment of
refutation. Rather, refuting $A$ is handled only indirectly, as proving its
intuitionistic negation: $\neg A \;:=\; A \to \bot$. The proposed principle
of constructible refutation is then read intuitionistically as the following:
\[
  \text{if } \vdash \neg(A \wedge B),
  \text{ then either } \vdash \neg A
  \text{ or } \vdash \neg B
\]
This statement is not valid, however. To see this, let $A$ be a propositional
variable and take $B$ to be $\neg A$. Intuitionistic logic proves $\vdash \neg(A \wedge \neg A)$,
since a proof of $A \wedge \neg A$ immediately yields a contradiction, but it proves neither $\neg A$ nor $\neg\neg A$. Thus we have a intuitionistic refutation of $A \wedge \neg A$, despite not having a intuitionistic refutation of either conjunct.

This failure isolates the difference between the construction of an intuitionistic refutation and
a genuinely constructive account of refutation. To make the distinction even clearer,
we quote the discussion from \citet{prawitz2007pragmatist} (also highlighted in \citet{wansing2016falsification}), that separates \textit{the standard way of refuting a compound sentence}, from \textit{the obvious rules for the refutation of a sentence}. He describes the standard method as follows (terminology slightly adjusted to align with ours):
\begin{quote}
  The standard way of refuting a compound sentence A is of course to make a series of inferences from the assumption that $A$ holds that finally ends in the assertion (under the assumption $A$) of a contradiction \ldots.
\end{quote}
We see this in the intuitionistic notion of refutation, where $\neg(A \wedge B)$ expresses that there cannot be any proof of $A \wedge B$, but it does not by itself witness an actual failure that lies in $A$ or lies in $B$. Prawitz contrasts this with what he calls \textit{the obvious rules for the refutation of a sentence}:
\begin{quote}
  There are obvious rules for the refutation of a sentence such as inferring a refutation of $(A \wedge B)$ from a refutation of either $A$ or $B$, a refutation of $A \vee B$ from refutations of both $A$ and $B$, a refutation of $A \to B$ from a proof of $A$ and a refutation of $B$ and so on.
\end{quote}
These rules characterize a genuinely constructive account of refutation. Under this approach,
just as proofs are built connective by connective, refutations are also
built by rules determined by the structure of the formula being refuted.
This way, refutation is a primitive kind of evidence in its own right just like proof, in which sense they enjoy a equivalent status of proof in the system.

Conceptually, the idea of a more symmetric perspective on proof and refutation is quite natural. Proof is evidence for truth, while refutation is evidence for falsity; together, they form two sides of the same balance.
\begin{itemize}[leftmargin=.2in]
  \item At the atomic level, both proof and refutation often arise from the same kind of basic act.
  For example, to verify or refute that a lemon is red, we just look at it. Here, both proof and refutation consists of an single visual experiment (with some meta-analysis of the result). In such cases, neither construction proceeds by assuming the opposite claim and deriving absurdity.
  \item Moreover, it is natural that evidence of different kinds can combine to form more complex evidence. This is already visible in Prawitz's obvious rules: to refute $A \to B$, one supplies positive evidence for $A$ together with negative evidence against $B$. Thus, the construction of refutations cannot be described in isolation from the construction of proofs. A constructive account of refutation therefore calls for a framework in which proof and refutation coexist.
\end{itemize}

\paragraph{Towards constructivity for both proof and refutation}
In what follows, we consider two systems that give primitive constructive content to both proof and refutation. The first is Nelson’s family of logics for constructible falsity, which begins with a primitive connective of \emph{strong negation}. This connective functions as an object-language operation relating evidence for truth to evidence for falsity. The second is Wansing’s system \tint{}, which has its roots in a more proof-theoretic standpoint. It treats proof and refutation as primitive, mutually related forms of bilateral judgement, thereby providing a different route to a constructive theory of refutation.

\subsection{Nelson's constructible falsity and strong negation}\label{sec:constructive-falsity}


Nelson extends Kleene's realizability \cite{kleene1945interpretation} from truth to both truth and falsity. He observes
that, from the viewpoint of constructibility, there are two distinct methods to
falsify a formula. Consider, for example, a universal statement $\forall x\,Q(x)$.
One way to falsify it is to give an effective counterexample: a particular witness
$n$ together with evidence that $Q(n)$ is false. Another way is to demonstrate that
$\forall x.Q(x)$ implies an absurdity. Nelson calls the former notion \emph{constructible falsity}.\footnote{This notion, which we will refer to as Nelson-style constructible falsity, is also more commonly called \emph{strong negation} directly nowadays. Although it is
  motivated by a requirement analogous to Constructible Refutation,
  that property alone does not uniquely determine Nelson's connective; other
  notions of negation may also satisfies Constructible Refutation. } To formalize
this idea, he introduces two realizability relations:
\begin{align*}
  a \text{ P-realizes } A
   & \quad\text{means}\quad
  a \text{ encodes constructive evidence for the truth of } A, \\
  a \text{ N-realizes } A
   & \quad\text{means}\quad
  a \text{ encodes constructive evidence for the falsity of } A .
\end{align*}
The clauses for $P$-realizability and $N$-realizability are then defined
simultaneously, by induction on the structure of formulas. In particular,
Nelson introduces a primitive strong negation connective, which we will write $\sneg{}$, to
internalize constructible falsity: the constructible truth of $\sneg{A}$ is identified with the
constructible falsity of $A$. It is interpreted by a switch between $P/N$-realizability:
\[
  \begin{array}{l@{\qquad}l@{\quad}c@{\quad}l}
    (\sneg{P}) & a \text{ P-realizes } \sneg{A}
               & \text{iff}                     & a \text{ N-realizes } A,  \\
    (\sneg{N}) & a \text{ N-realizes } \sneg{A}
               & \text{iff}                     & a \text{ P-realizes } A .
  \end{array}
\]
This switchability is the key for $\sneg{}$ being constructible. To P-realize $\sneg{(\forall x\,Q(x))}$ is, by the clause $(\sneg{P})$, to N-realize $\forall x\,Q(x)$. But Nelson's clause for N-realizing $\forall x\,Q(x)$ requires exactly a witness $n$ together with an N-realizer of $Q(n)$.
Using $(\sneg{P})$ again, such an N-realizer of $Q(n)$ is precisely a P-realizer
of $\sneg{Q(n)}$. Thus, the evidence of $\sneg{(\forall x\,Q(x))}$ contains a witness $n$ together with evidence of $\sneg{Q(n)}$. In this sense, the falsification internalized by $\sneg{}$ is genuinely constructible.

Nelson showed that his logic is conservative over an intuitionistic system of arithmetic, and he set out a Hilbert-style calculus, in which the behaviour of the strong negation is axiomatised through a number of equivalences:

\noindent
\begin{center}
  \begin{minipage}{.45\textwidth}
    \[
      \begin{array}{ccrcl}
        (a) &  & \sneg{(A \wedge B)} & \!\!\equiv\!\! & \sneg{A} \vee \sneg{B}   \\
        (b) &  & \sneg{(A \vee B)}   & \!\!\equiv\!\! & \sneg{A} \wedge \sneg{B} \\
        (c) &  & \sneg{(A \to B)}    & \!\!\equiv\!\! & A \wedge \sneg{B}
      \end{array}
    \]
  \end{minipage}
  \begin{minipage}{.45\textwidth}
    \[
      \begin{array}{ccrcl}
        (d) &  & \sneg{(\sneg{A})}      & \!\!\equiv\!\! & A                    \\
        (e) &  & \sneg{(\exists a.\,A)} & \!\!\equiv\!\! & \forall a.\,\sneg{A} \\
        (f) &  & \sneg{(\forall a.\,A)} & \!\!\equiv\!\! & \exists a.\,\sneg{A}
      \end{array}
    \]
  \end{minipage}
\end{center}





\subsection{Wansing's system \tint{}}

In \cite{wansing2016falsification}, Wansing introduces the system
\tint{}, a bilateral\footnote{Here \emph{bilateral} can be understood as: giving equal import to proof and refutation.  The term originates with Rumfitt \cite{rumfitt-yes-no}.} natural-deduction system designed to combine both
constructive proof and refutation, where refutation is understood as the dual of proof. The
guiding idea is as follows: we start with the ordinary
natural-deduction rules for intuitionistic proof; then read these
rules from the opposite, falsificationist side; and finally add the
mixed rules needed to let proof and refutation interact in a single
calculus.

\paragraph{Two primitive kinds of derivation} The resulting system has two
primitive kinds of derivation. A derivation of the first kind, called proof,
gives constructive support for truth; a derivation of the second kind, called
dual proof, gives constructive support for falsity.
We write:
\begin{itemize}
  \item $\wanp{\Pi,\,A,\,(L;R)}$, to mean that \(\Pi\) is a proof of \(A\) relative to the bilateral basis $(L;R)$;
  \item $\wandp{\Pi,\,A,\,(L;R)}$, to mean that \(\Pi\) is a dual proof, or refutation, of \(A\) relative to $(L;R)$
\end{itemize}
Here \(L\) is a finite set of assumptions, whose formulas
are treated as true, and \(R\) is a finite set of co-assumptions,
whose formulas are treated as false.  In particular, $A$ is considered as a proof of $A$ from the pair  $({A};\varnothing)$ and $\overline{A}$ as a dual proof of $A$ from the pair $(\varnothing;{A})$. Moreover, proofs may contain dual proofs as proper parts and dual proofs may contain proofs as proper parts.

For convenience, a full definition of \tint{}, which encompasses all the usual propositional connectives, is given in 
\iftoggle{supplementary}{%
Appendix~\ref{sec:apx-2int}.
}%
{%
the appendices of the supplementary material.
}%
For each propositional connective there are deduction rules for introducing it into a proof and deduction rules for introducing it into a dual proof.  Similarly, there are deduction rules for eliminating it from a proof, and deduction rules for eliminating it from a dual proof.  We mention here four that foreshadow the correspondence with two-sided type systems:
  \begin{mathpar}
    \prftree{
      \prftree{
        \prftree[noline]{
          ([A], L; R)
        }
        {
          \vdots{}
        }
      }
      {
        B
      }
    }
    {
      A \to B
    }
    \and
    \prftree{
      \prftree{
        \prftree[noline]{
            (L; R)
        }
        {
        \vdots{}
        }
      }
      {
        A
      }
      \and
      \prftree{
        \prftree[noline]{
          (L; R)
        }
        {
          \vdots{}
        }
      }
      {
        A \to B
      }
    }
    {
      B
    }
    \and
    \prftree[d]{
      \prftree[d]{
        \prftree[noline]{
           (L;R,[\![A]\!])
        }
        {
            \vdots{}
        }
      }
      {
        B
      }
    }
    {
      A \coto B
    }
    \and
    \prftree[d]{
      \prftree[d]{
        \prftree[noline]{
           (L';R')
        }
        {
            \vdots{}
        }
      }
      {
        A \coto B
      }
    }
    {
       \prftree[d]{
         \prftree[noline]{
            (L;R)
         }
         {
            \vdots{}
         }
       }
       {
          A
       }
    }
    {
        B
    }
  \end{mathpar}
In this tree format, an inference ending with a formula under a single line represents a proof, and an inference ending with a formula under a double line represents a dual proof.  Assumptions are discharged by writing them in single square brackets, and co-assumptions are discharged by writing them in double square brackets.  The rules are, from left-to-right: implication introduction (proof), implication elimination (proof), coimplication introduction (dual proof), and coimplication elimination (dual proof).  

The two proof rules are quite standard.  The first of the dual proof rules says that to construct a dual proof of $A \coto B$, one can construct a dual proof of $B$ in which one can discharge the co-assumption $A$.  The second of the dual proof rules says that one can construct a dual proof of $B$ from a dual proof of $A \coto B$ and a dual proof of $A$.  Thus, just as implication internalises the notion of deduction for proofs, coimplication internalises the notion of deduction for dual proofs.

We draw the reader's attention to the structural similarity of the above proof and dual proof rules with the typing rules on the left-hand side of Figure~\ref{fig:prev-typing-rules}.  Aside from the absence of terms, the Gentzen-style natural deduction format, and the alternative notation for coimplication; the rules above and the rules \rlnm{Abs}, \rlnm{App}, \rlnm{CoAbs}, and \rlnm{CoApp} clearly correspond.  
Extrapolating from this observation, in the following section we will develop a two-sided type system that corresponds, in a precise sense, to \tint{}.

\section{A Language with Two-Sided Type System}
\label{sec:two-lambda-int}


\begin{figure}
  \input{2intlam-left.tex}
  \caption{$\tintty{}$: Refutation Rules.}\label{fig:2intlam-left}
\end{figure}

We now introduce \tintty{}, a term language equipped with a propositional
two-sided type system corresponding to Wansing's logic \tint{}, and \tintcty{}, its extension with Nelson-style strong negation. 
The design of the language follows the central idea of \tint{}: proofs and
refutations are treated uniformly as forms of evidence. Thus, terms are capable
to express both proof and refutation.
\begin{definition}[Types and Terms of \tintty{}]
  \label{def:types-and-terms-2intty}
  Assume a denumerable set of term variables \(x,y,z\). The \emph{types} of the system, typically $A, B,C$; and terms of the system, typically \(M,N,P\), are generated by the following grammars (usual conventions for bound variables are assumed):
  \[
    \begin{array}{rrcl}
      \rlnm{Types}
       & A,B,C
       & \Coloneqq
       & \falsefm
      \mid \truefm
      \mid A \wedge B
      \mid A \vee B
      \mid A \to B
      \mid B \coto A
      \\[0.4em]

      \rlnm{Terms}
       & M,N,P
       & \Coloneqq
       & x
      \mid \unittm
      \mid \abort(M)
      \mid (M,N)
      \mid \pi_1(M)
      \mid \pi_2(M)
      \\[0.2em]
       &            & \mid
       & \inj_1(M)
      \mid \inj_2(M)
      \mid \casetm{M}{N}{P}
      \\[0.2em]
       &            & \mid
       & \abs{x}{M}
      \mid M\,N
      \mid M\cdot{}N
      \mid \iota_1(M)
      \mid \iota_2(M)
    \end{array}
  \]
\end{definition}

The types of \tintty{} are the formulae of \tint{}.  The types $\falsefm$ and $\truefm$ are types for truth and falsity. $A \wedge B$, $A \vee B$ and $A \to B$ are the standard sum, product and arrow types that relates to the connectives $\wedge$, $\vee$ and $\to$ in intuitionistic logic. In addition, we have the type $A \coto B$, pronounced $A$ `coto' $B$. This corresponds to the coimplication connective $\coto$ in \tint{}\footnote{\citeauthor{wansing2016falsification} denoted this as $A \wcoto B$, that is, a 180 left-right flip of $A \coto B$. We use the latter instead, to align better with \cite{li-etal-popl26}.}. As the name suggests, this is supposed to be the dual of the implication connective $\to$. 

The terms are the evidence objects manipulated by the type system.  A variable
\(x\) names an available piece of evidence.  The constant \(\unittm\) is
nullary evidence, while \(\abort(M)\) is the eliminator associated with
impossible evidence.  
The construction \(M\cdot{}N\)
packages a pair of opposite evidence $M$ and $N$. Correspondingly,
\(\iota_1\) and \(\iota_2\) are eliminators for mixed evidence $M\cdot N$, where
\(\iota_1\) extracts the first component $M$, and \(\iota_2\) extracts the second
component $N$.
The other forms are quite standard.

\begin{definition}[Type contexts of \tintty{}]
  \label{def:type-contexts}
  A \emph{type context} is a finite set of typing formulae.  Type contexts
  are usually denoted by \(\Gamma,\,\Delta,\,\Sigma,\,\Pi\), and written comma-separated: \(\Gamma = M_1:A_1,\dots,M_n:A_n\).
  Since contexts are finite sets, the order of the displayed formulae is irrelevant,
  and repeated occurrences of the same typing formula are ignored.  Thus,
  whenever a context is displayed as above, we assume that the typing formulae
  \(M_1:A_1,\dots,M_n:A_n\) are pairwise distinct.
\end{definition}

\begin{definition}[Type assignment for \tintty{}]
  \label{def:type-contexts-judgements}
  A \emph{typing judgement} is a pair of type contexts, written $\Gamma \types \Delta$.  
  A typing judgement is said to be \emph{provable}, or \emph{derivable} in $\tintty$, if 
  it can be obtained from the rules in \cref{fig:2intlam-left,fig:2intlam-right} and the identity rule:
  \[
        \infer{ }{\Gamma,\,x:A \types x:A,\,\Delta}\;\textsc{Id}
    \]
  Excepting this one, the conclusion of each rule distinguishes a single
typing, which is called the \emph{principal formula} of the rule.
\end{definition}


\begin{figure}
  \input{2intlam-right.tex}
  \caption{$\tintty{}$: Proof Rules.}\label{fig:2intlam-right}
\end{figure}

\begin{figure}
  \input{negation-rules.tex}
  \caption{$\tintcty{}$: Negation Rules.}\label{fig:negation-rules}
\end{figure}



The rules in \cref{fig:2intlam-left} are the rules used for constructing the refutation of a formula.  Rule \rlnm{0L} says that falsity $0$ requires a token refutation $\abra{}$.  Rule \rlnm{1L-Elim} says that $\abort(M)$ can be used as a refutation of any formula $A$ whenever one has $M$, a refutation of truth.  Rule \rlnm{$\vee$L} explains that a refutation of $A \vee B$ consists of a pair, consisting of a refutation of $A$ and a refutation of $B$.  Rule \rlnm{PiL} explains how to recover those refutations separately.  The rules for conjunction are dual.  Rule \rlnm{$\to$L} explains that a refutation of $A \to B$ consists of a proof of $A$ and a refutation of $B$, which are put together in a package $M \cdot{} N$, and \rlnm{Iota2L} explains how one may obtain the refutation $N$ in the second component using $\iota_2$ (the complementary rule \rlnm{Iota1R} explains how to obtain the proof component, but this is in Figure~\ref{fig:2intlam-right} because it constructs a proof).  Finally, the rule \rlnm{$\coto$L} constructs a refutation of a coimplication.  A \emph{refutation} of $A \coto B$ expresses `if $A$ is refutable then so is $B$' and so the evidence consists of a function that assigns, to every refutation $y$ of $A$, a refutation $M$ of $B$.  Such a function can be applied to a particular refutation $N$ of $A$ using the \rlnm{AppL} rule.  

The rules in \cref{fig:2intlam-right} are largely the usual rules for proving type assignments, excepting that now there is an additional context $\Delta$ on the right-hand side which accounts for the fact that there may be further typing formulas on the right as well as on the left.  The rule \rlnm{$\coto$R} assigns $M \cdot{} N$ the type $A \coto B$: given the meaning of a refutation of $A \coto B$ above, a \emph{proof} should demonstrate that there is a refutation of $A$ and yet a proof of $B$.  These opposing forms of evidence are packaged up in the dot construction, and \rlnm{Iota2R} explains how one can extract the proof using $\iota_2$.  




\begin{definition}[\tintcty{}]
  The system \tintcty{} extends \tintty{} with strong negation.
  The terms of \tintcty{} are the same as those of \tintty{}, while its types are obtained by adding the type former $\sneg A$.  Type contexts and typing judgements are the same as those of \tintty{}, excepting that types occurring may now contain the strong negation.  
  A typing judgement is said to be \emph{provable}, or \emph{derivable} in $\tintcty$, if
  it can be obtained from the rules of \tintty{} extended with the rules in \cref{fig:negation-rules}.
\end{definition}

The rules in \cref{fig:negation-rules} describe the strong negation.
As in \cite{wansing93}, there is no term former for \(\sneg\).  Instead, the same term is re-used as the
evidence is transported across the turnstile.  The rule \rlnm{$\sneg_I$ R}, for
example, says that a refutation \(M\) of \(A\) also constitutes a proof of \(\sneg A\), while 
\rlnm{$\sneg_I$ L} says that a proof of \(M\) of \(A\) also constitutes a refutation of \(\sneg A\).
The elimination rules express the converse movements.  In this sense, strong
negation internalises the switch between proof and refutation.

\begin{example}
All of the equivalences (a)--(d) of Section~\ref{sec:constructive-falsity} are provable.  For example, the forward direction of (c) can be derived as follows.  Let $\Gamma$ abbreviate $x:\sneg{(A \to B)}$ in:
\begin{mathpar}
  \infer*[left=$\to$R]{
    \infer*[left=$\wedge$R]{
       \infer*[left=Iota1R]{
         \infer*[left=Id]{ }{\Gamma \types x : \sneg{(A \to B)}}
       }
       {
         \Gamma \types \iota_1(x) : A
       }
       \and
       \infer*[left=$\sneg{}_I$R]{
         \infer*[left=Iota2L]{
            \infer*[left=Id]{ }{\Gamma \types x : \sneg{(A \to B)}}
         }
         {
            \Gamma,\,\iota_2(x) : B \types
         }
       }
       {
         \Gamma \types \iota_2(x) : \sneg{B}
       }
    }
    {
      \Gamma \types (\iota_1(x),\,\iota_2(x)) : A \wedge \sneg{B}
    }
  }
  {
    \types \abs{x}{(\iota_1(x),\,\iota_2(x))} : \sneg{(A \to B)} \to A \wedge \sneg{B}
  }
\end{mathpar}
\end{example}

\begin{example}
  As explained in \cite{li-etal-popl26}, the coimplication type allows functions to be curried on the left in an analogous way to implication on the right.  
   Let us abbreviate $\Delta \coloneqq x\ty:a,f\ty:a \coto a$ in:
\begin{mathpar}
    \infer*[left=$\coto$L]{
      \infer*[Left=$\coto$L]{
        \infer*[Left=AppL]{
          \infer*[Left=Id]{ }{
            f : a \coto a \types \Delta
          }
          \and
          \infer*[left=AppL]{
            \infer*[Left=Id]{ }{f:a \coto a \types \Delta}
            \and
            \infer*[Left=Id]{ }{x:a \types \Delta}
          }
          {
            f\,x : a \types \Delta
          }
        }
        {
          f\,(f\,x) : a \types \Delta
        }
      }
      {
        \abs{x}{f\,(f\,x)} : a \coto a \types f : a \coto a
      }
    }
    {
      \abs{fx}{f\,(f\,x)} : (a \coto a) \coto a \coto a \types
    }
\end{mathpar}
\end{example}

\begin{example}
  Define the co-negation $-A \coloneqq A \coto 1$, which is dual to intuitionistic negation.  Then we can prove the dual de Morgan law $\cneg{(A \vee B)} \to (\cneg{A}\wedge\cneg{B})$. For brevity, let $\Gamma \coloneqq x:\cneg{(A\vee B)}$ in:

\begin{mathpar}
  \infer*[Left=$\to$R]{
    \infer*[Left=$\wedge$R]{
        \infer*[Left=$\coto$R]{
            \infer*[Left=PiL$_1$]{
                \infer*[Left=Iota1L]{
                    \infer*[Left=Id]{ }{
                        \Gamma
                        \types
                        x:\cneg{(A \vee B)}
                    }
                }{
                    \Gamma,\,
                    \iota_1(x):A\vee B
                    \types
                }
            }{
                \Gamma,\,
                \pi_1(\iota_1(x)):A
                \types
            }
            \and
            \infer*[Left=$\truefm$R]{ }{
                \Gamma
                \types
                \unittm:\truefm
            }
        }{
            \Gamma
            \types
            \pi_1(\iota_1(x))\cdot{\unittm}:\cneg{A}
        }
        \and
        \infer*[Left=$\coto$R]{
            \infer*[Left=PiL$_2$]{
                \infer*[Left=Iota1L]{
                    \infer*[Left=Id]{ }{
                        \Gamma
                        \types
                        x:\cneg{(A \vee B)}
                    }
                }{
                    \Gamma,\,
                    \iota_1(x):A\vee B
                    \types
                }
            }{
                \Gamma,\,
                \pi_2(\iota_1(x)):B
                \types
            }
            \and
            \infer*[Left=$\truefm$R]{ }{
                \Gamma
                \types
                \unittm:\truefm
            }
        }{
            \Gamma
            \types
            \pi_2(\iota_1(x))\cdot{\unittm}:\cneg{B}
        }
    }{
        \Gamma
        \types
        \left(
        \pi_1(\iota_1(x))\cdot{\unittm},
        \pi_2(\iota_1(x))\cdot{\unittm}
        \right)
        :
        \cneg{A}\wedge\cneg{B}
    }
  }
  {
    \types
    \abs{x}
    {\left(
      \pi_1(\iota_1(x)):\cneg{A}\cdot{\unittm},\,
      \pi_2(\iota_1(x)):\cneg{B}\cdot{\unittm}
      \right)
      :
      \cneg{(A \vee B)} \to (\cneg{A}\wedge\cneg{B})}
  }
\end{mathpar}
\end{example}

\subsection{Correspondence with Wansing's \tint{}}
To obtain a correspondence with \tint{}, it seems sensible to restrict our attention to judgements with only a single occurrence of a non-variable term.  This term can then be understood as \emph{the} proof term, or \emph{the} refutation term, depending on which side of the turnstile it occurs.


\begin{definition}[Variable environments]
  \label{def:typing-formula-variable-typing}
  \label{def:variable-environments}
  A typing formula \(M:A\) is called a \emph{variable typing} just if \(M\) is a variable.
  A \emph{variable environment} is a type context consisting only of variable
  typings, with at most one type assigned to each variable.  Equivalently, it
  is a context of the form $x_1:A_1,\dots,x_n:A_n$ where \(x_1,\dots,x_n\) are pairwise distinct.
  Thus a variable environment may be viewed as a finite partial map from
  variables to types.  We write \(\fv(\Gamma)\) for its domain.  
\end{definition}

\begin{definition}[Disjoint variable environments]
  \label{def:disjoint-variable-environments}
  Two variable environments \(\Gamma\) and \(\Delta\) are called
  \emph{disjoint} if their domains are disjoint, i.e. $\fv(\Gamma)\cap\fv(\Delta)=\varnothing$.
\end{definition}

\begin{definition}[Single-subject judgement]
  \label{def:single-subject-presentations}
  A \emph{single-subject judgement} is a typing judgement that can be written: $\Gamma \types M:A, \Delta$ or $\Gamma, M:A \types \Delta$ with $\Gamma$ and $\Delta$ being disjoint variable contexts.  The
variable typings in \(\Gamma\) are called \emph{assumptions}, and the variable
typings in \(\Delta\) are called \emph{co-assumptions}.
\end{definition}

When $\Gamma \types M : A,\,\Delta$ and $\Gamma$ and $\Delta$ are disjoint variable environments, we can view M as a proof of A.  When $\Gamma,\, M:A \types \Delta$ and $\Gamma$ and $\Delta$ are disjoint variable environments, we can view M as a refutation of A.  Note, in the case when $M$ above is a variable, it can be understood either as a proof of $A$ under the assumption $A$ or as a refutation of $A$ under the co-assumption $A$.

\begin{definition}[Erasure]\label{def:proof-term-erasure}
  We write $\erasure{M:A} = A$ for the erasure of a proof term, (including when $M$ is a variable). We extend this notation to environments by erasing proof terms pointwise and identifying repeated formulas, so that $\erasure{\Gamma}$ and $\erasure{\Delta}$ are finite sets of formulas.
\end{definition}

Given a derivation of a single-subject judgement in \tintcty{}, we can erase the proof and refutation terms to obtain a proof or dual proof in \tintc{}.

\begin{lemma}[From $\tintcty{}$ to $\tintc{}$]
  \label{lem:typing-to-2int}
  The following hold simultaneously:
  \begin{enumerate}[(i)]
    \item If $\Gamma \types \mf{M:A},\,\Delta$ in \tintcty{} and $\Gamma$ and $\Delta$ are disjoint variable contexts, then there exists some $\Pi$, such that
          $\wanp{\Pi,\,A,\,(\erasure{\Gamma},\,\erasure{\Delta})}$ in \tintc{}.
    \item If $\Gamma,\,\mf{M:A} \types \Delta$ in \tintcty{} and $\Gamma$ and $\Delta$ are disjoint variable contexts, then there exists some $\Pi$, such that
          $\wandp{\Pi,\,A,\,(\erasure{\Gamma},\,\erasure{\Delta})}$ in \tintc{}.
  \end{enumerate}
\end{lemma}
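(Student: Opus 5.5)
We prove (i) and (ii) simultaneously by induction on the height of the \tintcty{} derivation of the single-subject judgement. The single-subject restriction does most of the work. In each rule of Figures~\ref{fig:2intlam-left}, \ref{fig:2intlam-right} and \ref{fig:negation-rules} there is exactly one principal typing in the conclusion, and every premise differs from the conclusion only in that
\begin{itemize}
  \item the principal formula is replaced by typings of its immediate subterms, each on the left or on the right, and
  \item possibly a fresh bound variable is added to $\Gamma$ or to $\Delta$.
\end{itemize}
So in each case we first check that every premise is again single-subject, in the same variable environments, apart from the freshly bound variable. Its freshness, under the usual bound-variable conventions, keeps $\Gamma$ and $\Delta$ disjoint and preserves the ``at most one type per variable'' condition. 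Then we apply the induction hypothesis, of form (i) to premises whose subject is on the right and of form (ii) to those with the subject on the left. Finally we close with the matching \tintc{} rule.

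\textbf{Base case: rule \rlnm{Id}.} The conclusion $\Gamma',x:A \types x:A,\Delta'$ is single-subject in two ways: with $x:A$ read as the subject on the right, or as the subject on the left. In case (i) we have $A\in\erasure{\Gamma}$, and the one-node proof $A$ from the basis $(\{A\};\varnothing)$ weakens to $(\erasure\Gamma;\erasure\Delta)$. Case (ii) is symmetric, using the one-node dual proof $\overline{A}$ with $A\in\erasure\Delta$. Here we rely on \tintc{} derivations being relative to a basis that may contain unused (co-)assumptions, which holds because bases are sets in Wansing's presentation.

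\textbf{Inductive cases.} The cases then follow a fixed pattern.
\begin{itemize}
  \item \rlnm{$\to$R} and \rlnm{$\coto$L} add a fresh variable to $\Gamma$ or $\Delta$, and correspond to $\to$-introduction discharging an assumption and $\coto$ dual-introduction discharging a co-assumption.
  \item \rlnm{App}, \rlnm{AppL}, \rlnm{Iota1R}, \rlnm{Iota2L} and similar rules are eliminations, and map to the corresponding proof or dual-proof eliminations.
  \item The pairing rules map to introductions, for example \rlnm{$\to$L} combining a proof of $A$ with a dual proof of $B$.
  \item The strong-negation rules of Figure~\ref{fig:negation-rules} map to the \tintc{} rules that turn a dual proof of $A$ into a proof of $\sneg A$, and conversely.
  \item The \rlnm{case} rules, in both the proof and dual variants, map to $\vee$-elimination and to its dual for $\wedge$, discharging the fresh assumptions in both branches.
\end{itemize}

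\textbf{Main obstacle: side-switching subterms.} Some rules move a subterm to the other side of the turnstile, for instance \rlnm{Iota2L} in the example above, whose premise $\Gamma\types x:\sneg(A\to B)$ has the subject on the right while the conclusion has it on the left. We must check that the premise is still single-subject: it keeps the same $\Gamma,\Delta$ and places the single non-variable term on one side. This works because Wansing's mixed rules allow proofs to occur inside dual proofs and vice versa. We also need every rule with $\Gamma,\Delta$ carried uniformly to have a \tintc{} counterpart in which all subderivations share the same basis, or whose bases combine by union. If some \tintc{} rule instead requires splitting the basis, as in the displayed coimplication elimination with bases $(L';R')$ and $(L;R)$, we would take both equal to $(\erasure\Gamma;\erasure\Delta)$, since that choice is permitted.

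The remaining effort is a routine inspection of the rule tables, confirming that each typing rule matches exactly one \tintc{} rule once the terms are erased.
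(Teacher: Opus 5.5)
Your case analysis assumes that the principal formula of the last rule is always the distinguished subject $M:A$. For the non-negation rules this is forced, though you should say why. Their principal terms are not variables, so a passive principal formula would have to lie in $\Gamma$ or $\Delta$, which are variable environments. For the four strong-negation rules, however, the principal term may be a variable. Then the subject can be passive while a variable typing crosses the turnstile, and your claim that every premise lives ``in the same variable environments'' fails.

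For instance, from $z:B \types z:B$ (by \rlnm{Id}) and $\types \unittm:\truefm,\,z:B$, rule \rlnm{$\coto$R} gives $\types z\cdot\unittm : B\coto\truefm,\,z:B$. Applying \rlnm{$\sneg_I$ L} with principal formula $z:\sneg B$ then yields the single-subject judgement $z:\sneg B \types z\cdot\unittm : B\coto\truefm$. The induction hypothesis on the premise gives a proof of $B\coto\truefm$ relative to $(\varnothing;\{B\})$, but you need one relative to $(\{\sneg B\};\varnothing)$. Your plan treats the strong-negation rules only when the subject is principal, and no single \tintc{} rule turns a co-assumption $B$ into an assumption $\sneg B$.

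Relatedly, the side-switching you flag as the main obstacle (\rlnm{Iota2L} and similar rules, where the subject itself crosses the turnstile) is harmless: the mutual induction handles it directly. The real problem is the passive variable, which changes the erased basis.

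The missing ingredient is a separate admissibility result about \tintc{} itself; the paper states it as Lemma~\ref{lem:wansing-context-negation-shift}. For both proofs and dual proofs one can shift
\begin{itemize}
\item (a) an assumption $B$ to a co-assumption $\sneg B$,
\item (b) an assumption $\sneg B$ to a co-assumption $B$,
\item (c) a co-assumption $B$ to an assumption $\sneg B$,
\item (d) a co-assumption $\sneg B$ to an assumption $B$.
\end{itemize}
This needs its own induction on \tintc{} derivations. Essentially, you replace each open occurrence of the (co-)assumption by the two-step derivation given by rule (1) or (2) followed by the appropriate one of (31)--(34). For example, a co-assumption $B$ becomes the dual proof of $B$ obtained from the assumption $\sneg B$ by (1) and (32). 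Care is needed with the discharging rules.

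With this lemma, the passive cases close uniformly in both clauses (i) and (ii): \rlnm{$\sneg_I$ R}, \rlnm{$\sneg_E$ R}, \rlnm{$\sneg_I$ L} and \rlnm{$\sneg_E$ L} use shifts (a), (b), (c) and (d) respectively. Without it, your induction does not go through at these rules.
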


In the opposite direction, as is typical, it is convenient to choose some canonical naming of assumptions (and co-assumptions).

\begin{definition}\label{def:logic-single-subject}
  Fix once and for all, for each type $A$, two variables $x_A$ and $y_A$, in such a way that the families $\{x_A \mid A \text{ is a type}\}$ and $\{y_A \mid A \text{ is a type}\}$ are jointly pairwise distinct. 
  Now let $L = \{A_1,\ldots,A_m\}$ and $R = \{C_1,\ldots,C_n\}$ be finite sets of formulas. We define:
  \begin{itemize}
    \item $L \types' B;\,R$ just if there is a term $M$ and $x_{A_1} \ty: A_1,\ldots,x_{A_m}\ty:A_m \types \mf{M : B},\,y_{C_1} \ty: C_1,\ldots,y_{C_n}\ty: C_n$
    \item $L;\,B \types' R$ just if there is a term $M$ and $x_{A_1} \ty: A_1,\ldots,x_{A_m}\ty:A_m,\, \mf{M : B} \types y_{C_1} \ty: C_1,\ldots,y_{C_n}\ty: C_n$
  \end{itemize}
  If $L=\varnothing$ or $R=\varnothing$, the corresponding family is empty.
\end{definition}

Then, from a proof or a dual proof in \tintc{}, we can obtain a proof term or refutation respectively, with (co-)assumptions named according to the above scheme.

\begin{lemma}[From $\tintc{}$ to $\tintcty{}$]
  \label{lem:2int-to-typing}
  The following hold simultaneously:
  \begin{enumerate}[(i)]
    \item If $\wanp{\Pi,\,A,\,(L,R)}$ in \tintc{}, then $L \vdash' A;R$ in \tintcty{}.
    \item If $\wandp{\Pi,\,A\,(L,R)}$ in \tintc{}, then $L;A \vdash' R$ in \tintcty{}.
  \end{enumerate}
\end{lemma}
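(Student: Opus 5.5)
We argue by simultaneous induction on the derivation $\Pi$ in \tintc{}, treating proofs and dual proofs together. For every rule of \tintc{} we exhibit a typing rule of \tintcty{} (Figures~\ref{fig:2intlam-left}, \ref{fig:2intlam-right} and \ref{fig:negation-rules}) whose premises are single-subject judgements with the canonical variable naming of Definition~\ref{def:logic-single-subject}. The inductive hypotheses supply these premises, and the rule yields the required conclusion $L \vdash' A;R$ or $L;A \vdash' R$. The rule families line up one-for-one: implication introduction and elimination match \rlnm{$\to$R}/\rlnm{App}; coimplication dual introduction and elimination match \rlnm{$\coto$L}/\rlnm{AppL}; mixed rules such as the proof-elimination of a dual proof of $A \to B$ match \rlnm{Iota1R}/\rlnm{Iota2L}; the strong negation rules of \tintc{} match the $\sneg_I$/$\sneg_E$ rules. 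The remaining connectives are handled analogously.

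\emph{Base cases.} The proof $A$ from $(\{A\};\varnothing)$ is given by \rlnm{Id} as $x_A:A \types x_A:A$. The dual proof $\overline{A}$ from $(\varnothing;\{A\})$ is also given by \rlnm{Id}, as $y_A:A \types y_A:A$, reading $y_A$ as the subject on the left. Before this can be used we must check that the bases of \tint{} are sets: a derivation from $(L;R)$ counts as one from any $(L';R')\supseteq(L;R)$. We therefore first prove a weakening lemma for \tintcty{}: adding fresh variable typings $x_C:C$ to $\Gamma$ or $y_C:C$ to $\Delta$ preserves derivability. This is a routine induction on typing derivations, renaming bound variables where needed. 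With it, the premises of binary rules such as \rlnm{App}, \rlnm{$\wedge$R} and \rlnm{$\to$L}, whose contexts in the logic may be $(L_1;R_1)$ and $(L_2;R_2)$, can be brought to the common context $(L_1\cup L_2;R_1\cup R_2)$.

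\emph{Discharge cases.} Here the canonical naming matters. In implication introduction, discharging $[A]$ from $(L\cup\{A\};R)$, the hypothesis gives a term $M$ typed with $x_A:A$ among the assumptions. Applying \rlnm{$\to$R} yields $\lambda x_A.M : A\to B$ in context $L$. If $A\in L$ still, after discharge in the logic, we do not remove $x_A$ from the context, and the conclusion still holds. Disjunction elimination via \rlnm{Case} is handled the same way. The dual-side discharges are handled symmetrically with $y_A$: \rlnm{$\coto$L} discharges the co-assumption $[\![A]\!]$, and the dual case for $\wedge$ does likewise.

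\emph{Main obstacle.} The step we expect to be most delicate is a clash that \tint{} permits but \tintcty{} forbids: the same formula appearing both as an assumption and as a co-assumption. Definition~\ref{def:logic-single-subject} avoids this because the families $x_C$ and $y_C$ are jointly distinct, so $\Gamma$ and $\Delta$ always remain disjoint variable environments. We must also make sure that, when a rule binds $x_A$ or $y_A$, no clash arises with the subject term or with the other premise after weakening. We would handle this by working up to $\alpha$-equivalence and noting that the bound variable is exactly the canonical one being discharged. The remaining cases are then mechanical checks that each rule's premises and conclusion match.
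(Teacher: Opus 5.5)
Your skeleton matches the paper's: simultaneous induction on the \tint{} derivation, \rlnm{Id} for axioms (1) and (2), weakening to bring premises to a common basis, and a rule-by-rule match. The gap is in the discharge cases with several premises. These are Wansing's $\lor$-elimination (13) and its dual (23), which you dismiss as ``handled the same way'' as implication introduction.

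In (13) the conclusion basis is $L \cup (L_1\setminus\{A_1\}) \cup (L_2\setminus\{A_2\})$, so $A_1$ can survive the discharge through $L$ or $L_2$. For instance, if $\Pi$ is axiom (1) for $A_1$ followed by rule (6), the major premise becomes $x_{A_1}:A_1 \vdash \mathsf{inj}_1(x_{A_1}) : A_1 \lor A_2$. Then $x_{A_1}:A_1$ lies in the common context $\Gamma$, which causes two problems:
\begin{itemize}
\item \rlnm{CaseR} requires its bound variables to be fresh for $\Gamma,\Delta$. So you cannot bind the canonical $x_{A_1}$, contrary to ``the bound variable is exactly the canonical one being discharged''.
\item You also cannot apply the rule without $x_{A_1}$ and weaken afterwards, which is the trick that does work for $\to$R, because the major premise needs $x_{A_1}$.
\end{itemize}

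Working up to $\alpha$-equivalence does not help. $x_{A_1}$ is \emph{free} in the branch judgement $\Gamma_1, x_{A_1}:A_1 \vdash N : C, \Delta_1$. Getting $\Gamma_1, u:A_1 \vdash N[u/x_{A_1}] : C, \Delta_1$ means renaming a free assumption throughout a whole derivation, including through the strong-negation rules that move variable typings across the turnstile. The paper supplies this as a separate Renaming corollary, derived from its substitution lemmas for assumptions and co-assumptions. It renames $x_{A_i}$ (respectively $y_{A_i}$ in case (23), with \rlnm{CaseL}) to fresh $u,v$ before applying the case rule.

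A cheaper repair is also available. If $A_1$ already lies in the conclusion basis, weaken the branch to $\Gamma, u:A_1 \vdash N : C, \Delta$ with a fresh dummy $u$, so $N$ keeps using the ambient $x_{A_1}$. Otherwise $x_{A_1}$ is fresh and can be bound directly. Either way, this step must be stated explicitly; it is the genuinely delicate point of this direction.

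The obstacle you single out, a formula occurring as both assumption and co-assumption, is settled immediately by the joint distinctness of the $x$- and $y$-families. Also, in the single-premise discharges (8) and (21), Wansing's rule removes $A_1$ from the basis, so ``$A$ still in $L$ after discharge'' does not arise there. What does arise is vacuous discharge ($A_1 \notin L$, resp.\ $A_1 \notin R$). There you must first weaken by $x_{A_1}:A_1$ (resp.\ $y_{A_1}:A_1$) before applying \rlnm{$\to$R} (resp.\ \rlnm{$\coto$L}).
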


\noindent
Putting the foregoing results together, we obtain correspondence for the larger system.

\begin{theorem}[Correspondence between $\tintc{}$ and $\tintcty$]
  \label{thm:wansing-equivalence}
  Both of the following:
  \begin{enumerate}[(i)]
    \item $\wanp{\Pi,\,A,\,(L,R)}$ in \tintc{} iff $L \vdash' A;R$ in \tintcty{}
    \item $\wandp{\Pi,\,A\,(L,R)}$ in \tintc{} iff $L;A \vdash' R$ in \tintcty{}.
  \end{enumerate}
\end{theorem}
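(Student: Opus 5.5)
The theorem follows by combining the two preceding lemmas, one for each direction of each biconditional. Throughout, fix the canonical naming of Definition~\ref{def:logic-single-subject}.

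For the right-to-left directions, suppose $L \vdash' A;R$. By definition there is a term $M$ with
\[
x_{A_1}\ty:A_1,\ldots,x_{A_m}\ty:A_m \types M:A,\; y_{C_1}\ty:C_1,\ldots,y_{C_n}\ty:C_n .
\]
The left and right contexts are variable environments, since the $x_{A_i}$ are pairwise distinct. They are also disjoint, because the families $\{x_A\}$ and $\{y_A\}$ are jointly pairwise distinct. So Lemma~\ref{lem:typing-to-2int}(i) applies and gives some $\Pi$ with $\wanp{\Pi,\,A,\,(\erasure{\Gamma},\,\erasure{\Delta})}$. The erasure of the displayed left context is exactly $\{A_1,\ldots,A_m\}=L$, and likewise the right context erases to $R$. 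This gives the right-to-left half of (i). The right-to-left half of (ii) is identical, using Lemma~\ref{lem:typing-to-2int}(ii).

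The left-to-right directions are Lemma~\ref{lem:2int-to-typing}(i) and (ii) verbatim. The only remark needed is that the $\Pi$ in the statement of the theorem is existentially quantified on the logic side.

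If the proofs of the lemmas themselves must be supplied, the substantive step is Lemma~\ref{lem:2int-to-typing}. It goes by simultaneous induction on proofs and dual proofs in \tintc{}, translating each \tint{} rule and each strong-negation rule to the corresponding typing rule of Figures~\ref{fig:2intlam-left}, \ref{fig:2intlam-right} and~\ref{fig:negation-rules}. The main obstacle is premises whose bases differ from the conclusion's, for example coimplication elimination with bases $(L';R')$ and $(L;R)$, or rules discharging (co-)assumptions. These require a weakening lemma for the typing system, which holds by a routine induction because the contexts $\Gamma,\Delta$ are carried through every rule. Discharge is handled by the canonical variables $x_A$ and $y_A$: discharging $[A]$ corresponds to binding $x_A$ in $\abs{x_A}{M}$, and discharging $[\![A]\!]$ to binding $y_A$ in the $\coto$L rule. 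A subtle point is the case where the discharged formula also remains available in the outer basis, so the variable is both bound and still in the context. I would resolve this by appeal to the convention on bound variables, renaming the bound variable apart; alternatively, one can observe that leaving $x_A$ in the context is harmless because contexts are sets.

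Lemma~\ref{lem:typing-to-2int} goes by induction on typing derivations. Since every non-Id rule has a single principal formula and single-subject premises, each premise is again single-subject, with the side formulas being variable typings. The one delicate point is the Id rule when the same variable could appear on both sides; disjointness of $\Gamma$ and $\Delta$ excludes this.
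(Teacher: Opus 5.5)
Your proof of the theorem from Lemmas~\ref{lem:typing-to-2int} and~\ref{lem:2int-to-typing} is exactly the paper's argument, which the paper states in one line. That includes your check that the canonically named contexts are disjoint variable environments whose erasures are exactly $L$ and $R$. The problems are in the lemma sketches you add.

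In your sketch of Lemma~\ref{lem:typing-to-2int}, \rlnm{Id} is not the only delicate case. The four strong-negation rules of \tintcty{} may have a \emph{passive} principal formula. Such a formula is necessarily a variable typing, and the rule moves it across the turnstile while adding or removing a $\sneg$. For example, \rlnm{$\sneg_I$ L} turns $\Sigma \types M:A,\,z:B,\,\Delta$ into $\Sigma,\,z:\sneg{B} \types M:A,\,\Delta$, with $M:A$ still the subject. The premise is single-subject, so the induction hypothesis applies. But it yields a proof of $A$ over the basis $(L;\,R\cup\{B\})$, whereas the conclusion requires $(L\cup\{\sneg{B}\};\,R)$.

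To close this you need an admissibility result inside \tintc{} itself. It must trade a co-assumption for an assumption of its strong negation, and vice versa, for both proofs and dual proofs; this is the paper's Lemma~\ref{lem:wansing-context-negation-shift}. For instance, a co-assumption $B$ can be simulated from the assumption $\sneg{B}$ by rules (1) and (32). Your sketch never supplies this, so the induction step does not go through whenever strong negation is applied to a side formula.

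In your sketch of Lemma~\ref{lem:2int-to-typing}, the alternative of ``leaving $x_A$ in the context'' does not work. The relevant cases are (13) and (23). There the discharged $A_1$ may also lie in the major premise's basis, so $x_{A_1}$ (resp.\ $y_{A_1}$) is in the shared context and may occur free in the major premise's term. But \rlnm{CaseR} and \rlnm{CaseL} require their bound variables to be fresh for $\Gamma,\Delta$.

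So you must rename that variable to a fresh one throughout the minor premise's derivation. This needs admissibility of renaming a free (co-)assumption variable in a derivation, which is the paper's Corollary~\ref{lem:renaming}; the bound-variable convention on terms alone is not enough. With that made explicit, your first option of renaming apart is the right one.
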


\noindent
By inspection of the argument, one can immediately further conclude that the fragments without strong negation are in similar correspondence.

\begin{corollary}[Correspondence between $\tint{}$ and $\tintty{}$]
  \label{thm:wansing-equivalence-sneg}
  Both of the following:
  \begin{enumerate}[(i)]
    \item $\wanp{\Pi,\,A,\,(L,R)}$ in \tint{} iff $L \vdash' A;R$ in \tintty{}.
    \item $\wandp{\Pi,\,A\,(L,R)}$ in \tint{} iff $L;A \vdash' R$ in \tintty{}.
  \end{enumerate}
\end{corollary}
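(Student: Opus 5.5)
The corollary follows by re-running the proofs of Lemma~\ref{lem:typing-to-2int} and Lemma~\ref{lem:2int-to-typing} with strong negation removed on both sides. We do not appeal to Theorem~\ref{thm:wansing-equivalence} as a black box, because that theorem alone does not give conservativity: it does not rule out a \tintcty{} derivation of a $\sneg$-free judgement that passes through $\sneg$-types. Instead we check that each translation maps the $\sneg$-free fragment into the $\sneg$-free fragment.

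For the right-to-left directions we follow the induction of Lemma~\ref{lem:typing-to-2int}, on derivations of single-subject judgements. We take the given derivation to be in \tintty{}. Then no rule of Figure~\ref{fig:negation-rules} occurs in it, so the cases for those rules never arise. Every remaining case, \textsc{Id} and the rules of \cref{fig:2intlam-left,fig:2intlam-right}, is handled exactly as before. In each such case the translated step is the corresponding proof or dual-proof rule of \tint{} for the connective of the principal formula, for example \rlnm{$\to$R} becomes implication introduction and \rlnm{AppL} becomes coimplication elimination. None of these uses the \tintc{}-specific rules for $\sneg$. The erasures $\erasure{\Gamma}$ and $\erasure{\Delta}$ contain no $\sneg$, so the resulting (dual) proof lies in \tint{}.

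For the left-to-right directions we follow the induction of Lemma~\ref{lem:2int-to-typing}, on the structure of the \tint{} (dual) proof $\Pi$. The rules used by $\Pi$ form a subset of those of \tintc{}, so only the non-$\sneg$ cases occur. In each of these the construction produces the corresponding \tintty{} rule (\rlnm{$\wedge$R}, \rlnm{PiL}, \rlnm{$\coto$L}, and so on) applied to the inductively obtained derivations. With the canonical variables $x_A$ and $y_C$ of Definition~\ref{def:logic-single-subject}, the result is a \tintty{} derivation witnessing $L \vdash' A;R$, or $L;A \vdash' R$ for dual proofs.

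The one step that needs care is the variable bookkeeping in the discharge cases: implication introduction and coimplication introduction, and the disjunction and conjunction elimination cases that discharge (co-)assumptions. There the lemma presumably relies on weakening and on renaming canonical variables to fresh ones before applying \rlnm{$\to$R} or \rlnm{$\coto$L}. We must confirm that these auxiliary admissibility facts (weakening, variable renaming, and admissibility of substitution if it is used) are proved by inductions whose cases are local to each rule. If so, they restrict verbatim to \tintty{}, since none of their proofs creates a $\sneg$ rule that was not already present. Once this is confirmed, both translations preserve the fragment, which yields (i) and (ii).
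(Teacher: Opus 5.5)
Your proposal is correct and follows the same route as the paper, which obtains the corollary ``by inspection of the argument'' for Lemmas~\ref{lem:typing-to-2int} and~\ref{lem:2int-to-typing}, not by using Theorem~\ref{thm:wansing-equivalence} as a black box; your remark that the theorem alone gives no conservativity is exactly why that inspection is needed. The check you leave open does go through: weakening, renaming and co-assumption substitution are all proved rule by rule, and the only places that invoke a strong-negation rule (e.g.\ $\sneg_E$R), the judgement-form lemma or the context-shift lemma are the cases triggered by a strong-negation rule in the input derivation or proof, which never arise for \tintty{} or \tint{}.
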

\section{\holtty{}}\label{sec:2hol}

To extend these ideas beyond the propositional case, we here introduce \holtty{}, a 2-sided extension of Geuvers' \holty{} \cite{geuvers-types06}.  Geuvers' motivation for \holty{} was to construct a type system with a straightforward correspondence with Church's Simple Theory of Types \cite{church-jsl40}.  It can be presented as a pure type system (PTS) \cite{barendregt-lambda-types}:
\[\def\arraystretch{1.2}
  \holty{}
  \quad
  \begin{array}{|rl|}\hline
    \mathcal{S} & \starsort,\,\boxsort,\,\trisort \\
    \mathcal{A} & \starsort : \boxsort,\, \boxsort : \trisort\\
    \mathcal{R} & (\starsort,\starsort),\,(\boxsort,\starsort),\,(\boxsort,\boxsort)\\\hline
  \end{array}
\]
From this one can deduce that \holty{} is a mild extension of System F$\omega$ in which one may additionally introduce new kind variables.  The correspondence with higher-order (predicate) logic is achieved by representing all terms (including formulas, which are a subclass of terms in \cite{church-jsl40}) as \holty{} types and representing the types that occur within those terms (sometimes called domain sorts) as \holty{} kinds.  For example, a universal quantification in the logic $\PiType{a}{K}{A}$ corresponds to a type, the domain sort $K$ that occurs within the formula corresponds to a kind and a term of sort $K$ that might be used to instantiate the formula $A$ corresponds to a type.  Consequently, there is no need for types that depend on terms; terms of the type system are used exclusively to represent proofs.  For a detailed account, see \cite{geuvers-phd,geuvers-intro-cc}.

In \holtty{}, we will eschew the compact PTS presentation in favour of something more explicit.  This is because our intention is to extend the proof rules quite substantially, whilst leaving the kinding rules largely untouched.

\begin{definition}[Syntax]\label{def:hol2-syntax}
  Let us assume countably infinite supplies of kind variables $k$; type variables $a$, $b$, $c$; and term variables $x$, $y$, $z$.  The \emph{kinds}, typically $K$; \emph{types}, typically $A, B$; \emph{terms}, typically $M$, $N$; and environments, typically $\Gamma,\Delta$; of \holtty{} are described by the following grammars:
  \[
    \begin{array}{rrcl}
    \rlnm{Kinds} & K   &\Coloneqq& k \mid \starsort \mid K_1 \to K_2 \\
    \rlnm{Types} & A,B &\Coloneqq& a \mid A\,B \mid \tyabs{a}{K}{A} \mid A \to B \mid \PiType{a}{K}{A} \mid \sneg{A} \\
    \rlnm{Terms} & M,N &\Coloneqq& x \mid M\,N \mid M\,[A] \mid \tyabs{x}{A}{M} \mid \Tyabs{a}{K}{M} \mid \cdot \mid \iota_1 \mid \iota_2 \mid \pack_K \mid \unpack_K \\
    \rlnm{Envs} & \Gamma,\Delta &\Coloneqq& \varnothing \mid \Gamma,a\mathord{:}K \mid \Gamma,x\mathord{:}A
    \end{array}
  \]
  We consider the types $\tyabs{a}{K}{A}$, $\PiType{a}{K}{A}$; and the terms $\tyabs{x}{A}{M}$ and $\Tyabs{a}{K}{M}$; as variable binders, and we treat them accordingly.  Environments are (possibly empty) sequences of variable kind- and type-assignments.  We will occasionally enclose such a sequence in square brackets to increase readability.  We write $\Delta^R$ for the reversal of the sequence $\Delta$, so $[x\ty:a,y\ty:b,z\ty:a \to b]^R$ is the environment $[z\ty:a\to b,y\ty:b,x\ty:a]$.  We say that an environment $\Gamma$ is a \emph{kinding environment} just if it does not contain any type assignment, that is any statement of shape $x\ty:A$.

  We will sometimes refer to the introduction term forms and types by names.  We write $\cdot{}\,[A]\,[B]\,M\,N$ rather as $M \cdot_{A,B} N$ and refer to a term of this shape as a \emph{dot construction}.  We refer to a term of shape $\pack_K\,[\tyabs{a}{K}{A}]\,[B]\,M$ as a \emph{$K$-pack}.  A term of shape $\tyabs{x}{A}{M}$ is called a \emph{(term) abstraction} and $\Tyabs{a}{K}{M}$ a \emph{type abstraction}.  A type of shape $A \to B$ is called an \emph{arrow}, $\PiType{a}{K}{A}$ a \emph{universal over $K$} and $\sneg{A}$ a \emph{strong negation}.
\end{definition}

\begin{definition}[Type Conversion]
Conversion $A \conv B$ between two types $A$ and $B$, is defined as the reflexive, symmetric, transitive closure of the following inductively defined relation:
\begin{mathpar}
  \infer[CRedex]{ }{
    (\tyabs{a}{K}{A})\,B \conv A[B/a]
  }
  \and
  \infer[CAbs]{
    A \conv B
  }
  {
    \tyabs{a}{K}{A} \conv \tyabs{a}{K}{B}
  }
  \and
  \infer[CApp]{
    A \conv B
    \and
    C \conv D
  }
  {
    A\,C \conv B\,D
  }
  \\
  \infer[CSneg]{
    A \conv B
  }
  {
    \sneg{A} \conv \sneg{B}
  }
  \and
  \infer[CForall]{
    A \conv B
  }
  {
    \PiType{a}{K}{A} \conv \PiType{a}{K}{B}
  }
  \and
  \infer[CArrow]{
    A \conv B
    \and
    C \conv D
  }
  {
    A \to C \conv B \to D
  }
\end{mathpar}
Note: since strong negation does not play an active role in conversion, one can immediately conclude that \holtty{} convertibility has all the good properties of (System F$\omega$) $\beta$-convertibility, by regarding the construction $\sneg{A}$ as the application of a free variable $\sneg{}$ to a type $A$.
\end{definition}

\begin{definition}[Kinding]
Well-formedness of an environment $\Gamma$, written $\types \Gamma$, and the assignment of a kind $K$ to a type $A$ in the environment $\Gamma$, written $\Gamma \types A : K$, are given simultaneously by the following inductive definition, where $\subj(\Gamma)$ refers to the subjects of the type assignments in $\Gamma$.
\begin{mathpar}
  \infer[Empty]{ }{
    \types \varnothing
  }
  \and
  \infer[Kinding]{
    \types \Gamma
  }
  {
    \types \Gamma,a\ty:K
  }\;a \notin \subj(\Gamma)
  \and
  \infer[Typing]{
    \Gamma \types A : \starsort
  }
  {
    \types \Gamma,x\ty:A
  }\;x \notin \subj(\Gamma)
  \\
  \infer[Var]{
    \types \Gamma
  }
  {
    \Gamma \types a : K
  }\;a\mathord{:}K \in \Gamma
  \and
  \infer[Abs]{
    \Gamma,\,a\ty:K_1 \types B : K_2
  }
  {
    \Gamma \types \tyabs{a}{K_1}{B} : K_1 \to K_2
  }
  \and
  \infer[App]{
    \Gamma \types A : K_1 \to K_2
    \and
    \Gamma \types B : K_1
  }
  {
    \Gamma \types A\,B : K_2
  }
  \and
  \infer[Arr]{
    \Gamma \types A : \starsort
    \and
    \Gamma \types B : \starsort
  }
  {
    \Gamma \types A \to B : \starsort
  }
  \and
  \infer[All]{
    \Gamma,\,a\ty:K \types A : \starsort
  }
  {
    \Gamma \types \PiType{a}{K}{A} : \starsort
  }
  \and
  \infer[SNeg]{
    \Gamma \types A : \starsort
  }
  {
    \Gamma \types \sneg{A} : \starsort
  }
\end{mathpar}
\end{definition}

\newcommand{\seqsep}{\triangleright}

In our propositional system of Section~\ref{sec:two-lambda-int}, a key restriction on the shape of judgement was that there is at most one non-variable typing, which may occur either on the left or right.  In \holtty{}, it will be convenient for the metatheory to make this more explicit by distinguishing the typing formula that contains that single proof (or refutation) term.  Thus, we will have two judgements $\Gamma \types M : A;\,\Delta$ and $\Gamma;\,N:B \types \Delta$, corresponding to the distinguished typing formula being a proof $M:A$ or a refutation $N:B$, respectively.  We will call this distinguished typing formula the \emph{principal formula} of the judgement.

\begin{definition}[Proof Terms]
The assignment of a type $A$ to a (proof) term $M$ under assumptions $\Gamma$ and co-assumptions $\Delta$, is written $\Gamma \types M : A;\,\Delta$.  The assignment of a type $A$ to a (refutation) term $M$ under assumptions $\Gamma$ and co-assumptions $\Delta$, is written $\Gamma;\,M:A \types \Delta$.  In both cases, the environment $\Delta$ is required to only contain typing statements, of shape $x:B$ (i.e. no kinding statements $a:K$).  The theory of the two judgements is defined simultaneously by induction.
\begin{mathpar}
  %
  %
  %
  %
  \infer[Id-L]
  { 
    \types \Gamma,\Delta^R
  }
  {
    \Gamma;\, x:A \types \Delta
  }\;x\mathord{:}A \in \Delta
  \and
  \infer[Id-R]
  { 
    \types \Gamma,\Delta^R
  }
  {
    \Gamma \types x:A;\,\Delta
  }\;x\mathord{:}A \in \Gamma
  \and
  %
  %
  \infer[Conv]{
    \Gamma \types M : B;\,\Delta
    \and
    \Gamma \types A : \starsort
  }
  {
    \Gamma \types M : A;\,\Delta
  }\;A =_\beta B
  \\
  %
  %
  %
  \infer[$\forall$-Intro-L]{
    \Gamma \types A : K
    \and
    \Gamma;\, N : B[A/a] \types \Delta
  }
  {
    \Gamma;\,\pack_K\,[\tyabs{a}{K}{B}]\,[A]\,N : \PiType{a}{K}{B} \types \Delta
  }
  \and
  \infer[$\forall$-Intro-R]{
    \Gamma,a : K \vdash M : B;\,\Delta
  }
  {
    \Gamma \vdash \Tyabs{a}{K}{M} : \PiType{a}{K}{B};\,\Delta
  }
  \and
  %
  %
  \infer[$\forall$-Elim-L]{
    \Gamma;\,M : \PiType{a}{K}{B} \types \Delta
    \and
    \Gamma,\,a\ty:K;\,N : C \types x\ty:B,\,\Delta
  }
  {
    \Gamma;\,\unpack_K\,[\tyabs{a}{K}{B}]\,M\,[C]\,(\Tyabs{a}{K}{\tyabs{x}{\sneg{B}}{N}}) : C \types \Delta
  }
  \and
  \infer[$\forall$-Elim-R]{
    \Gamma \types M : \PiType{a}{K}{B};\,\Delta
    \and
    \Gamma \types A : K
  }
  {
    \Gamma \types M\,[A] : B[A/a];\,\Delta
  }
  \\
  \infer[$\to$-Intro-L]{
    \Gamma \types M:A;\, \Delta
    \and
    \Gamma;\,N:B \types \Delta
  }
  {
    \Gamma;\, M \cdot_{A,B} N : A \to B \types \Delta
  }
  \and
  \infer[$\to$-Elim-L$_1$]{
    \Gamma;\,M:A \to B \types \Delta
  }
  {
    \Gamma \types \iota_1\,[A]\,[B]\,M : A;\,\Delta
  }
  \and
  \infer[$\to$-Elim-L$_2$]{
    \Gamma;\,M:A \to B \types \Delta
  }
  {
    \Gamma;\,\iota_2\,[A]\,[B]\,M : B \types \Delta
  }
  \and
  \infer[$\to$-Intro-R]{
    \Gamma,\,x\ty:A \types M : B;\,\Delta
  }
  {
    \Gamma \types \tyabs{x}{A}{M} : A \to B;\,\Delta
  }
  \and
  \infer[$\to$-Elim-R]{
    \Gamma \types M : A \to B;\,\Delta
    \and
    \Gamma \types N : A;\,\Delta
  }
  {
    \Gamma \types M\,N : B;\,\Delta
  }
  \\
  \infer[$\sneg{}$-Intro-L]{
    \Gamma \types M:A;\, \Delta
  }
  {
    \Gamma;\, M : \sneg{A} \types \Delta
  }
  \and
  \infer[$\sneg{}$-Intro-R]{
    \Gamma;\,M:A \types \Delta
  }
  {
    \Gamma \types M : \sneg{A};\,\Delta
  }
  \and
  \infer[$\sneg{}$-Elim-L]{
    \Gamma;\,M:\sneg{A} \types \Delta
  }
  {
    \Gamma \types M : A;\,\Delta
  }
  \and
  \infer[$\sneg{}$-Elim-R]{
    \Gamma \types M:\sneg{A};\, \Delta
  }
  {
    \Gamma;\, M : A \types \Delta
  }
\end{mathpar}
In \rlnm{$\forall$-Intro-L}, \rlnm{$\forall$-Intro-R}, \rlnm{$\forall$-Elim-L} and \rlnm{$\to$-Intro-R} we require that the variables $a$ and $x$ are fresh for their context, in the sense that neither occur free in $\Gamma$ or $\Delta$.  In what follows we will typically omit the type applications on $\iota_1$ and $\iota_2$, since they are generally clear from the context.
\end{definition}

The format of the two judgements unfortunately necessitates a pair of identity rules \rlnm{Id-L} and \rlnm{Id-R}, but one simply chooses the rule according to the position of the proof (or refutation) term.  The rules \rlnm{Conv}, \rlnm{$\forall$-Intro-R}, \rlnm{$\forall$-Elim-R}, \rlnm{$\to$-intro-R} and \rlnm{$\to$-Elim-R} are quite standard, except for the addition of the environment $\Delta$ of co-assumptions.  Rules \rlnm{$\to$-Intro-L}, \rlnm{$\to$-Elim-L$_1$} and \rlnm{$\to$-Elim-L$_2$} are essentially as in Section~\ref{sec:two-lambda-int}, except that we now decorate their proofs with type applications, which makes our later translations to \holty{} more straightforward.   Two new concepts concern universal quantification.
\begin{itemize}
\item Rule \rlnm{$\forall$-Intro-L} describes how to refute a universal quantification $\PiType{a}{K}{B}$ over kind $K$: this can be done by finding a witness $A$ of kind $K$ and constructing a refutation $N$ of $B[A/a]$.
\item Rule \rlnm{$\forall$-Elim-L} describes how one may use a given refutation of a universal quantification $\PiType{a}{K}{B}$ over kind $K$ in order to establish some arbitrary proposition $C$.  To do so, one must show that $C$ already follows from any choice of witness $a\ty:K$ and refutation $x\ty:B$
\end{itemize} 
Finally, the rules \rlnm{$\sneg{}$-Intro-L}, \rlnm{$\sneg{}$-Intro-R}, \rlnm{$\sneg{}$-Elim-L}, \rlnm{$\sneg{}$-Elim-R} allow for switching from proofs to refutations or vice versa.  Unlike the system of Section~\ref{sec:two-lambda-int}, where there was no syntactically distinguished typing, these rules can only affect the principal typing formula.  However, it can be shown that the rules of Figure~\ref{fig:ctx-switching}, which allow for some movement between assumptions and co-assumptions, are admissible in the system.

\begin{figure}
  \begin{mdframed}
  \begin{mathpar}
    \infer[Ctx$_1$]{
      \Gamma;\,M:B \types x\ty:A,\,\Delta
    }
    {
      \Gamma,\,x\ty:\sneg{A};\,M:B \types \Delta
    }
    \and
    \infer[Ctx$_2$]{
      \Gamma \types M:B;\,x\ty:A,\,\Delta
    }
    {
      \Gamma,\,x\ty:\sneg{A} \types M:B;\, \Delta
    }
    \and
    \infer[Ctx$_3$]{
      \Gamma,\,x\ty:A \types M:B;\,\Delta
    }
    {
      \Gamma \types M:B;\,x\ty:\sneg{A},\,\Delta
    }
    \and
    \infer[Ctx$_4$]{
      \Gamma,\,x\ty:A;\,M:B \types \Delta
    }
    {
      \Gamma;\,M:B \types x\ty:\sneg{A},\,\Delta
    }
  \end{mathpar}
\end{mdframed}
  \caption{Admissible context switching rules.}\label{fig:ctx-switching}
\end{figure}

\section{\holtty{} Expressive Power}\label{sec:hol2-power}

As an extension of System F$_\omega$, \holty{} can express the logical operations - conjunction, disjunction, existential quantification and so on - via impredicative encodings.  In this section, we shall show that it is also possible to express these operations in \holtty{}.  However, here there is an added layer of complexity, because each type represents \emph{both} its proofs (as usual) \emph{and} its refutations.  That is, a single type, say $A \wedge B$, must be able to support the appropriate introductions and eliminations on the right and on the left, and the appropriate eliminations on the right and left.  In some cases the same encoding works, and in many other cases we can obtain the desired type by taking the dual of an existing type using strong negation, but for conjunction $A \wedge B$ we have to invent to a considerably more complex encoding, to which is devoted a large part of the section.

\paragraph{Existentials and Coimplication.}
We start with the definition of existential quantification and coimplication, since these are, in a sense, already contained in the system as a refutational part of universal quantification and implication respectively.
Both are defined by dualisation:
\[
  \SigmaType{a}{K}{B} \coloneqq \sneg{(\PiType{a}{K}{\sneg{B}})}
  \qquad
  A \coto B \coloneqq \sneg{(\sneg{A} \to \sneg{B})}
\]
Then, in each case, all the introduction and elimination rules are admissible.  For example, for existentials, we can derive both of the introduction rules (left and right):
\begin{mathpar}
    \infer*{
    \infer*{
      \infer*{
        \Gamma,\,a\ty:K;\,M:B \types \Delta
      }
      {
        \Gamma,\,a\ty:K \types M : \sneg{B};\,\Delta
      }
    }
    {
      \Gamma \types \Tyabs{a}{K}{M} : \PiType{a}{K}{\sneg{B}};\,\Delta
    }
  }
  {
    \Gamma;\,\Tyabs{a}{K}{M} : \SigmaType{a}{K}{B} \types \Delta
  }
  \and
  \infer*{
    \infer*{
      \Gamma \types A : K 
      \and
      \infer*{
        \Gamma \types N : B[A/a];\,\Delta
      }
      {
        \Gamma;\, N : \sneg{B}[A/a] \types \Delta
      }
    }
    {
      \Gamma;\,\pack_K\,[\tyabs{a}{K}{\sneg{B}}]\,[A]\,N : \PiType{x}{K}{\sneg{B}} \types \Delta
    }
  }
  {
    \Gamma \types \pack_K\,[\tyabs{a}{K}{\sneg{B}}]\,[A]\,N : \SigmaType{x}{K}{B};\,\Delta
  }
\end{mathpar}
Due to the admissibility of the context switching rules (Figure~\ref{fig:ctx-switching}), both of the elimination rules are admissible.  For space reasons, we suppress the type arguments $[\tyabs{a}{K}{\sneg{B}}]$ and $[\sneg{C}]$ of $\unpack_K$.
\begin{mathpar}
\infer*{
  \infer*{
  \infer*{
    \Gamma;\,M:\SigmaType{a}{K}{B} \types \Delta
  }
  {
    \Gamma \types M : \PiType{a}{K}{\sneg{B}};\,\Delta
  }
  \and\!\!\!\!
  \Gamma \types A : K
  }
  {
    \Gamma \types M\,[A] : \sneg{B[A/a]};\,\Delta
  }
  }
  {
    \Gamma;\,M\,[A]:B[A/a] \types \Delta
  }
\and\!\!
  \infer*{
    \infer*{
    \infer*{
      \Gamma \types M : \SigmaType{a}{K}{B};\,\Delta 
    }
    {
      \Gamma;\,M:\PiType{a}{K}{\sneg{B}} \types \Delta
    }
  \and\!\!\!\!
    \infer*{
      \infer*{
        \Gamma,a\ty:K,x\ty:B \types N : C;\,\Delta
      }
      {
        \Gamma,a\ty:K \types N : C;\,x\ty:\sneg{B},\,\Delta
      }
    }
    {
      \Gamma,a\ty:K;\,N : \sneg{C} \types x\ty:\sneg{B},\,\Delta
    }
  }
  {
    \Gamma;\,\unpack_K\,M\,(\Tyabs{a}{K}{\tyabs{x}{\sneg{(\sneg{B})}}{N}}) : \sneg{C} \types \Delta
  }
  }
  {
    \Gamma \types \unpack_K\,\,M\,(\Tyabs{a}{K}{\tyabs{x}{\sneg{(\sneg{B})}}{N}}) : C;\, \Delta
  }
\end{mathpar}



\newcommand{\labort}{\mathsf{\ell{}abort}}
\newcommand{\rabort}{\mathsf{rabort}}

\paragraph{Truth, Absurdity and Negations}
We can define absurdity using the standard impredicative encoding: its elimination rule, $\rabort$ on the right follows as usual, and it can be introduced on the left by providing a refutation of $0 \coto 0$ (just as $A \to A$ is provable for any $A$, so $A \coto A$ is refutable for any $A$).  Then truth can be defined by dualisation:
\begin{align*}
  \falsefm &\coloneqq \PiType{a}{\starsort}{a} & \truefm &\coloneqq \sneg{0} \\
  \rabort &\coloneqq \Tyabs{c}{\starsort}{\tyabs{x}{\falsefm}{x\,[c]}} & \labort &\coloneqq \Tyabs{c}{\starsort}{\tyabs{x}{\sneg{\truefm}}{\rabort\,[\sneg{c}]\,x}} \\
  \abra{} &\coloneqq \pack_\starsort\,[\tyabs{a}{\starsort}{a}]\,[\falsefm \coto \falsefm]\,(\tyabs{x}{\sneg{\falsefm}}{x})
\end{align*}
It follows immediately that $\types \rabort : \PiType{c}{\starsort}{\falsefm \to c}$ and that $\labort : \SigmaType{c}{\starsort}{\truefm \coto c} \types\,$.  The same term can be used to introduce both absurdity on the left and truth on the right, i.e. $\types \abra{} : \truefm$ and $\abra{} : \falsefm \types$.  Having defined truth and absurdity we can use the standard definitions of intuitionistic and co-negation, respectively:
\[
  \neg A \coloneqq A \to 0 \qquad\qquad \mathord{-}A \coloneqq A \coto 1
\]

\newcommand{\refwedge}{\mathrel{\vert}}
\newcommand{\refcase}{\mathsf{case}}
\newcommand{\linj}{\mathsf{\ell{}in}}
\newcommand{\rinj}{\mathsf{rin}}
\newcommand{\lcase}{\mathsf{\ell{}case}}
\newcommand{\rcase}{\mathsf{rcase}}
\newcommand{\lpi}{\mathsf{\ell}\pi}
\newcommand{\rpi}{\mathsf{r}\pi}
\newcommand{\lpair}{\mathsf{\ell{}pair}}
\newcommand{\rpair}{\mathsf{rpair}}

\paragraph{Conjunction and Disjunction}
However, conjunction is more tricky to define.  The usual encoding has the conjunction of $A$ and $B$ as $\PiType{c}{\starsort}{(A \to B \to c) \to c}$ and then we can construct terms that witness the \emph{right} introduction and elimination rules as usual.  However, this encoding does not capture how $A \wedge B$ behaves refutationally, i.e. on the \emph{left}.  For example, if we have a refutation $M$ of $A \wedge B$, then we should expect to be able to eliminate it on the left, by covering both cases (when the refutation of $A \wedge B$ is actually a refutation of $A$, or actually a refutation of $B$, respectively).
\begin{mathpar}
  \infer{
    \Gamma;\,M:\PiType{c}{\starsort}{(A \to B \to c) \to c} \types \Delta
    \and
    \Gamma;\,N:C \types x:A,\,\Delta
    \and
    \Gamma;\,P:C \types x:B,\,\Delta
  }
  {
    \Gamma;\, f(M,N,P) : C \types \Delta
  }
\end{mathpar}
However, it is not clear how one could define $f(M,N,P)$; to obtain a term of type $C$, the second and third premises require refutations of $A$ and $B$ respectively.  However, the first premise yields only a proof of $A \to B \to c$ and a refutation of $c$ for some type $c$.  This is not enough to conclude that either $A$ or $B$ has a refutation, it may only be, for example, that one of $A$ or $B$ is not provable.

We proceed as follows.  First note that we can isolate the refutational content of $A \wedge B$ by a dualising the usual construction for $A \vee B$.  Let us call this type $A \refwedge B$.
\begin{align*}
  A \refwedge B &\coloneqq \SigmaType{b}{\starsort}{(A \coto b) \coto (B \coto b) \coto b}\\
  \inj_1 &\coloneqq \Tyabs{a}{\starsort}{\Tyabs{b}{\starsort}{\tyabs{z}{\sneg{a}}{\Tyabs{c}{\starsort}{\tyabs{x}{\sneg{(a \coto c)}}{\tyabs{y}{\sneg{(b \coto c)}}{x\,z}}}}}}\\
  \inj_2 &\coloneqq \Tyabs{a}{\starsort}{\Tyabs{b}{\starsort}{\tyabs{z}{\sneg{b}}{\Tyabs{c}{\starsort}{\tyabs{x}{\sneg{(a \coto c)}}{\tyabs{y}{\sneg{(b \coto c)}}{y\,z}}}}}}\\
  \refcase &\coloneqq \Tyabs{a}{\starsort}{\Tyabs{b}{\starsort}{\tyabs{x}{\sneg{(a \refwedge b)}}{\Tyabs{c}{\starsort}{\tyabs{f}{\sneg{(a \coto c)}}{\tyabs{g}{\sneg{(b \coto c)}}{x\,[c]\,f\,g}}}}}}
\end{align*}
Then, as expected, we can establish the preceding are all \emph{refutations} of the appropriate types:
\begin{align*}
  \inj_1 &: \SigmaType{ab}{\starsort}{a \coto (a \refwedge b)} \types\\
  \inj_2 &: \SigmaType{ab}{\starsort}{b \coto (a \refwedge b)} \types\\
  \refcase &: \SigmaType{ab}{\starsort}{(a \refwedge b) \coto \SigmaType{c}{\starsort}{(a \coto c) \coto (b \coto c) \coto c}} \types
\end{align*}

\newcommand{\andty}[3]{#1 \mathrel{\&}_{#3} \!#2}

Then, to construct a type that faithfully encodes both the proof and refutational content of conjunction, we carefully combine both encodings.
So, define the following family of abbreviations, indexed over types $A$, $B$ and $C$: 
\[
  \andty{A}{B}{C} \coloneqq (A \to B \to C) \to \sneg{(C \coto (A \refwedge B)) \to C}
\]
We use these abbreviations in the following:
\begin{align*}
  A \wedge B &\coloneqq \PiType{c}{\starsort}{\andty{A}{B}{c}}\\
  \linj_1 &\coloneqq \Tyabs{ab}{\starsort}{\tyabs{z}{\sneg{a}}{\pack_K\,[\tyabs{c}{\starsort}{\andty{a}{b}{c}}]\,[a]\,((\tyabs{x}{a}{\tyabs{y}{b}{x}}) \cdot{} ((\inj_1\,[a]\,[b]) \cdot{} z))}}\\
  \linj_2 &\coloneqq \Tyabs{ab}{\starsort}{\tyabs{z}{\sneg{b}}{\pack_K\,[\tyabs{c}{\starsort}{\andty{a}{b}{c}}]\,[b]\,((\tyabs{x}{a}{\tyabs{y}{b}{y}}) \cdot{} ((\inj_2\,[a]\,[b]) \cdot{} z))}}\\
  \lcase &\coloneqq \Tyabs{ab}{\starsort}{\tyabs{z}{\sneg{(a \wedge b)}}{\Tyabs{d}{\starsort}{\tyabs{f}{\sneg{(a \coto d)}}{\tyabs{g}{\sneg{(b \coto d)}}{}{}}}}}\\
  &\qquad \unpack_K\,[\tyabs{c}{\starsort}{\andty{a}{b}{c}}]\,z\,[d]\,(\Tyabs{c}{\starsort}{\tyabs{x}{\sneg{(\andty{a}{b}{c})}}{\refcase\,[a]\,[b]\,((\iota_1\,(\iota_2\,x))\,(\iota_2\,(\iota_2\,x)))\,[d]\,f\,g}})
  \\
  \rpair &\coloneqq \Tyabs{ab}{\starsort}{\tyabs{x}{a}{\tyabs{y}{b}{\Tyabs{c}{\starsort}{\tyabs{f}{a \to b \to c}{\tyabs{g}{\sneg{(c \coto (a \refwedge b))}}{f\,x\,y}}}}}}
  \\
  \rpi_1 &\coloneqq \Tyabs{ab}{\starsort}{\tyabs{z}{(a \wedge b)}{z\,[a]\,(\tyabs{x}{a}{\tyabs{y}{b}{x}})\,(\inj_1\,[a]\,[b])}}\\
  \rpi_2 &\coloneqq \Tyabs{ab}{\starsort}{\tyabs{z}{(a \wedge b)}{z\,[b]\,(\tyabs{x}{a}{\tyabs{y}{b}{y}})\,(\inj_2\,[a]\,[b])}}
\end{align*}
The idea is that the type $A \wedge B$ combines both how it can be eliminated, $A \to B \to c$, when it occurs as a proof \emph{and} how it can be eliminated, $c \coto (A \refwedge B)$, when it occurs as a refutation.  In the latter case, this is a description of how to convert a refutation of $\andty{A}{B}{c}$ into a refutation of $A \refwedge B$, for which we already have an eliminator.

To see that these constructions have the appropriate types, we proceed as follows.  For the injections, note that, for $i \in \{1,2\}$:
\begin{mathpar}
    \infer*{
      \infer*{
        a_1,a_2\ty:\starsort;\,\inj_i\,[a_1]\,[a_2] : a_i \coto (a_1 \refwedge a_2) \types z\ty:a_i,\,\Delta
      }
      {
        a_1,a_2\ty:\starsort \types \inj_i\,[a_1]\,[a_2] : \sneg{(a_i \coto (a_1 \refwedge a_2))};\,z\ty:a_i,\,\Delta
      }
      \and
      \infer*{ }
      {
        a_1,a_2\ty:\starsort;\,z:a_i \types z\ty:a_i,\,\Delta
      }
    }
    {
      a_1,a_2\ty:\starsort;\,(\inj_i\,[a_1]\,[a_2]) \cdot{} z : \sneg{(a_i \coto (a_1 \refwedge a_2))} \to a_i \types z\ty:a_i,\,\Delta
    }
\end{mathpar}
Since also $a_1,a_2\ty:\starsort \types \tyabs{x_1}{a_1}{\tyabs{x_2}{a_2}{x_i}} : a_1 \to a_2 \to a_i,\,z\ty:a_i$, it follows that both:
\begin{align*}
  \linj_i &: \SigmaType{a_1}{\starsort}{\SigmaType{a_2}{\starsort}{a_i \coto a_1 \wedge a_2}} \types
\end{align*}
Next, we verify the expected type of the case analysis construction.  Let $\Gamma \coloneqq [a\ty:\starsort,\,b\ty:\starsort,\,c\ty:\starsort]$ and $\Delta \coloneqq [x\ty:\andty{a}{b}{c},\,g\ty:b \coto d,\,f\ty:a \coto d,\,z:(a \wedge b)]$ in the following derivations:
\begin{mathpar}
  \infer*{
    \infer*{
      \infer*{
        \infer*{
          \infer*{ }
          {
            \Gamma;\,x:\andty{a}{b}{c} \types \Delta
          }
        }
        {
          \Gamma;\,\iota_2\,x : \sneg{(c \coto (a \refwedge b))} \to c \types \Delta
        }
      }
      {
        \Gamma \types \iota_1\,(\iota_2\,x) : \sneg{(c \coto (a \refwedge b))};\,\Delta
      }
    }
    {
      \Gamma;\,\iota_1\,(\iota_2\,x) : c \coto (a \refwedge b) \types \Delta
    }
    \and
    \infer*{
      \infer*{
          \infer*{ }
          {
            \Gamma;\,x:\andty{a}{b}{c} \types \Delta
          }
        }
        {
          \Gamma;\,\iota_2\,x : \sneg{(c \coto (a \refwedge b))} \to c \types \Delta
        }
    }
    {
      \Gamma;\,(\iota_2\,(\iota_2\,x)) : c \types \Delta
    }
  }
  {
    \Gamma;\,(\iota_1\,(\iota_2\,x))\,(\iota_2\,(\iota_2\,x)) : a \refwedge b \types \Delta
  }
\end{mathpar}
Therefore, also: $\Gamma;\,\refcase\,[a]\,[b]\,((\iota_1\,(\iota_2\,x))\,(\iota_2\,(\iota_2\,x)))\,[d]\,f\,g : d \types \Delta$, and so:
\[
  \Gamma;\,\unpack_K\,[\tyabs{c}{K}{(\andty{a}{b}{c})}]\,z\,[d]\,(\Tyabs{c}{\starsort}{\tyabs{x}{\sneg{(\andty{a}{b}{c})}}{\refcase\,[a]\,[b]\,((\iota_1\,(\iota_2\,x))\,(\iota_2\,(\iota_2\,x)))\,[d]\,f\,g}}) : d \types \Delta
\]
Hence, it follows that we have the refutation:
\[
  \lcase : \SigmaType{ab}{\starsort}{(a \wedge b) \coto \SigmaType{d}{\starsort}{(a \coto d) \coto (b \coto d) \coto d}} \types 
\]
Since $\inj_i : \SigmaType{a_1}{\starsort}{\SigmaType{a_2}{\starsort}{a_i \coto (a_1 \refwedge a_2)}} \types\,$, it is straightforward to see that we have proofs:
\begin{align*}
  \types \rpi_1 &: \PiType{ab}{\starsort}{(a \wedge b) \to a}\\
  \types \rpi_2 &: \PiType{ab}{\starsort}{(a \wedge b) \to b}\\
  \types \rpair &: \PiType{ab}{\starsort}{a \to b \to (a \wedge b)}
\end{align*}
Then, we can define disjunction by taking the dual of $A \wedge B$ using the strong negation:
\begin{align*}
  A \vee B &\coloneqq \sneg{(\sneg{A} \wedge \sneg{B})} 
  & \lpair &\coloneqq \Tyabs{ab}{\starsort}{\tyabs{x}{\sneg{(\sneg{a})}}{\tyabs{y}{\sneg{(\sneg{b})}}{\rpair\,[\sneg{a}]\,[\sneg{b}]\,x\,y}}}\\
  \rinj_1 &\coloneqq \Tyabs{ab}{\starsort}{\tyabs{x}{a}{\linj_1\,[\sneg{a}]\,[\sneg{b}]\,x}}\qquad 
  & \lpi_1 &\coloneqq \Tyabs{ab}{\starsort}{\tyabs{x}{a \vee b}{\rpi_1\,[\sneg{a}]\,[\sneg{b}]\,x}}\\
  \rinj_2 &\coloneqq \Tyabs{ab}{\starsort}{\tyabs{x}{b}{\linj_2\,[\sneg{a}]\,[\sneg{b}]\,x}} 
  & \lpi_2 &\coloneqq \Tyabs{ab}{\starsort}{\tyabs{x}{a \vee b}{\rpi_1\,[\sneg{a}]\,[\sneg{b}]\,x}}\\
  \rcase &\coloneqq \mathrlap{\Tyabs{ab}{\starsort}{\tyabs{z}{(a \vee b)}{\Tyabs{d}{\starsort}{\tyabs{f}{a \to d}{\tyabs{g}{b \to d}{}}}}}}\\
    & \qquad\quad \mathrlap{\lcase\,[\sneg{a}]\,[\sneg{b}]\,z\,[\sneg{d}]\,(\tyabs{x}{a}{f\,x})\,(\tyabs{y}{b}{g\,x})}
\end{align*}

We therefore have the following kind of expressive adequacy.
\begin{theorem}[Expressive Adequacy]\label{thm:hol2-expr-adequacy}
  \holtty{} can adequately express the logical constructions $0$ and $1$, $\PiType{a}{K}{A}$ and $\SigmaType{a}{K}{A}$, $A \to B$ and $A \coto B$, $A \wedge B$ and $A \vee B$, $\neg A$ and $\mathord{-} A$.
\end{theorem}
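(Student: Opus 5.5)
We first fix what ``adequately express'' means. For each connective we require closed terms in \holtty{} witnessing the \emph{four} families of rules that \tintcty{} has for it: introduction and elimination on the right (proof), and introduction and elimination on the left (refutation). Each witness is either a proof of a universally quantified arrow type, of the form $\types M : \PiType{\vec a}{\starsort}{\ldots \to \ldots}$, or a refutation of an existentially quantified coimplication, of the form $M : \SigmaType{\vec a}{\starsort}{\ldots \coto \ldots} \types$. From such a closed witness, the corresponding derivable rule with arbitrary $\Gamma$ and $\Delta$ follows by weakening: we instantiate with \rlnm{$\forall$-Elim-R} and \rlnm{$\to$-Elim-R} on the right, or with the admissible existential eliminations and the left coimplication application on the left. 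The proof is therefore a connective-by-connective check, organised so that the dual of each connective is obtained for free.

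\textbf{Step 1: the primitive and dualised connectives.} For $\PiType{a}{K}{A}$ and $A \to B$, all four rule families are primitive rules of \holtty{}. For $\SigmaType{a}{K}{B}$ and $A \coto B$, defined as $\sneg{(\PiType{a}{K}{\sneg B})}$ and $\sneg{(\sneg A \to \sneg B)}$, we use the derivations already displayed. These are the left and right introductions, and the eliminations that rely on the admissible context-switching rules of Figure~\ref{fig:ctx-switching}. One point we still need is that $\sneg{(\sneg{A})}$ and $A$ are interderivable as principal formulas. This follows from two uses of the $\sneg{}$ rules. It is what lets the double negations arising from the definitions be removed.

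\textbf{Step 2: $0$, $1$, $\neg$ and $-$.} For $0 = \PiType{a}{\starsort}{a}$ we give two checks.
\begin{itemize}
\item The right elimination is $\rabort$, typed by \rlnm{$\forall$-Elim-R}.
\item The left introduction is $\abra{}$. Here the refutation of $\falsefm \coto \falsefm$ is $\tyabs{x}{\sneg\falsefm}{x}$, which after unfolding is a proof of $\sneg\falsefm \to \sneg\falsefm$, and we pack it with \rlnm{$\forall$-Intro-L}.
\end{itemize}
The rules for $1 = \sneg 0$ follow by dualisation, exactly as in Step~1. The negations $\neg A$ and $-A$ are abbreviations built from $\to$, $\coto$, $0$ and $1$, so their rules are composites of rules already obtained.

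\textbf{Step 3: conjunction and disjunction.} This is the main obstacle. The paper has already typed $\linj_i$, $\lcase$, $\rpair$ and $\rpi_i$; these give left introduction, left elimination, right introduction and right elimination for $\wedge$. The remaining work is to check the delicate points.
\begin{itemize}
\item In $\linj_i$, the pack chooses the witness $c := a_i$. The component $(\tyabs{x}{a}{\tyabs{y}{b}{x}}) \cdot ((\inj_i\,[a]\,[b]) \cdot z)$ then refutes $\andty{a}{b}{a_i}$ via \rlnm{$\to$-Intro-L} twice, using the displayed refutation of $\sneg{(a_i \coto (a \refwedge b))} \to a_i$.
\item $\lcase$ is well typed by \rlnm{$\forall$-Elim-L}, with the displayed derivation supplying the premise under co-assumption $x : \andty{a}{b}{c}$. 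This uses Ctx$_1$ to present $x$ with a strongly negated type.
\item For $\rpi_i$, we must exhibit a proof of $\sneg{(a_i \coto (a \refwedge b))}$. This is exactly the refutation of $\inj_i\,[a]\,[b]$, turned around by \rlnm{$\sneg{}$-Intro-R}.
\end{itemize}
Disjunction is $\sneg{(\sneg A \wedge \sneg B)}$. Its four rule families are the $\wedge$ witnesses instantiated at $\sneg a$, $\sneg b$, with sides swapped by the $\sneg{}$ rules. The terms $\rinj_i$, $\lpair$, $\lpi_i$ and $\rcase$ are exactly these instantiations. One small discrepancy in the displayed definitions has to be corrected before these terms type-check:
\begin{itemize}
\item $\lpi_2$ should use $\rpi_2$ rather than $\rpi_1$.
\item In $\rcase$, the second function should be $\tyabs{y}{b}{g\,y}$ rather than $g\,x$.
\end{itemize}
With these fixes, and using Step~1's interderivability of $\sneg{(\sneg A)}$ and $A$ together with \rlnm{Conv}, each witness has the required type. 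This completes the enumeration in the statement.
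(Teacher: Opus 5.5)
Your overall route is the paper's own. Section~\ref{sec:hol2-power} establishes the theorem by giving encodings and closed witnesses, obtaining $\SigmaType{a}{K}{B}$, $A \coto B$, $1$ and $\vee$ by dualising through $\sneg{}$. Your checks for $0$, $1$, the primitive and dualised connectives, and the four $\wedge$ witnesses are sound.

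The gap is in the disjunction step. It is not true that the displayed terms type-check once $\mathsf{\ell}\pi_2$ and the $g\,x$ in $\mathsf{rcase}$ are repaired. You have missed that the $\sneg{}$-rules only move the \emph{principal} formula across the turnstile and never change binder annotations inside the term. This matters for two reasons:
\begin{itemize}
  \item \rlnm{$\to$-Intro-R} fixes the domain of $\tyabs{x}{A}{M}$ to be $A$ up to $\beta$-conversion alone. In $\beta$-conversion $\sneg{}$ is inert, so $a$ and $\sneg{(\sneg{a})}$ are different domains.
  \item An assumption $x:C$ can only be used at types $\cong_\beta C$ on the right, or $\cong_\beta \sneg{C}$ on the left. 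Here $\cong_\beta$ cancels double negations only at the root, so it preserves $\sneg{}$-parity (cf.\ Lemma~\ref{lem:2hol-inversion}(1),(2)).
\end{itemize}
The $\vee$ terms therefore have to be re-annotated, and as displayed they are not:
\begin{itemize}
  \item In $\mathsf{\ell{}pair}$, from $x:\sneg{(\sneg{a})}$ you can derive $\types x:a$ but never $\types x:\sneg{a}$, which $\mathsf{rpair}\,[\sneg{a}]\,[\sneg{b}]$ requires. The binders must be $x:\sneg{a}$ and $y:\sneg{b}$.
  \item In $\mathsf{\ell}\pi_i$, the assumption $x : a \vee b = \sneg{(\sneg{a} \wedge \sneg{b})}$ has the wrong parity to be an argument of $\mathsf{r}\pi_i\,[\sneg{a}]\,[\sneg{b}]$. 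The binder must be $x:\sneg{(a \vee b)}$, mirroring $z:\sneg{(a \wedge b)}$ in $\mathsf{\ell{}case}$. The term then refutes $\SigmaType{ab}{\starsort}{(a \vee b) \coto a}$ for $i=1$, and the analogue with $b$ for $i=2$.
  \item In $\mathsf{rcase}$, the arguments of $\mathsf{\ell{}case}\,[\sneg{a}]\,[\sneg{b}]\,z\,[\sneg{d}]$ must refute $\sneg{a} \coto \sneg{d} = \sneg{(\sneg{(\sneg{a})} \to \sneg{(\sneg{d})})}$. That is, they must prove $\sneg{(\sneg{a})} \to \sneg{(\sneg{d})}$. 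No abstraction annotated with $a$ can do this, since $a$ and $\sneg{(\sneg{a})}$ are not $\beta$-convertible. Your corrected $\tyabs{y}{b}{g\,y}$ is therefore still ill-typed. You need $\tyabs{x}{\sneg{(\sneg{a})}}{f\,x}$ and $\tyabs{y}{\sneg{(\sneg{b})}}{g\,y}$.
\end{itemize}
Only $\mathsf{rin}_i$ is correct as displayed. With these re-annotations your plan goes through.

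There are two smaller slips.
\begin{itemize}
  \item In $\mathsf{\ell{}case}$, the binder $x:\sneg{(\ldots)}$ comes directly from \rlnm{$\forall$-Elim-L}, which already pairs the co-assumption $x:B$ with the annotation $\sneg{B}$. So \rlnm{Ctx$_1$} is not needed for $x$. It is needed instead for $f$, $g$ and $z$, when those co-assumptions are abstracted by the derived coimplication left-introduction.
  \item For $i=2$, the first component of $\mathsf{\ell{}in}_2$ is $\tyabs{x}{a}{\tyabs{y}{b}{y}}$, not the $\tyabs{x}{a}{\tyabs{y}{b}{x}}$ you wrote.
\end{itemize}
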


\section{\holtty{} Consistency}\label{sec:hol2-consistency}

We establish certain key properties of the system by a translation into Geuvers' \holty{}.  We give a full definition in 
\iftoggle{supplementary}{%
Appendix~\ref{sec:apx-hol2-consistency}, 
}%
{%
the appendices of the supplementary material,
}%
but for short one can understand \holty{} as a reduction of \holtty{} by the following operations:  
\begin{itemize}
  \item Removing $\sneg{}$ from the syntax of types, $\cdot{}$, $\iota_1$, $\iota_2$, $\pack_K$ and $\unpack_K$ from the syntax of terms.
  \item Removing the kinding rule \rlnm{SNeg}.
  \item Removing the rules \rlnm{Id-L}, \rlnm{$\forall$-Intro-L}, \rlnm{$\forall$-Elim-L}, \rlnm{$\to$-Intro-L}, \rlnm{$\to$-Elim-L$_1$}, \rlnm{$\to$-Elim-L$_2$}, \rlnm{$\sneg$-Intro-L}, \rlnm{$\sneg$-Intro-R}, \rlnm{$\sneg$-Elim-L}, \rlnm{$\sneg$-Elim-R}.
  \item Removing the co-assumption environment $\Delta$ from all the remaining rules.
\end{itemize}

\newcommand{\mkPair}{\mathsf{mkPair}}

We assume the standard encoding of conjunction $\mathord{\wedge} : \starsort \to \starsort \to \starsort$ in \holty{}, along with introduction form $\mkPair : \PiType{ab}{\starsort}{a \to b \to (a \wedge b)}$, and eliminators $\pi_1 : \PiType{ab}{\starsort}{(a \wedge b) \to a}$ and $\pi_2 : \PiType{ab}{\starsort}{(a \wedge b) \to b}$.  We assume the standard encoding of existential types $\exists_K : (K \to \starsort) \to \starsort$ in \holty{}, with introduction $\pack_K : \PiType{a}{K \to \starsort}{\PiType{b}{K}{a\,b \to \SigmaType{c}{K}{a\,c}}}$ and elimination $\unpack_K : \PiType{a}{K \to \starsort}{(\SigmaType{c}{K}{a\,c}) \to \PiType{b}{\starsort}{(\PiType{c}{K}{a\,c \to b}) \to b}}$.

In intuitionistic logics extended with strong negation, e.g. \cite{nelson1949constructible,gurevich1977intuitionistic}, it is typical to derive a number of useful results by constructing a mapping of provable formulas from the extended logic (with strong negation) into provable formulas of the underlying logic (without strong negation).  The idea is, given a formula of the extended logic, to obtain an equivalent, `reduced', formula in the underlying logic by pushing strong negations inward until they are atomic, and then replacing strongly negative literals $\sneg{a}$ by (intuitionistically) negative ones $\neg\,A$.  

For example, consider the formulas $\sneg{(p \to q)} \to p \wedge \sneg{q}$ and $\neg(p \to q) \to p \wedge \neg q$, where the latter is obtained by replacing every occurrence of the strong negation by the intuitionistic one.  The first of these formulas is provable in our system \tintcty{}, but the second is not\footnote{The strong negation is definable in \tint{} \cite{oddsson2025strongnegationdefinable2int}, and \tint{} is conservative over intuitionistic propositional logic \cite{wansing2016falsification}, but this formula is not intuitionistically valid.}.  However, we can observe that, by pushing the strong negation inward using the equivalences (a)--(f) of Section~\ref{sec:constructive-falsity}, the first formula is equivalent to $p \wedge \sneg{q} \to p \wedge \sneg{q}$, in which each strong negation is atomic.  Now, when each occurrence of a strong negation is replaced with the corresponding intuitionistic negation, an equivalent formula $p \wedge \neg q \to p \wedge \neg q$ is obtained, and this formula is already provable in intuitionistic logic (without strong negation).

However, this strategy fails in higher-order logic.  Consider the formula: 
\begin{equation}\label{eq:hol2-bad-example-1}
  \SigmaType{a}{\starsort}{(a \to p \to q) \wedge (\sneg{a} \to p \wedge \sneg{q})}
\end{equation}
This formula is provable in (the logic of) \holtty{} by supplying the witness $a \coloneqq p \to q$.  However, formula  is already `reduced': all the strong negations are atomic, and after replacing strong by intuitionistic negations, we obtain: 
\begin{equation}\label{eq:hol2-bad-example-2}
  \SigmaType{a}{\starsort}{(a \to p \to q) \wedge (\neg a \to p \wedge \neg q)}
\end{equation}
Formulas (\ref{eq:hol2-bad-example-1}) and (\ref{eq:hol2-bad-example-2}) are clearly \emph{not} equivalent, since the latter implies $\neg (p \to q) \to p \wedge \neg q$.  Similar counterexamples can be constructed by exploiting the convertibility of formulas, e.g. 
\begin{equation}\label{eq:hol2-bad-example-3}
  (\tyabs{a}{\starsort}{(a \to p \to q) \wedge (\sneg{a} \to p \wedge \sneg{q})})\,(p \to q)
\end{equation}
Hence, for \holtty{} we must construct a more sophisticated mapping.

Our approach is, given a formula of \holtty{}, to replace all subformulas (i.e. subexpressions of kind $\starsort$) by \emph{pairs} of subformulas, in which the first component of the pair is (convertible with) the original subformula, and the second component is (convertible with) the equivalent of the strong negation of that subformula according to (a)--(f) of Section~\ref{sec:constructive-falsity}.
Since \holty{} contains a copy of simply-typed $\lambda$-calculus at the type level, we can encode this limited form of pairing.  
\begin{definition}
  In \holty{} (and \holtty{}), define the following abbreviations, ensuring $a$ in the second clause is chosen fresh with respect to $A$ and $B$, i.e. $a \notin \fv(A,B)$.
  \begin{align*}
    \starsort^2 &\coloneqq (\starsort \to \starsort \to \starsort) \to \starsort
      & \pi_1(A) &\coloneqq A\,(\tyabs{b}{\starsort}{\tyabs{c}{\starsort}{b}})\\
    \abra{A,B} &\coloneqq \tyabs{a}{\starsort \to \starsort \to \starsort}{a\,A\,B}
      & \pi_2(A) &\coloneqq A\,(\tyabs{b}{\starsort}{\tyabs{c}{\starsort}{b}})\\
  \end{align*}
\end{definition}

\newcommand{\derefK}{\delta}
\newcommand{\derefT}{\theta}
\newcommand{\derefP}{\gamma}

When such a pair of subformulas $A : \starsort^2$ is used in a position where a single formula is required, e.g. as an argument to a standard logical operator, like implication, then the first component is projected out.  The action of the strong negation on such pairs is to choose the second component, rather than the first.  Thus, a subformula like $A \to B$ will be mapped to a pair $\abra{\pi_1(\derefT(A)) \to \pi_1(\derefT(B)),\,\pi_1(\derefT(A)) \wedge \pi_2(\derefT(B))}$, where $\derefT(A)$ and $\derefT(B)$ are the image of $A$ and $B$ under the mapping.

So, the idea is that the formula (\ref{eq:hol2-bad-example-3}) above will be translated to (a type convertible with)\footnote{However, the fact that conjunction is a defined operator complicates this particular example.}:
\[
  (\tyabs{a}{\starsort^2}{(\pi_1(a) \to \pi_1(p) \to \pi_1(q)) \wedge (\pi_2(a) \to \pi_1(p) \wedge \pi_2(q))})\,\abra{\pi_1(p) \to \pi_1(q),\,\pi_1(p) \wedge \pi_2(q)}
\]
where now we assume $p,q:\starsort^2$. Thus, an encoding of the strong negation of a particular logical operation is constructed in tandem with the original operation, and this pair of formulas can then be passed along the control flow of the type until the appropriate projections are applied.  Consequently, this formula is provable in both \holtty{} and \holty{}.

\holtty{} is more than just its types, so the complete translation consists of mappings on kinds, types and proof terms.
\begin{definition}[Mapping from \holtty{} to \holty{}]
  In Figure~\ref{fig:hol2-to-hol-mappings}, we define three mappings: $\derefK$ from kinds of \holtty{} to kinds of \holty{}, $\derefT$, from types of \holtty{} to types of \holty{} and, finally, $\derefP$ from terms of \holtty{} to terms of \holty{}.
\end{definition}

\begin{figure}
  \begin{minipage}{.6\textwidth}
  \begin{align*}
    \derefK(k)           &= k\\
    \derefK(\starsort)   &= \starsort^2\\
    \derefK(K_1 \to K_2) &= \derefK(K_1) \to \derefK(K_2)\\
    &\\
    \derefT(a) &= a\\
    \derefT(A\,B) &= \derefT(A)\,\derefT(B)\\
    \derefT(\tyabs{a}{K}{A}) &= \tyabs{a}{\derefK(K)}{\derefT(A)}\\
    \derefT(A \to B) &= \abra{\pi_1(\derefT(A)) \to \pi_1(\derefT(B)),\,\pi_1(\derefT(A)) \wedge \pi_2(\derefT(B))}\\
    \derefT(\PiType{a}{K}{A}) &= \abra{\PiType{a}{\derefK(K)}{\pi_1(\derefT(A))},\,\SigmaType{a}{\derefK(K)}{\pi_2(\derefT(A))}}\\
    \derefT(\sneg{A}) &= \abra{\pi_2(\theta(A)),\,\pi_1(\theta(A))}
  \end{align*}
  \end{minipage}
  \begin{minipage}{.35\textwidth}
    \begin{align*}
    \derefP(x) &= x\\
    \derefP(\iota_1) &= \pi_1\\
    \derefP(\iota_2) &= \pi_2\\
    \derefP(\cdot{}) &= \mkPair{}\\
    \derefP(\pack_K) &= \pack_{\derefK(K)}\\
    \derefP(\unpack_K) &= \mathsf{unpack}_{\derefK(K)}\\
    \derefP(M\,N) &= \derefP(M)\,\derefP(N)\\
    \derefP(M\,[A]) &= \derefP(M)\,[\derefT(A)]\\
    \derefP(\tyabs{x}{A}{M}) &= \tyabs{x}{\derefT_1(A)}{\derefP(M)}\\
    \derefP(\Tyabs{a}{K}{M}) &= \Tyabs{a}{\derefK(K)}{\derefP(M)}
  \end{align*}
  \end{minipage}
  \caption{Mapping from \holtty{} into \holty{}: kinds, types and proofs.}\label{fig:hol2-to-hol-mappings}
\end{figure}

Let us abbreviate $\pi_i(\derefT(A))$ by $\derefT_i(A)$.  This way, the fourth clause of the above definition, for example, can be more simply written as $\derefT(A \to B) = \abra{\derefT_1(A) \to \derefT_1(B),\,\derefT_1(A) \wedge \derefT_2(B)}$.
By an abuse of notation, define the extension of $\theta_1$ and $\theta_2$ pointwise over environments by, for each $i \in \{1,2\}$:
  \begin{align*}
    \derefT_i(\emptyset) &= \emptyset\\
    \derefT_i(\Gamma,a\mathord{:}K) &= \derefT_i(\Gamma),a\mathord{:}\derefK(K)\\
    \derefT_i(\Gamma,x\mathord{:}A) &= \derefT_i(\Gamma),x\mathord{:}\derefT_i(A)
  \end{align*}

We show that all three of the mappings interact properly, by demonstrating that proofs of given formulas under assumptions and co-assumptions are preserved.

\begin{lemma}\label{lem:2hol-proof-translations}
  Both of the following:
    \begin{enumerate}
      \item If $\Gamma \types M : A;\,\Delta$ in \holtty{} then $\derefT_1(\Gamma),\derefT_2(\Delta^R) \types \derefP(M) : \derefT_1(A)$ in \holty{}.
      \item If $\Gamma;\,M : A \types \Delta$ in \holtty{} then $\derefT_1(\Gamma),\derefT_2(\Delta^R) \types \derefP(M) : \derefT_2(A)$ in \holty{}.
    \end{enumerate}
\end{lemma}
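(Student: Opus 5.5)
The proof will be by simultaneous induction on the derivations of the two judgements $\Gamma \types M : A;\,\Delta$ and $\Gamma;\,M:A \types \Delta$. Before the main induction I will establish some auxiliary facts about the mappings.

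The first is substitutivity: $\derefT(A[B/a]) = \derefT(A)[\derefT(B)/a]$. This follows by a routine induction on $A$, since $\derefT$ is compositional and commutes with binders.

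The second is preservation of conversion: if $A \conv B$ then $\derefT(A) \conv \derefT(B)$, and consequently $\derefT_i(A) \conv \derefT_i(B)$. It suffices to check the base case \rlnm{CRedex}, which is substitutivity, together with the congruence cases. Note that $\pi_i\abra{C_1,C_2} \conv C_i$ by two $\beta$-steps, assuming the second projection is meant to select the second component (I read the definition of $\pi_2$ that way).

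The third is kinding preservation: if $\Gamma \types A : K$ then $\derefT_1(\Gamma),\derefT_2(\Delta^R) \types \derefT(A) : \derefK(K)$, and well-formedness of $\Gamma,\Delta^R$ transfers. The point to check here is that \rlnm{SNeg} maps to a pair of projections, which is well-kinded at $\starsort^2$. Since co-assumptions contain no kind statements, the kinding part only depends on $\Gamma$.

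With these in hand, the identity rules are immediate. In \rlnm{Id-R} the variable is looked up in $\derefT_1(\Gamma)$. In \rlnm{Id-L}, $x : A \in \Delta$ gives $x : \derefT_2(A)$ in $\derefT_2(\Delta^R)$, which is exactly clause (2).

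The \emph{right} rules are standard F$\omega$ reasoning. The one thing to watch is $\derefT_1(A\to B) \conv \derefT_1(A)\to\derefT_1(B)$ and $\derefT_1(\PiType{a}{K}{B}) \conv \PiType{a}{\derefK(K)}{\derefT_1(B)}$, so \holty{}'s \rlnm{Conv} must be inserted. For \rlnm{$\forall$-Elim-R} we also need $\pi_1(\derefT(B))[\derefT(A)/a] = \derefT_1(B[A/a])$ from substitutivity. \rlnm{Conv} itself follows from the second auxiliary fact.

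The strong negation rules are pure conversions, using $\derefT_2(\sneg{A}) \conv \derefT_1(A)$ and $\derefT_1(\sneg{A}) \conv \derefT_2(A)$.

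For the left arrow rules, $\derefT_2(A\to B) \conv \derefT_1(A)\wedge\derefT_2(B)$, so that:
\begin{itemize}
\item $\derefP(M\cdot_{A,B}N) = \mkPair\,[\derefT(A)]\,[\derefT(B)]\,\derefP(M)\,\derefP(N)$;
\item $\iota_1,\iota_2$ become $\pi_1,\pi_2$.
\end{itemize}
There is a subtlety: the type arguments are translated as $\derefT(A) : \starsort^2$, while $\mkPair$ expects types of kind $\starsort$. I expect to resolve this by reading the \holty{} constants at the translated indices, i.e.\ treating $\derefP(\cdot)$ applied to $[A]$ as instantiating at $\derefT_1(A)$ (respectively $\derefT_2$) as the rule dictates. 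Equivalently, I would slightly refine $\derefP$ on these constant applications.

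The same issue arises for $\pack_K$ and $\unpack_K$, which I will handle in the same way. For \rlnm{$\forall$-Intro-L}, the refutation of $B[A/a]$ yields a term of type $\derefT_2(B)[\derefT(A)/a]$, and packing with witness $\derefT(A) : \derefK(K)$ gives $\SigmaType{a}{\derefK(K)}{\derefT_2(B)} \conv \derefT_2(\PiType{a}{K}{B})$.

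\textbf{The main obstacle} is \rlnm{$\forall$-Elim-L}. Its premise has the extra co-assumption $x : B$, and the conclusion's term abstracts $x$ at type $\sneg{B}$. In the target, the co-assumption contributes $x : \derefT_2(B)$, placed via $\Delta^R$ at the \emph{front} of the reversed co-assumption list, i.e.\ adjacent to $a : \derefK(K)$. The lambda annotation translates to $\derefT_1(\sneg{B}) \conv \derefT_2(B)$, so $\Tyabs{a}{\derefK(K)}{\tyabs{x}{\derefT_2(B)}{\derefP(N)}}$ has type $\PiType{a}{\derefK(K)}{\derefT_2(B)\to\derefT_2(C)}$. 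Applying $\unpack$ at result type $\derefT_2(C)$ then closes the case.

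The care required here is the ordering of environments. The reversal $\Delta^R$ is exactly what makes $\Gamma, a:K, x:B$ appear in a well-scoped order; ordinarily $a$ would sit after $\Delta$'s other variables. I would therefore prove a weakening/exchange lemma for \holty{} allowing term variables whose types do not mention $a$ to be permuted past $a$, using the freshness side condition.
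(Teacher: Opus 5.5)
Your proposal is correct and takes essentially the same route as the paper. Both argue by mutual induction using the translation's compatibility with substitution, conversion and kinding, insert \holty{}'s \rlnm{Conv} throughout, and in the \rlnm{$\forall$-Elim-L} case permute the environment using the freshness of $a$ for $\Delta$. One slip in your description there: the new co-assumption lands at the end, as $\derefT_2(\Delta^R),x\ty:\derefT_2(B)$, not next to $a$, so it is $a$ that must be moved past $\derefT_2(\Delta^R)$ — exactly what your exchange lemma provides. Your explicit treatment of the type arguments of $\cdot$, $\iota_i$, $\pack_K$ and $\unpack_K$ (instantiating at the $\derefT_1$/$\derefT_2$ components), and your use of $\derefT_2$ in the \rlnm{$\forall$-Intro-L} case, are more careful than the paper's write-up. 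The paper silently elides the resulting kind mismatch and writes $\derefT_1$ where $\derefT_2$ is meant.
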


\noindent
As a corollary, we obtain the consistency of the system.

\begin{theorem}[Consistency]\label{thm:hol2-consistency}
  There is no proof term $M$ such that $\types M : \PiType{a}{\starsort}{a}$ in \holtty{}.
\end{theorem}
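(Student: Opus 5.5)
The plan is to argue by contradiction, reducing to the known consistency of \holty{} through Lemma~\ref{lem:2hol-proof-translations}. Suppose $\types M : \PiType{a}{\starsort}{a}$ in \holtty{}, that is, $\varnothing \types M : \PiType{a}{\starsort}{a};\,\varnothing$. Clause (1) of Lemma~\ref{lem:2hol-proof-translations}, with $\Gamma$ and $\Delta$ both empty, gives $\types \derefP(M) : \derefT_1(\PiType{a}{\starsort}{a})$ in \holty{}, since $\derefT_1(\varnothing),\derefT_2(\varnothing)$ is empty.

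Next I compute the translated type. By definition,
\[
\derefT(\PiType{a}{\starsort}{a}) = \abra{\PiType{a}{\starsort^2}{\pi_1(a)},\,\SigmaType{a}{\starsort^2}{\pi_2(a)}},
\]
so $\derefT_1(\PiType{a}{\starsort}{a}) = \pi_1(\abra{\cdots})$, which is $\beta$-convertible to $\PiType{a}{\starsort^2}{\pi_1(a)}$. Using \rlnm{Conv} in \holty{}, we obtain $\types \derefP(M) : \PiType{a}{\starsort^2}{\pi_1(a)}$.

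The remaining step is to turn this into a closed inhabitant of $\PiType{b}{\starsort}{b}$ in \holty{}, contradicting Geuvers' consistency result for \holty{} (which follows from strong normalisation of \holty{} / System F$\omega$ with kind variables). I instantiate $a$ with a pair whose first component is the target: under $b\ty:\starsort$, the type $\abra{b,b} = \tyabs{c}{\starsort\to\starsort\to\starsort}{c\,b\,b}$ has kind $\starsort^2$. Then $\derefP(M)\,[\abra{b,b}] : \pi_1(\abra{b,b})$, and $\pi_1(\abra{b,b}) \conv (\tyabs{x}{\starsort}{\tyabs{y}{\starsort}{x}})\,b\,b \conv b$. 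Hence $\types \Tyabs{b}{\starsort}{\derefP(M)\,[\abra{b,b}]} : \PiType{b}{\starsort}{b}$ in \holty{}, which is impossible.

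The only real obstacle is the conversion bookkeeping. One has to check that $\pi_1$ projects the first component (the displayed definition of $\pi_2$ seems to contain a typo, but only $\pi_1$ is used here). One also has to check that the \rlnm{Conv} side condition holds, namely that the target types are well-kinded at $\starsort$, which is immediate. Consistency of \holty{} itself is taken from the literature.
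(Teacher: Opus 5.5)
Your proposal is correct and follows essentially the same route as the paper. Both arguments translate via Lemma~\ref{lem:2hol-proof-translations} and convert the translated type to $\PiType{a}{\starsort^2}{\pi_1(a)}$. Both then weaken by a fresh $b\ty:\starsort$, instantiate at $\abra{b,b}$, and abstract over $b$, producing an inhabitant of $\PiType{b}{\starsort}{b}$ in \holty{} that contradicts its consistency.
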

\begin{proof}
  Suppose for contradiction that there is such an $M$.  Then by Lemma~\ref{lem:2hol-proof-translations}, there is a \holty{} derivation of $\types \derefP(M) : \derefT_1(\PiType{a}{\starsort}{a})$.  By conversion in \holty{}, there is a derivation of $\types \derefP(M) : \PiType{a}{\starsort^2}{\pi_1\,a}$.  Let $b$ be a fresh type variable, then by weakening, there is a \holty{} derivation of $b\ty:\starsort \types \derefP(M) : \PiType{a}{\starsort^2}{\pi_1\,a}$, and thus one can also derive $\types \Tyabs{b}{\starsort}{\derefP(M)\,[\abra{b,b}]} : \PiType{b}{\starsort}{b}$.  However, this type is not inhabited in \holty{}.
\end{proof}
\section{\holtty{} Proofs and Refutations}\label{sec:hol2-proofs}

In this section, we show a version of the existence property for our system. That is, in any kinding environment, from either a proof of an existential $\SigmaType{a}{K}{A}$ or a proof of a strongly negated universal $\sneg{(\PiType{a}{K}{A})}$ one can obtain directly an appropriate witness of kind $K$.

Unfortunately, our mappings from \holtty{} into \holty{} of the previous section do not shed much light on this aspect of the system, since the structure of a given \holtty{} proof may be changed quite substantially when undergoing the mapping.  For example, a proof $M$ of a formula $\SigmaType{a}{(\starsort \to \starsort) \to \starsort}{A}$ is sent to a proof $\derefP(M)$ of a formula $\SigmaType{a}{(\starsort^2 \to \starsort^2) \to \starsort^2}{\derefT_1(A)}$.  So, although the image of the formula is again an existential, it is not clear how one would obtain the correct \holtty{} witness of kind $(\starsort \to \starsort) \to \starsort$ from the \holty{} witness that has kind $(\starsort^2 \to \starsort^2) \to \starsort^2$, in general.

Therefore, we look to a more fine grained analysis of the structure of proofs and refutations.  To this end, define reduction on proof terms as follows.
\begin{definition}
  The relation of \emph{proof reduction}, written $M \ped N$, is defined inductively as the contextual closure of the following rules:
  \begin{mathpar}
    (\tyabs{x}{A}{M})\,N \ped M[N/x]
    \and
    (\Tyabs{a}{K}{M})\,[A] \ped M[A/a]
    \and
    \iota_i\,[A]\,[B]\,(M_1 \cdot_{C,D} M_2) \ped M_i\\
    \unpack_K\,[\tyabs{a}{K}{A}]\,(\pack_K\,[\tyabs{a}{K}{B}]\,[C]\,N)\,[D]\,(\Tyabs{a}{K}{\tyabs{x}{E}{M}}) \ped M[C/a][N/x]
  \end{mathpar}
\end{definition}

Although our mappings from \holtty{} to \holty{} are not immediately useful to the goal of this section, they do have allow us to obtain one important result very cheaply.  Notice that the mapping $\derefP$ maps redexes to redexes, and is otherwise compatible with the term structure.  Consequently, every proof reduction step in \holtty{} can be mirrored, under the mapping, in \holty{}.

\begin{lemma}\label{lem:2hol-reduction-translation}
  If $M \ped N$ in \holtty{} then $\derefP(M) \pedp \derefP(N)$ in \holty{}.
\end{lemma}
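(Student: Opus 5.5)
The plan is to prove the statement by induction on the derivation of $M \ped N$, i.e.\ over the contextual closure. The key auxiliary fact is that $\derefP$ commutes with substitution:
\[
  \derefP(M[N/x]) = \derefP(M)[\derefP(N)/x]
  \qquad\text{and}\qquad
  \derefP(M[A/a]) = \derefP(M)[\derefT(A)/a].
\]
This in turn needs $\derefT(B[A/a]) = \derefT(B)[\derefT(A)/a]$. All three are proved by routine structural induction. The mappings are compositional, every clause of $\derefT$ calls $\derefT$ on immediate subterms, and $\derefT(a) = a$. Bound variables are handled by the usual freshness convention, with kinds mapped by $\derefK$, which contains no type variables.

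For the base cases, each of the four redex forms is sent by $\derefP$ to a corresponding \holty{} redex.
\begin{itemize}
  \item A $\beta$-redex $(\tyabs{x}{A}{M})\,N$ maps to $(\tyabs{x}{\derefT_1(A)}{\derefP(M)})\,\derefP(N)$. This reduces to $\derefP(M)[\derefP(N)/x] = \derefP(M[N/x])$.
  \item The type $\beta$-redex is handled in the same way, using the type substitution lemma.
  \item For $\iota_i\,[A]\,[B]\,(M_1 \cdot_{C,D} M_2)$, the image is $\pi_i\,[\derefT(A)]\,[\derefT(B)]\,(\mkPair\,[\derefT(C)]\,[\derefT(D)]\,\derefP(M_1)\,\derefP(M_2))$. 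Here $\pi_i$ and $\mkPair$ are the standard impredicative terms of \holty{}, so after unfolding them this reduces in several $\beta$-steps to $\derefP(M_i)$. This is why the conclusion uses $\pedp$, the transitive closure, rather than a single step.
  \item The $\pack/\unpack$ redex works the same way. Unfolding the standard encodings $\pack_{\derefK(K)}$ and $\unpack_{\derefK(K)}$ gives a sequence of type and term $\beta$-steps reaching $\derefP(M)[\derefT(C)/a][\derefP(N)/x]$, which equals $\derefP(M[C/a][N/x])$ by the substitution lemmas.
\end{itemize}
In each case I will check that at least one step occurs.

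For the contextual cases, $\derefP$ is a homomorphism on term constructors: application, type application and both abstractions. A reduction inside a subterm is therefore mirrored by the induction hypothesis inside the same context in \holty{}, since $\pedp$ is closed under contexts. Reductions inside type annotations do not arise, because proof reduction acts only on terms.

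I expect the main obstacle to be the encoded constants. The base cases depend on the exact definitions of $\mkPair$, $\pi_i$, $\pack_K$ and $\unpack_K$ in \holty{}, and these consume their type arguments in a specific order. In particular, the $\unpack$ redex carries the type arguments $[\tyabs{a}{K}{A}]$ and $[\tyabs{a}{K}{B}]$, which may differ syntactically. This matters if the standard $\unpack$ encoding is essentially application of the packed term, i.e.\ $\unpack\,[P]\,z\,[D]\,f \ped z\,[D]\,f$. In that case the annotation is simply discarded and the reduction goes through regardless. I would fix these concrete encodings first and then verify this computation explicitly.
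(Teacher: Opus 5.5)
Your proposal is correct and follows essentially the same route as the paper: induction on the reduction, using the commutation of the proof-term mapping with term and type substitution (itself resting on the type-translation substitution lemma), and unfolding the encodings of $\mathsf{mkPair}$, $\pi_i$, $\mathsf{pack}$ and $\mathsf{unpack}$ so that the dot-redex and the pack/unpack-redex are each simulated by several $\beta$-steps (the paper counts 11 and 9 respectively), which is why the transitive closure appears. Your observation that the $\mathsf{unpack}$ encoding simply discards its type-function annotation is exactly what makes the mismatch between $[\lambda a{:}K.\,A]$ and $[\lambda a{:}K.\,B]$ harmless. Contextual cases then follow from the induction hypothesis because the mapping is homomorphic.
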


\noindent Here $\!\!\mathord{\pedp}\!\!$ is the transitive closure of $\!\!\mathord{\ped}\!\!$. As a corollary, we obtain strong normalisation.

\begin{theorem}[Strong Normalisation]\label{thm:hol2-sn}
  Both of the following:
  \begin{itemize}
    \item If $\Gamma;\,M:A \types \Delta$ then $M$ is strongly normalising.
    \item If $\Gamma \types M : A;\,\Delta$ then $M$ is strongly normalising.
  \end{itemize}
\end{theorem}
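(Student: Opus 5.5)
We reduce strong normalisation of \holtty{} to strong normalisation of \holty{}, using Lemma~\ref{lem:2hol-proof-translations} and Lemma~\ref{lem:2hol-reduction-translation}. Both items are handled the same way, so fix a derivation of either $\Gamma \types M : A;\,\Delta$ or $\Gamma;\,M:A \types \Delta$.

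First, by Lemma~\ref{lem:2hol-proof-translations}, in either case we obtain a \holty{} typing of $\derefP(M)$. In the first case it is $\derefT_1(\Gamma),\derefT_2(\Delta^R) \types \derefP(M) : \derefT_1(A)$; in the second it is the same judgement with $\derefT_2(A)$ in place of $\derefT_1(A)$. Since \holty{} is a pure type system that is a mild extension of System F$\omega$ (it also corresponds to a subsystem of the Calculus of Constructions), its well-typed terms are strongly normalising \cite{geuvers-types06,barendregt-lambda-types}. Hence $\derefP(M)$ is strongly normalising in \holty{}.

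Second, we argue by contradiction. Suppose $M$ admits an infinite reduction sequence $M = M_0 \ped M_1 \ped M_2 \ped \cdots$. Lemma~\ref{lem:2hol-reduction-translation} gives $\derefP(M_i) \pedp \derefP(M_{i+1})$ for each $i$. Each step contributes at least one \holty{} reduction step, so concatenating these yields an infinite reduction sequence starting from $\derefP(M)$. This contradicts the first step, so $M$ is strongly normalising.

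The only delicate point is whether the notion of reduction used in Lemma~\ref{lem:2hol-reduction-translation} on the \holty{} side is genuinely \holty{}'s $\beta$-reduction, to which the known SN result applies. The $\iota_i$ and $\pack$/$\unpack$ redexes are sent to $\pi_i\,(\mkPair\,\ldots)$ and to $\unpack\,(\pack\,\ldots)$ under the impredicative encodings. These unfold by several ordinary $\beta$-steps, with at least one step each, which is exactly why the lemma gives $\pedp$ rather than a single step. Provided those encodings are taken as definitions (i.e. $\beta$-expanded terms), the target reduction is plain $\beta$-reduction of typed \holty{} terms. The \holty{} SN theorem therefore applies, and no subject-reduction argument is needed on the \holtty{} side, since typability is only used once, for the initial term $\derefP(M)$.
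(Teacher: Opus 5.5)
Your proof is correct and is essentially the paper's argument. You use Lemma~\ref{lem:2hol-proof-translations} to obtain a well-typed target term $\derefP(M)$, and Lemma~\ref{lem:2hol-reduction-translation}, where each source step yields at least one target step, to turn an infinite reduction from $M$ into one from $\derefP(M)$, contradicting strong normalisation of the target system; your remark that $\mkPair$, $\pi_i$, $\mathsf{pack}$ and $\mathsf{unpack}$ are unfolded definitions, so the target reduction is ordinary $\beta$-reduction, matches how the paper counts those steps.
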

\begin{proof}
  Suppose $\Gamma \types M : A;\,\Delta$, the other case is similar.  Suppose for contradiction that $M$ starts some infinite reduction sequence.  Then, by Lemma~\ref{lem:2hol-reduction-translation}, $\derefP(M)$ also starts some infinite reduction sequence.  However, it follows from Lemma~\ref{lem:2hol-proof-translations} that $\derefT_1(\Gamma),\derefT_2(\Delta^R) \types \derefP(M) : \derefT_1(A)$, contradicting the strong normalisation of \holty{}.
\end{proof}

To futher analyse the structure of proofs, we aim to prove an inversion lemma and a canonical forms lemma.  Typically (e.g. in System F$\omega$), these lemmas have to take into account that, due to the implicit nature of the \rlnm{Conv} rule, a given term may be assigned many different types that are connected by convertibility.  In \holtty{}, there is an analogous consideration for the strong negation introduction and elimination rules.  Therefore, it is not enough to consider only convertibility, we must take into account that types may be related by pairs of strong negations.

\newcommand{\nconv}{\cong_\beta}

\begin{definition}
  Define binary relation $A \nconv B$ as the transitive closure of:
  \[
    \mathord{\conv} \cup \{(\sneg{(\sneg{A})}, A) \mid A \,\in \mathbb{T}\} \cup \{(A,\sneg{(\sneg{A})}) \mid A\,\in \mathbb{T}\}
  \]  
  We say that two types $A$ and $B$ are \emph{neg-convertible} just if $A \nconv B$.
\end{definition}

Note, however, that this relation is not closed under (one-hole) contexts - the double negation manipulations can only happen at the root of the type.  However, it is clear that \holtty{} derivability is closed under neg-convertibility, in the sense that, if $A \nconv B$ then both: $\Gamma \types M : A;\,\Delta$ implies $\Gamma \types M : B;\,\Delta$, and $\Gamma;\,M:A \types \Delta$ implies $\Gamma;\,M:B \types \Delta$.

Now we can understand more precisely the necessary conditions under which certain kinds of proof must arise.

\begin{lemma}[Inversion]\label{lem:2hol-inversion}
  All of the following:
  \begin{enumerate}[(1)]
    \item If $\Gamma \types \tyabs{x}{A}{M} : B;\,\Delta$ and $B \nconv B_1 \to B_2$ then $A \conv B_1$ and $\Gamma,\,x\ty:A \types M : B_2;\,\Delta$.
    \item If $\Gamma;\,\tyabs{x}{A}{M} : B \types \Delta$ and $B \nconv \sneg{(B_1 \to B_2)}$ then $A \conv B_1$ and $\Gamma,\,x\ty:A \types M : B_2;\,\Delta$.
    \item If $\Gamma \types \Tyabs{a}{K}{M} : B;\,\Delta$ and $B \nconv \PiType{a}{K'}{C}$ then $K = K'$ and $\Gamma,\,a\ty:K \types M : C;\,\Delta$.
    \item If $\Gamma;\,\Tyabs{a}{K}{M} : B;\,\Delta$ and $B \nconv \sneg{(\PiType{a}{K'}{C})}$ then $K = K'$ and $\Gamma,\,a\ty:K \types M : C;\,\Delta$.
    \item If $\Gamma;\,M \cdot_{A_1,A_2} N : B \types \Delta$ and $B \nconv B_1 \to B_2$ then $A_1 \conv B_1$, $A_2 \conv B_2$, $\Gamma \types M : B_1;\,\Delta$ and $\Gamma;\,N:B_2 \types \Delta$.
    \item If $\Gamma \types M \cdot_{A_1,A_2} N : B;\,\Delta$ and $B \nconv \sneg{(B_1 \to B_2)}$ then $A_1 \conv B_1$, $A_2 \conv B_2$, $\Gamma \types M : B_1;\,\Delta$ and $\Gamma;\,N:B_2 \types \Delta$.
    \item If $\Gamma;\,\pack_K\,[\tyabs{a}{K}{A}]\,[C]\,N : B \types \Delta$ and $B \nconv \PiType{a}{K'}{D}$ then $K = K'$ and $A \conv D$ and $\Gamma \types C : K$ and $\Gamma;\,N:A[C/a] \types \Delta$.
    \item If $\Gamma \types \pack_K\,[\tyabs{a}{K}{A}]\,[C]\,N : B;\,\Delta$ and $B \nconv \sneg{(\PiType{a}{K'}{D})}$ then $K = K'$ and $A \conv D$ and $\Gamma \types C : K$ and $\Gamma;\,N:A[C/a] \types \Delta$.
  \end{enumerate}
\end{lemma}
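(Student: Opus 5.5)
The plan is to prove all eight clauses simultaneously, by induction on the height of the derivation of the (proof or refutation) judgement about the given term. The clauses are paired. Clauses (1)/(2) are about abstractions, (3)/(4) about type abstractions, (5)/(6) about dot constructions, and (7)/(8) about packs. In each pair the two members are related by one application of a strong-negation rule. Since the subject is never a variable and is a fixed introduction form, only a few rules can end the derivation: the matching introduction rule, \rlnm{Conv}, and the four rules \rlnm{$\sneg{}$-Intro/Elim-L/R}. Elimination rules never produce an introduction-form subject: their conclusions have shape $M\,N$, $M\,[A]$, $\iota_i\,M$ or $\unpack\ldots$. For clause (1), for instance, the subject $\tyabs{x}{A}{M}$ is an abstraction, so the last rule is one of \rlnm{$\to$-Intro-R}, \rlnm{Conv}, \rlnm{$\sneg{}$-Intro-R} or \rlnm{$\sneg{}$-Elim-L}.

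First, the base cases. If the last rule is the introduction rule, the type is literally $A \to B'$ (respectively $\PiType{a}{K}{B'}$, etc.), and we are given $A \to B' \nconv B_1 \to B_2$. We then need an auxiliary lemma. \emph{If two types whose heads are neither $\sneg{}$ nor a redex are related by $\nconv$, then they are $\beta$-convertible.} Combined with Church–Rosser for $\conv$ (available by the remark treating $\sneg{}$ as a free constant), this lemma gives $A \conv B_1$ and $B' \conv B_2$. The premise then yields $\Gamma,x\ty:A \types M : B';\Delta$, and \rlnm{Conv} (using a kinding side condition obtained from well-formedness) gives type $B_2$. For (3),(4),(7),(8) we also need injectivity of $\Pi$ up to conversion, namely equal kinds and convertible bodies; this again follows from Church–Rosser.

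Second, the induction steps. If the last rule is \rlnm{Conv}, from $B'$ with $B \conv B'$, then $B' \nconv B$, so $B'$ is again neg-convertible to the target shape and the induction hypothesis applies directly. If the last rule is \rlnm{$\sneg{}$-Intro-R} in clause (1), then $B = \sneg{B'}$ and the premise is the refutation $\Gamma;\,\tyabs{x}{A}{M}:B' \types \Delta$. From $\sneg{B'} \nconv B_1\to B_2$ we must extract $B' \nconv \sneg{(B_1 \to B_2)}$, and then invoke clause (2). Symmetrically, \rlnm{$\sneg{}$-Elim-L} has premise $\Gamma;\,\tyabs{x}{A}{M} : \sneg{B} \types \Delta$, and we need $\sneg{B} \nconv \sneg{(B_1\to B_2)}$, which follows by closing $\nconv$ under one double-negation step. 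The other pairs are handled in exactly the same way.

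The main obstacle is the algebra of $\nconv$. It is not a congruence, so we cannot naively cancel or add a $\sneg{}$. I would prove a normal-form characterisation. Every type is convertible to a unique (up to $\conv$) form $\sneg^n C$ with $C$ not headed by $\sneg{}$ after weak-head reduction. Then $A \nconv B$ iff the parities of $n$ agree and the cores are $\beta$-convertible. To prove this, check that the invariant "same parity, convertible cores" is preserved by each generating step: a $\conv$ step preserves the weak-head $\sneg{}$-spine by Church–Rosser, and a double-negation step changes $n$ by $2$. From this characterisation, $\sneg{B'} \nconv B_1\to B_2$ implies $B' \nconv \sneg{(B_1\to B_2)}$, and likewise for the other parity manipulations. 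It also settles the base-case auxiliary lemma and rules out impossible cases: for example, in clause (1) the introduction rule \rlnm{$\forall$-Intro-R} cannot conclude an arrow type because the cores have different heads. With that lemma in hand, the remaining cases are routine.
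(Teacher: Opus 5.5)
Your proposal is correct and is essentially the paper's proof: a simultaneous induction on the derivation over all eight clauses, with the matching introduction rule as base case, \rlnm{Conv} discharged by the induction hypothesis, and each strong-negation rule handled by passing to the paired clause via the parity behaviour of neg-conversion (your leading-negation-parity/core characterisation is a more careful version of the invariant in the paper's lemma on basic properties of neg-conversion). Two technicalities are glossed by the paper just as they are by you. First, $M \cdot_{A_1,A_2} N$ and $\pack_K\,[\tyabs{a}{K}{A}]\,[C]\,N$ are literally applications, so a shape argument does not exclude \rlnm{$\to$-Elim-R} in clauses (5)--(8); you need the easy fact that $\cdot$, $\pack_K$ and their partial applications are untypable. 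Second, the final \rlnm{Conv} step in your base cases needs the target types ($B_1$, $B_2$, $C$) to be well-kinded, which neg-conversion does not provide, so this should be taken as a hypothesis.
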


\noindent Moreover, when a proof (or refutation) is in normal form, then its structure reflects that of the formula that it is proving (or refuting). 

\begin{lemma}[Canonical Proofs and Refutations]\label{lem:2hol-can-forms}
  Suppose $\Gamma$ is a kinding environment and $N$ is in normal form.
  \begin{enumerate}[(i)]
    \item If $\Gamma \types N : A;\,\Delta$ then:
      \begin{enumerate}[(a)]
        \item If $A$ is neg-convertible with an arrow then $N$ is a term abstraction.
        \item If $A$ is neg-convertible with a universal type over $K$ then $N$ is a type abstraction over $K$.
        \item If $A$ is neg-convertible with the strong negation of an arrow, then $N$ is a dot construction.
        \item If $A$ is neg-convertible with the strong negation of a universal type over $K$, then $N$ is a $K$-pack.
      \end{enumerate}
    \item If $\Gamma;\,N:A \types \Delta$ then:
      \begin{enumerate}[(a)]
        \item If $A$ is neg-convertible with an arrow then $N$ is a dot construction.
        \item If $A$ is neg-convertible with a universal type over $K$ then $N$ is a $K$-pack.
        \item If $A$ is neg-convertible with the strong negation of an arrow, then $N$ is a term abstraction.
        \item If $A$ is neg-convertible with the strong negation of a universal type over $K$, then $N$ is a type abstraction over $K$.
      \end{enumerate}
  \end{enumerate}
\end{lemma}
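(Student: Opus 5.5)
The plan is to prove a slightly stronger statement by simultaneous induction, and then read off the eight cases (i)(a)--(ii)(d). One caveat first. Rule \rlnm{Id-L} can type a variable $x\ty:A \in \Delta$ as a refutation of an arrow. For that reason I will carry out the argument under the hypothesis that no term variable is available to either identity rule, i.e.\ $\Gamma$ is a kinding environment and $\Delta$ is empty. This is the situation needed later for the existence properties; if a nonempty $\Delta$ is wanted, the variable case must be listed as an extra exception.

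\textbf{Step 1: neg-convertibility is well behaved.} Regard $\sneg{}$ as a free type constant, as the paper suggests, so that Church--Rosser for $\conv$ (from System F$\omega$) applies. I would show that if $A \nconv B$, then the $\beta$-normal forms of $A$ and $B$ (the kinding rules guarantee these exist) are $\sneg{}^{m}C$ and $\sneg{}^{n}C'$ with $m \equiv n \pmod 2$ and $C \conv C'$. Here $C$ is the part below the maximal string of root negations. The proof is by induction on the length of the chain, using that each generating step either is a conversion or adds or removes two root negations. As a consequence, the four classes ``arrow'', ``universal over $K$'', ``negated arrow'' and ``negated universal over $K$'' are pairwise disjoint up to $\nconv$, and the kind $K$ is determined. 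I expect this bookkeeping to be the main technical obstacle, because $\nconv$ is not a congruence and the negations may be hidden inside $\beta$-redexes, e.g.\ $(\tyabs{a}{\starsort}{\sneg{a}})\,(\sneg{A})$.

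\textbf{Step 2: a generation lemma up to sides.} Define the \emph{polarity} of a judgement for $N:A$ as $+$ for a proof and $-$ for a refutation. By induction on derivations I would show the following. If the principal formula is $N:A$ and $N$ is a term abstraction, then $A$ is neg-convertible with an arrow when the polarity is $+$, and with a negated arrow when it is $-$. The same holds for type abstractions over $K$ (universals over $K$), and dually for dot constructions (an arrow on the left, a negated arrow on the right) and for $K$-packs. The only rules whose conclusion has such a subject are the matching intro rule, \rlnm{Conv}, and the four $\sneg{}$ rules. \rlnm{Conv} preserves the class by Step 1. Each $\sneg{}$ rule flips polarity and adds or removes one root negation, so it preserves the invariant ``class determined by polarity together with negation parity''.

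\textbf{Step 3: no neutral normal terms.} A normal term is either one of the four introduction forms or neutral. A neutral term is a variable, or $P\,Q$, $P\,[B]$, $\iota_i\,[B]\,[C]\,P$, or $\unpack_K\,[\ldots]\,P\,[D]\,Q$ with $P$ normal and not the matching introduction form. By induction on the size of $N$, with an inner induction on the derivation to absorb \rlnm{Conv} and the $\sneg{}$ rules, I would show that no neutral $N$ is typable in either polarity. A variable would need \rlnm{Id-L} or \rlnm{Id-R}, which is impossible since $\Gamma$ and $\Delta$ contain no term typings. For $P\,Q$, the premise gives $P$ as a proof of an arrow. By the induction hypothesis $P$ is not neutral, so it is an introduction form, and by Step 2 with Step 1 it must be a term abstraction; then $P\,Q$ is a redex, a contradiction. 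The cases $P\,[B]$ (a universal on the right forces a type abstraction), $\iota_i\,P$ (an arrow on the left forces a dot construction) and $\unpack_K\,P$ (a universal on the left forces a $K$-pack) are identical. Here one must check that every rule typing these shapes, including \rlnm{$\forall$-Elim-L} and the two \rlnm{$\to$-Elim-L} rules, has the relevant premise of the stated polarity.

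\textbf{Conclusion.} By Step 3, $N$ is one of the four introduction forms. By Step 2, its shape determines the neg-convertibility class of $A$ relative to the polarity, and by Step 1 these classes are disjoint. Hence the class of $A$ given in each of (i)(a)--(ii)(d) forces exactly the stated form of $N$.
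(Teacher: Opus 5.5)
Your argument is correct. It uses the same ingredients as the paper but packages them differently.

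\textbf{Comparison with the paper's proof.} The paper proves all eight clauses by a single induction on the derivation, with the lemma itself as the induction hypothesis. The cases go as follows.
\begin{itemize}
\item In the elimination cases (\rlnm{$\to$-Elim-R}, \rlnm{$\forall$-Elim-R}, \rlnm{$\to$-Elim-L$_i$}, \rlnm{$\forall$-Elim-L}), the hypothesis applied to the premise typing the eliminated subterm says directly that this subterm is the matching introduction form. The whole term is then a redex, contradicting normality.
\item The introduction cases are immediate once the neg-convertibility classes are known to be disjoint.
\item Each $\sneg{}$ rule merely trades a clause of (i) for one of (ii).
\end{itemize}
So your Step 2 and the outer size induction of Step 3 are absorbed into one induction there, which is shorter. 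Your split buys an explicit, reusable generation lemma and isolates exactly where normality is used.

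Your Step 1 is more careful than the paper's basic properties of neg-conversion. The paper states its parity invariant on raw syntax, and it is not literally true when a negation is hidden in a redex such as $(\tyabs{c}{\starsort}{\sneg{c}})\,(a \to b)$. Passing to $\beta$-normal forms, as you do, is the right repair.

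\textbf{Your caveat about $\Delta$.} The caveat is well founded and identifies a genuine error in the paper. Its proof dismisses \rlnm{Id-L} ``since $\Gamma$ is a kinding environment'', but \rlnm{Id-L} takes its variable from $\Delta$. Two counterexamples follow.
\begin{itemize}
\item $a:\starsort,b:\starsort;\,x:a\to b \types x:a\to b$ contradicts (ii)(a).
\item Applying \rlnm{$\sneg{}$-Elim-L} to $a:\starsort,b:\starsort;\,x:\sneg{(a\to b)} \types x:\sneg{(a\to b)}$ contradicts (i)(a).
\end{itemize}
The lemma needs the hypothesis $\Delta=\varnothing$. That is the only case the Existence Property uses, so your restricted statement is the correct one.
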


After first obtaining the preservation of proof reduction under type and term substitutions, a straightforward induction establishes that the type of the subject is preserved under proof reduction.

\begin{theorem}[Preservation]\label{thm:hol2-preservation}
  Both of the following:
  \begin{itemize}
    \item If $\Gamma;\,M : A \types \Delta$ and $M \ped N$ then $\Gamma;\,N:A \types \Delta$
    \item If $\Gamma \types M : A;\,\Delta$ and $M \ped N$ then $\Gamma \types N:A;\,\Delta$
  \end{itemize}
\end{theorem}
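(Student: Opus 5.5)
We prove both clauses simultaneously by induction on the derivation of the typing judgement, with a case split on the last rule. Only one term is distinguished, so a proof reduction $M \ped N$ is either a root step (the last rule's subject is itself the redex) or a step inside an immediate subterm. Before the main induction we establish two substitution lemmas, each in both forms (proof and refutation).

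\emph{Type substitution.} If $\Gamma,a\ty:K,\Gamma' \types M : B;\,\Delta$ (respectively $\Gamma,a\ty:K,\Gamma';\,M:B\types\Delta$) and $\Gamma\types C:K$, then the same judgement holds after substituting $[C/a]$ in $\Gamma'$, $M$, $B$ and $\Delta$. This goes by a routine induction, using the standard kinding substitution and the stability of $\conv$ under substitution.

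\emph{Term substitution.} If $\Gamma,x\ty:A\types M:B;\,\Delta$ (or the refutation form) and $\Gamma\types P:A;\,\Delta$, then $\Gamma\types M[P/x]:B;\,\Delta$, and likewise for refutations. The \rlnm{Id-R} case on $x$ uses the hypothesis directly; all other cases are congruent. Since every rule is additive in $\Gamma$ and $\Delta$, we also need weakening of assumptions and co-assumptions, which is proved first by a simple induction.

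\emph{Congruence cases.} When the reduction happens inside a premise's subject, we apply the induction hypothesis to that premise and reapply the rule. Type annotations such as the $[A]$ in $\iota_1\,[A]\,[B]$ are not reduced by $\ped$, so the rule's side conditions are unchanged. Rules whose premise has the same subject as the conclusion (\rlnm{Conv} and the four $\sneg{}$ rules) simply pass the induction hypothesis through.

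\emph{Root cases.} These are the heart of the argument, and all of them rely on the Inversion lemma (Lemma~\ref{lem:2hol-inversion}).
\begin{itemize}
\item For $(\tyabs{x}{A}{M})\,P$ typed by \rlnm{$\to$-Elim-R}, we have $\Gamma\types\tyabs{x}{A}{M}:B'\to B;\,\Delta$. Inversion (1) gives $A\conv B'$ and $\Gamma,x\ty:A\types M:B;\,\Delta$. We \rlnm{Conv} the argument $P$ to type $A$ and apply term substitution.
\item For type application, \rlnm{$\forall$-Elim-R} together with inversion (3) and type substitution gives the result.
\item For $\iota_i\,(M_1\cdot M_2)$, the premise is a refutation $\Gamma;\,M_1\cdot M_2:A\to B\types\Delta$. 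Inversion (5) yields $\Gamma\types M_1:A;\,\Delta$ (for $i=1$) or $\Gamma;\,M_2:B\types\Delta$ (for $i=2$), which are exactly the required conclusions. Note that $\iota_1$ produces a proof and $\iota_2$ a refutation, matching the rules.
\item For $\unpack$, the typing is \rlnm{$\forall$-Elim-L}. The first premise, $\Gamma;\,\pack_K\,[\tyabs{a}{K}{B'}]\,[C]\,N:\PiType{a}{K}{B}\types\Delta$, gives by inversion (7) $\Gamma\types C:K$, $B'\conv B$ and $\Gamma;\,N:B[C/a]\types\Delta$. The second premise, $\Gamma,a\ty:K;\,M:D\types x\ty:B,\Delta$, is instantiated by type substitution to $\Gamma;\,M[C/a]:D\types x\ty:B[C/a],\Delta$, using freshness of $a$ for $D$ and $\Delta$.
\end{itemize}

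The main obstacle is the $\unpack$ case. There $x$ is a \emph{co-assumption} and $N$ is a \emph{refutation}, so term substitution must be generalised: we need the statement that substituting a refutation $\Gamma;\,N:B\types\Delta$ for a co-assumption $x\ty:B$ preserves both judgement forms. In the \rlnm{Id-L} case on $x$ this hypothesis is used directly, and elsewhere the argument is congruent. I would therefore state term substitution in four variants (proof or refutation subject, substituting for an assumption or a co-assumption) and prove them simultaneously.

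A secondary concern is that a redex might be typed via \rlnm{Conv} or the $\sneg{}$ rules above its principal rule. These are handled by inverting through such rules: this is exactly why Inversion is phrased up to $\nconv$, and it explains refutation-side subjects of the form $\sneg{(\cdot)}$, as in inversion (2), (4), (6) and (8).
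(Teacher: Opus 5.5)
Your proposal is correct and takes essentially the same route as the paper. The paper also inducts on the derivation, handles congruence steps (and \rlnm{Conv} and the $\sneg{}$ rules) by the induction hypothesis, and closes each root redex with Inversion plus type substitution and a four-clause term substitution lemma; its co-assumption clauses are exactly what the $\unpack_K$/$\pack_K$ case needs, as you identified. The only step you leave implicit, as the paper does, is retyping the refutation $N : B'[C/a]$ as $N : B[C/a]$. Since \rlnm{Conv} only applies to proofs, this goes through \rlnm{$\sneg{}$-Intro-R}, \rlnm{Conv} and \rlnm{$\sneg{}$-Elim-R}.
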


By combining the above results, we can finally obtain the existence properties.

\begin{theorem}[Existence Property]\label{thm:hol2-existence}
  Suppose $\Gamma$ is a kinding environment.  Then both of:
  \begin{enumerate}[(i)]
    \item If $\Gamma \types M : \SigmaType{a}{K}{A}$ then there is $B$ and $N$ such that $\Gamma \types B : K$ and $\Gamma \types N : A[B/a]$.
    \item If $\Gamma;\, M:\PiType{a}{K}{A} \types $ then there is $B$ and $N$ such that $\Gamma \types B : K$ and $\Gamma;\,N : A[B/a] \types $.
  \end{enumerate}
\end{theorem}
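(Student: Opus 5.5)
The plan is to normalise and then read off the witness using the canonical forms lemma. For (i), suppose $\Gamma \types M : \SigmaType{a}{K}{A}$ with empty co-assumptions and $\Gamma$ a kinding environment. By Theorem~\ref{thm:hol2-sn}, $M$ is strongly normalising, so it reduces to some normal form $M'$. By Theorem~\ref{thm:hol2-preservation}, iterated along the reduction sequence, we still have $\Gamma \types M' : \SigmaType{a}{K}{A}$. Since $\SigmaType{a}{K}{A}$ is by definition $\sneg{(\PiType{a}{K}{\sneg{A}})}$, the type is (trivially) neg-convertible with the strong negation of a universal over $K$. Lemma~\ref{lem:2hol-can-forms}(i)(d) then tells us that $M'$ is a $K$-pack, say $M' = \pack_K\,[\tyabs{a}{K}{A'}]\,[B]\,N'$.

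Next we apply Lemma~\ref{lem:2hol-inversion}(8) with $D \coloneqq \sneg{A}$. This gives $\Gamma \types B : K$, $A' \conv \sneg{A}$, and $\Gamma;\,N':A'[B/a] \types$. Conversion is closed under substitution, so $A'[B/a] \conv \sneg{(A[B/a])}$. A refutation of $N'$ at this type is obtained by Conv. The paper states Conv only on the right, so we first pass through \rlnm{$\sneg{}$-Intro-R}, then apply Conv, then use \rlnm{$\sneg{}$-Elim-R}. This yields $\Gamma;\,N' : \sneg{(A[B/a])} \types$. Finally \rlnm{$\sneg{}$-Elim-L} gives $\Gamma \types N' : A[B/a]$, so $B$ and $N \coloneqq N'$ are the required witnesses.

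For (ii) the argument is symmetric. Normalise the refutation $M$ to $M'$, and keep $\Gamma;\,M':\PiType{a}{K}{A}\types$ by preservation. Lemma~\ref{lem:2hol-can-forms}(ii)(b) makes $M'$ a $K$-pack. Inversion clause (7) then gives $\Gamma \types B:K$, $A' \conv A$ and $\Gamma;\,N':A'[B/a]\types$. The same conversion step (via $\sneg{}$-switching) transports this to $\Gamma;\,N':A[B/a]\types$.

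The main obstacle is not this top-level assembly but ensuring that the canonical forms lemma applies. Its proof needs, in a kinding environment with only variable co-assumptions $\Delta$ (here empty), that a normal proof or refutation cannot be a neutral term headed by a variable, $\iota_i$ or $\unpack_K$. Since the lemma is stated earlier, we may take it as given. The only remaining care is the bookkeeping: the binder name in the pack's type annotation may differ from $a$ up to $\alpha$-renaming, and the inversion clauses are phrased with $\nconv$, so we must check that the conversion of the body commutes with substitution of the witness $B$.
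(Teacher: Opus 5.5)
Your proposal is correct and follows the paper's proof step for step. It uses strong normalisation and preservation to reach a normal form, Canonical Forms (i)(d) (resp.\ (ii)(b)) to see that this normal form is a $K$-pack, and Inversion (8) with $D \coloneqq \sneg{A}$ (resp.\ (7)) to extract the witness $B$ and the refutation $N'$, finishing with a conversion. Your routing of that last conversion through \rlnm{$\sneg{}$-Intro-R}, \rlnm{Conv}, \rlnm{$\sneg{}$-Elim-R} and \rlnm{$\sneg{}$-Elim-L}, and your note that Canonical Forms is only invoked with empty co-assumptions (where \rlnm{Id-L} cannot produce a bare-variable normal form), just make explicit what the paper's one-line ending leaves implicit.
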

\begin{proof}
  Both are similar, so we only do (i).  Suppose $\Gamma \types M : \SigmaType{a}{K}{A} = \sneg{(\PiType{a}{K}{\sneg{A}})}$.  Then, by Theorem~\ref{thm:hol2-sn}, $M$ is strongly normalising.  Let $Q$ be a normal form of $M$.  Then, by Theorem~\ref{thm:hol2-preservation}, $\Gamma \types Q : \sneg{(\PiType{a}{K}{\sneg{A}})}$.  By Lemma~\ref{lem:2hol-can-forms}, $Q$ is of shape $\pack_K\,[\tyabs{a}{K}{C}]\,[B]\,N$ and, by Lemma~\ref{lem:2hol-inversion}, $C \conv \sneg{A}$ and $\Gamma \types B : K$ and $\Gamma;\,N : C[B/a] \types $.  Therefore, also $\Gamma \types N : A[B/a]$ as required.
\end{proof}
\paragraph{Examples}
The ability to construct refutations is very useful under the propositions-as-types paradigm.  In what follows we will present some examples that have been inspired by the ideas of Jan De Muijnck-Hughes (and Robert Atkey) for implementing a notion of positive evidence for the falsification of predicates in Idris \cite{demuinckhughes-github}. 

\newcommand{\keyknd}{\mathsf{key}}
\newcommand{\valknd}{\mathsf{val}}

\begin{example}
  Suppose we have kinds $\keyknd$ and $\valknd$ and a type $M : \keyknd \to \valknd \to \starsort$ describing a particular key-value map. 
  Then we can define the type:
  \[
    \abbv{All}_M \coloneqq \tyabs{p}{\valknd \to \starsort}{\PiType{x}{\keyknd}{\PiType{y}{\valknd}{M\,x\,y \to p\,y}}}
  \]
  For a given predicate $P$, a proof $M : \abbv{All}_M\,P$ is a function that can be used to verify the satisfaction of $P$ for any key-value pair $(x,y)$ in the map.  However, it may be that the codomain of $M$ does not satisfy $P$ universally.  In this case we may be able to obtain a proof of $N : \neg\,(\abbv{All}_M\,P)$.  However, this seems much less useful: $N$ is a function which, given an (impossible) proof of $\abbv{All}_M\,p$, yields a proof of absurdity.  Much more useful is to construct a \emph{refutation} $Q : \sneg{(\abbv{All}_M\,P)}$ which consists of \emph{witnesses} $K:\keyknd$ and $V:\valknd$ together with a proof $Q_1 : M\,K\,V$ and a refutation $Q_2 : \sneg{(P\,V)}$.
\end{example}

\begin{example}
  In higher-order logic it is common to define equality via Leibniz's principle.  This can be formulated as follows in \holtty{}:
  \[
    \mathord{=}_K \coloneqq \tyabs{xy}{K}{\PiType{z}{K \to \starsort}{z\,x \to z\,y}}
  \] 
  If $A$ and $B$ are not equal, then we can attempt to construct a proof $N : \neg(A =_K B)$, but the computational content of this $N$ is not so insightful: given an (impossible) proof of the equality of $A$ and $B$, we can use $N$ to obtain a proof of absurdity.  On the other hand, a refutation $Q : \sneg{(A =_K B)}$ amounts to a witness $P : K \to \starsort$, which is a particular property that is (constructively) satisfied by $A$ but not $B$, i.e. there is a proof of $P\,A$ and a refutation of $P\,B$.  
\end{example}
  
\newcommand{\nodeknd}{\mathsf{n}}
\newcommand{\natknd}{\mathsf{nat}}
\newcommand{\succty}{\mathsf{s}}

\begin{example}\label{ex:hol2-graph}
  Suppose we have an description of a rooted, directed graph as a kind of nodes $\nodeknd$ and types $I : \nodeknd$, representing the root, and $E : \nodeknd \to \nodeknd \to \starsort$, representing the edge relation.  Then we can define the type:
  \[
    \abbv{Reachable}^{\nodeknd}_{I,E} \coloneqq \tyabs{y}{\nodeknd}{\PiType{z}{\nodeknd \to \starsort}{z\,I \wedge (\PiType{uv}{\nodeknd}{z\,u \wedge E\,u\,v \to z\,v}) \to z\,y}}
  \]
  This type encodes a standard second-order definition of reachability, namely that a node $y$ is reachable from the root whenever every subset of nodes containing the root and closed under the edge relation also contains $y$.  In the case that a given $Y:\nodeknd$ is unreachable, one could attempt a proof $N : \neg(\abbv{Reachable}^\nodeknd_{I,E}\,Y)$, but the computational content of such a proof is not so interesting.  By contrast, a \emph{refutation} $Q : \sneg{(\abbv{Reachable}^\nodeknd_{I,E}\,Y)} $ consists of a particular edge-closed subset of nodes that contains $I$ and yet omits $Y$, in other words, an inductive invariant that certifies the unreachability of (bad state) $Y$.

  Let us illustrate this example more concretely.  Suppose we work in a context containing an axiomatisation of a natural number type.  For this simple example, all we will require are:
  \begin{itemize}
    \item A kind $\natknd$.
    \item Two type constructors $0 : \natknd$ and $\succty : \natknd \to \natknd$.
    \item Two proofs witnessing the non-finiteness of $\natknd$:
    \begin{align*}
      h_0 &: \PiType{x}{\natknd}{\sneg{(0 = \succty\,x)}}\\
      h_1 &: \PiType{xy}{\natknd}{\sneg{(x = y)} \to \sneg{(\succty\,x = \succty\,y)}}
    \end{align*}
  \end{itemize} 
  Define a directed graph by $\abbv{Edge} \coloneqq \tyabs{xy}{\natknd}{y = \succty\,x}$ and let us root the graph at $2$, i.e. define $\abbv{Init} \coloneqq \succty\,(\succty\, 0)$.  Then it follows that $1$ is unreachable, i.e. $\sneg{({\abbv{Reachable}^\natknd_{\abbv{Init},\abbv{Edge}}})}\,(\succty\,0)$.  Since we are proving the strong negation of a universal, we are forced to supply a witness, which we can take as $W \coloneqq \tyabs{n}{\natknd}{\SigmaType{m}{\natknd}{n = \succty\,(\succty\,m)}}$. It is easy to prove, by the properties of Leibniz equality, that this set contains $2$ and is closed under the edge relation.  Moreover, we can establish $\sneg{(W\,(\succty\,0))}$, i.e. $\sneg{(\SigmaType{m}{\natknd}{\succty\,0 = \succty\,(\succty\,m)})}$.  Since this is equivalent to $\PiType{m}{\natknd}{\sneg{(\succty\,0 = \succty\,(\succty\,m))}}$, it follows directly from $h_0$ and $h_1$.
\end{example}

\begin{example}
  In \cite{demuinckhughes-github}, de Muijnck-Hughes defines a new notion of decidability.  For a unary predicate $p$ to be decidable in this alternative sense, there must be a procedure which, for any instance $x$, either produces a proof of $p\,x$ or produces a proof of $q\,x$.  In many cases of interest, $q\,x$ is an Idris type encoding the strong negation of $p\,x$, but the idea is that one may choose any $q$ subject to the requirement that $p\,x$ and $q\,x$ are inconsistent.  In \holtty{} we have strong negation as a primitive, so we may define our own notion\footnote{Note, our notion and de Muijnck-Hughes' are generally incomparable since we do not assume that $A$ and $\sneg{A}$ are inconsistent.} of strong decidability on kind $K$:
  \[
    \abbv{SDec}_K \coloneqq \tyabs{p}{K \to \starsort}{\PiType{x}{K}{p\,x \vee \sneg{(p\, x)}}}
  \]
  A proof $M : \abbv{SDec}_K\,P$ for a given predicate type $P$ is a function which, for any input $A$ of kind $K$ produces either a proof of $P\,A$ or a refutation of $P\,A$.  
\end{example}

\section{Conclusion and Related Work}\label{sec:concl-related}

We have introduced three new two-sided type systems, \tintty{}, \tintcty{} and \holtty{}.  For the former two, we have shown that derivability of a single-subject judgement in the type system coincides with provability of the corresponding logical sequent of \tint{},  and its extension with strong negation, respectively.  To the best of our knowledge, there is no existing work on extensions of HOL with strong negation, but we have shown that our system \holtty{} is consistent, strongly normalising and satisfies the existence and dual existence properties.  Since each of these systems can be naturally understood as an extension of the core rules of two-sided type systems, as identified in \cite{ramsay-walpole-popl24,li-etal-popl26}, this gives a reasonable explanation of the logical foundations of two-sided type systems through the lens of propositions-as-types.

\paragraph{Two-sided type systems}
Our systems extend the pure $\lambda$-calculus rules identified in the introduction to \cite{ramsay-walpole-popl24} and which were refined and augmented with complementation in \cite{li-etal-popl26}.  Those works did not consider logical aspects except to illustrate the connection between their co-arrow and logical co-implication.  In our work, the intended meaning of a typing formula $M:A$ on the left-hand side of the judgement, namely that \emph{$M$ is a refutation of $A$}, is rather stronger.  This is necessary for the soundness of our rules.  In the previous works, the intended meaning of such a typing formula is \emph{$M$ is not a term of type $A$}.  For example, the following is an instance of our \rlnm{PiL} of \tintty{}:
\begin{mathpar}
  \infer{
    M: A \vee B \types
  }
  {
    \pi_1(M) : A \types
  }
\end{mathpar}
If we know only that $M$ is not a term of type $A \vee B$, it is not sound to conclude that $\pi_1(M)$ is not a term of type $A$.  The former can be satisfied by $M \coloneqq (P,Q)$ for some $P:A$.  However, if we know that $M$ is a refutation of $A \vee B$, then, indeed, $\pi_1(M)$ must be a refutation of $A$.

\paragraph{Bilateral Logics}
Bilateralism can be understood as a part of proof theoretic semantics in which dual concepts in logic, such as assertion and denial, or proof and refutation, are considered symmetrically, and given equal importance \cite{ayhan-intro}.   It is natural, therefore, that we should take a bilateral logic as the basis for the correspondence with two-sided typing.  Our system \tintty{} is in correspondence with Wansing's \tint{}, which is a bilateral natural deduction system that was introduced in \cite{wansing2016falsification}.  Wansing illustrated how bilateral systems such as \tint{} can be naturally extended with the strong negation operation in \cite{wansing-2017}, although he did not define a specific, named system.  It was recently shown, in a manuscript published on the Arxiv, that strong negation is actually definable in \tint{} using the coimplication \cite{oddsson2025strongnegationdefinable2int}.  We have not gone down that route, and instead defined \tintcty{} explcitly, because our goal is to understand the correspondence with the two-sided rules of Figure~\ref{fig:prev-typing-rules}.

The coimplication is a key feature of Wansing's logic.  A related line of work studies (non-bilateral) accounts of intuitionistic logic extended with coimplication, starting with Rauzer's Heyting-Brouwer logic \cite{rauszer-hblogic}, and more recently the works of Crolard \cite{crolard01}, \citet*{GoreS20}, \citet*{PintoU18}, \citet*{Tranchini17} and \citet*{cantor2024dual}.  However, excepting the latter, the lack of a symmetrical treatment of refutation causes these logics to loose some of the good properties of intuitionistic logic, such as the disjunction property.  The work of \cite{cantor2024dual} retains the disjunction property, although the authors do not define a proof system.

Ayhan constructed a $\lambda$-calculus $\lambda^{\mathsf{2Int}}$ in correspondence with \tint{} in \cite{ayhan25}.  This calculus is obtained by annotating \tint{} with 2-sorted $\lambda$-terms, so it is not given in a sequent format and there are two distinct judgements corresponding to proof and refutation.  Like our two-sided type system (and \tint{}), there is a notion that terms are assigned types under a pair of assumptions and co-assumptions, but unlike \tintty, proof terms and refutation terms are distinct: every term construction, except for truth and falsity, has separate positive and negative forms.  Consequently, in $\lambda^{\mathsf{2Int}}$ and unlike the systems of this paper, terms have unique principal types.  Otherwise, the relationship between Ayhan's system and our system \tintty{} seems reasonably clear.  \citet*{AbeK23} constructed a $\lambda$-calculus corresponding to Rumfitt's bilateral natural deduction \cite{rumfitt-yes-no}, but it is a symmetric calculus in the style of \citet*{Filinski89}.

\paragraph{Strong Negation}
We identified the concept of strong negation as the logical counterpart to the complementation rules of Figure~\ref{fig:prev-typing-rules}.  Strong negation was introduced by Nelson in \cite{nelson1949constructible} as a replacement for intuitionistic negation.  In particular, the work was motivated essentially by the property that is nowadays often called \emph{constructible falsity}.  This is the first formalization of Nelson-style constructible falsity, which is later known as N3. Nelson also presented a Hilbert-style calculus for this logic, but its metatheory remained tied to the arithmetical framework in which he was working. Later work separated Nelson-style strong negation from this arithmetical background. One of the earliest natural-deduction presentations of Nelson-style strong negation appears in Prawitz’s 1965 work \cite{Prawitz1965-PRANDA}. Gurevich later treated the system as a pure logic: he developed a Hilbert calculus, established a Kripke semantics and completeness theorem, and presented a corresponding Gentzen-style calculus \cite{gurevich1977intuitionistic}. Almukdad and Nelson introduced the paraconsistent Nelson logic N4, which weakens N3 by allowing inconsistency without triviality \cite{almukdad1984constructible}.

In his monograph \cite{wansing93}, Wansing constructed a $\lambda$-calculus corresponding to a version of N4, called $\lambda^c$.  As our systems, the strong negation has no associated introduction or elimination form at the term level.  Consequently, proofs and refutations are identified so that, for example, every term of type $\sneg{(A \wedge B)}$ is also of type $\sneg{A} \vee \sneg{B}$.  Wansing's $\lambda^c$ is presented in Church style, with terms annotated by their types, and his correspondence is with a Gentzen-style system.  The system is not bilateral and only treats the propositional case.

De Muijnck-Hughes, partly inspired by Atkey's incorporation of a version of strong negation in inductive datatypes \cite{data-types-negation-msfp}, has implemented a notion of positive evidence for the falsification of predicates in Idris \cite{demuinckhughes-github}.  In a talk at TRIPLES in Edinburgh, he set out very clearly the benefits of the approach compared to proving an intuitionistic negation, and this has been the inspiration for our examples of Section~\ref{sec:hol2-proofs}.  Atkey's work also contains a semantics for datatypes with negation based on Chu-spaces, which is closely related to our mapping $\derefT$, since it models types as pairs of their positive and negative evidence, and interprets negation as transposition. 

\paragraph{Higher-Order Logic}
Our system \holtty{} is a two-sided version of Geuvers' \holty{} \cite{geuvers-types06,geuvers-phd}, which itself corresponds to the higher-order logic in Church's Simple Theory of Types \cite{church-jsl40}.  Our system is both bilateral and incorporates a strong negation, but we are not aware of any existing work on either higher-order logics or type systems that combine both of these features.


\bibliographystyle{ACM-Reference-Format}
\bibliography{ref}

\iftoggle{supplementary}{%
\clearpage
\appendix
\section{The Logic \tint{} of \cite{wansing2016falsification}}\label{sec:apx-2int}

\begin{figure}
  \begin{mdframed}
    \renewcommand{\arraystretch}{1.25}%
    \setlength{\tabcolsep}{0pt}%
    \vspace*{-.3em}
    \begin{tabularx}{\linewidth}{
            @{}>{\raggedleft\arraybackslash}p{2em}@{\hspace{0.4em}}
            >{\raggedright\arraybackslash\hsize=.98\hsize\linewidth=\hsize}X@{}
            >{\raggedright\arraybackslash\hsize=1.02\hsize\linewidth=\hsize}X@{}
        }


        \ruletableheading{Axiomatic clauses}

        \axrule{1}
        {\wanp{\overline{A},\,A,\,(\{A\};\,\varnothing)}}

        \axrule{2}
        {\wandp{\dbar{A},\,A,\,(\varnothing;\,\{A\})}}

        \axrule{3}
        {\wanp{\overline{\truefm},\,\truefm,\,(\varnothing;\,\varnothing)}}

        \axrule{4}
        {\wandp{\dbar{\falsefm},\,\falsefm,\,(\varnothing;\,\varnothing)}}
    \end{tabularx}
    \begin{tabularx}{\linewidth}{
            @{}>{\raggedleft\arraybackslash}p{2em}@{\hspace{0.4em}}
            >{\raggedright\arraybackslash}p{.25\linewidth}
            >{\raggedright\arraybackslash}p{\dimexpr\linewidth-2em-0.4em-.25\linewidth\relax}@{}
        }
        \multicolumn{3}{@{}l@{}}{\textbf{Structural rules}} \\

        \tblstrut$(29)$
         &
        $\wanp{\Pi,\,A,\,(\LeftCtx';\,\RightCtx')}$
         &
        $\text{if }\wanp{\Pi,\,A,\,(\LeftCtx;\,\RightCtx)}
            \land (\LeftCtx \subseteq \LeftCtx')
            \land (\RightCtx \subseteq \RightCtx')
            \land \LeftCtx',\RightCtx' \text{ are finite}$
        \\

        \tblstrut$(30)$
         &
        $\wandp{\Pi,\,A,\,(\LeftCtx';\,\RightCtx')}$
         &
        $\text{if }\wandp{\Pi,\,A,\,(\LeftCtx;\,\RightCtx)}
            \land (\LeftCtx \subseteq \LeftCtx')
            \land (\RightCtx \subseteq \RightCtx')
            \land \LeftCtx',\RightCtx' \text{ are finite}$
    \end{tabularx}
\end{mdframed}
  \caption{\tint{}: axioms and structural rules}\label{fig:2int-ax-str}
\end{figure}

\begin{figure}
  \begin{mdframed}
        \renewcommand{\arraystretch}{1.25}%
        \setlength{\tabcolsep}{0pt}%
        \noindent
        \begin{tabularx}{\linewidth}{
                        @{}>{\raggedleft\arraybackslash}p{2em}@{\hspace{0.4em}}
                        >{\raggedright\arraybackslash\hsize=.98\hsize\linewidth=\hsize}X@{}
                        >{\raggedright\arraybackslash\hsize=1.02\hsize\linewidth=\hsize}X@{}
                }

                \ruletableheading{Introductions into proofs}

                \shortrule{5}
                {\wanp{\mprfraction{\Pi_1 \enspace \Pi_2}{A_1 \wedge A_2},\,
                                A_1 \wedge A_2,\,
                                (\LeftCtx_1 \cup \LeftCtx_2;\,\RightCtx_1 \cup \RightCtx_2)}}
                {\text{if }\wanp{\Pi_1,\,A_1,\,(\LeftCtx_1;\,\RightCtx_1)}
                        \land
                        \wanp{\Pi_2,\,A_2,\,(\LeftCtx_2;\,\RightCtx_2)}}

                \shortrule{6}
                {\wanp{\mprfraction{\Pi}{A_1 \vee A_2},\,
                                A_1 \vee A_2,\,
                                (\LeftCtx;\,\RightCtx)}}
                {\text{if }\wanp{\Pi,\,A_1,\,(\LeftCtx;\,\RightCtx)}}

                \shortrule{7}
                {\wanp{\mprfraction{\Pi}{A_1 \vee A_2},\,
                                A_1 \vee A_2,\,
                                (\LeftCtx;\,\RightCtx)}}
                {\text{if }\wanp{\Pi,\,A_2,\,(\LeftCtx;\,\RightCtx)}}

                \shortrule{8}
                {\wanp{\mprfraction{\Pi}{A_1 \to A_2},\,
                                A_1 \to A_2,\,
                                (\LeftCtx \setminus \{A_1\};\,\RightCtx)}}
                {\text{if }\wanp{\Pi,\,A_2,\,(\LeftCtx;\,\RightCtx)}}

                \shortrule{9}
                {\wanp{\mprfraction{\Pi_1\enspace\Pi_2}{A_1 \coto A_2},\,
                                A_1 \coto A_2,\,
                                (\LeftCtx_1 \cup \LeftCtx_2;\,\RightCtx_1 \cup \RightCtx_2)}}
                {\text{if }\wandp{\Pi_1,\,A_1,\,(\LeftCtx_1;\,\RightCtx_1)}
                        \land
                        \wanp{\Pi_2\,A_2,\,(\LeftCtx_2;\,\RightCtx_2)}}

                \ruletableheading{Eliminations from proofs}

                \shortrule{10}
                {\wanp{\mprfraction{\Pi}{A},\,A,\,(\LeftCtx;\,\RightCtx)}}
                {\text{if }\wanp{\Pi,\,\bot,\,(\LeftCtx;\,\RightCtx)}}

                \shortrule{11}
                {\wanp{\mprfraction{\Pi}{A_1},\,A_1,\,(\LeftCtx;\,\RightCtx)}}
                {\text{if }\wanp{\Pi,\,A_1 \wedge A_2,\,(\LeftCtx;\,\RightCtx)}}

                \shortrule{12}
                {\wanp{\mprfraction{\Pi}{A_2},\,A_2,\,(\LeftCtx;\,\RightCtx)}}
                {\text{if }\wanp{\Pi,\,A_1 \wedge A_2,\,(\LeftCtx;\,\RightCtx)}}

                \longrule{13}
                {\wanp{\mprfraction{\Pi\;\;\Pi_1\;\;\Pi_2}{C},\,C,\,
                                (\LeftCtx \cup (\LeftCtx_1 \setminus \{A_1\}) \cup (\LeftCtx_2 \setminus \{A_2\});\,
                                \RightCtx \cup \RightCtx_1 \cup \RightCtx_2)}}
                {\text{if }\wanp{\Pi,\,A_1 \vee A_2,\,(\LeftCtx;\,\RightCtx)}
                        \land \wanp{\Pi_1,\,C,\,(\LeftCtx_1;\,\RightCtx_1)}
                        \land \wanp{\Pi_2,\,C,\,(\LeftCtx_2;\,\RightCtx_2)}}

                \longrule{14}
                {\wanp{\mprfraction{\Pi_1\enspace\Pi_2}{A_2},\,A_2,\,
                                (\LeftCtx_1 \cup \LeftCtx_2;\,\RightCtx_1 \cup \RightCtx_2)}}
                {\text{if }\wanp{\Pi_1,\,A_1,\,(\LeftCtx_1;\,\RightCtx_1)}
                        \land \wanp{\Pi_2,\,A_1 \to A_2,\,(\LeftCtx_2;\,\RightCtx_2)}}

                \shortrule{15}
                {\wanp{\mprfraction{\Pi}{A_2},\,A_2,\,(\LeftCtx;\,\RightCtx)}}
                {\text{if }\wanp{\Pi,\,A_1 \coto A_2,\,(\LeftCtx;\,\RightCtx)}}

                \shortrule{16}
                {\wandp{\dblmprfraction{\Pi}{A_1},\,A_1,\,(\LeftCtx;\,\RightCtx)}}
                {\text{if }\wanp{\Pi,\,A_1 \coto A_2,\,(\LeftCtx;\,\RightCtx)}}
        \end{tabularx}
\end{mdframed}
  \caption{\tint{}: introduction and elimination rules for proof}\label{fig:2int-proof}
\end{figure}

\begin{figure}
  \begin{mdframed}
        \renewcommand{\arraystretch}{1.25}%
        \setlength{\tabcolsep}{0pt}%
        \vspace*{-1em}
        \begin{tabularx}{\linewidth}{
                        @{}>{\raggedleft\arraybackslash}p{2em}@{\hspace{0.4em}}
                        >{\raggedright\arraybackslash\hsize=.98\hsize\linewidth=\hsize}X@{}
                        >{\raggedright\arraybackslash\hsize=1.02\hsize\linewidth=\hsize}X@{}
                }
                \ruletableheading{Introductions into dual proofs}

                \shortrule{17}
                {\wandp{\dblmprfraction{\Pi}{A_1 \wedge A_2},\,A_1 \wedge A_2,\,(\LeftCtx;\,\RightCtx)}}
                {\text{if }\wandp{\Pi,\,A_1,\,(\LeftCtx;\,\RightCtx)}}

                \shortrule{18}
                {\wandp{\dblmprfraction{\Pi}{A_1 \wedge A_2},\,A_1 \wedge A_2,\,(\LeftCtx;\,\RightCtx)}}
                {\text{if }\wandp{\Pi,\,A_2,\,(\LeftCtx;\,\RightCtx)}}

                \shortrule{19}
                {\wandp{\dblmprfraction{\Pi_1 \enspace \Pi_2}{A_1 \vee A_2},\,A_1 \vee A_2,\,
                                (\LeftCtx_1 \cup \LeftCtx_2;\,\RightCtx_1 \cup \RightCtx_2)}}
                {\text{if }\wandp{\Pi_1,\,A_1,\,(\LeftCtx_1;\,\RightCtx_1)}
                        \land \wandp{\Pi_2,\,A_2,\,(\LeftCtx_2;\,\RightCtx_2)}}

                \shortrule{20}
                {\wandp{\dblmprfraction{\Pi_1 \enspace \Pi_2}{A_1 \to A_2},\,A_1 \to A_2,\,
                                (\LeftCtx_1 \cup \LeftCtx_2;\,\RightCtx_1 \cup \RightCtx_2)}}
                {\text{if }\wanp{\Pi_1,\,A_1,\,(\LeftCtx_1;\,\RightCtx_1)}
                        \land \wandp{\Pi_2,\,A_2,\,(\LeftCtx_2;\,\RightCtx_2)}}

                \shortrule{21}
                {\wandp{\dblmprfraction{\Pi}{A_1 \coto A_2},\,A_1 \coto A_2,\,
                                (\LeftCtx;\,\RightCtx \setminus \{A_1\})}}
                {\text{if }\wandp{\Pi,\,A_2,\,(\LeftCtx;\,\RightCtx)}}

                \ruletableheading{Eliminations from dual proofs}

                \shortrule{22}
                {\wandp{\dblmprfraction{\Pi}{A},\,A,\,(\LeftCtx;\,\RightCtx)}}
                {\text{if }\wandp{\Pi,\,\top,\,(\LeftCtx;\,\RightCtx)}}

                \longrule{23}
                {\wandp{\dblmprfraction{\Pi\;\;\Pi_1\;\;\Pi_2}{C},\,C,\,
                                (\LeftCtx \cup \LeftCtx_1 \cup \LeftCtx_2;\,
                                \RightCtx \cup (\RightCtx_1 \setminus \{A_1\}) \cup (\RightCtx_2 \setminus \{A_2\}))}}
                {\text{if }\wandp{\Pi,\,A_1 \wedge A_2,\,(\LeftCtx;\,\RightCtx)}
                        \land \wandp{\Pi_1,\,C,\,(\LeftCtx_1;\,\RightCtx_1)}
                        \land \wandp{\Pi_2,\,C,\,(\LeftCtx_2;\,\RightCtx_2)}}


                \shortrule{24}
                {\wandp{\dblmprfraction{\Pi}{A_1},\,A_1,\,(\LeftCtx;\,\RightCtx)}}
                {\text{if }\wandp{\Pi,\,A_1 \vee A_2,\,(\LeftCtx;\,\RightCtx)}}

                \shortrule{25}
                {\wandp{\dblmprfraction{\Pi}{A_2},\,A_2,\,(\LeftCtx;\,\RightCtx)}}
                {\text{if }\wandp{\Pi,\,A_1 \vee A_2,\,(\LeftCtx;\,\RightCtx)}}

                \shortrule{26}
                {\wanp{\mprfraction{\Pi}{A_1},\,A_1,\,(\LeftCtx;\,\RightCtx)}}
                {\text{if }\wandp{\Pi,\,A_1 \to A_2,\,(\LeftCtx;\,\RightCtx)}}

                \shortrule{27}
                {\wandp{\dblmprfraction{\Pi}{A_2},\,A_2,\,(\LeftCtx;\,\RightCtx)}}
                {\text{if }\wandp{\Pi,\,A_1 \to A_2,\,(\LeftCtx;\,\RightCtx)}}

                \longrule{28}
                {\wandp{\dblmprfraction{\Pi_1\enspace\Pi_2}{A_1},\,A_1,\,
                                (\LeftCtx_1 \cup \LeftCtx_2;\,\RightCtx_1 \cup \RightCtx_2)}}
                {\text{if }\wandp{\Pi_1,\,A_1 \coto A_2,\,(\LeftCtx_1;\,\RightCtx_1)}
                        \land \wandp{\Pi_2,\,A_1,\,(\LeftCtx_2;\,\RightCtx_2)}}
        \end{tabularx}
\end{mdframed}
  \caption{\tint{}: introduction and elimination rules for dual proof}\label{fig:2int-dualproof}
\end{figure}

\begin{figure}
  \begin{mdframed}
    \renewcommand{\arraystretch}{1.25}%
    \setlength{\tabcolsep}{0pt}%
    \noindent
    \begin{tabularx}{\linewidth}{
            @{}>{\raggedleft\arraybackslash}p{2em}@{\hspace{0.4em}}
            >{\raggedright\arraybackslash\hsize=.98\hsize\linewidth=\hsize}X@{}
            >{\raggedright\arraybackslash\hsize=1.02\hsize\linewidth=\hsize}X@{}
        }
        \ruletableheading{Strong negation rules}
        \shortrule{31}
        {\wandp{\dblmprfraction{\Pi}{\sneg{A}},\,\sneg{A},\,(\LeftCtx;\,\RightCtx)}}
        {\text{if }\wanp{\Pi,\,A,\,(\LeftCtx;\,\RightCtx)}}

        \shortrule{32}
        {\wandp{\dblmprfraction{\Pi}{A},\,A,\,(\LeftCtx;\,\RightCtx)}}
        {\text{if }\wanp{\Pi,\,\sneg{A},\,(\LeftCtx;\,\RightCtx)}}

        \shortrule{33}
        {\wanp{\mprfraction{\Pi}{\sneg{A}},\,\sneg{A}},\,A,\,(\LeftCtx;\,\RightCtx)}
        {\text{if }\wandp{\Pi,\,A,\,(\LeftCtx;\,\RightCtx)}}

        \shortrule{34}
        {\wanp{\mprfraction{\Pi}{A},\,A},\,A,\,(\LeftCtx;\,\RightCtx)}
        {\text{if }\wandp{\Pi,\,\sneg{A},\,(\LeftCtx;\,\RightCtx)}}
    \end{tabularx}
\end{mdframed}

  \caption{\tintc{}: strong negation rules}\label{fig:2int-sneg}
\end{figure}

For convenience, we repeat the definition of Wansing's \tint{} from \cite{wansing2016falsification}.  The formulas are defined as follows.

\begin{definition}
  Let \(\Phi=\{p,q,r,\dots\}\) be a denumerable set of atomic formulas.
  Formulas of \tint{} are given by:
  \[
    \begin{array}{rrcl}
      \rlnm{Formulas} & A,\,B & \Coloneqq
                      & p
      \mid \bot
      \mid \truefm
      \mid A \to B
      \mid A \land B
      \mid A \lor B
      \mid A \coto B .
    \end{array}
  \]
\end{definition}

Figures \ref{fig:2int-ax-str}, \ref{fig:2int-proof}, and
\ref{fig:2int-dualproof} present the rules of \tint{}. The first four
clauses are axiomatic. Rules (1) and (2) express the two primitive ways
of appealing to the basis: from an assumption \(A\in L\) one may prove
\(A\), and from a counterassumption \(A\in R\) one may refute \(A\).
Rules (3) and (4) express the corresponding nullary principles:
\(\truefm\) is provable without assumptions, and \(\bot\) is refutable
without counterassumptions.

The proof rules in Figure \ref{fig:2int-proof} contain the ordinary
natural-deduction rules for the intuitionistic connectives. Rules
(5)--(8) are the familiar introduction rules for conjunction,
disjunction, and implication, while rules (10)--(14) are the
corresponding elimination rules. Thus, if we restrict attention to:
\[
  \truefm,\quad \bot,\quad \land,\quad \lor,\quad \to
\]
on the proof side, \tint{} behaves like ordinary intuitionistic natural
deduction, except that derivations are recorded relative to a bilateral
basis \((L;R)\).

The remaining proof rules concern coimplication. Rule (9) introduces
\(A\coto B\) into proofs from two pieces of evidence: a proof of \(A\)
and a refutation of \(B\). Rules (15) and (16) eliminate a proved
coimplication: from a proof of \(A\coto B\), one may recover a proof of
\(A\) and a refutation of \(B\). On the proof side, then, \(A\coto B\)
functions as a constructive form of ``\(A\) but not \(B\)'', where the
``not'' is not classical negation but primitive refutation.

The dual-proof rules in Figure \ref{fig:2int-dualproof} are read in the
opposite direction, from the standpoint of falsification. The pure
dual-proof rules are induced by the same duality that exchanges
\(\truefm\) with \(\bot\), \(\land\) with \(\lor\), and \(\to\) with
\(\coto\). Thus, to refute a conjunction, one refutes one of its
conjuncts; to refute a disjunction, one refutes both disjuncts; and to
refute a coimplication \(A\coto B\), one refutes \(A\) under the
temporary counterassumption \(B\). In the last case, the ordinary
discharge of an assumption in implication introduction is mirrored by the
discharge of a counterassumption in coimplication refutation.

The refutation rules for implication are different in kind, because
implication is where falsification has to appeal to both sides of the
calculus. Rule (20) says that a refutation of \(A\to B\) consists of a
proof of \(A\) together with a refutation of \(B\). Rules (26) and (27)
then express the corresponding eliminations: from a refutation of
\(A\to B\), one may extract a proof of \(A\) and a refutation of \(B\).
These rules are mixed, because their premises and conclusions are not all
of the same kind.

The same mixed pattern appears on the proof side for coimplication. A
proof of \(A\coto B\) consists of a proof of \(A\) together with a
refutation of \(B\), and its eliminations return precisely these two
pieces of evidence. In this sense, the falsity condition for implication
and the truth condition for coimplication coincide:
\[
  \text{refuting } A\to B
  \quad\text{and}\quad
  \text{proving } B\coto A
\]
both amount to proving \(A\) and refuting \(B\).


Finally, rules (29) and (30) are structural monotonicity rules for the
basis. They say that a proof or a dual proof remains available when the
basis is enlarged by additional assumptions or counterassumptions. Thus
derivability in \tint{} is stable under extensions of the information
available on either side. When rules combine derivations, their
assumption and counterassumption sets are united. When a rule discharges
an assumption or counterassumption, the relevant formula is removed from
the appropriate side of the basis.
\section{Supplementary Material for Section~\ref{sec:two-lambda-int}}

In this appendix, we give the details for the proof of the correspondence between \tintc{} and \tintcty{} (also \tint{} and \tintty{}).

\begin{definition}[Erasure]\label{def:proof-term-erasure}
    We write $\erasure{M:A} = A$ for the erasure of a proof term, (including when $M$ is a variable). We extend this notation to environments by erasing proof terms pointwise and identifying repeated formulas, so that $\erasure{\Gamma}$ and $\erasure{\Delta}$ are finite sets of formulas.
\end{definition}

Given a derivation of a single-subject judgement in \tintcty{}, we can erase the proof and refutation terms to obtain a proof or dual proof in \tint{}.

In the opposite direction, as is typical, it is convenient to choose some canonical naming of assumptions (and co-assumptions).

\begin{definition}\label{def:logic-single-subject}
    Fix once and for all, for each type $A$, two variables $x_A$ and $y_A$, in such a way that the families $\{x_A \mid A \text{ is a type}\}$ and $\{y_A \mid A \text{ is a type}\}$ are jointly pairwise distinct.
    Now let $L = \{A_1,\ldots,A_m\}$ and $R = \{C_1,\ldots,C_n\}$ be finite sets of formulas. We define:
    \begin{itemize}
        \item $L \types' B;\,R$ just if there is a term $M$ and $x_{A_1} \ty: A_1,\ldots,x_{A_m}\ty:A_m \types M : B,\,y_{C_1} \ty: C_1,\ldots,y_{C_n}\ty: C_n$
        \item $L;\,B \types' R$ just if there is a term $M$ and $x_{A_1} \ty: A_1,\ldots,x_{A_m}\ty:A_m,\, M : B \types y_{C_1} \ty: C_1,\ldots,y_{C_n}\ty: C_n$
    \end{itemize}
    If $L=\varnothing$ or $R=\varnothing$, the corresponding family is empty.
\end{definition}

Then, from a proof or a dual proof in \tintc{}, we can obtain a proof term or refutation respectively, with assumptions named according to the above scheme.


\begin{definition}
    Two types $A$ and $B$ are said to be \emph{related by} $n$ if
    either $A = \sneg^n B$ or $B = \sneg^n A$, where $\sneg^n$ abbreviates $n$ consecutive formulas of $\sneg$.
\end{definition}

\begin{lemma}[Judgement form]
    \label{lem:judgement-form}
    Let \(\Gamma \types \Delta\) be derivable. Then the following assertions
    hold.

    \begin{enumerate}
        \item Let \(M:B\in\Gamma\) be a chosen typing formula on the left, so
              that, up to exchange, the judgement decomposes as
              \[
                  \smashunderbrace{\Gamma',\,M:B}{\Gamma}
                  \types
                  \Delta .
              \]
              Suppose that \(\Gamma'\) and \(\Delta\) are disjoint variable
              environments. Equivalently, after removing the chosen typing
              formula \(M:B\), all remaining typing formulas in the judgement
              are variable typings for pairwise distinct variables. Then either
              \(M\) is not a variable, or \(M\) is a variable and one of the
              following alternatives holds:
              \begin{itemize}
                  \item \(M:C\in\Gamma'\), and \(B\) and \(C\) are related by
                        some odd \(n\);
                  \item \(M:C\in\Delta\), and \(B\) and \(C\) are related by
                        some even \(n\).
              \end{itemize}

        \item Let \(M:B\in\Delta\) be a chosen typing formula on the right, so
              that, up to exchange, the judgement decomposes as
              \[
                  \Gamma
                  \types
                  \smashunderbrace{M:B,\,\Delta'}{\Delta} .
              \]
              Suppose that \(\Gamma\) and \(\Delta'\) are disjoint variable
              environments. Equivalently, after removing the chosen typing
              formula \(M:B\), all remaining typing formulas in the judgement
              are variable typings for pairwise distinct variables. Then either
              \(M\) is not a variable, or \(M\) is a variable and one of the
              following alternatives holds:
              \begin{itemize}
                  \item \(M:C\in\Gamma\), and \(B\) and \(C\) are related by
                        some even \(n\);
                  \item \(M:C\in\Delta'\), and \(B\) and \(C\) are related by
                        some odd \(n\).
              \end{itemize}
    \end{enumerate}
\end{lemma}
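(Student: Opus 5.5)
We argue by induction on the derivation of \(\Gamma \types \Delta\) in \tintcty{}, proving parts (1) and (2) simultaneously. Within an induction step, we split on whether the chosen formula \(M:B\) is the principal formula of the last rule.

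\emph{Case \rlnm{Id}.} The judgement is \(\Gamma_0,\,x:A \types x:A,\,\Delta_0\). If the chosen formula is \(x:A\) on the left, its partner \(x:A\) lies in \(\Delta\) and is related to \(B=A\) by \(n=0\), which is even. Symmetrically, choosing \(x:A\) on the right gives a partner on the left related by \(0\). The other possibility is that the chosen formula is some other typing. Then the remaining formulas include both \(x:A\) on the left and \(x:A\) on the right, so \(\Gamma'\) and \(\Delta\) are not disjoint and the hypothesis fails vacuously.

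\emph{Case: chosen formula not principal.} Here we note that every rule other than \rlnm{Id} has a principal formula whose subject is either a compound term or, in the negation rules, the same term as in the premise. If \(M\) is a variable and \(M:B\) is not principal, then the principal formula \(N:C\) also appears among the remaining formulas. Since all of those are variable typings, \(N\) must be a variable. The only rules that can then have a variable principal subject are the four strong negation rules of Figure~\ref{fig:negation-rules}. In each of them the premise contains a typing \(N:C'\) with \(C\) and \(C'\) related by \(1\), placed on the opposite side. The same shape occurs when \(M:B\) is itself principal in a negation rule. We handle both situations together below.

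\emph{Case: negation rule whose subject \(x\) is a variable.} Take, for example, \rlnm{$\sneg_I$ R}, which derives \(\Gamma_0 \types x:\sneg A,\,\Delta_0\) from \(\Gamma_0,\,x:A \types \Delta_0\).

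If \(x:\sneg A\) is the chosen formula, we apply the IH to the premise with \(x:A\) chosen on the left. This gives a partner \(x:C\) with \(A\) and \(C\) related by \(n\), located in \(\Gamma_0\) when \(n\) is odd and in \(\Delta_0\) when \(n\) is even. We then need \(\sneg A\) and \(C\) to be related by \(n\pm1\) with the opposite parity, and this is where the main obstacle lies. If \(C=\sneg^n A\), then \(\sneg A\) and \(C\) are related by \(n-1\) when \(n\ge1\); the parity flips and the side memberships match part (2). If instead \(A=\sneg^n C\), then \(\sneg A = \sneg^{n+1}C\), which is again consistent. The degenerate case \(n=0\) with \(C=A\) in \(\Delta_0\) is fine, because \(\sneg A\) and \(A\) are related by \(1\).

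The problematic subcase is \(C=\sneg^n A\) with \(n\ge1\): then \(\sneg A\) and \(C\) are related by \(n-1\) only as \(C=\sneg^{n-1}(\sneg A)\), which works. The relation is symmetric in the needed way, so the bookkeeping closes in all subcases.

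If instead the chosen formula is some other variable typing, then \(x:\sneg A\) is among the remaining formulas, and the premise's \(x:A\) sits on the other side. We apply the IH to the premise with \(x:A\) chosen, which yields a partner of \(x\) in the remaining context. But then \(x\) occurs twice among the remaining formulas of the conclusion, contradicting the assumed disjointness. So this configuration cannot arise.

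\emph{Case: chosen formula not principal, and principal subject compound.} Then the premises contain the chosen formula together with the same variable environments, possibly extended by fresh bound variables. By the variable convention these fresh variables differ from \(M\). Applying the IH to the premise with the same chosen formula gives a partner, which persists into the conclusion unless it was discharged. Discharged variables are bound and hence distinct from \(M\), so the partner survives and the case follows directly.

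The remaining three negation rules mirror \rlnm{$\sneg_I$ R} with the sides swapped, or with \(\sneg A\) replaced by \(A\) in the premise. The only real work throughout is the parity arithmetic on the relation ``related by \(n\)'', and in each case it reduces to the observation that adding or removing one \(\sneg\) flips parity exactly when the typing crosses the turnstile.
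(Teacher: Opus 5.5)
Your principal-formula cases and the parity bookkeeping are fine, but the non-principal negation case has a genuine gap. Take \rlnm{$\sneg_I$ R}, which concludes $\Gamma_0 \types x:\sneg{A},\,\Delta_0$ from $\Gamma_0,\,x:A \types \Delta_0$, with a chosen variable typing $M:B$ other than $x:\sneg{A}$. You apply the induction hypothesis to the premise with $x:A$ chosen, and conclude that the configuration is impossible. Both steps fail in general.

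First, the side contexts of that premise contain $M:B$. The lemma's hypothesis exempts $M:B$ from disjointness, so the induction hypothesis need not apply. For example, from the \rlnm{Id} instance $y:B,\,x:A \types y:B$, rule \rlnm{$\sneg_I$ R} yields $y:B \types x:\sneg{A},\,y:B$. Choosing $y:B$ on the left meets the lemma's hypotheses, yet the premise with $x:A$ chosen has side contexts $y:B$ and $y:B$, which are not disjoint.

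Second, even when the induction hypothesis does apply, the partner it returns can be the chosen formula itself. That formula is not among the remaining formulas of the conclusion, so no duplication follows. For example, from $x:A \types x:A$, rule \rlnm{$\sneg_I$ L} gives $x:A,\,x:\sneg{A} \types$, and \rlnm{$\sneg_I$ R} then gives $x:\sneg{A} \types x:\sneg{A}$. Choose $x:\sneg{A}$ on the left: the partner of $x:A$ in the premise is exactly that chosen formula. So both configurations occur. The second is precisely where the lemma has content, since the partner of $M$ is the principal formula.

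The repair, which is what the paper does, is to apply the induction hypothesis to the premise with the \emph{same} chosen formula $M:B$. Its side contexts are those of the conclusion with $x:\sneg{A}$ replaced by $x:A$ on the opposite side of the turnstile, so they are still disjoint variable environments.
\begin{itemize}
\item If the partner returned is any typing other than $x:A$, it survives unchanged into the conclusion.
\item If it is $x:A$, then $M = x$. Passing between $A$ and $\sneg{A}$ flips the parity of the relation, exactly as crossing the turnstile flips the required parity. This gives the required alternative with $C = \sneg{A}$.
\end{itemize}
Your last case, a non-principal chosen formula with a compound principal subject, is vacuous, as you had already observed: the compound principal formula would lie in a side context that is required to be a variable environment.
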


\begin{proof}
    We prove the two assertions simultaneously by induction on the derivation
    of \(\Gamma\types\Delta\). The only cases requiring genuine analysis are
    \rlnm{Id} and the strong-negation rules. For every other typing rule,
    the principal typing formula in the conclusion has a non-variable term
    former. If the chosen typing formula is this principal typing formula, then the
    non-variable alternative of the corresponding clause holds immediately. If
    the chosen typing formula is not principal, then the principal non-variable
    typing formula remains in one of the side contexts which is required to be a
    variable environment, a contradiction. Hence those cases are vacuous except
    for the immediate non-variable branch.

    It remains to consider \rlnm{Id} and the strong-negation rules. We give
    the details for \rlnm{Id} and \(\sneg_I L\); the cases for
    \(\sneg_E L\), \(\sneg_I R\), and \(\sneg_E R\) are analogous.

    \begin{indproof}
        \indcase{Id}
        The derivation ends with an instance of the identity rule:
        \[
            \infer[Id]{ }{
                \Sigma,\,y:A \types y:A,\,\Pi
            }.
        \]
        Thus, up to exchange,
        \[
            \Gamma=\Sigma,\,y:A
            \qquad\text{and}\qquad
            \Delta=y:A,\,\Pi .
        \]
        We distinguish whether the chosen typing formula is on the left or on
        the right.

        \begin{enumerate}
            \item Suppose first that \(M:B\in\Gamma\) is chosen from the
                  left-hand side, so that, up to exchange,
                  \[
                      \smashunderbrace{\Sigma,\,y:A}{\Gamma',\,M:B}
                      \types
                      y:A,\,\Pi .
                  \]
                  Assume that \(\Gamma'\) and \(\Delta=y:A,\Pi\) are disjoint
                  variable environments. The chosen typing formula must be the
                  left-hand formula \(y:A\) introduced by \rlnm{Id}. Indeed,
                  otherwise \(y:A\) would remain in \(\Gamma'\), while
                  \(y:A\) also appears in \(\Delta\), contradicting
                  disjointness. Hence
                  \[
                      M=y,
                      \qquad
                      B=A,
                      \qquad
                      \Gamma'=\Sigma .
                  \]
                  Since \(y:A\in\Delta\), taking \(C=A\) gives
                  \(M:C\in\Delta\). Moreover \(B=C\), so \(B\) and \(C\) are
                  related by \(0\) negations, and \(0\) is even. Thus the
                  \(\Delta\)-alternative of clause \((1)\) holds.

            \item Suppose now that \(M:B\in\Delta\) is chosen from the
                  right-hand side, so that, up to exchange,
                  \[
                      \Sigma,\,y:A
                      \types
                      \smashunderbrace{y:A,\,\Pi}{M:B,\,\Delta'} .
                  \]
                  Assume that \(\Gamma=\Sigma,y:A\) and \(\Delta'\) are
                  disjoint variable environments. The chosen typing formula
                  must be the right-hand formula \(y:A\) introduced by
                  \rlnm{Id}. Indeed, otherwise \(y:A\) would remain in
                  \(\Delta'\), while \(y:A\) also appears in \(\Gamma\),
                  contradicting disjointness. Hence
                  \[
                      M=y,
                      \qquad
                      B=A,
                      \qquad
                      \Delta'=\Pi .
                  \]
                  Since \(y:A\in\Gamma\), taking \(C=A\) gives
                  \(M:C\in\Gamma\). Moreover \(B=C\), so \(B\) and \(C\) are
                  related by \(0\) negations, and \(0\) is even. Thus the
                  \(\Gamma\)-alternative of clause \((2)\) holds.
        \end{enumerate}

        \indcase{\(\sneg_I L\)}
        The derivation ends as
        \[
            \infer[$\sneg_I L$]{
                \Sigma \types P:A,\,\Pi
            }{
                \Sigma,\,P:\sneg A
                \types
                \Pi
            }.
        \]
        Thus, up to exchange,
        \[
            \Gamma=\Sigma,\,P:\sneg A
            \qquad\text{and}\qquad
            \Delta=\Pi .
        \]
        We distinguish whether the chosen typing formula is on the left or on
        the right.

        \begin{enumerate}
            \item Suppose first that \(M:B\in\Gamma\) is chosen from the
                  left-hand side, so that, up to exchange,
                  \[
                      \smashunderbrace{\Sigma,\,P:\sneg A}{\Gamma',\,M:B}
                      \types
                      \Pi .
                  \]
                  Assume that \(\Gamma'\) and \(\Pi\) are disjoint variable
                  environments. There are two subcases.

                  \begin{itemize}
                      \item Suppose first that \(M:B\) is the principal typing
                            formula \(P:\sneg A\). Then
                            \[
                                M=P,
                                \qquad
                                B=\sneg A,
                                \qquad
                                \Gamma'=\Sigma .
                            \]
                            The premise is
                            \[
                                \Sigma \types P:A,\,\Pi .
                            \]
                            Since \(\Sigma\) and \(\Pi\) are disjoint variable
                            environments, we may apply the induction hypothesis,
                            clause \((2)\), to this premise, choosing \(P:A\)
                            on the right. Thus either \(P\) is not a variable,
                            or \(P\) is a variable and one of the following
                            alternatives holds:
                            \begin{itemize}
                                \item \(P:D\in\Sigma\), and \(A\) and \(D\)
                                      are related by some even \(n\);
                                \item \(P:D\in\Pi\), and \(A\) and \(D\) are
                                      related by some odd \(n\).
                            \end{itemize}
                            If \(P\) is not a variable, then the non-variable
                            alternative of clause \((1)\) holds. Otherwise,
                            suppose \(P\) is a variable. In the first
                            alternative, \(P:D\in\Sigma=\Gamma'\), and adding
                            the leading strong negation changes the parity from
                            even to odd; hence \(B=\sneg A\) and \(D\) are
                            related by some odd number of negations. This gives
                            the \(\Gamma'\)-alternative of clause \((1)\). In
                            the second alternative, \(P:D\in\Pi\), and adding
                            the leading strong negation changes the parity from
                            odd to even; hence \(B=\sneg A\) and \(D\) are
                            related by some even number of negations. This gives
                            the \(\Pi\)-alternative of clause \((1)\).

                      \item Suppose now that \(M:B\) is not the principal typing
                            formula. Then \(M:B\in\Sigma\), so, up to exchange,
                            \[
                                \Sigma=\Sigma',\,M:B
                                \qquad\text{and}\qquad
                                \Gamma'=\Sigma',\,P:\sneg A .
                            \]
                            Since \(\Gamma'\) and \(\Pi\) are disjoint variable
                            environments, \(P\) is a variable and
                            \(\Sigma'\) and \(P:A,\Pi\) are disjoint variable
                            environments. The premise may be written as
                            \[
                                \Sigma',\,M:B \types P:A,\,\Pi .
                            \]
                            Applying the induction hypothesis, clause \((1)\),
                            to this premise, choosing \(M:B\) on the left,
                            gives that either \(M\) is not a variable, or
                            \(M\) is a variable and one of the following
                            alternatives holds:
                            \begin{itemize}
                                \item \(M:D\in\Sigma'\), and \(B\) and \(D\)
                                      are related by some odd \(n\);
                                \item \(M:D\in P:A,\Pi\), and \(B\) and \(D\)
                                      are related by some even \(n\).
                            \end{itemize}
                            If \(M\) is not a variable, then the non-variable
                            alternative of clause \((1)\) holds. Otherwise,
                            suppose \(M\) is a variable. If \(M:D\in\Sigma'\),
                            then \(M:D\in\Gamma'\), so the same odd \(n\)
                            gives the \(\Gamma'\)-alternative of clause \((1)\).
                            If \(M:D\in\Pi\), then the same even \(n\) gives the
                            \(\Pi\)-alternative of clause \((1)\). Finally, if
                            \(M:D\) is the formula \(P:A\), then \(M=P\) and
                            \(D=A\). Since \(B\) and \(A\) are related by some
                            even \(n\), \(B\) and \(\sneg A\) are related by
                            some odd number of negations; and
                            \(P:\sneg A\in\Gamma'\). Taking \(C=\sneg A\) gives
                            the \(\Gamma'\)-alternative of clause \((1)\).
                  \end{itemize}

            \item Suppose now that \(M:B\in\Delta=\Pi\) is chosen from the
                  right-hand side, so that, up to exchange,
                  \[
                      \Sigma,\,P:\sneg A
                      \types
                      \smashunderbrace{M:B,\,\Delta'}{\Pi} .
                  \]
                  Assume that \(\Sigma,P:\sneg A\) and \(\Delta'\) are
                  disjoint variable environments. Then \(P\) is a variable,
                  and \(\Sigma\) and \(P:A,\Delta'\) are disjoint variable
                  environments. The premise may be written as
                  \[
                      \Sigma \types P:A,\,M:B,\,\Delta' .
                  \]
                  Applying the induction hypothesis, clause \((2)\), to this
                  premise, choosing \(M:B\) on the right, gives that either
                  \(M\) is not a variable, or \(M\) is a variable and one of
                  the following alternatives holds:
                  \begin{itemize}
                      \item \(M:D\in\Sigma\), and \(B\) and \(D\) are related
                            by some even \(n\);
                      \item \(M:D\in P:A,\Delta'\), and \(B\) and \(D\) are
                            related by some odd \(n\).
                  \end{itemize}
                  If \(M\) is not a variable, then the non-variable alternative
                  of clause \((2)\) holds. Otherwise, suppose \(M\) is a
                  variable. If \(M:D\in\Sigma\), then \(M:D\in\Gamma\), so the
                  same even \(n\) gives the \(\Gamma\)-alternative of clause
                  \((2)\). If \(M:D\in\Delta'\), then the same odd \(n\) gives
                  the \(\Delta'\)-alternative of clause \((2)\). Finally, if
                  \(M:D\) is the formula \(P:A\), then \(M=P\) and \(D=A\).
                  Since \(B\) and \(A\) are related by some odd \(n\), \(B\) and
                  \(\sneg A\) are related by some even number of negations; and
                  \(P:\sneg A\in\Gamma\). Taking \(C=\sneg A\) gives the
                  \(\Gamma\)-alternative of clause \((2)\).
        \end{enumerate}
    \end{indproof}
\end{proof}

\begin{lemma}[Left Weakening]
    \label{lem:weakening-left}
    Let $\Gamma \types \Delta$ be derivable. Let \(x\) be fresh, i.e.
    \(x\notin\fv(\Gamma,\Delta)\), and let \(A\) be any type.
    For every choice of a displayed formula \(M:B\) in this judgement,
    the following hold.

    \begin{enumerate}
        \item Suppose the chosen formula is on the left, so that the judgement decomposes as
              \[
                  \smashunderbrace{\Gamma',\,M:B}{\Gamma}
                  \types
                  \smashunderbrace{\Delta'}{\Delta}.
              \]
              If \(\Gamma'\) and \(\Delta'\) are disjoint variable environments,
              then
              \[
                  \Gamma',\,x:A,\,M:B \types \Delta'
              \]
              is derivable.

        \item Suppose the chosen formula is on the right, so that the judgement decomposes as
              \[
                  \smashunderbrace{\Gamma'}{\Gamma}
                  \types
                  \smashunderbrace{M:B,\,\Delta'}{\Delta}.
              \]
              If \(\Gamma'\) and \(\Delta'\) are disjoint variable environments,
              then
              \[
                  \Gamma',\,x:A \types M:B,\,\Delta'
              \]
              is derivable.
    \end{enumerate}
\end{lemma}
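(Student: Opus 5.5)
By induction on the derivation of $\Gamma \types \Delta$, proving clauses (1) and (2) simultaneously and for every admissible choice of displayed formula $M:B$, in the same style as Lemma~\ref{lem:judgement-form}. Since $x$ is fresh for the whole judgement, adding $x:A$ on the left never clashes with an existing typing and keeps $\Gamma'$, $x:A$ and $\Delta'$ disjoint variable environments. The side conditions on the chosen formula therefore carry over from conclusion to premises. A preliminary observation from Lemma~\ref{lem:judgement-form} and its proof: if a rule other than \rlnm{Id} or a strong-negation rule is last, its principal formula has a non-variable subject. The disjointness hypothesis then forces the chosen formula $M:B$ to be that principal formula.

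\textbf{Base case \rlnm{Id}.} The conclusion is $\Sigma, y:C \types y:C, \Pi$. Whichever side the chosen formula lies on, adding $x:A$ to the left gives another instance of \rlnm{Id}, since the context of \rlnm{Id} is arbitrary.

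\textbf{Logical rules.} For a rule whose principal formula is the chosen $M:B$, each premise has the shape $\Gamma', (\text{possibly one new typing } N:C) \types (\ldots), \Delta'$. Every typing in it is a variable typing except at most one distinguished non-variable formula, the subterm's typing. Rules that discharge a variable (e.g.\ \rlnm{$\to$R} with $\Gamma', y:B_1 \types M':B_2, \Delta'$, or \rlnm{$\coto$L} discharging $y$ on the right) add a variable typing. Since the bound variable can be $\alpha$-renamed away from $x$, the premise still fits the lemma's hypothesis, with the distinguished formula as the chosen one. The induction hypothesis gives the premise with $x:A$ added on the left, and reapplying the rule gives the result. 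Multi-premise rules such as \rlnm{App}, \rlnm{$\to$L} and \rlnm{case} are handled premise by premise in the same way. If a premise's distinguished formula happens to be a variable (e.g.\ $M\,N$ with $N=y$), the lemma still applies, since it allows $M$ to be a variable.

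\textbf{Strong-negation rules (main obstacle).} Take \rlnm{$\sneg_I L$} with premise $\Sigma \types P:C, \Pi$. If the chosen formula is the principal one, the induction hypothesis applies to the premise with $P:C$ chosen. Otherwise, as in the second subcase of Lemma~\ref{lem:judgement-form}, $P$ must be a variable and $P:\sneg C$ sits in $\Gamma'$. In that case I re-choose the chosen formula in the premise $\Sigma', M:B \types P:C, \Pi$ to be $M:B$ on the left, moving $P:C$ into $\Delta'$. Disjointness still holds there, because $P$ is distinct from the variables of $\Sigma'$ and $\Pi$. The induction hypothesis then gives $\Sigma', x:A, M:B \types P:C, \Pi$, and applying \rlnm{$\sneg_I L$} to $P:C$ yields the goal. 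The subtle point is that the rule's principal formula need not be the chosen one, so the hypothesis must be invoked with a different choice. This is exactly why the lemma is quantified over every choice of displayed formula, which makes the induction strong enough. The rules \rlnm{$\sneg_E L$}, \rlnm{$\sneg_I R$} and \rlnm{$\sneg_E R$} are handled symmetrically.
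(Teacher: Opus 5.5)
Your proposal is correct and takes the same route as the paper. Both argue by induction on the derivation with the two clauses proved together, close \rlnm{Id} by reapplying the axiom, and use disjointness to force the chosen formula to be principal for every non-negation rule (after $\alpha$-renaming bound variables away from $x$); the strong-negation rules are then split by whether the chosen formula is principal, and in the passive case the induction hypothesis is applied to a re-chosen formula in the premise. The one subcase the paper writes out that you leave implicit is $\sneg_I L$ with the chosen formula on the right, which goes through the same way by choosing $M:B$ on the right of the premise $\Sigma \types M:B,\,P:C,\,\Delta'$.
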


\begin{proof}
    \label{proof:left-weakening}
    We prove the two assertions simultaneously by induction on the derivation
    of \(\Gamma \types \Delta\).

    \begin{indproof}
        \indcase{Id}
        The derivation ends with an instance of the identity rule:
        \[
            \infer[Id]{ }{
                \smashunderbrace{\Sigma,\,y:C}{\Gamma}
                \types
                \smashunderbrace{y:C,\,\Pi}{\Delta}
            }
            \quad y\notin\fv(\Sigma,\Pi).
        \]
        Since \(x\notin\fv(\Gamma,\Delta)=\fv(\Sigma,y:C,y:C,\Pi)\),
        we have \(x\neq y\).  Hence
        \[
            y\notin\fv(\Sigma,x:A,\Pi).
        \]
        Therefore we may reapply \rlnm{Id} to obtain
        \[
            \infer[Id]{ }{
                \Sigma,\,x:A,\,y:C
                \types
                y:C,\,\Pi
            }
            \quad y\notin\fv(\Sigma,x:A,\Pi).
        \]
        Equivalently, up to exchange,
        \[
            \Gamma,\,x:A \types \Delta .
        \]
        This is exactly the required judgement in both clauses, up to exchange:
        if the chosen formula is on the left, then
        \(\Gamma=\Gamma',M:B\), so
        \[
            \Gamma',\,x:A,\,M:B \types \Delta'
        \]
        follows; if the chosen formula is on the right, then
        \(\Delta=M:B,\Delta'\), so
        \[
            \Gamma',\,x:A \types M:B,\,\Delta'
        \]
        follows.

        \indcase{Non-negation rules}
        We spell out one representative case, namely \(\to R\).  The derivation
        ends as
        \[
            \infer[$\to R$]{
                \Sigma,\,y:C \types P:D,\,\Pi
            }{
                \smashunderbrace{\Sigma}{\Gamma}
                \types
                \smashunderbrace{\abs{y}{P}:C\to D,\,\Pi}{\Delta}
            }
            \quad y\notin\fv(\Sigma,\Pi).
        \]
        By \(\alpha\)-conversion, assume also that \(y\neq x\).  The principal
        formula in the conclusion is
        \[
            \abs{y}{P}:C\to D,
        \]
        whose term is not a variable.

        We claim that the chosen formula \(M:B\) must be the principal
        formula.  if the chosen formula were on the left, then the
        non-variable principal formula \(\abs{y}{P}:C\to D\) would remain in
        the right side context \(\Delta'\), contradicting the assumption that
        \(\Delta'\) is a variable environment.  Similarly, if the chosen
        formula were on the right but passive, then the same principal
        formula would again remain in \(\Delta'\), giving the same
        contradiction.

        Hence
        \[
            M=\abs{y}{P},
            \qquad
            B=C\to D,
            \qquad
            \Gamma'=\Sigma,
            \qquad
            \Delta'=\Pi .
        \]
        Since \(\Gamma'\) and \(\Delta'\) are disjoint variable environments,
        \(\Sigma,y:C\) and \(\Pi\) are also disjoint variable environments,
        because \(y\notin\fv(\Sigma,\Pi)\).

        We now check the freshness condition to apply the IH.  Since
        \[
            x\notin\fv(\Gamma,\Delta)
            =
            \fv(\Sigma,\abs{y}{P}:C\to D,\Pi)
        \]
        and \(x\neq y\), it follows that
        \[
            x\notin\fv(\Sigma,y:C,P:D,\Pi).
        \]
        Thus applying the IH, clause (2) to the premise with chosen formula \(P:D\) gives
        \[
            \Sigma,y:C,x:A \types P:D,\Pi .
        \]
        Up to exchange,
        \[
            \Sigma,x:A,y:C \types P:D,\Pi .
        \]
        Since \(y\notin\fv(\Sigma,\Pi)\) and \(y\neq x\), we have $y\notin\fv(\Sigma,x:A,\Pi)$
        Thus we may reapply \(\to R\) and obtain
        \[
            \Sigma,x:A \types \abs{y}{P}:C\to D,\Pi .
        \]
        Since \(\Gamma'=\Sigma\), \(\Delta'=\Pi\), \(M=\abs{y}{P}\), and
        \(B=C\to D\), this is exactly
        \[
            \Gamma',x:A \types M:B,\Delta' .
        \]

        The remaining non-negation rules are analogous.  If the chosen formula
        is passive, then the non-variable principal formula remains in one of
        the side contexts, contradicting the assumption that the side contexts are
        variable environments.  Hence the chosen formula must be principal.
        For each immediate premise, we check the freshness condition (bound
        variables are first \(\alpha\)-renamed so that they are distinct from
        \(x\) ) and apply the appropriate IH to the corresponding premise formula.

        \indcase{Negation rules}
        We spell out one representative case, namely \(\sneg_I L\).  The
        derivation ends as
        \[
            \infer[$\sneg_I L$]{
                \Sigma \types P:C,\Pi
            }{
                \smashunderbrace{\Sigma,P:\sneg C}{\Gamma}
                \types
                \smashunderbrace{\Pi}{\Delta}
            }.
        \]
        We consider whether the chosen formula \(M:B\) is on the right or on
        the left.

        \begin{enumerate}
            \item Suppose the chosen formula is on the right.  Then, up to
                  exchange,
                  \[
                      \Pi=M:B,\Delta',
                      \qquad
                      \Gamma'=\Sigma,P:\sneg C .
                  \]
                  Thus the premise may be written as
                  \[
                      \Sigma \types M:B,P:C,\Delta' .
                  \]
                  Since \(\Gamma'=\Sigma,P:\sneg C\) and \(\Delta'\) are disjoint
                  variable environments, \(P\) is a variable, and
                  \(\Sigma\) and \(P:C,\Delta'\) are also disjoint variable
                  environments.

                  We now check freshness for the induction hypothesis.  Since
                  \[
                      x\notin\fv(\Gamma,\Delta)
                      =
                      \fv(\Sigma,P:\sneg C,M:B,\Delta'),
                  \]
                  if follows that we also have
                  \[
                      x\notin\fv(\Sigma,M:B,P:C,\Delta').
                  \]
                  Therefore the induction hypothesis of clause (2), applied to
                  the premise with chosen formula \(M:B\), gives
                  \[
                      \Sigma,x:A \types M:B,P:C,\Delta' .
                  \]
                  By exchange,
                  \[
                      \Sigma,x:A \types P:C,M:B,\Delta' .
                  \]
                  Reapplying \(\sneg_I L\) gives
                  \[
                      \Sigma,x:A,P:\sneg C \types M:B,\Delta' .
                  \]
                  Since \(\Gamma'=\Sigma,P:\sneg C\), this is exactly
                  \[
                      \Gamma',x:A \types M:B,\Delta' .
                  \]

            \item Suppose the chosen formula is on the left.  There are two
                  subcases.

                  \begin{itemize}
                      \item Suppose first that the chosen formula is
                            principal.  Then
                            \[
                                M=P,
                                \qquad
                                B=\sneg C,
                                \qquad
                                \Gamma'=\Sigma,
                                \qquad
                                \Delta'=\Pi .
                            \]
                            The premise is
                            \[
                                \Gamma' \types M:C,\Delta' .
                            \]

                            We now check freshness for the induction hypothesis.
                            From
                            \[
                                x\notin\fv(\Gamma,\Delta)
                                =
                                \fv(\Gamma',M:\sneg C,\Delta')
                            \]
                            it follows that we also have
                            \[
                                x\notin\fv(\Gamma',M:C,\Delta'),
                            \]
                            Hence the induction hypothesis of clause (2), gives
                            \[
                                \Gamma',x:A \types M:C,\Delta' .
                            \]
                            Reapplying \(\sneg_I L\) gives
                            \[
                                \Gamma',x:A,M:\sneg C \types \Delta' .
                            \]
                            Since \(B=\sneg C\), this is exactly
                            \[
                                \Gamma',x:A,M:B \types \Delta' .
                            \]

                      \item Suppose now that the chosen formula is passive.
                            Then \(M:B\) is an formula in \(\Sigma\).  Up to
                            exchange, write
                            \[
                                \Sigma=\Sigma',M:B,
                                \qquad
                                \Gamma'=\Sigma',P:\sneg C,
                                \qquad
                                \Delta'=\Pi .
                            \]
                            The premise can be written as
                            \[
                                \Sigma',M:B \types P:C,\Delta' .
                            \]
                            Since \(\Gamma'=\Sigma',P:\sneg C\) and \(\Delta'\)
                            are disjoint variable environments, \(P\) is a
                            variable, and \(\Sigma'\) and \(P:C,\Delta'\) are
                            also disjoint variable environments.

                            We now check freshness for the induction hypothesis.
                            From
                            \[
                                x\notin\fv(\Gamma,\Delta)
                                =
                                \fv(\Sigma',M:B,P:\sneg C,\Delta')
                            \]
                            it follows that
                            \[
                                x\notin\fv(\Sigma',M:B,P:C,\Delta').
                            \]
                            Therefore the induction hypothesis of clause (1),
                            applied to the premise with chosen formula
                            \(M:B\), gives
                            \[
                                \Sigma',x:A,M:B \types P:C,\Delta' .
                            \]
                            Reapplying \(\sneg_I L\) gives
                            \[
                                \Sigma',x:A,M:B,P:\sneg C \types \Delta' .
                            \]
                            By exchange and \(\Gamma'=\Sigma',P:\sneg C\), this
                            is exactly
                            \[
                                \Gamma',x:A,M:B \types \Delta' .
                            \]
                  \end{itemize}
        \end{enumerate}

        The cases for \(\sneg_E L\), \(\sneg_I R\), and \(\sneg_E R\) are analogous.
        In each case, the premise and conclusion contain the same term formula
        \(P\), and only its type changes between \(C\) and \(\sneg C\).  Therefore
        the freshness condition needed for the induction hypothesis follows
        directly from the freshness assumption on the conclusion.  The
        disjoint-variable-environment condition is preserved for the same reason:
        when \(P\) is forced to lie in a side context, that side context being a
        variable environment implies that \(P\) is a variable.
    \end{indproof}

    Therefore left weakening is admissible.
\end{proof}

\begin{lemma}[Right Weakening]
    \label{lem:weakening-right}
    Let $\Gamma \types \Delta$ be derivable. Let \(x\) be fresh, i.e.
    \(x\notin\fv(\Gamma,\Delta)\), and let \(A\) be any type.
    For every choice of a displayed formula \(M:B\) in this judgement,
    the following hold.

    \begin{enumerate}
        \item Suppose the chosen formula is on the left, so that the judgement decomposes as
              \[
                  \smashunderbrace{\Gamma',\,M:B}{\Gamma}
                  \types
                  \smashunderbrace{\Delta'}{\Delta}.
              \]
              If \(\Gamma'\) and \(\Delta'\) are disjoint variable environments,
              then
              \[
                  \Gamma',\,M:B \types x:A,\,\Delta'
              \]
              is derivable.

        \item Suppose the chosen formula is on the right, so that the judgement decomposes as
              \[
                  \smashunderbrace{\Gamma'}{\Gamma}
                  \types
                  \smashunderbrace{M:B,\,\Delta'}{\Delta}.
              \]
              If \(\Gamma'\) and \(\Delta'\) are disjoint variable environments,
              then
              \[
                  \Gamma' \types M:B,\,x:A,\,\Delta'
              \]
              is derivable.
    \end{enumerate}
\end{lemma}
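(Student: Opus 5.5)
The argument mirrors the proof of Lemma~\ref{lem:weakening-left}. We prove the two clauses simultaneously by induction on the derivation of $\Gamma \types \Delta$, always tracking the chosen formula $M:B$. The side contexts $\Gamma'$ and $\Delta'$ must be disjoint variable environments, and $x$ is fresh for the whole judgement.

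\textbf{Base case \rlnm{Id}.} The conclusion is $\Sigma,\,y:C \types y:C,\,\Pi$ with $y \notin \fv(\Sigma,\Pi)$. Freshness of $x$ gives $x \neq y$, so $y \notin \fv(\Sigma,\,\Pi,\,x:A)$. We reapply \rlnm{Id} with $x:A$ added to $\Pi$, obtaining $\Gamma \types x:A,\,\Delta$. Reading this up to exchange gives the required judgement in both clauses.

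\textbf{Non-negation rules.} The principal formula carries a non-variable term. If the chosen formula were passive, this principal formula would remain in $\Gamma'$ or $\Delta'$, contradicting that these are variable environments. So the chosen formula is principal. For each premise we first $\alpha$-rename any bound variable (for example in \rlnm{$\to$R} or \rlnm{$\coto$L}) so that it differs from $x$. We then check that $x$ is fresh for the premise. We also check that the premise's side contexts, extended by any newly bound variable typing, are still disjoint variable environments; this holds because that variable is fresh for $\Gamma'$ and $\Delta'$. The induction hypothesis, applied to each premise with the premise's principal formula chosen, adds $x:A$ on the right. We then reapply the rule, whose side condition $y \notin \fv(\ldots)$ still holds because $y \neq x$.

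\textbf{Strong negation rules.} These are treated as in the left-weakening case for $\sneg_I L$, with three sub-cases: chosen formula principal, chosen formula passive on the same side, or chosen formula on the opposite side.
\begin{itemize}
\item Chosen formula passive or on the opposite side: the principal $P$ sits in a variable side context, so $P$ is a variable. Moving $P$ across the turnstile in the premise then keeps the side contexts disjoint variable environments, because $P:\sneg C$ and $P:C$ name the same variable and that variable is distinct from the others.
\item Chosen formula principal: the premise has the same side contexts, so the hypothesis applies directly.
\end{itemize}
In every sub-case we apply the appropriate clause of the induction hypothesis to the premise and then reapply the negation rule.

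\textbf{Main obstacle.} The delicate step is the negation case where the chosen formula is passive. There, $P$ (which appears as $P:\sneg C$ in the conclusion) appears in the premise as $P:C$ on the other side. We must confirm that the disjointness hypothesis transfers to the premise and that freshness of $x$ is unaffected. This is exactly the bookkeeping used in Lemma~\ref{lem:weakening-left}, with the sides exchanged.
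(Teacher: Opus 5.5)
Your proposal is correct and takes essentially the paper's route. The paper proves Right Weakening only by saying it is analogous to Left Weakening, and your induction is exactly that analogue with the sides exchanged: the \textsc{Id} case is handled by reapplying the rule with $x:A$ added, the non-negation rules force the chosen formula to be principal, and the strong-negation rules split into the three sub-cases. One small wording point: in the non-negation case you mean the premise's \emph{subject} formula, not its ``principal'' formula.
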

\begin{proof}
    The proof is analogous to that of \cref{lem:weakening-left}.
\end{proof}

\begin{definition}[Assumptions and co-assumptions]
    Let \(\Gamma \types \Delta\) be a judgement. A variable typing
    \(x:A\in\Gamma\), that is, a variable typing appearing on the left-hand
    side of the judgement, is called an \emph{assumption} of the judgement.
    A variable typing \(x:A\in\Delta\), that is, a variable typing appearing
    on the right-hand side of the judgement, is called a \emph{co-assumption}
    of the judgement.
\end{definition}

With this terminology, we can state the substitution lemmas for assumptions
and co-assumptions separately.

\begin{lemma}[Substitution for assumptions]
    \label{lem:sbst-assmpt}
    Let $\Gamma$ and $\Delta$ be disjoint variable environments. The following
    substitution rules are admissible:
    \[
        \infer[Subst-Assm-L]{
            \Gamma, x:A,\,M:B \types \Delta
            \and
            \Gamma \types N:A,\,\Delta
        }{
            \Gamma,\,\Subst{M}{N}{x}:B \types \Delta
        }
        \;x \notin \fv(\Gamma,\Delta)
    \]
    and
    \[
        \infer[Subst-Assm-R]{
            \Gamma, x:A \types M:B,\,\Delta
            \and
            \Gamma \types N:A,\,\Delta
        }{
            \Gamma \types \Subst{M}{N}{x}:B,\,\Delta
        }
        \;x \notin \fv(\Gamma,\Delta).
    \]
\end{lemma}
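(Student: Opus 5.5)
We prove the two rules together, by induction on the derivation of the first premise, that is, the judgement containing the assumption $x:A$. The argument is a generalised statement. For any derivable judgement in which one chosen typing formula $P:C$ is singled out (on either side), and the remaining formulas consist of $\Gamma,x:A$ on the left and $\Delta$ on the right, we show that replacing $P$ by $\Subst{P}{N}{x}$ yields a derivable judgement with $x:A$ removed. The assumption is that $\Gamma$ and $\Delta$ are disjoint variable environments with $x$ fresh for them, and that $\Gamma \types N:A,\,\Delta$ holds. This is the same ``chosen formula'' format used for Lemma~\ref{lem:weakening-left}. Because $x$ is fresh for $\Gamma,\Delta$, substitution leaves the side contexts unchanged, so only the chosen formula is affected.

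\textbf{Base case (\rlnm{Id}).} There are two situations. If the identity variable is not $x$, the conclusion after substitution is again an instance of \rlnm{Id}, since $x:A$ is simply dropped and the remaining variable formulas are untouched. If the identity variable is $x$, we use Lemma~\ref{lem:judgement-form}. Since $x:A$ sits in the left environment and disjointness holds, the chosen formula must be $x:A$ itself, appearing on the right. In that case the goal is exactly $\Gamma \types N:A,\,\Delta$, which is the second premise.

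The chosen formula could also be $x:B$ on the left, with $B$ related to $A$ by an odd number of negations. This arises through strong-negation rules rather than through \rlnm{Id} directly. We handle it in the strong-negation cases instead: from $\Gamma \types N:A,\,\Delta$ we derive $N:\sneg A$ on the left via \rlnm{$\sneg_I$L}, and, iterating, $N:\sneg^{n}A$ on the appropriate side. For this reason we strengthen the induction hypothesis to allow the premise $N$ to be typed at any type related to $A$ by an even number of negations on the right, or an odd number on the left.

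\textbf{Inductive cases.} For each term-former rule, the principal formula is non-variable. By the side-context argument of Lemma~\ref{lem:weakening-left}, it must therefore be the chosen formula. We apply the induction hypothesis to each premise's distinguished formula and then reapply the rule. Binders $\lambda y$ are first $\alpha$-renamed away from $x$ and $\fv(N)$. For these, the substituted premise $\Gamma \types N:A,\,\Delta$ must be weakened with $y:C$, using Lemma~\ref{lem:weakening-left}, or with a co-assumption, using Lemma~\ref{lem:weakening-right}. Substitution commutes with the rule because $\Subst{(\abs{y}{P})}{N}{x}=\abs{y}{\Subst{P}{N}{x}}$, and similarly for pairs, projections, $\iota_i$, dot constructions, case and abort. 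For the strong-negation rules, the principal term moves across the turnstile unchanged. The passive situation, in which the principal formula is a variable that may be $x$ itself, is exactly where the strengthened hypothesis is used.

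\textbf{Main obstacle.} The delicate part is the interaction of strong negation with the assumption variable. Through repeated \rlnm{$\sneg$} rules, $x$ can appear with types $\sneg^n A$ on either side as the active formula, and the induction must keep track of parity, just as in Lemma~\ref{lem:judgement-form}. My first attempt would be to state the generalised lemma with $x$ typed at $\sneg^{2k}A$ on the left, or $\sneg^{2k+1}A$ on the right, and to derive the needed typings of $N$ by applying \rlnm{$\sneg_I$L} and \rlnm{$\sneg_I$R} to the second premise. Subst-Assm-L and Subst-Assm-R are then the instances where the chosen formula is $M:B$ on the left or on the right, respectively.
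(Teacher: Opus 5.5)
Your skeleton matches the paper's expanded-form argument (Lemma~\ref{lem:sbst-assmpt-alt}, argued as in the proof of Lemma~\ref{lem:sbst-coassmpt-alt}): induction on the first premise in the chosen-formula format, non-variable principal formulas forced to be the chosen one, and the $N$-derivation weakened under binders. The gap is in your strengthening for the strong-negation rules, which is the only non-routine part. Your invariant puts $x$ at $\sneg^{2k}A$ on the left or $\sneg^{2k+1}A$ on the right, with $N$ typed via \rlnm{$\sneg_I$ L} and \rlnm{$\sneg_I$ R}. It is not preserved from conclusion to premise, because negation rules can also strip negations.

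A concrete counterexample: take $A=\sneg C$ and the derivable judgement $x:\sneg C,\,x:C\types$, obtained from \rlnm{Id} $x:C\types x:C$ by \rlnm{$\sneg_I$ L}. Let $x:\sneg C$ be distinguished, $x:C$ chosen, and $\types N:\sneg C$ given.
\begin{itemize}
\item The last rule has the distinguished $x:A$ as its principal formula, so in the premise $x$ is a co-assumption of type $C$. This is not of the form $\sneg^{2k+1}A$.
\item The required conclusion $N:C\types$ follows from $\types N:\sneg C$ by \rlnm{$\sneg_E$ L}. No iteration of introduction rules applied to that hypothesis yields it.
\item The same example defeats your separate treatment of a chosen $x:B$ with $B$ odd-related to $A$, since here $B$ has fewer negations than $A$.
\end{itemize}
The problem is not tied to the chosen term being $x$: $x:\sneg C,\,\abs{z}{x}:E\coto C\types$ behaves the same way.

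To repair this, let the distinguished variable sit on either side at an arbitrary type $D$. The hypothesis is $\Gamma\types N:D,\Delta$ when $x$ is an assumption and $\Gamma,\,N:D\types\Delta$ when it is a co-assumption. In effect you prove this lemma simultaneously with substitution for co-assumptions. Whenever a negation rule moves $x$ across the turnstile, apply the inverse negation rule to the $N$-derivation; this is an elimination rule when a negation is stripped.

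You also skip the other passive subcase, where the principal formula of a negation rule is a side variable $z\neq x$. There the premise's side contexts differ from the conclusion's: for \rlnm{$\sneg_I$ L}, $z:\sneg C$ on the left becomes $z:C$ on the right. So your fixed hypothesis $\Gamma\types N:A,\Delta$ must be transported in the same way before the induction hypothesis applies. This is exactly the \rlnm{$\sneg_E$ R} step in the paper's proof of Lemma~\ref{lem:sbst-coassmpt-alt}.

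Conversely, a chosen formula whose term is $x$ needs no special treatment. Its type simply changes along the derivation, and it can only meet the distinguished variable at an \rlnm{Id} leaf, at the same type. Lemma~\ref{lem:judgement-form} is needed only to ensure that, when the chosen formula is principal in a negation rule, the chosen formula of the premise is not the distinguished one.
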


\begin{lemma}[Substitution for co-assumptions]
    \label{lem:sbst-co-assmpt}
    Let $\Gamma$ and $\Delta$ be disjoint variable environments. The following
    substitution rules are admissible:
    \[
        \infer[Subst-Coassm-L]{
            \Gamma,\,M:B \types x:A,\,\Delta
            \and
            \Gamma,\,N:A \types \Delta
        }{
            \Gamma,\,\Subst{M}{N}{x}:B \types \Delta
        }
        \;x \notin \fv(\Gamma,\Delta)
    \]
    and
    \[
        \infer[Subst-Coassm-R]{
            \Gamma \types M:B,\,x:A,\,\Delta
            \and
            \Gamma,\,N:A\types \Delta
        }{
            \Gamma \types \mf{\Subst{M}{N}{x}:B},\,\Delta
        }
        \;x \notin \fv(\Gamma,\Delta).
    \]
\end{lemma}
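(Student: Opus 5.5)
We prove the four substitution rules of Lemma~\ref{lem:sbst-co-assmpt} simultaneously, by induction on the derivation of the first premise. The first premise is $\Gamma,\,M:B \types x:A,\,\Delta$ for \rlnm{Subst-Coassm-L}, or $\Gamma \types M:B,\,x:A,\,\Delta$ for \rlnm{Subst-Coassm-R}. We keep the second premise $\Gamma,\,N:A \types \Delta$ fixed. Since the side contexts of the first premise are disjoint variable environments, every non-variable typing formula there is the chosen formula $M:B$. So, exactly as in the proofs of Lemma~\ref{lem:judgement-form} and Lemma~\ref{lem:weakening-left}, the last rule applied is either \rlnm{Id}, a strong-negation rule, or a logical rule whose principal formula is $M:B$.

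For a logical rule with principal formula $M:B$, each premise is again a single-subject judgement. Its side contexts extend $\Gamma$ and $x:A,\Delta$ by at most a freshly bound variable: $y:C$ on the left for \rlnm{$\to$R}, or a bound co-assumption on the right for \rlnm{$\coto$L}. We $\alpha$-rename bound variables away from $x$ and $\fv(N)$. Before invoking the induction hypothesis, we bring the second premise into the extended context using Lemma~\ref{lem:weakening-left} or Lemma~\ref{lem:weakening-right}. We then reapply the same rule. This works because substitution commutes with every term former, e.g. $(\abs{y}{P})[N/x] = \abs{y}{P[N/x]}$ and $(P\cdot Q)[N/x] = P[N/x]\cdot Q[N/x]$. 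Premises whose principal formula lies on the opposite side are handled by the opposite clause, L or R, of the simultaneous induction.

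The \rlnm{Id} case is the base case. If the identity formula is $x:A$ itself, then $M = x$ and $B = A$. In clause L this gives $\Gamma,\,x:A \types x:A$ with $M:B$ on the left, and the conclusion $\Gamma,\,N:A \types \Delta$ is exactly the second premise. In clause R, the formula $x:A$ would have to appear on both sides of a disjoint-variable judgement, so it cannot be the identity formula and this case is vacuous. If the identity formula is some other $y:C$ with $y \neq x$, then $M = y$, and $y[N/x] = y$. In that case we simply reapply \rlnm{Id} after dropping $x:A$, which is legitimate because \rlnm{Id} allows arbitrary side contexts.

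The main obstacle is the strong-negation rules, since there the chosen formula may be a variable. By Lemma~\ref{lem:judgement-form}, this variable may be $x$ itself with a type related to $A$ by some number of $\sneg{}$. For example, with \rlnm{$\sneg_I$L} the premise is $\Gamma \types M:C,\,x:A,\,\Delta$ and $B = \sneg C$. Here we apply the R-clause of the induction hypothesis to get $\Gamma \types M[N/x]:C,\,\Delta$, and then reapply \rlnm{$\sneg_I$L}. This is uniform and causes no trouble, because the negation rules leave the term unchanged. The delicate point is the \rlnm{Id} base case reached after such flips. There we must produce a judgement about $N$ at a type related to $A$ by strong negations, on the appropriate side. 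The second premise $\Gamma,\,N:A \types \Delta$ supplies exactly this, since the four negation rules let us move $N$ across the turnstile while toggling one $\sneg{}$; the parity bookkeeping of Lemma~\ref{lem:judgement-form} guarantees the sides match. So we expect the cleanest route to be strengthening the induction to allow the second premise at any type related to $A$ by an even number of negations on the left, or by an odd number on the right. We would check the parity invariants carefully in this step. The assumption-substitution rules of Lemma~\ref{lem:sbst-assmpt} are proved in exactly the same way, with the roles of the two sides exchanged.
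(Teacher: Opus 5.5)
You have a gap in the strong-negation cases. Unlike a logical rule, a negation rule can act on a passive variable typing in a side context. The premise then has different side contexts from the conclusion, so the second premise cannot be kept fixed up to weakening. For example, suppose the last rule is $\sneg_I L$ with principal formula $z:\sneg{C}$, where $\Gamma = \Gamma_0, z:\sneg{C}$. The premise is $\Gamma_0, M:B \types z:C, x:A, \Delta$, and the induction hypothesis needs $\Gamma_0, N:A \types z:C, \Delta$. No weakening yields this from $\Gamma_0, z:\sneg{C}, N:A \types \Delta$. The paper gets it by applying the inverse rule $\sneg_E R$ to $z$ in the second premise, then using the induction hypothesis and reapplying $\sneg_I L$ to $z$. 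It does the same for the other rules and for the chosen formula on either side. You only treat negation rules whose principal formula is the chosen $M:B$. Even in that subcase you should check, as the paper does via Lemma~\ref{lem:judgement-form}, that the premise's chosen formula differs from $x:A$ before invoking the hypothesis.

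Conversely, the difficulty you place at the \rlnm{Id} leaves does not arise. By your own \rlnm{Id} analysis, while $x:A$ remains a co-assumption, a leaf whose chosen formula is $x$ is literally $\Gamma, x:A \types x:A, \Delta$ with the left copy chosen. The unmodified second premise closes it. Flips of the chosen formula are undone by reapplying the same negation rules when you rebuild the derivation.

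Extra generality is genuinely needed when the co-assumption $x:A$ is itself principal in $\sneg_I R$ or $\sneg_E R$. For instance, if $A = \sneg{C}$ and the last rule is $\sneg_I R$ on $x$, the premise is $\Gamma, x:C \types M:B, \Delta$. Now $x$ is an assumption, and you must finish with assumption substitution, using $\Gamma \types N:C, \Delta$ obtained from the second premise by $\sneg_E R$. So the assumption and co-assumption rules must be proved in one simultaneous induction, not the former ``in exactly the same way'' afterwards.

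Your strengthening achieves this if stated as follows: either $x$ is a co-assumption $x:A'$ and the second premise is $\Gamma, N:A' \types \Delta$, or $x$ is an assumption $x:A'$ and the second premise is $\Gamma \types N:A', \Delta$. The switch is performed locally at the negation rule, so no global parity tracking is needed. The paper spells out only $\sneg_I L$ and calls the other negation rules analogous, so it also glosses over this interaction.
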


We rewrite the two substitution lemmas below in a more proof-friendly, expanded form. It is straightforward to verify that these expanded forms are equivalent to the original lemmas.

\begin{lemma}[Substitution for assumptions]
    \label{lem:sbst-assmpt-alt}
    Let $\Gamma,\,x:A \types \Delta$ be derivable, where \(x:A\) is the distinguished assumption. Then the following assertions hold.

    \begin{enumerate}
        \item Let \(M:B\in\Gamma\) (hence $M:B$ is distinct from $x:A$) be a chosen typing formula on the left, so
              that, up to exchange, the judgement decomposes as
              \[
                  \smashunderbrace{\Gamma',\,M:B}{\Gamma},\,x:A
                  \types
                  \Delta .
              \]
              Suppose that \(\Gamma',x:A\) and \(\Delta\) are disjoint variable
              environments. Equivalently, after removing the chosen typing
              formula \(M:B\), all remaining typing formulas in the judgement
              are variable typings for pairwise distinct variables. If
              \[
                  \Gamma' \types N:A,\,\Delta
              \]
              is derivable, then
              \[
                  \Gamma',\,\Subst{M}{N}{x}:B \types \Delta
              \]
              is derivable.

        \item Let \(M:B\in\Delta\) be a chosen typing formula on the right, so
              that, up to exchange, the judgement decomposes as
              \[
                  \Gamma,\,x:A
                  \types
                  \smashunderbrace{M:B,\,\Delta'}{\Delta}.
              \]
              Suppose that \(\Gamma,x:A\) and \(\Delta'\) are disjoint variable
              environments. Equivalently, after removing the chosen typing
              formula \(M:B\), all remaining typing formulas in the judgement
              are variable typings for pairwise distinct variables. If
              \[
                  \Gamma \types N:A,\,\Delta'
              \]
              is derivable, then
              \[
                  \Gamma \types \Subst{M}{N}{x}:B,\,\Delta'
              \]
              is derivable.
    \end{enumerate}
\end{lemma}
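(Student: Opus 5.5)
We will prove the two assertions simultaneously by induction on the derivation of \(\Gamma,\,x:A \types \Delta\), following the pattern of the proofs of Lemma~\ref{lem:judgement-form} and Lemma~\ref{lem:weakening-left}. As there, the side conditions ensure that every rule instance other than \rlnm{Id} and the strong-negation rules must have the chosen formula \(M:B\) as its principal formula. Were it passive, the non-variable principal formula would sit in one of the side contexts, and these are required to be variable environments. In the principal case we apply the induction hypothesis to each premise, taking as chosen formula the one that contains the immediate subterm. We then reapply the same rule to the substituted premises, using \(\Subst{(\abs{y}{P})}{N}{x} = \abs{y}{\Subst{P}{N}{x}}\) and the analogous clauses for the other term formers, with bound variables \(\alpha\)-renamed away from \(x\) and \(\fv(N)\). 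Premises that extend a side context, for instance the extra assumption \(y:C\) in \(\to\)R or the extra co-assumption \(y:C\) in \(\coto\)L, require \(N\) to be typed in the enlarged context. We obtain this from \(\Gamma' \types N:A,\Delta\) using Lemma~\ref{lem:weakening-left} or Lemma~\ref{lem:weakening-right}, with the fresh variable \(y\) and \(N:A\) as the chosen formula.

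The \rlnm{Id} case divides according to where the identity pair lies. If the identity variable is \(x\) itself, then the chosen formula must be the right-hand occurrence \(x:A\); disjointness rules out every other reading. Hence \(M=x\) and \(B=A\), so \(\Subst{M}{N}{x}=N\), and the goal is exactly the hypothesis \(\Gamma \types N:A,\Delta'\) of clause (2). Clause (1) cannot arise here, because the left occurrence of \(x:A\) is the distinguished assumption. If instead the identity variable is some \(y\neq x\), then \(\Subst{M}{N}{x}=y\) and we reapply \rlnm{Id} after dropping \(x:A\). This is legitimate because \rlnm{Id} allows arbitrary side contexts.

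The strong-negation rules are where the real difficulty lies. Here the term does not change across the inference, so the passive chosen formula may be a variable, and in particular it may be \(x\) itself, appearing in a premise with type \(A\) or with a differently negated type. Take \(\sneg_I L\) with premise \(\Sigma \types P:C,\Pi\) and conclusion containing \(P:\sneg C\). When \(P:\sneg C\) is the chosen formula we apply clause (2) to the premise and reapply the rule. When the chosen formula is passive, \(P\) must be a variable. If \(P\neq x\), we apply the induction hypothesis to the premise with \(M:B\) as the chosen formula and then reapply the rule. The delicate subcase is \(P=x\), where the premise contains \(x:C\) on the right alongside the distinguished assumption \(x:A\). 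We first use Lemma~\ref{lem:judgement-form} to determine how \(C\) and \(A\) are related by negations. We then transport \(N\) across the turnstile with the appropriate strong-negation rules: from \(\Gamma' \types N:A,\Delta\) we obtain \(N\) typed as \(C\) on the correct side, for example \(\Gamma',N:\sneg A \types \Delta\). The remaining difficulty is that the premise now carries two occurrences of \(x\). For this we strengthen the induction to substitute simultaneously for every occurrence of \(x\), with \(N\) typed at each occurrence's type and side. This generalised statement is the one we actually prove by induction, and both clauses of the lemma follow as the special case of a single occurrence.

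We expect the main obstacle to be making this generalisation precise while keeping the disjointness bookkeeping consistent. The induction must also range over mixed situations, in which \(x\) is a co-assumption in one premise and an assumption in another. We would therefore prove this lemma together with Lemma~\ref{lem:sbst-co-assmpt} by a single simultaneous induction.
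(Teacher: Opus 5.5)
Your overall architecture matches the paper's: induction on the derivation, the observation that outside \rlnm{Id} and the strong-negation rules the chosen formula must be principal, weakening to type $N$ in enlarged side contexts, and your \rlnm{Id} analysis. The paper writes this argument out only for the co-assumption version (Lemma~\ref{lem:sbst-coassmpt-alt}).

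The gap is the strong-negation case with $P=x$. There you defer all the work to a ``simultaneous substitution'' strengthening that you never state, and whose formulation you yourself flag as the open obstacle. That strengthening rests on a misreading: the premise never carries $x$ alongside the distinguished assumption. Since $\Gamma',x:A$ (resp.\ $\Gamma,x:A$) is a variable environment, a left-hand principal formula $P:\sneg{C}$ with $P=x$ that is not the chosen formula can only be $x:A$ itself. So $A=\sneg{C}$, and the rule \emph{consumes} the distinguished assumption. The premise of $\sneg_I L$ is then $\Gamma',M:B \types x:C,\Delta$ in clause (1), or $\Gamma \types x:C,M:B,\Delta'$ in clause (2). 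In both, $x$ survives only as a co-assumption, the side contexts are still disjoint variable environments, and no appeal to Lemma~\ref{lem:judgement-form} is needed to relate $C$ and $A$.

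Right-principal negation rules never have $x$ as a passive principal term, since passive right-hand variables lie in $\Delta$ or $\Delta'$. The only way $x$ can occur ``twice'' is inside the chosen formula, e.g.\ $M=x$ principal in $\sneg_I R$, giving premise $\Gamma,x:A,x:C \types \Delta'$. The statement already permits that, because the chosen formula is exempt from the disjointness hypothesis.

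The missing step is therefore a direct crossover to the co-assumption lemma, which your last sentence gestures at.
\begin{itemize}
\item For $\sneg_I L$, apply $\sneg_E L$ to the hypothesis $\Gamma' \types N:\sneg{C},\Delta$ (resp.\ $\Gamma \types N:\sneg{C},\Delta'$) to obtain $\Gamma',N:C \types \Delta$ (resp.\ $\Gamma,N:C \types \Delta'$). Then apply Lemma~\ref{lem:sbst-coassmpt-alt} to the premise with $x:C$ as distinguished co-assumption. This gives the goal outright, without reapplying the rule.
\item For $\sneg_E L$, the premise has $x:\sneg{A}$ on the right, and you transport $N$ with $\sneg_I L$ instead.
\end{itemize}
Symmetrically, the co-assumption lemma needs this lemma when its distinguished co-assumption is principal in $\sneg_I R$ or $\sneg_E R$. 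So the two lemmas really must be proved by one simultaneous induction on the derivation, and you are right to insist on it; the paper waves those cases through as ``analogous''. No multi-occurrence generalisation is required, however.

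A smaller omission: in the passive case with $P\neq x$, you cannot apply the induction hypothesis until $N$ is typed in the premise's side contexts. For example, for $\sneg_I L$ with $\Gamma'=\Gamma'',P:\sneg{C}$, apply $\sneg_E R$ to $\Gamma'',P:\sneg{C} \types N:A,\Delta$ to get $\Gamma'' \types N:A,P:C,\Delta$. This is exactly what the paper does in the co-assumption proof.
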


\begin{lemma}[Substitution for co-assumptions, expanded form]
    \label{lem:sbst-coassmpt-alt}
    Let \(\Gamma \types x:A,\,\Delta\) be derivable, where \(x:A\) is the distinguished co-assumption. Then the
    following assertions hold.

    \begin{enumerate}
        \item Let \(M:B\in\Gamma\) be a chosen typing formula on the left, so
              that, up to exchange,
              \[
                  \smashunderbrace{\Gamma',\,M:B}{\Gamma}
                  \types
                  x:A,\,\Delta .
              \]
              Suppose that \(\Gamma'\) and \(x:A,\Delta\) are disjoint variable
              environments. Equivalently, after removing the chosen typing
              formula \(M:B\), all remaining typing formulas in the judgement
              are variable typings for pairwise distinct variables. If
              \[
                  \Gamma',\,N:A \types \Delta
              \]
              is derivable, then
              \[
                  \Gamma',\,\Subst{M}{N}{x}:B \types \Delta
              \]
              is derivable.

        \item Let \(M:B\in\Delta\) (hence $M:B$ is distinct from $x:A$) be a chosen typing formula on the right, so
              that, up to exchange,
              \[
                  \Gamma
                  \types
                  x:A,\,
                  \smashunderbrace{M:B,\,\Delta'}{\Delta}.
              \]
              Suppose that \(\Gamma\) and \(x:A,\Delta'\) are disjoint variable
              environments. Equivalently, after removing the chosen typing
              formula \(M:B\), all remaining typing formulas in the judgement
              are variable typings for pairwise distinct variables. If
              \[
                  \Gamma,\,N:A \types \Delta'
              \]
              is derivable, then
              \[
                  \Gamma \types \Subst{M}{N}{x}:B,\,\Delta'
              \]
              is derivable.
    \end{enumerate}
\end{lemma}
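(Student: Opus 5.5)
The plan is to argue by simultaneous induction on the derivation of the major premise: $\Gamma,\,x:A \types \Delta$ in Lemma~\ref{lem:sbst-coassmpt-alt}, and its mirror image for Lemma~\ref{lem:sbst-assmpt-alt}. In each step we keep track of the single chosen (non-variable or distinguished) formula $M:B$. This follows the pattern of the proofs of Lemma~\ref{lem:judgement-form} and Lemma~\ref{lem:weakening-left}. Because every side context other than the chosen formula is a disjoint variable environment, every rule with a non-variable principal formula must have the chosen formula as its principal formula. Otherwise that principal formula would sit in a side context, which is a contradiction. So, apart from \rlnm{Id} and the strong-negation rules, each case is the one where $M:B$ is principal.

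For the \rlnm{Id} case we use Lemma~\ref{lem:judgement-form}. Since the side contexts are disjoint, the chosen formula must be the variable typing introduced by \rlnm{Id}.
\begin{itemize}
\item If $M$ is the distinguished variable $x$ (in the assumption version), then $\Subst{M}{N}{x}=N$. The required judgement is the second premise, possibly with a type differing by an even number of strong negations. We fix this by applying the $\sneg{}$-introduction and elimination rules in pairs.
\item If $M$ is some other variable $y$, substitution leaves it unchanged. The conclusion is then an instance of \rlnm{Id} with $x:A$ removed. This is legitimate because $x\neq y$, and disjointness ensures that $x:A$ was not the identity pair.
\end{itemize}

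For the structural rules with a principal non-variable term, such as \rlnm{$\to$R}, \rlnm{AppL}, \rlnm{$\coto$L}, \rlnm{PiL} and \rlnm{Iota}$_i$, we proceed as follows. We apply the induction hypothesis to each premise, choosing the premise's principal formula. Bound variables are first $\alpha$-renamed away from $x$ and $\fv(N)$. When a premise extends the context with a new variable $y$, we weaken the second premise $\Gamma\types N:A,\Delta$ (resp.\ $\Gamma,N:A\types\Delta$) by $y$, using Lemmas~\ref{lem:weakening-left} and~\ref{lem:weakening-right}. We then reapply the same rule and use that substitution commutes with the term former.

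The strong-negation rules need separate treatment, because the principal term may be a variable and the premise's chosen formula may change sides. Take \rlnm{$\sneg_I$L} as an example. If the chosen formula is principal, we apply the induction hypothesis of the opposite clause to the premise and reapply the rule. If the chosen formula is passive, then the principal $P:\sneg C$ is forced to be a variable typing. In the premise it becomes a variable on the other side. Here the key subtlety arises: if $P$ is the distinguished variable $x$ itself, the distinguished assumption $x:A$ turns into a co-assumption $x:C$ with $A=\sneg C$. We then have to switch to the other substitution lemma. Its second premise is obtained from $\Gamma\types N:\sneg C,\Delta$ by \rlnm{$\sneg_E$R}, giving $\Gamma,N:C\types\Delta$. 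This is why the two lemmas, assumptions and co-assumptions, must be proved simultaneously, with all four clauses in a single induction.

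I expect this mutual interaction in the strong-negation cases to be the main obstacle. Keeping the side conditions straight (disjointness, freshness, and which side $x$ now lives on) is the delicate bookkeeping. The remaining cases are routine.
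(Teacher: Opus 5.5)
Your proposal is sound, and it differs from the paper's proof in one substantive respect. You run a single simultaneous induction over all four clauses, for assumptions and co-assumptions. The paper instead inducts on the derivation of $\Gamma \types x:A,\,\Delta$ for the two clauses of this lemma only, spells out \rlnm{$\sneg_I$ L}, and declares \rlnm{$\sneg_I$ R} and \rlnm{$\sneg_E$ R} analogous. They are not analogous in one subcase. In those rules the principal formula lies on the right and may be a variable typing, so it can be the distinguished co-assumption itself:
\begin{itemize}
\item For \rlnm{$\sneg_I$ R}, $A=\sneg{C}$ and the premise is $\Gamma,\,x:C \types \Delta$.
\item For \rlnm{$\sneg_E$ R}, the premise is $\Gamma,\,x:\sneg{A} \types \Delta$.
\end{itemize}
In that premise $x$ is an assumption, so the only way forward is your move. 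You turn the side derivation $\ldots,\,N:\sneg{C} \types \ldots$ into $\ldots \types N:C,\,\ldots$ by \rlnm{$\sneg_E$ R}, or $\ldots,\,N:A \types \ldots$ into $\ldots \types N:\sneg{A},\,\ldots$ by \rlnm{$\sneg_I$ R}. You then apply Lemma~\ref{lem:sbst-assmpt-alt}. The paper's single-lemma route is lighter to state, but its written argument never treats this case. Your mutual induction is what actually closes it.

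Your sketch should still supply two things, both of which the paper handles.
\begin{itemize}
\item When the chosen formula is principal in a strong-negation rule, the premise's active formula may coincide with the distinguished typing. In \rlnm{$\sneg_I$ L}, for instance, $P:C$ may equal $x:A$, since contexts are sets. Then no clause of the induction hypothesis applies. The paper rules this out with Lemma~\ref{lem:judgement-form}, because that premise would consist only of variable typings for pairwise distinct variables. That, not \rlnm{Id}, is where the lemma is needed. At \rlnm{Id} disjointness alone suffices, and your parity adjustment is vacuous because a variable environment gives $x$ a unique type.
\item When the principal formula is a passive variable $P \neq x$, the side derivation must also be moved across the turnstile before the induction hypothesis applies. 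The paper turns $\Sigma',\,P:\sneg{C},\,N:A \types \Delta$ into $\Sigma',\,N:A \types P:C,\,\Delta$ by \rlnm{$\sneg_E$ R}.
\end{itemize}

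Two minor slips. This lemma's major premise is $\Gamma \types x:A,\,\Delta$, not $\Gamma,\,x:A \types \Delta$. The step from $\Gamma \types N:\sneg{C},\,\Delta$ to $\Gamma,\,N:C \types \Delta$ is \rlnm{$\sneg_E$ L}, not \rlnm{$\sneg_E$ R}.
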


\begin{proof}
    We prove the two assertions simultaneously by induction on the derivation
    of \(\Gamma \types x:A,\,\Delta\). In the binding cases, we first use
    \(\alpha\)-conversion so that the bound variables introduced by the last
    rule are fresh for \(x\) and for the substituting term \(N\).

    \begin{indproof}
        \indcase{Id}
        The derivation ends with
        \[
            \infer[Id]{ }{
                \Sigma,\,y:C \types y:C,\,\Pi
            }
        \]
        Thus, up to exchange, we have $\Gamma=\Sigma,\,y:C$ and $x:A,\,\Delta = y:C,\,\Pi$. We distinguish two cases according to whether \(M:B\) is chosen from the left-hand side or from the
        right-hand side.

        \begin{enumerate}
            \item Suppose first that \(M:B\in\Gamma\) is chosen from the left-hand side.
                  Thus, up to exchange,
                  \[
                      \smashunderbrace{\Sigma,\,y:C}{\Gamma',\,M:B}
                      \types
                      \smashunderbrace{y:C,\,\Pi}{x:A,\,\Delta}.
                  \]
                  We must show
                  \[
                      \Gamma',\,\Subst{M}{N}{x}:B \types \Delta,
                  \]
                  assuming that \(\Gamma'\) and \(x:A,\Delta\) are disjoint variable
                  environments and that
                  \[
                      \Gamma',\,N:A \types \Delta
                  \]
                  is derivable.

                  We first claim that \(M:B\) must be the formula \(y:C\) introduced by
                  \(\rlnm{Id}\) on the left. Otherwise, \(y:C\) would remain in
                  \(\Gamma'\), while \(y:C\) also appears on the right-hand side
                  \(x:A,\Delta\). This contradicts the assumption that \(\Gamma'\) and
                  \(x:A,\Delta\) are disjoint variable environments. Hence
                  \[
                      M=y,
                      \qquad
                      B=C,
                      \qquad
                      \Gamma'=\Sigma.
                  \]

                  We now distinguish whether the distinguished co-assumption \(x:A\) is
                  the formula \(y:C\) introduced by \(\rlnm{Id}\) on the right.

                  \begin{itemize}
                      \item Suppose \(x:A\) is the principal formula on the right.
                            Then we have
                            \[
                                \smashunderbrace{\Sigma}{\Gamma'},\,\smashunderbrace{y:C}{M:B}
                                \types
                                \smashunderbrace{y:C}{x:A},\,\smashunderbrace{\Pi}{\Delta}.
                            \]
                            where
                            \[
                                x=y,\qquad A=C,\qquad \Delta=\Pi.
                            \]
                            Since \(\Subst{M}{N}{x}=\Subst{y}{N}{y}=N\) and $B=C=A$, the required judgement
                            \[
                                \Gamma',\,\Subst{M}{N}{x}:B \types \Delta
                            \]
                            is exactly
                            \[
                                \Gamma',\,N:A \types \Delta,
                            \]
                            which is directly given by the assumption.

                      \item Suppose \(x:A\) is passive. Then $x:A \in \Pi$ and \(y:C\in\Delta\).
                            Write, up to exchange, $\Pi = x:A,\,\Pi'$, the judgement decomposes as
                            \[
                                \smashunderbrace{\Sigma}{\Gamma'},\,\smashunderbrace{y:C}{M:B}
                                \types x:A,\, \smashunderbrace{y:C,\,\Pi'}{\Delta}.
                            \]
                            where
                            \[
                                \Delta = y:C,\,\Pi'.
                            \]
                            Since \(\Gamma'\) and \(x:A,\Delta\) are disjoint variable
                            environments and $y:C \in \Delta$, there must be $y \neq x$. Since $M=y$ and \(y\neq x\), $\Subst{M}{N}{x}=\Subst{y}{N}{x}=y$.
                            Moreover since $B=C$ and $\Gamma'=\Sigma$, the required judgement
                            \[
                                \Gamma',\,\Subst{M}{N}{x}:B \types \Delta
                            \]
                            is exactly
                            \[
                                \Sigma,\,y:C \types y:C,\,\Pi'.
                            \]
                            This is obtained as a instance of \(\rlnm{Id}\).
                  \end{itemize}

            \item Suppose now that \(M:B\in\Delta\) is chosen from the right-hand side.
                  Thus, up to exchange,
                  \[
                      \smashunderbrace{\Sigma,\,y:C}{\Gamma}
                      \types \smashunderbrace{y:C,\,\Pi}{x:A,\,M:B,\,\Delta'}
                  \]
                  We must show
                  \[
                      \Gamma \types \Subst{M}{N}{x}:B,\,\Delta',
                  \]
                  assuming that \(\Gamma\) and \(x:A,\Delta'\) are disjoint variable
                  environments and that
                  \[
                      \Gamma,\,N:A \types \Delta'
                  \]
                  is derivable.

                  First, the distinguished co-assumption \(x:A\) cannot be the formula
                  \(y:C\) introduced by \(\rlnm{Id}\) on the right. Indeed, if
                  \(x:A=y:C\), then \(y\) occurs in both \(\Gamma=\Sigma,y:C\) and
                  \(x:A,\Delta'\), contradicting the assumption that \(\Gamma\) and
                  \(x:A,\Delta'\) are disjoint variable environments. Therefore, \(x:A\) is
                  passive and in $\Pi$.

                  We now claim that \(M:B\) must be the formula \(y:C\) introduced by
                  \(\rlnm{Id}\) on the right. Otherwise, \(y:C\) would remain in
                  \(\Delta'\), while \(y:C\in\Gamma\), contradicting the assumption
                  that \(\Gamma\) and \(x:A,\Delta'\) are disjoint variable
                  environments. Hence
                  \[
                      \smashunderbrace{\Sigma,\,y:C}{\Gamma}
                      \types
                      \smashunderbrace{y:C}{M:B},\,\smashunderbrace{\Pi}{x:A,\,\Delta'}.
                  \]
                  where
                  \[
                      M=y,\qquad B=C, \qquad \Pi=x:A,\,\Delta'.
                  \]

                  Since \(\Gamma\) and \(x:A,\Delta'\) are disjoint variable environments and
                  $y:C \in \Gamma$, there must be $y \neq x$.

                  Since $M=y$ and \(y\neq x\), $\Subst{M}{N}{x}=\Subst{y}{N}{x}=y$.
                  Moreover since $B=C$ and \(\Gamma=\Sigma,y:C\), the required judgement
                  \[
                      \Gamma \types \Subst{M}{N}{x}:B,\,\Delta',
                  \]
                  is exactly
                  \[
                      \Sigma,\,y:C \types y:C,\,\Delta'.
                  \]
                  This is obtained as a instance of \(\rlnm{Id}\).
        \end{enumerate}

        \indcase{Non-negation rules}
        We spell out one representative case, namely \(\to R\). The derivation ends as
        \[
            \infer[$\to R$]{
                \Sigma,\,y:C \types P:D,\,\Pi
            }{
                \Sigma
                \types
                \abs{y}{P}:C\to D,\,\Pi
            }
            \;y\notin\fv(\Sigma,\Pi).
        \]
        By \(\alpha\)-conversion, we may assume that \(y\notin\fv(N)\) and $y \neq x$.
        Since the distinguished co-assumption \(x:A\) is a variable typing, it cannot be the principal typing formula $\abs{y}{P}:C\to D$. Hence \(x:A \in \Pi\). Write, up to exchange, $\Pi = x:A,\,\Pi'$. Then
        \[
            \smashunderbrace{\Sigma}{\Gamma}
            \types x:A,\,
            \smashunderbrace{\abs{y}{P}:C\to D,\,\Pi'}{\Delta}
        \]
        \[
            \Gamma=\Sigma
            \qquad\text{and}\qquad
            \Delta=\abs{y}{P}:C\to D,\,\Pi'.
        \]
        Moreover, from \(y\notin\fv(\Sigma,\Pi)\) and $y\neq x$, we have
        \[
            y\notin\fv(\Sigma,x:A,\Pi').
        \]

        We distinguish two cases according to whether the typing formula
        \(M:B\) is chosen from the left-hand side or from the right-hand side.

        \begin{enumerate}
            \item Suppose first that \(M:B\in\Gamma\) is chosen from the left-hand side.
                  Thus, up to exchange,
                  \[
                      \smashunderbrace{\Sigma}{\Gamma',\,M:B}
                      \types
                      x:A,\,\smashunderbrace{\abs{y}{P}:C\to D,\,\Pi'}{\Delta}
                  \]
                  We must show
                  \[
                      \Gamma',\,\Subst{M}{N}{x}:B \types \Delta,
                  \]
                  assuming that \(\Gamma'\) and \(x:A,\Delta\) are disjoint variable
                  environments and that
                  \[
                      \Gamma',\,N:A \types \Delta
                  \]
                  is derivable.

                  However, this case is impossible. Indeed, the context
                  \(x:A,\Delta\) contains the typing formula
                  \[
                      \abs{y}{P}:C\to D,
                  \]
                  whose term is not a variable. Therefore \(x:A,\Delta\) is not a
                  variable environment, contradicting the assumption that
                  \(\Gamma'\) and \(x:A,\Delta\) are disjoint variable environments.

            \item Suppose now that \(M:B\in\Delta\) is chosen from the right-hand side.
                  Thus, up to exchange,
                  \[
                      \smashunderbrace{\Sigma}{\Gamma}
                      \types
                      x:A,\,
                      \smashunderbrace{\abs{y}{P}:C\to D,\,\Pi'}{M:B,\,\Delta'}.
                  \]
                  We must show
                  \[
                      \Gamma \types \Subst{M}{N}{x}:B,\,\Delta',
                  \]
                  assuming that \(\Gamma\) and \(x:A,\Delta'\) are disjoint variable
                  environments and that
                  \[
                      \Gamma,\,N:A \types \Delta'
                  \]
                  is derivable.

                  We claim that \(M:B\) must be the principal typing formula
                  \[
                      \abs{y}{P}:C\to D.
                  \]
                  Otherwise, this principal typing formula would remain in
                  \(\Delta'\). Then \(x:A,\Delta'\) would contain a typing formula whose
                  term is not a variable, contradicting the assumption that
                  \(\Gamma\) and \(x:A,\Delta'\) are disjoint variable environments.

                  Hence
                  \[
                      M=\abs{y}{P},
                      \qquad
                      B=C\to D,
                      \qquad
                      \Gamma=\Sigma,
                      \qquad
                      \Delta'=\Pi'.
                  \]
                  Therefore the assumed derivation $\Gamma,\,N:A \types \Delta'$
                  is
                  \[
                      \Sigma,\,N:A \types \Pi'.
                  \]

                  We now derive the auxiliary judgement needed to apply the induction
                  hypothesis to the premise of \(\to R\). Since
                  \(\Gamma=\Sigma\) and \(x:A,\Delta' (= x:A,\Pi')\) are disjoint
                  variable environments, \(\Sigma\) and \(\Pi'\) mulst also be disjoint
                  variable environments. Moreover, we have justified that
                  \[
                      y\notin\fv(\Sigma,N:A,\Pi').
                  \]
                  Hence, by left weakening applied to
                  \[
                      \Sigma,\,N:A \types \Pi',
                  \]
                  we obtain
                  \[
                      \Sigma,\,y:C,\,N:A \types \Pi'.
                  \]

                  Now consider the premise
                  \[
                      \Sigma,\,y:C \types P:D,\,x:A,\,\Pi'.
                  \]
                  The distinguished co-assumption is still \(x:A\), and the chosen
                  typing formula \(P:D\) is on the right. We may apply the induction
                  hypothesis, clause \((2)\), because:
                  \[
                      \Sigma,\,y:C
                      \quad\text{and}\quad
                      x:A,\,\Pi'
                  \]
                  are disjoint variable environments, using
                  \(y\notin\fv(\Sigma,x:A,\Pi')\), and because we have just derived
                  \[
                      \Sigma,\,y:C,\,N:A \types \Pi'.
                  \]
                  Therefore the induction hypothesis gives
                  \[
                      \Sigma,\,y:C
                      \types
                      \Subst{P}{N}{x}:D,\,\Pi'.
                  \]

                  Since \(y\notin\fv(\Sigma,\Pi')\), we may reapply \(\to R\) and
                  obtain
                  \[
                      \Sigma
                      \types
                      \abs{y}{\Subst{P}{N}{x}}:C\to D,\,\Pi'.
                  \]
                  Because \(y\neq x\) and \(y\notin\fv(N)\), capture-avoiding
                  substitution gives
                  \[
                      \Subst*{\abs{y}{P}}{N}{x}
                      =
                      \abs{y}{(\Subst{P}{N}{x})}.
                  \]
                  Hence the above judgement is exactly
                  \[
                      \Gamma
                      \types
                      \Subst{M}{N}{x}:B,\,\Delta',
                  \]
                  as required.
        \end{enumerate}

        The remaining non-negation rules are analogous. In each case, the distinguished
        co-assumption \(x:A\) is a variable typing and therefore cannot be the
        non-variable principal typing formula introduced by the last rule. Hence
        \(x:A\) must be passive. The disjoint-variable-environment condition then
        forces the chosen typing formula \(M:B\) to be principal in every non-vacuous
        case; otherwise the non-variable principal typing formula would remain in one
        of the side contexts that is required to be a variable environment. For each
        immediate premise, we first \(\alpha\)-rename bound variables so that they are
        fresh for \(N\), apply weakening where necessary to the derivation typing
        \(N:A\), then apply the induction hypothesis to the corresponding premise
        typing formula, and finally reapply the same non-negation rule.

        \indcase{Negation rules}
        We spell out one representative case, namely \(\sneg_I L\). The derivation ends
        as
        \[
            \infer[$\sneg_I L$]{
                \Sigma \types P:C,\,\Pi
            }{
                \Sigma,\,P:\sneg C
                \types
                \Pi
            }.
        \]
        Thus, up to exchange,
        \[
            \smashunderbrace{\Sigma,\,P:\sneg C}{\Gamma}
            \types
            \smashunderbrace{\Pi}{x:A,\,\Delta}
        \]
        where
        \[
            \Gamma=\Sigma,\,P:\sneg C
            \qquad\text{and}\qquad
            \Pi=x:A,\,\Delta.
        \]
        We distinguish two cases according to whether the chosen typing formula
        \(M:B\) is from the left-hand side or from the right-hand side.

        \begin{enumerate}
            \item Suppose first that \(M:B\in\Gamma\) is chosen from the left-hand side.
                  Thus, up to exchange,
                  \[
                      \smashunderbrace{\Sigma,\,P:\sneg C}{\Gamma',\,M:B}
                      \types
                      \smashunderbrace{\Pi}{x:A,\,\Delta}.
                  \]
                  We must show
                  \[
                      \Gamma',\,\Subst{M}{N}{x}:B \types \Delta,
                  \]
                  assuming that \(\Gamma'\) and \(x:A,\Delta\) are disjoint variable
                  environments and that
                  \[
                      \Gamma',\,N:A \types \Delta
                  \]
                  is derivable.

                  We now distinguish whether the chosen typing formula \(M:B\) is the
                  principal formula \(P:\sneg C\) introduced by \(\sneg_I L\).

                  \begin{itemize}
                      \item Suppose first that \(M:B\) is principal. Then
                            \[
                                M=P,
                                \qquad
                                B=\sneg C,
                                \qquad
                                \Gamma'=\Sigma.
                            \]
                            The premise of the last rule is
                            \begin{equation}
                                \Sigma \types P:C,\,x:A,\,\Delta
                                \label{proof:sbst-coassmpt-alt-case-simIL-left-principal-premise}
                            \end{equation}
                            We claim that $P:C$ is distinct from $x:A$. Indeed, since
                            \(\Gamma'=\Sigma\) and \(\Gamma'\) and \(x:A,\Delta\)
                            are disjoint variable environments, \(\Sigma\) and
                            \(x:A,\Delta\) are disjoint variable environments. Apply
                            \cref{lem:judgement-form}, clause \((2)\), to
                            \cref{proof:sbst-coassmpt-alt-case-simIL-left-principal-premise},
                            choosing the typing formula \(P:C\) on the right and
                            taking the remaining right context to be \(x:A,\Delta\).
                            If \(P\) is not a variable, then certainly
                            \(P:C\neq x:A\). Otherwise, the judgement-form lemma gives
                            either \(P:D\in\Sigma\), with \(C\) and \(D\) related by
                            some even number of negations, or \(P:D\in x:A,\Delta\),
                            with \(C\) and \(D\) related by some odd number of
                            negations. If \(P:C=x:A\), the first alternative
                            contradicts the disjointness of \(\Sigma\) and
                            \(x:A,\Delta\). In the second alternative, pairwise
                            distinctness of the variable environment \(x:A,\Delta\)
                            forces \(P:D=x:A\), hence \(C=D=A\), but then \(C\) and
                            \(D\) are related by \(0\) negations, which is even, not
                            odd. This contradiction proves \(P:C\neq x:A\).

                            Therefore we may apply the induction
                            hypothesis, clause \((2)\), to this premise, choosing the
                            typing formula \(P:C\) on the right. The required
                            disjointness condition is exactly that \(\Sigma\) and
                            \(x:A,\Delta\) are disjoint variable environments, which
                            follows from the present assumption
                            \[
                                \Gamma'=\Sigma
                                \quad\text{and}\quad
                                \Gamma' \text{ and } x:A,\Delta
                                \text{ are disjoint variable environments}.
                            \]
                            The second derivation needed for the induction hypothesis is
                            precisely
                            \[
                                \Sigma,\,N:A \types \Delta,
                            \]
                            which is derivable by assumption.

                            Therefore the induction hypothesis gives
                            \[
                                \Sigma \types \Subst{P}{N}{x}:C,\,\Delta.
                            \]
                            Reapplying \(\sneg_I L\) gives
                            \[
                                \Sigma,\,\Subst{P}{N}{x}:\sneg C \types \Delta.
                            \]
                            Since \(M=P\), \(B=\sneg C\), and \(\Gamma'=\Sigma\), this is
                            exactly
                            \[
                                \Gamma',\,\Subst{M}{N}{x}:B \types \Delta.
                            \]

                      \item Suppose now that \(M:B\) is not principal. Then
                            \(M:B\in\Sigma\). Write, up to exchange,
                            \[
                                \Sigma=\Sigma',\,M:B.
                            \]
                            Hence
                            \[
                                \Gamma'=\Sigma',\,P:\sneg C,
                            \]
                            and the premise of the last rule may be written as
                            \[
                                \Sigma',\,M:B \types P:C,\,x:A,\,\Delta.
                            \]

                            Since \(\Gamma'=\Sigma',P:\sneg C\) and \(x:A,\Delta\) are
                            disjoint variable environments, \(P\) is a variable, and
                            \(\Sigma'\) and \(x:A,P:C,\Delta\) are disjoint variable
                            environments.

                            By assumption, we have a derivation of
                            \[
                                \Gamma',\,N:A \types \Delta,
                            \]
                            that is,
                            \[
                                \Sigma',\,P:\sneg C,\,N:A \types \Delta.
                            \]
                            By exchange, this is
                            \[
                                \Sigma',\,N:A,\,P:\sneg C \types \Delta.
                            \]
                            Applying the negation rule inverse to \(\sneg_I L\), namely
                            \(\sneg_E R\), gives
                            \[
                                \Sigma',\,N:A \types P:C,\,\Delta.
                            \]

                            We may now apply the induction hypothesis, clause \((1)\),
                            to the premise
                            \[
                                \Sigma',\,M:B \types P:C,\,x:A,\,\Delta,
                            \]
                            choosing \(M:B\) from the left-hand side. The side-context
                            condition is exactly that \(\Sigma'\) and
                            \(x:A,P:C,\Delta\) are disjoint variable environments, as
                            checked above. Hence the induction hypothesis gives
                            \[
                                \Sigma',\,\Subst{M}{N}{x}:B \types P:C,\,\Delta.
                            \]
                            Reapplying \(\sneg_I L\) gives
                            \[
                                \Sigma',\,\Subst{M}{N}{x}:B,\,P:\sneg C \types \Delta.
                            \]
                            By exchange, and since
                            \(\Gamma'=\Sigma',P:\sneg C\), this is exactly
                            \[
                                \Gamma',\,\Subst{M}{N}{x}:B \types \Delta.
                            \]
                  \end{itemize}

            \item Suppose now that \(M:B\in\Delta\) is chosen from the right-hand side.
                  Thus, up to exchange,
                  \[
                      \smashunderbrace{\Sigma,\,P:\sneg C}{\Gamma}
                      \types
                      x:A,\,
                      \smashunderbrace{M:B,\,\Delta'}{\Delta}.
                  \]
                  We must show
                  \[
                      \Gamma \types \Subst{M}{N}{x}:B,\,\Delta',
                  \]
                  assuming that \(\Gamma\) and \(x:A,\Delta'\) are disjoint variable
                  environments and that
                  \[
                      \Gamma,\,N:A \types \Delta'
                  \]
                  is derivable.

                  Since
                  \[
                      \Gamma=\Sigma,\,P:\sneg C
                  \]
                  and \(\Gamma\) is a variable environment, \(P\) is a variable. Since
                  \(\Gamma\) and \(x:A,\Delta'\) are disjoint variable environments,
                  \(\Sigma\) and \(x:A,P:C,\Delta'\) are also disjoint variable
                  environments.

                  The premise of the last rule is
                  \[
                      \Sigma \types P:C,\,x:A,\,M:B,\,\Delta'.
                  \]

                  By assumption, we have a derivation of
                  \[
                      \Gamma,\,N:A \types \Delta',
                  \]
                  that is,
                  \[
                      \Sigma,\,P:\sneg C,\,N:A \types \Delta'.
                  \]
                  By exchange, this is
                  \[
                      \Sigma,\,N:A,\,P:\sneg C \types \Delta'.
                  \]
                  Applying \(\sneg_E R\) gives
                  \[
                      \Sigma,\,N:A \types P:C,\,\Delta'.
                  \]

                  We now apply the induction hypothesis, clause \((2)\), to the premise
                  \[
                      \Sigma \types P:C,\,x:A,\,M:B,\,\Delta',
                  \]
                  choosing \(M:B\) from the right-hand side. The required second
                  derivation is
                  \[
                      \Sigma,\,N:A \types P:C,\,\Delta',
                  \]
                  obtained above. Hence the induction hypothesis gives
                  \[
                      \Sigma \types P:C,\,\Subst{M}{N}{x}:B,\,\Delta'.
                  \]
                  Reapplying \(\sneg_I L\) gives
                  \[
                      \Sigma,\,P:\sneg C
                      \types
                      \Subst{M}{N}{x}:B,\,\Delta'.
                  \]
                  Since \(\Gamma=\Sigma,P:\sneg C\), this is exactly
                  \[
                      \Gamma \types \Subst{M}{N}{x}:B,\,\Delta'.
                  \]
        \end{enumerate}

        The cases for \(\sneg_E L\), \(\sneg_I R\), and \(\sneg_E R\) are analogous.
    \end{indproof}

    Therefore substitution for co-assumptions is admissible.
\end{proof}

We can obtain the following renaming lemma as a corollary of the substitution lemmas.
\begin{corollary}[Renaming]
    \label{lem:renaming}
    Let \(x\) and \(y\) be distinct variables of the same type \(A\).
    For each of the rules below, suppose that
    \[
        y \notin \fv(\Gamma,\Delta,M).
    \]

    \begin{enumerate}[(i)]
        \item \emph{Renaming an assumption.}
              Suppose that \(\Gamma,x:A\) and \(\Delta\) are disjoint variable
              environments. Then the following rules are admissible:
              \[
                  \infer[Ren-Assm-L]{
                      \Gamma,\,x:A,\,M:B \types \Delta
                  }{
                      \Gamma,\,y:A,\,\Subst{M}{y}{x}:B \types \Delta
                  }
              \]
              and
              \[
                  \infer[Ren-Assm-R]{
                      \Gamma,\,x:A \types M:B,\,\Delta
                  }{
                      \Gamma,\,y:A
                      \types
                      \Subst{M}{y}{x}:B,\,\Delta
                  }.
              \]

        \item \emph{Renaming a co-assumption.}
              Suppose that \(\Gamma\) and \(x:A,\Delta\) are disjoint variable
              environments. Then the following rules are admissible:
              \[
                  \infer[Ren-Coassm-L]{
                      \Gamma,\,M:B \types x:A,\,\Delta
                  }{
                      \Gamma,\,\Subst{M}{y}{x}:B
                      \types
                      y:A,\,\Delta
                  }
              \]
              and
              \[
                  \infer[Ren-Coassm-R]{
                      \Gamma
                      \types
                      M:B,\,x:A,\,\Delta
                  }{
                      \Gamma
                      \types
                      \Subst{M}{y}{x}:B,\,y:A,\,\Delta
                  }.
              \]
    \end{enumerate}
\end{corollary}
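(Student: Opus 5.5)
The plan is to obtain each renaming rule as an instance of the corresponding substitution lemma, taking the substituted term to be the variable $y$. The only work is to supply the side derivation about $y$ that the substitution lemma asks for. Before substituting, we introduce $y$ into the original judgement by weakening, so that the target context already contains $y:A$ in the right place.

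For \rlnm{Ren-Assm-R}, start from $\Gamma,x:A \types M:B,\Delta$. Since $y \notin \fv(\Gamma,\Delta,M)$ and $y \neq x$, Lemma~\ref{lem:weakening-left} (clause (2), chosen formula $M:B$) gives $\Gamma,y:A,x:A \types M:B,\Delta$. The side environments $\Gamma,y:A,x:A$ and $\Delta$ are still disjoint variable environments. Next, \rlnm{Id} gives $\Gamma,y:A \types y:A,\Delta$. Lemma~\ref{lem:sbst-assmpt} (\rlnm{Subst-Assm-R}), applied with context $\Gamma,y:A$ and $N \coloneqq y$, then yields $\Gamma,y:A \types M[y/x]:B,\Delta$. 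The side condition $x \notin \fv(\Gamma,y:A,\Delta)$ holds because $\Gamma,x:A$ is a variable environment and $x \neq y$. \rlnm{Ren-Assm-L} is identical, using clause (1) of weakening and \rlnm{Subst-Assm-L}.

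For the co-assumption rules the argument is dual. Start from $\Gamma \types M:B,x:A,\Delta$ and apply Lemma~\ref{lem:weakening-right} to get $\Gamma \types M:B,x:A,y:A,\Delta$. Then \rlnm{Id} gives $\Gamma,y:A \types y:A,\Delta$. This is exactly the premise $\Gamma',N:A \types \Delta'$ required by Lemma~\ref{lem:sbst-co-assmpt} (\rlnm{Subst-Coassm-R}), with $N \coloneqq y$ and remaining co-assumptions $y:A,\Delta$. That lemma produces $\Gamma \types M[y/x]:B,y:A,\Delta$. \rlnm{Ren-Coassm-L} is the same argument using \rlnm{Subst-Coassm-L}.

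I expect the only delicate point to be bookkeeping rather than any mathematics. We must check the disjoint-variable-environment hypotheses at each step, in particular that after weakening $y$ is distinct from every subject in $\Gamma$ and $\Delta$; this follows from $y \notin \fv(\Gamma,\Delta)$. We must also use that contexts are sets, so that \rlnm{Id} applies with $y:A$ occurring on both sides while the remaining environments stay disjoint. If the expanded forms (Lemmas~\ref{lem:sbst-assmpt-alt} and~\ref{lem:sbst-coassmpt-alt}) are more convenient, the same choices of chosen formula $M:B$ and $N \coloneqq y$ apply verbatim.
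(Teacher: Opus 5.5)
Your proposal is correct and follows the paper's proof. Both arguments weaken to introduce $y:A$, obtain $\Gamma,y:A \types y:A,\Delta$ by \rlnm{Id}, and then instantiate the matching substitution lemma with $N \coloneqq y$; in the co-assumption case you take the remaining right context to be $y:A,\Delta$, which spells out the case the paper only calls ``analogous'', along with the disjointness and freshness checks.
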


\begin{proof}
    Both assertions follow from weakening, identity, and the corresponding
    substitution lemmas.

    For assumption renaming, first add \(y:A\) to the left-hand side of the
    given premise by admissible weakening. The identity rule gives
    \[
        \Gamma,\,y:A \types y:A,\,\Delta.
    \]
    We may therefore apply respectively
    \(\rlnm{Subst-Assm-L}\) or \(\rlnm{Subst-Assm-R}\), with left
    context \(\Gamma,y:A\) and substituting term \(N=y\). This gives
    \[
        \Gamma,\,y:A,\,\Subst{M}{y}{x}:B \types \Delta
    \]
    or
    \[
        \Gamma,\,y:A
        \types
        \Subst{M}{y}{x}:B,\,\Delta,
    \]
    respectively.

    Co-assumption renaming is analagous.
\end{proof}

\begin{lemma}[From $\tintc{}$ to $\tintcty{}$]
    The following hold simultaneously:
    \begin{enumerate}[(i)]
        \item If $\wanp{\Pi,\,A,\,(L,R)}$ in \tintc{}, then $L \vdash' A;R$ in \tintcty{}.
        \item If $\wandp{\Pi,\,A\,(L,R)}$ in \tintc{}, then $L;A \vdash' R$ in \tintcty{}.
    \end{enumerate}
\end{lemma}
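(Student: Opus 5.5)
We argue by simultaneous induction on the generation of $\wanp{\Pi,A,(L;R)}$ and $\wandp{\Pi,A,(L;R)}$ in \tintc{}, i.e.\ on the clauses (1)--(34) of Figures~\ref{fig:2int-ax-str}--\ref{fig:2int-sneg}. Throughout, the (co-)assumptions are named canonically as in Definition~\ref{def:logic-single-subject}, with $x_A$ for an assumption $A$ and $y_C$ for a co-assumption $C$. Write $\Gamma_L$ for $x_{A_1}\ty:A_1,\ldots$ and $\Delta_R$ for $y_{C_1}\ty:C_1,\ldots$. Because the two families of names are jointly distinct, $\Gamma_L$ and $\Delta_R$ are always disjoint variable environments. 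This holds even when $L\cap R\neq\varnothing$, since the variable names still differ. So the hypotheses of the weakening and substitution lemmas are always met.

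The base cases are immediate.
\begin{itemize}
  \item Clause (1) is \rlnm{Id} on $x_A$ with $\Gamma_{\{A\}}=x_A\ty:A$.
  \item Clause (2) is \rlnm{Id} on $y_A$.
  \item Clauses (3) and (4) are the rules for $\unittm$ and $\abra{}$ in Figures~\ref{fig:2intlam-right} and~\ref{fig:2intlam-left}.
\end{itemize}

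The structural clauses (29) and (30) follow by iterating Lemmas~\ref{lem:weakening-left} and~\ref{lem:weakening-right}. These apply because the added variables $x_A$ or $y_C$ are fresh, or already present, and the non-principal part of the judgement is a variable environment.

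Each logical rule of \tint{} is mirrored by the term rule of the same shape:
\begin{itemize}
  \item (5) and (19) by pairing;
  \item (6), (7), (17) and (18) by $\inj_i$;
  \item (9) and (20) by the dot construction;
  \item (11), (12), (24) and (25) by $\pi_i$;
  \item (10) and (22) by $\abort$;
  \item (14) and (28) by application;
  \item (15), (16), (26) and (27) by $\iota_i$;
  \item (31)--(34) by the $\sneg$ rules of Figure~\ref{fig:negation-rules}, which keep the term unchanged.
\end{itemize}
For the binary rules, the two premises carry bases $(L_1;R_1)$ and $(L_2;R_2)$. We first weaken both typings to the common environment $\Gamma_{L_1\cup L_2}\types\ldots,\Delta_{R_1\cup R_2}$ by the weakening lemmas, and then apply the term rule. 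Canonical naming guarantees that the same formula gets the same variable in both premises, so the environments merge as sets.

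The delicate cases are the discharging rules: (8) $\to$-introduction, (21) $\coto$-refutation, and the case analyses (13) and (23). For (8) the IH gives $\Gamma_L\types M:A_2,\Delta_R$, where $x_{A_1}$ may or may not occur in $\Gamma_L$. If it does not occur, we weaken it in. We then apply \rlnm{$\to$R} binding $x_{A_1}$, obtaining $\Gamma_{L\setminus\{A_1\}}\types\abs{x_{A_1}}{M}:A_1\to A_2,\Delta_R$. The freshness side condition holds because $x_{A_1}$ is now removed from the context, and $x_{A_1}\neq y_C$ for every $C$. Clause (21) is dual: we use \rlnm{$\coto$L}, binding $y_{A_1}$ on the right. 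For (13) and (23), the bound variable of each branch is likewise the canonical $x_{A_i}$ or $y_{A_i}$. All remaining context is weakened to the union, and then the case rule is applied.

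I expect the main obstacle to be the side conditions of the binders together with the identification of repeated formulas. An example is a base containing $A_1$ on both sides, as in $x_{A_1}$ and $y_{A_1}$, or $A_1$ being discharged in one branch while remaining undischarged in another branch of (13). In such cases the canonical name $x_{A_1}$ stays free in the other premise's context. To resolve this, I will use the Renaming corollary (Corollary~\ref{lem:renaming}). We rename the bound occurrence in the branch to a fresh variable $z$, bind $z$, and $\alpha$-convert back where possible. Alternatively, if $A_1$ survives in the union, we keep $x_{A_1}$ free and bind a distinct fresh name. The set-based accounting then matches $\LeftCtx_1\setminus\{A_1\}\cup\ldots$ exactly.
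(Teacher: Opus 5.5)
Your proposal is correct and follows essentially the paper's proof. Both argue by mutual induction on the \tintc{} derivation: weaken to the union of canonically named bases for multi-premise rules, bind the canonical $x_{A_1}$ (resp.\ $y_{A_1}$) via \rlnm{$\to$R} (resp.\ \rlnm{$\coto$L}) in clauses (8) and (21), and use the Renaming corollary in the case rules (13) and (23). The paper simply renames to fresh $u,v$ uniformly in (13)/(23), avoiding your case split; also, $A_1$ occurring on both sides of the base is no obstacle, since $x_{A_1}\neq y_{A_1}$.
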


\begin{proof}
    We prove (i) and (ii) simultaneously by mutual induction on the given derivation in Wansing's $\tint{}$ system.
    \smallskip
    \noindent
    For (i), assume $\mathrm{P}(\Pi,A,(L,R))$. We proceed by cases on the last rule.
    Whenever a premise is of the form $\mathrm{DP}(\Pi,B(L',R'))$, we appeal to the mutual induction hypothesis for claim (ii). The cases are as follows.
    \begin{indproof}
        \indcase{1} Here $\wanp{\overline{A},\,A,\,({A};\,\varnothing)}$ with $\LeftCtx={A}$ and $\RightCtx=\varnothing$, so it remains to show that $A \vdash' A$. By Definition~\ref{def:logic-single-subject}, this means showing that there exists a term $M$ such that $x_A:A \types M:A$. But $x_A:A \types x_A:A$ by \rlnm{ID}, so the result follows by taking $M=x_A$.
        \indcase{3} Here $\wanp{\overline{\truefm},\,\truefm,\,(\varnothing;\,\varnothing)}$ with $\LeftCtx=\varnothing$ and $\RightCtx=\varnothing$, so it remains to show that $\vdash' \truefm$. By Definition~\ref{def:logic-single-subject}, this means showing that there exists a term $M$ such that $\types M:\truefm$. But $\types \unittm:\truefm$ by the $\truefm$R rule, so the result follows by taking $M=\unittm$.

        \indcase{5}
        Here we have $\wanp{\mprfraction{\Pi_1 \enspace \Pi_2}{A_1 \wedge A_2},\,
                A_1 \wedge A_2,\,
                (\LeftCtx_1 \cup \LeftCtx_2;\,\RightCtx_1 \cup \RightCtx_2)}$
        with premises:
        \[
            \wanp{\Pi_1,\,A_1,\,(\LeftCtx_1;\,\RightCtx_1)}
            \quad\text{and}\quad
            \wanp{\Pi_2,\,A_2,\,(\LeftCtx_2;\,\RightCtx_2)}
        \]
        By the IH on $\wanp{\Pi_1,\,A_1,\,(\LeftCtx_1;\,\RightCtx_1)}$, there exist terms $M_1$ and $M_2$ such that
        \[
            \smashunderbrace{(x_B : B)_{B \in L_1}}{\Gamma_1} \types M_1 : A_1,\,\smashunderbrace{(y_C : C)_{C \in R_1}}{\Delta_1}
        \]
        \[
            \smashunderbrace{(x_B : B)_{B \in L_2}}{\Gamma_2} \types M_2 : A_2,\,\smashunderbrace{(y_C : C)_{C \in R_2}}{\Delta_2}
        \]

        By admissible weakening, we may enlarge both derivations to the common context
        $\Gamma = \Gamma_1 \cup \Gamma_2$ on the left and $\Delta = \Delta_1 \cup \Delta_2$ on the right, obtaining
        \[
            \smashunderbrace{(x_B : B)_{B \in L_1 \cup L_2}}{\Gamma} \types M_i : A_i,\,\smashunderbrace{(y_C : C)_{C \in R_1 \cup R_2}}{\Delta}\;i=1,\,2
        \]
        Hence, by the $\wedge$R rule,
        \[
            (x_B : B)_{B \in L_1 \cup L_2} \types (M_1,\,M_2):A_1 \wedge A_2,\,(y_C : C)_{C \in R_1 \cup R_2}
        \]
        Therefore, by Definition~\ref{def:logic-single-subject}, this shows
        \[
            \LeftCtx_1 \cup \LeftCtx_2 \vdash' A_1 \wedge A_2;\,\RightCtx_1 \cup \RightCtx_2,
        \]
        as required.

        \indcase{6}
        Here we have $\wanp{\mprfraction{\Pi}{A_1 \vee A_2},\, A_1 \vee A_2,\,(\LeftCtx;\,\RightCtx)}$
        with premise
        \[
            \wanp{\Pi,\,A_1,\,(\LeftCtx;\,\RightCtx)}
        \]
        Applying the IH then gives $\LeftCtx \vdash' A_1;\,\RightCtx$.
        By Definition~\ref{def:logic-single-subject}, this means $\exists M$ such that
        \[
            (x_B : B)_{B \in \LeftCtx} \types M : A_1,\,(y_C : C)_{C \in \RightCtx}
        \]
        Applying the $\vee$R$_1$ rule, we obtain
        \[
            (x_B : B)_{B \in \LeftCtx} \types \inj_1(M) : A_1 \vee A_2,\,(y_C : C)_{C \in \RightCtx}
        \]
        Therefore, by Definition~\ref{def:logic-single-subject},
        \[
            \LeftCtx \vdash' A_1 \vee A_2;\,\RightCtx,
        \]
        as required.

        \indcase{7}
        This is analogous with case (6).

        \indcase{8}
        Here we have $\wanp{\mprfraction{\Pi}{A_1 \to A_2},\,
                A_1 \to A_2,\,
                (\LeftCtx \setminus \{A_1\};\,\RightCtx)}$
        with premise
        \[
            \wanp{\Pi,\,A_2,\,(\LeftCtx;\,\RightCtx)}
        \]
        By the IH, $\LeftCtx \vdash' A_2;\,\RightCtx$. By Definition~\ref{def:logic-single-subject}, there exists a term $M$ such that
        \begin{equation}
            \label{proof:2int-to-typing-case8-eq1}
            (x_B : B)_{B \in \LeftCtx} \types M : A_2,\,(y_C : C)_{C \in \RightCtx}
        \end{equation}
        Let $\Gamma=(x_B:B)_{B\in \LeftCtx\setminus\{A_1\}}$ and $\Delta=(y_C:C)_{C\in \RightCtx}$. We distinguish two cases.
        \begin{itemize}
            \item If $A_1 \in \LeftCtx$, then $(x_B:B)_{B\in \LeftCtx}$ is exactly $\Gamma,\,x_{A_1}:A_1$. Hence \eqref{proof:2int-to-typing-case8-eq1} becomes $\Gamma,\,x_{A_1}:A_1 \types M:A_2,\,\Delta$.

            \item If $A_1 \notin \LeftCtx$, then by \rlnm{WeakL2} applied to \eqref{proof:2int-to-typing-case8-eq1}, we obtain $\Gamma,\,x_{A_1}:A_1 \types M:A_2,\,\Delta$.
        \end{itemize}
        Thus, in either case,
        \begin{equation}
            \label{proof:2int-to-typing-case8-eq2}
            \Gamma,\,x_{A_1}:A_1 \types M:A_2,\,\Delta
        \end{equation}
        Moreover, by the choice of the distinguished variables in Definition~\ref{def:logic-single-subject}, $x_{A_1}\notin \fv(\Gamma,\,\Delta)$. We may therefore apply \rlnm{$\to$R} to \eqref{proof:2int-to-typing-case8-eq2} and obtain
        \[
            \Gamma \types \mf{\abs{x_{A_1}}{M}:A_1\to A_2},\,\Delta.
        \]
        That is,
        \[
            (x_B:B)_{B\in \LeftCtx\setminus\{A_1\}}
            \types
            \mf{\abs{x_{A_1}}{M}:A_1\to A_2},\,
            (y_C:C)_{C\in \RightCtx}
        \]
        Therefore, by Definition~\ref{def:logic-single-subject}, $\LeftCtx\setminus\{A_1\}\vdash' A_1\to A_2;\,\RightCtx$ as required.

        \indcase{9}
        Here we have
        \[
            \wanp{\mprfraction{\Pi_1\quad\Pi_2}{A_1 \coto A_2},\,
                A_1 \coto A_2,\,
                (\LeftCtx_1 \cup \LeftCtx_2;\,\RightCtx_1 \cup \RightCtx_2)}
        \]
        with premises $\wandp{\Pi_1,\,A_1,\,(\LeftCtx_1;\,\RightCtx_1)}$ and $\wanp{\Pi_2,\,A_2,\,(\LeftCtx_2;\,\RightCtx_2)}$.
        By the IH applied to $\wandp{\Pi_1,\,A_1,\,(\LeftCtx_1;\,\RightCtx_1)}$, we have $\LeftCtx_1;\,A_1 \vdash' \RightCtx_1$. By Definition~\ref{def:logic-single-subject}, there exists a term $M$ such that
        \begin{equation}
            \label{proof:2int-to-typing-case9-eq2}
            (x_B:B)_{B\in \LeftCtx_1},\,M:A_1 \types (y_C:C)_{C\in \RightCtx_1}
        \end{equation}
        By the IH applied to $\wanp{\Pi_2,\,A_2,\,(\LeftCtx_2;\,\RightCtx_2)}$, we have $\LeftCtx_2 \vdash' A_2;\,\RightCtx_2$. By Definition~\ref{def:logic-single-subject}, there exists a term $N$ such that
        \begin{equation}
            \label{proof:2int-to-typing-case9-eq1}
            (x_B:B)_{B\in \LeftCtx_2} \types N:A_2,\,(y_C:C)_{C\in \RightCtx_2}
        \end{equation}
        By admissible weakening, we may enlarge both \eqref{proof:2int-to-typing-case9-eq1} and \eqref{proof:2int-to-typing-case9-eq2} to the common contexts $\Gamma=(x_B:B)_{B\in \LeftCtx_1\cup \LeftCtx_2}$ and $\Delta=(y_C:C)_{C\in \RightCtx_1\cup \RightCtx_2}$, obtaining
        \begin{equation}
            \label{proof:2int-to-typing-case9-eq4}
            \Gamma,\,M:A_1 \types \Delta
        \end{equation}
        and
        \begin{equation}
            \label{proof:2int-to-typing-case9-eq3}
            \Gamma \types N:A_2,\,\Delta
        \end{equation}
        Applying \rlnm{$\coto$R} to \eqref{proof:2int-to-typing-case9-eq3} and \eqref{proof:2int-to-typing-case9-eq4}, we obtain
        \[
            \Gamma \types M\cdot N : A_1 \coto A_2,\,\Delta.
        \]
        That is,
        \[
            (x_B:B)_{B\in \LeftCtx_1\cup \LeftCtx_2}
            \types
            M\cdot N : A_1 \coto A_2,\,
            (y_C:C)_{C\in \RightCtx_1\cup \RightCtx_2}
        \]
        Therefore, by Definition~\ref{def:logic-single-subject}, $\LeftCtx_1 \cup \LeftCtx_2 \vdash' A_1 \coto A_2;\,\RightCtx_1 \cup \RightCtx_2$.

        \indcase{10}
        Here we have $\wanp{\mprfraction{\Pi}{A},\,A,\,(\LeftCtx;\,\RightCtx)}$ with premise
        \[
            \wanp{\Pi,\,\falsefm,\,(\LeftCtx;\,\RightCtx)}
        \]
        By the IH, $\LeftCtx \vdash' \falsefm;\,\RightCtx$. By Definition~\ref{def:logic-single-subject}, there exists a term $M$ such that
        \[
            (x_B:B)_{B\in \LeftCtx} \types M:\falsefm,\,(y_C:C)_{C\in \RightCtx}
        \]
        Applying \rlnm{{R-elim}}, we obtain
        \[
            (x_B:B)_{B\in \LeftCtx} \types \abort(M):A,\,(y_C:C)_{C\in \RightCtx}
        \]
        Therefore, by Definition~\ref{def:logic-single-subject}, $\LeftCtx \vdash' A;\,\RightCtx$, as required.

        \indcase{11}
        Here we have $\wanp{\mprfraction{\Pi}{A_1},\,A_1,\,(\LeftCtx;\,\RightCtx)}$ with premise
        \[
            \wanp{\Pi,\,(A_1 \wedge A_2),\,(\LeftCtx;\,\RightCtx)}
        \]
        By the IH, $\LeftCtx \vdash' A_1 \wedge A_2;\,\RightCtx$. By Definition~\ref{def:logic-single-subject}, there exists a term $M$ such that
        \begin{equation}
            \label{proof:2int-to-typing-case11-eq1}
            (x_B:B)_{B\in \LeftCtx} \types M:A_1 \wedge A_2,\,(y_C:C)_{C\in \RightCtx}
        \end{equation}
        Applying the projection rule on the right to \eqref{proof:2int-to-typing-case11-eq1}, we obtain
        \[
            (x_B:B)_{B\in \LeftCtx} \types \pi_1(M):A_1,\,(y_C:C)_{C\in \RightCtx}
        \]
        Therefore, by Definition~\ref{def:logic-single-subject}, $\LeftCtx \vdash' A_1;\,\RightCtx$, as required.

        \indcase{12}
        This is analogous with case (11).

        \indcase{13}
        Here we have
        \[
            \wanp{\mprfraction{\Pi\;\;\Pi_1\;\;\Pi_2}{C},\,C,\,(\LeftCtx \cup (\LeftCtx_1 \setminus \{A_1\}) \cup (\LeftCtx_2 \setminus \{A_2\});\,\RightCtx \cup \RightCtx_1 \cup \RightCtx_2)}
        \]
        with premises
        \begin{subequations}
            \begin{equation}
                \label{proof:2int-to-typing-case13-eq1}
                \wanp{\Pi,\,(A_1 \vee A_2),\,(\LeftCtx;\,\RightCtx)}
            \end{equation}
            \begin{equation}
                \label{proof:2int-to-typing-case13-eq2}
                \wanp{\Pi_1,\,C,\,(\LeftCtx_1;\,\RightCtx_1)}
            \end{equation}
            \begin{equation}
                \label{proof:2int-to-typing-case13-eq3}
                \wanp{\Pi_2,\,C,\,(\LeftCtx_2;\,\RightCtx_2)}.
            \end{equation}
        \end{subequations}
        Applying the IH to the three premises yields $\LeftCtx \vdash' A_1 \vee A_2;\,\RightCtx$, $\LeftCtx_1 \vdash' C;\,\RightCtx_1$, and $\LeftCtx_2 \vdash' C;\,\RightCtx_2$, respectively. By Definition~\ref{def:logic-single-subject}, there exist terms $M$, $N$, and $P$ such that
        \begin{subequations}
            \begin{equation}
                \label{proof:2int-to-typing-case13-eq4}
                (x_B:B)_{B\in \LeftCtx} \types M:A_1 \vee A_2,\,(y_D:D)_{D\in \RightCtx}
            \end{equation}
            \begin{equation}
                \label{proof:2int-to-typing-case13-eq5}
                (x_B:B)_{B\in \LeftCtx_1} \types N:C,\,(y_D:D)_{D\in \RightCtx_1}
            \end{equation}
            \begin{equation}
                \label{proof:2int-to-typing-case13-eq6}
                (x_B:B)_{B\in \LeftCtx_2} \types P:C,\,(y_D:D)_{D\in \RightCtx_2}.
            \end{equation}
        \end{subequations}
        Let $\Gamma=(x_B:B)_{B\in \LeftCtx \cup (\LeftCtx_1\setminus\{A_1\}) \cup (\LeftCtx_2\setminus\{A_2\})}$ and $\Delta=(y_D:D)_{D\in \RightCtx \cup \RightCtx_1 \cup \RightCtx_2}$. Moreover, choose fresh variables $u$ and $v$ such that $u,v \notin \fv(\Gamma,\Delta,M,N,P)$ and $u \neq v$.

        By admissible weakening on \eqref{proof:2int-to-typing-case13-eq4}, we obtain
        \begin{equation}
            \label{proof:2int-to-typing-case13-eq7}
            \Gamma \types M:A_1 \vee A_2,\,\Delta
        \end{equation}

        For the second branch, we distinguish two cases.
        \begin{itemize}
            \item If $A_1 \in \LeftCtx_1$, then $(x_B:B)_{B\in \LeftCtx_1}$ is of the form $\Gamma_1,\,x_{A_1}:A_1$ where $\Gamma_1=(x_B:B)_{B\in \LeftCtx_1\setminus\{A_1\}}$. By admissible renaming, replacing $x_{A_1}$ by the fresh variable $u$ in \eqref{proof:2int-to-typing-case13-eq5} gives
                  \[
                      \Gamma_1,\,u:A_1 \types N':C,\,(y_D:D)_{D\in \RightCtx_1}
                  \]
                  for some term $N'$ $\alpha$-equivalent to $N$. Since $\Gamma_1 \subseteq \Gamma$, admissible weakening yields
                  \[
                      \Gamma,\,u:A_1 \types N':C,\,\Delta.
                  \]

            \item If $A_1 \notin \LeftCtx_1$, then $(x_B:B)_{B\in \LeftCtx_1} \subseteq \Gamma$. Hence, by admissible weakening applied to \eqref{proof:2int-to-typing-case13-eq5}, we obtain
                  \[
                      \Gamma,\,u:A_1 \types N:C,\,\Delta.
                  \]
        \end{itemize}
        Thus, in either case, there exists a term $N_0$ such that
        \begin{equation}
            \label{proof:2int-to-typing-case13-eq8}
            \Gamma,\,u:A_1 \types N_0:C,\,\Delta
        \end{equation}

        Similarly, for the third branch, we distinguish two cases.
        \begin{itemize}
            \item If $A_2 \in \LeftCtx_2$, then by admissible renaming applied to \eqref{proof:2int-to-typing-case13-eq6}, replacing $x_{A_2}$ by the fresh variable $v$, and then weakening to $\Gamma$ and $\Delta$, we obtain
                  \[
                      \Gamma,\,v:A_2 \types P':C,\,\Delta
                  \]
                  for some term $P'$ $\alpha$-equivalent to $P$.

            \item If $A_2 \notin \LeftCtx_2$, then by admissible weakening applied to \eqref{proof:2int-to-typing-case13-eq6}, we obtain
                  \[
                      \Gamma,\,v:A_2 \types P:C,\,\Delta.
                  \]
        \end{itemize}
        Thus, in either case, there exists a term $P_0$ such that
        \begin{equation}
            \label{proof:2int-to-typing-case13-eq9}
            \Gamma,\,v:A_2 \types P_0:C,\,\Delta
        \end{equation}
        Since $u,v \notin \fv(\Gamma,\Delta)$, we may apply \rlnm{CaseR} to \eqref{proof:2int-to-typing-case13-eq7}, \eqref{proof:2int-to-typing-case13-eq8}, and \eqref{proof:2int-to-typing-case13-eq9}, obtaining
        \[
            \Gamma \types \mf{\casetm{M}{N_0}{P_0}:C},\,\Delta
        \]
        That is,
        \[
            (x_B:B)_{B\in \LeftCtx \cup (\LeftCtx_1\setminus\{A_1\}) \cup (\LeftCtx_2\setminus\{A_2\})}
            \types
            \mf{\casetm{M}{N_0}{P_0}:C},
            (y_D:D)_{D\in \RightCtx \cup \RightCtx_1 \cup \RightCtx_2}
        \]
        Therefore, by Definition~\ref{def:logic-single-subject}, $\LeftCtx \cup (\LeftCtx_1\setminus\{A_1\}) \cup (\LeftCtx_2\setminus\{A_2\}) \vdash' C;\,\RightCtx \cup \RightCtx_1 \cup \RightCtx_2$, as required.

        \indcase{14}
        Here we have
        \[
            \wanp{\mprfraction{\Pi_1\quad\Pi_2}{A_2},\,A_2,\,(\LeftCtx_1 \cup \LeftCtx_2;\,\RightCtx_1 \cup \RightCtx_2)}
        \]
        with premises
        \begin{subequations}
            \begin{equation}
                \label{proof:2int-to-typing-case14-eq1}
                \wanp{\Pi_1,\,A_1,\,(\LeftCtx_1;\,\RightCtx_1)}
            \end{equation}
            \begin{equation}
                \label{proof:2int-to-typing-case14-eq2}
                \wanp{\Pi_2,\,(A_1 \to A_2),\,(\LeftCtx_2;\,\RightCtx_2)}.
            \end{equation}
        \end{subequations}
        Applying the IH to the two premises yields $\LeftCtx_1 \vdash' A_1;\,\RightCtx_1$ and $\LeftCtx_2 \vdash' A_1 \to A_2;\,\RightCtx_2$, respectively. By Definition~\ref{def:logic-single-subject}, there exist terms $M$ and $N$ such that
        \begin{subequations}
            \begin{equation}
                \label{proof:2int-to-typing-case14-eq3}
                (x_B:B)_{B\in \LeftCtx_1} \types M:A_1,\,(y_C:C)_{C\in \RightCtx_1}
            \end{equation}
            \begin{equation}
                \label{proof:2int-to-typing-case14-eq4}
                (x_B:B)_{B\in \LeftCtx_2} \types N:A_1 \to A_2,\,(y_C:C)_{C\in \RightCtx_2}.
            \end{equation}
        \end{subequations}
        Let $\Gamma=(x_B:B)_{B\in \LeftCtx_1 \cup \LeftCtx_2}$ and $\Delta=(y_C:C)_{C\in \RightCtx_1 \cup \RightCtx_2}$. Since $(x_B:B)_{B\in \LeftCtx_1} \subseteq \Gamma$ and $(x_B:B)_{B\in \LeftCtx_2} \subseteq \Gamma$, admissible weakening applied to \eqref{proof:2int-to-typing-case14-eq3} and \eqref{proof:2int-to-typing-case14-eq4} yields
        \begin{subequations}
            \begin{equation}
                \label{proof:2int-to-typing-case14-eq5}
                \Gamma \types M:A_1,\,\Delta
            \end{equation}
            \begin{equation}
                \label{proof:2int-to-typing-case14-eq6}
                \Gamma \types N:A_1 \to A_2,\,\Delta.
            \end{equation}
        \end{subequations}
        Applying \rlnm{AppR} to \eqref{proof:2int-to-typing-case14-eq6} and \eqref{proof:2int-to-typing-case14-eq5}, we obtain
        \[
            \Gamma \types N\,M:A_2,\,\Delta.
        \]
        That is, $(x_B:B)_{B\in \LeftCtx_1 \cup \LeftCtx_2} \types N\,M:A_2,\,(y_C:C)_{C\in \RightCtx_1 \cup \RightCtx_2}$. By Definition~\ref{def:logic-single-subject}, this means $\LeftCtx_1 \cup \LeftCtx_2 \vdash' A_2;\,\RightCtx_1 \cup \RightCtx_2$, as required.

        \indcase{15}
        Here we have $\wanp{\mprfraction{\Pi}{A_1},\,A_1,\,(\LeftCtx;\,\RightCtx)}$ with premise
        \begin{equation}
            \label{proof:2int-to-typing-case15-eq1}
            \wanp{\Pi,\,(A_1 \coto A_2),\,(\LeftCtx;\,\RightCtx)}
        \end{equation}
        Applying the IH to \eqref{proof:2int-to-typing-case15-eq1} yields $\LeftCtx \vdash' A_1 \coto A_2;\,\RightCtx$. By Definition~\ref{def:logic-single-subject}, there exists a term $M$ such that
        \begin{equation}
            \label{proof:2int-to-typing-case15-eq2}
            (x_B:B)_{B\in \LeftCtx} \types M:A_1 \coto A_2,\,(y_C:C)_{C\in \RightCtx}
        \end{equation}
        Applying \rlnm{Iota2R} to \eqref{proof:2int-to-typing-case15-eq2}, we obtain
        \[
            (x_B:B)_{B\in \LeftCtx} \types \iota_2(M):A_2,\,(y_C:C)_{C\in \RightCtx}
        \]
        Therefore, by Definition~\ref{def:logic-single-subject}, $\LeftCtx \vdash' A_2;\,\RightCtx$, as required.

        \indcase{33}
        Here we have
        \[
            \wanp{\mprfraction{\Pi}{\sneg{A}},\,\sneg{A},\,(\LeftCtx;\,\RightCtx)}
        \]
        with premise
        \begin{equation}
            \label{proof:2int-to-typing-case33-eq1}
            \wandp{\Pi,\,A,\,(\LeftCtx;\,\RightCtx)}.
        \end{equation}
        Applying the mutual induction hypothesis to
        \eqref{proof:2int-to-typing-case33-eq1} yields
        \[
            \LeftCtx;\,A \vdash' \RightCtx.
        \]
        By Definition~\ref{def:logic-single-subject}, there exists a term $M$
        such that
        \begin{equation}
            \label{proof:2int-to-typing-case33-eq2}
            (x_B:B)_{B\in \LeftCtx},\,M:A \types (y_C:C)_{C\in \RightCtx}.
        \end{equation}
        Applying \rlnm{$\sneg_{I}$ R} to
        \eqref{proof:2int-to-typing-case33-eq2}, we obtain
        \[
            (x_B:B)_{B\in \LeftCtx} \types M:\sneg{A},\,(y_C:C)_{C\in \RightCtx}.
        \]
        Therefore, by Definition~\ref{def:logic-single-subject},
        $\LeftCtx \vdash' \sneg{A};\,\RightCtx$, as required.

        \indcase{34}
        Here we have
        \[
            \wanp{\mprfraction{\Pi}{A},\,A,\,(\LeftCtx;\,\RightCtx)}
        \]
        with premise
        \begin{equation}
            \label{proof:2int-to-typing-case34-eq1}
            \wandp{\Pi,\,\sneg{A},\,(\LeftCtx;\,\RightCtx)}.
        \end{equation}
        Applying the mutual induction hypothesis to
        \eqref{proof:2int-to-typing-case34-eq1} yields
        \[
            \LeftCtx;\,\sneg{A} \vdash' \RightCtx.
        \]
        By Definition~\ref{def:logic-single-subject}, there exists a term $M$
        such that
        \begin{equation}
            \label{proof:2int-to-typing-case34-eq2}
            (x_B:B)_{B\in \LeftCtx},\,M:\sneg{A} \types (y_C:C)_{C\in \RightCtx}.
        \end{equation}
        Applying \rlnm{$\sneg_{E}$ R} to
        \eqref{proof:2int-to-typing-case34-eq2}, we obtain
        \[
            (x_B:B)_{B\in \LeftCtx} \types M:A,\,(y_C:C)_{C\in \RightCtx}.
        \]
        Therefore, by Definition~\ref{def:logic-single-subject},
        $\LeftCtx \vdash' A;\,\RightCtx$, as required.
    \end{indproof}
    \smallskip
    \noindent
    For claim (ii), assume $\mathrm{DP}(\Pi,A(L,R))$. Again we proceed by cases on the last rule.
    Whenever a premise is of the form $\mathrm{P}(\Pi,B,(L',R'))$, we appeal to the
    mutual induction hypothesis for claim (i).
    \begin{indproof}
        \indcase{2} Here $\wandp{\dbar{A},\,A,\,(\varnothing;\,\{A\})}$ with $\LeftCtx=\varnothing$ and $\RightCtx=\{A\}$, so it remains to show that $\varnothing;\,A \vdash' A$. By Definition~\ref{def:logic-single-subject}, this means showing that there exist a term $M$ such that $M:A \types y_A:A$. But $y_A:A \types y_A:A$ by \rlnm{ID}, so the result follows by taking $M=y_A$.
        \indcase{4} Here $\wandp{\dbar{\falsefm},\,\falsefm,\,(\varnothing;\,\varnothing)}$ with $\LeftCtx=\varnothing$ and $\RightCtx=\varnothing$, so it remains to show that $\varnothing;\,\falsefm \vdash' \varnothing$. By Definition~\ref{def:logic-single-subject}, this means showing that there exists a term $M$ such that $M:\falsefm \types$. But $\unittm:\falsefm \types$ by the $\falsefm$L rule, so the result follows by taking $M=\unittm$.
        \indcase{16}
        Here we have $\wandp{\dblmprfraction{\Pi}{A_1},\,A_2,\,(\LeftCtx;\,\RightCtx)}$ with premise
        \begin{equation}
            \label{proof:2int-to-typing-case16-eq1}
            \wanp{\Pi,\,(A_1 \coto A_2),\,(\LeftCtx;\,\RightCtx)}
        \end{equation}
        Applying the mutual induction hypothesis to \eqref{proof:2int-to-typing-case16-eq1} yields $\LeftCtx \vdash' A_1 \coto A_2;\,\RightCtx$. By Definition~\ref{def:logic-single-subject}, there exists a term $M$ such that
        \begin{equation}
            \label{proof:2int-to-typing-case16-eq2}
            (x_B:B)_{B\in \LeftCtx} \types M:A_1 \coto A_2,\,(y_C:C)_{C\in \RightCtx}
        \end{equation}
        Applying \rlnm{Iota1L} to \eqref{proof:2int-to-typing-case16-eq2}, we obtain
        \[
            (x_B:B)_{B\in \LeftCtx},\,\iota_1(M):A_1 \types (y_C:C)_{C\in \RightCtx}
        \]
        Therefore, by Definition~\ref{def:logic-single-subject}, $\LeftCtx;\,A_1 \vdash' \RightCtx$, as required.

        \indcase{17}
        Here we have $\wandp{\dblmprfraction{\Pi}{A_1 \wedge A_2},\,(A_1 \wedge A_2),\,(\LeftCtx;\,\RightCtx)}$ with premise
        \begin{equation}
            \label{proof:2int-to-typing-case17-eq1}
            \wandp{\Pi,\,A_1,\,(\LeftCtx;\,\RightCtx)}
        \end{equation}
        Applying the IH to \eqref{proof:2int-to-typing-case17-eq1} yields $\LeftCtx;\,A_1 \vdash' \RightCtx$. By Definition~\ref{def:logic-single-subject}, there exists a term $M$ such that
        \begin{equation}
            \label{proof:2int-to-typing-case17-eq2}
            (x_B:B)_{B\in \LeftCtx},\,M:A_1 \types (y_C:C)_{C\in \RightCtx}
        \end{equation}
        Applying $\wedge$L$_1$ to \eqref{proof:2int-to-typing-case17-eq2}, we obtain
        \[
            (x_B:B)_{B\in \LeftCtx},\,\inj_1(M):A_1 \wedge A_2 \types (y_C:C)_{C\in \RightCtx}
        \]
        Therefore, by Definition~\ref{def:logic-single-subject}, $\LeftCtx;\,A_1 \wedge A_2 \vdash' \RightCtx$, as required.

        \indcase{19}
        Here we have $\wandp{\dblmprfraction{\Pi_1 \quad \Pi_2}{A_1 \vee A_2},\,(A_1 \vee A_2),\,(\LeftCtx_1 \cup \LeftCtx_2;\,\RightCtx_1 \cup \RightCtx_2)}$
        with premises
        \begin{subequations}
            \begin{equation}
                \label{proof:2int-to-typing-case19-eq1}
                \wandp{\Pi_1,\,A_1,\,(\LeftCtx_1;\,\RightCtx_1)}
            \end{equation}
            \begin{equation}
                \label{proof:2int-to-typing-case19-eq2}
                \wandp{\Pi_2,\,A_2,\,(\LeftCtx_2;\,\RightCtx_2)}.
            \end{equation}
        \end{subequations}
        Applying the IH to the two premises yields $\LeftCtx_1;\,A_1 \vdash' \RightCtx_1$ and $\LeftCtx_2;\,A_2 \vdash' \RightCtx_2$, respectively. By Definition~\ref{def:logic-single-subject}, there exist terms $M$ and $N$ such that
        \begin{subequations}
            \begin{equation}
                \label{proof:2int-to-typing-case19-eq3}
                (x_B:B)_{B\in \LeftCtx_1},\,M:A_1 \types (y_C:C)_{C\in \RightCtx_1}
            \end{equation}
            \begin{equation}
                \label{proof:2int-to-typing-case19-eq4}
                (x_B:B)_{B\in \LeftCtx_2},\,N:A_2 \types (y_C:C)_{C\in \RightCtx_2}.
            \end{equation}
        \end{subequations}
        Let $\Gamma=(x_B:B)_{B\in \LeftCtx_1 \cup \LeftCtx_2}$ and $\Delta=(y_C:C)_{C\in \RightCtx_1 \cup \RightCtx_2}$. Since $(x_B:B)_{B\in \LeftCtx_1} \subseteq \Gamma$ and $(x_B:B)_{B\in \LeftCtx_2} \subseteq \Gamma$, admissible weakening applied to \eqref{proof:2int-to-typing-case19-eq3} and \eqref{proof:2int-to-typing-case19-eq4} yields
        \begin{subequations}
            \begin{equation}
                \label{proof:2int-to-typing-case19-eq5}
                \Gamma,\,M:A_1 \types \Delta
            \end{equation}
            \begin{equation}
                \label{proof:2int-to-typing-case19-eq6}
                \Gamma,\,N:A_2 \types \Delta.
            \end{equation}
        \end{subequations}
        Applying \rlnm{$\vee$L} to \eqref{proof:2int-to-typing-case19-eq5} and \eqref{proof:2int-to-typing-case19-eq6}, we obtain
        \[
            \Gamma,\,(M,\,N):A_1 \vee A_2 \types \Delta.
        \]
        That is,
        \[
            (x_B:B)_{B\in \LeftCtx_1 \cup \LeftCtx_2},\,(M,\,N):A_1 \vee A_2 \types (y_C:C)_{C\in \RightCtx_1 \cup \RightCtx_2}
        \]
        Therefore, by Definition~\ref{def:logic-single-subject}, $\LeftCtx_1 \cup \LeftCtx_2;\,A_1 \vee A_2 \vdash' \RightCtx_1 \cup \RightCtx_2$, as required.

        \indcase{20}
        Here we have $\wandp{\dblmprfraction{\Pi_1 \quad \Pi_2}{A_1 \to A_2},\,(A_1 \to A_2),\,(\LeftCtx_1 \cup \LeftCtx_2;\,\RightCtx_1 \cup \RightCtx_2)}$
        with premises
        \begin{subequations}
            \begin{equation}
                \label{proof:2int-to-typing-case20-eq1}
                \wanp{\Pi_1,\,A_1,\,(\LeftCtx_1;\,\RightCtx_1)}
            \end{equation}
            \begin{equation}
                \label{proof:2int-to-typing-case20-eq2}
                \wandp{\Pi_2,\,A_2,\,(\LeftCtx_2;\,\RightCtx_2)}.
            \end{equation}
        \end{subequations}
        Applying the IH to the two premises yields $\LeftCtx_1 \vdash' A_1;\,\RightCtx_1$ and $\LeftCtx_2;\,A_2 \vdash' \RightCtx_2$, respectively. By Definition~\ref{def:logic-single-subject}, there exist terms $M$ and $N$ such that
        \begin{subequations}
            \begin{equation}
                \label{proof:2int-to-typing-case20-eq3}
                (x_B:B)_{B\in \LeftCtx_1} \types M:A_1,\,(y_C:C)_{C\in \RightCtx_1}
            \end{equation}
            \begin{equation}
                \label{proof:2int-to-typing-case20-eq4}
                (x_B:B)_{B\in \LeftCtx_2},\,N:A_2 \types (y_C:C)_{C\in \RightCtx_2}.
            \end{equation}
        \end{subequations}
        Let $\Gamma=(x_B:B)_{B\in \LeftCtx_1 \cup \LeftCtx_2}$ and $\Delta=(y_C:C)_{C\in \RightCtx_1 \cup \RightCtx_2}$. Since $(x_B:B)_{B\in \LeftCtx_1} \subseteq \Gamma$ and $(x_B:B)_{B\in \LeftCtx_2} \subseteq \Gamma$, admissible weakening applied to \eqref{proof:2int-to-typing-case20-eq3} and \eqref{proof:2int-to-typing-case20-eq4} yields
        \begin{subequations}
            \begin{equation}
                \label{proof:2int-to-typing-case20-eq5}
                \Gamma \types M:A_1,\,\Delta
            \end{equation}
            \begin{equation}
                \label{proof:2int-to-typing-case20-eq6}
                \Gamma,\,N:A_2 \types \Delta.
            \end{equation}
        \end{subequations}
        Applying \rlnm{$\to$L} to \eqref{proof:2int-to-typing-case20-eq5} and \eqref{proof:2int-to-typing-case20-eq6}, we obtain
        \[
            \Gamma,\,M\cdot N:A_1 \to A_2 \types \Delta.
        \]
        That is,
        \[
            (x_B:B)_{B\in \LeftCtx_1 \cup \LeftCtx_2},\,M\cdot N:A_1 \to A_2 \types (y_C:C)_{C\in \RightCtx_1 \cup \RightCtx_2}
        \]
        Therefore, by Definition~\ref{def:logic-single-subject}, $\LeftCtx_1 \cup \LeftCtx_2;\,A_1 \to A_2 \vdash' \RightCtx_1 \cup \RightCtx_2$, as required.

        \[
            \infer[$\coto$L]{ \Gamma,\,M:A \types y:B,\,\Delta }{ \Gamma,\,\mf{\abs{y}{M}:B \coto A} \types \Delta }\;y \notin \fv(\Gamma,\,\Delta)
        \]

        \indcase{21}
        Here we have $\wandp{\dblmprfraction{\Pi}{A_1 \coto A_2},\,(A_1 \coto A_2),\,(\LeftCtx;\,\RightCtx \setminus \{A_1\})}$
        with premise
        \begin{equation}
            \label{proof:2int-to-typing-case21-eq1}
            \wandp{\Pi,\,A_2,\,(\LeftCtx;\,\RightCtx)}
        \end{equation}
        Applying the IH to \eqref{proof:2int-to-typing-case21-eq1} yields $\LeftCtx;\,A_2 \vdash' \RightCtx$. By Definition~\ref{def:logic-single-subject}, there exists a term $M$ such that
        \begin{equation}
            \label{proof:2int-to-typing-case21-eq2}
            (x_B:B)_{B\in \LeftCtx},\,M:A_2 \types (y_C:C)_{C\in \RightCtx}
        \end{equation}
        Let $\Gamma=(x_B:B)_{B\in \LeftCtx}$ and $\Delta=(y_C:C)_{C\in \RightCtx\setminus\{A_1\}}$. Since $(y_C:C)_{C\in \RightCtx} \subseteq y_{A_1}:A_1,\,\Delta$, admissible weakening applied to \eqref{proof:2int-to-typing-case21-eq2} yields
        \begin{equation}
            \label{proof:2int-to-typing-case21-eq3}
            \Gamma,\,M:A_2 \types y_{A_1}:A_1,\,\Delta
        \end{equation}
        Moreover, by the choice of the distinguished variables in Definition~\ref{def:logic-single-subject}, $y_{A_2}\notin \fv(\Gamma,\Delta)$. We may therefore apply \rlnm{$\coto$L} to \eqref{proof:2int-to-typing-case21-eq3} and obtain
        \[
            \Gamma,\,\mf{\abs{y_{A_1}}{M}:A_1 \coto A_2} \types \Delta.
        \]
        That is,
        \[
            (x_B:B)_{B\in \LeftCtx},\,\mf{\abs{y_{A_1}}{M}:A_1 \coto A_2} \types (y_C:C)_{C\in \RightCtx\setminus\{A_1\}}
        \]
        Therefore, by Definition~\ref{def:logic-single-subject}, $\LeftCtx;\,A_1 \coto A_2 \vdash' \RightCtx\setminus\{A_1\}$, as required.

        \indcase{22}
        Here we have $\wandp{\dblmprfraction{\Pi}{A},\,A,\,(\LeftCtx;\,\RightCtx)}$ with premise
        \[
            \wandp{\Pi,\,\truefm,\,(\LeftCtx;\,\RightCtx)}
        \]
        By the IH, $\LeftCtx;\,\truefm \vdash' \RightCtx$. By Definition~\ref{def:logic-single-subject}, there exists a term $M$ such that
        \begin{equation}
            \label{proof:2int-to-typing-case22-eq1}
            (x_B:B)_{B\in \LeftCtx},\,M:\truefm \types (y_C:C)_{C\in \RightCtx}
        \end{equation}
        Applying $\rlnm{$\truefm$L\text{-elim}}$ to \eqref{proof:2int-to-typing-case22-eq1}, we obtain
        \[
            (x_B:B)_{B\in \LeftCtx},\,\trueelim(M):A \types (y_C:C)_{C\in \RightCtx}
        \]
        Therefore, by Definition~\ref{def:logic-single-subject}, $\LeftCtx;\,A \vdash' \RightCtx$, as required.

        \indcase{23}
        Here we have
        \[
            \wandp{\dblmprfraction{\Pi\;\;\Pi_1\;\;\Pi_2}{C},\,C,\,(\LeftCtx \cup \LeftCtx_1 \cup \LeftCtx_2;\,\RightCtx \cup (\RightCtx_1 \setminus \{A_1\}) \cup (\RightCtx_2 \setminus \{A_2\}))}
        \]
        with premises
        \begin{subequations}
            \begin{equation}
                \label{proof:2int-to-typing-case23-eq1}
                \wandp{\Pi,\,(A_1 \wedge A_2),\,(\LeftCtx;\,\RightCtx)}
            \end{equation}
            \begin{equation}
                \label{proof:2int-to-typing-case23-eq2}
                \wandp{\Pi_1,\,C,\,(\LeftCtx_1;\,\RightCtx_1)}
            \end{equation}
            \begin{equation}
                \label{proof:2int-to-typing-case23-eq3}
                \wandp{\Pi_2,\,C,\,(\LeftCtx_2;\,\RightCtx_2)}.
            \end{equation}
        \end{subequations}
        Applying the IH to the three premises yields $\LeftCtx;\,A_1 \wedge A_2 \vdash' \RightCtx$, $\LeftCtx_1;\,C \vdash' \RightCtx_1$, and $\LeftCtx_2;\,C \vdash' \RightCtx_2$, respectively. By Definition~\ref{def:logic-single-subject}, there exist terms $M$, $N$, and $P$ such that
        \begin{subequations}
            \begin{equation}
                \label{proof:2int-to-typing-case23-eq4}
                (x_B:B)_{B\in \LeftCtx},\,M:A_1 \wedge A_2 \types (y_D:D)_{D\in \RightCtx}
            \end{equation}
            \begin{equation}
                \label{proof:2int-to-typing-case23-eq5}
                (x_B:B)_{B\in \LeftCtx_1},\,N:C \types (y_D:D)_{D\in \RightCtx_1}
            \end{equation}
            \begin{equation}
                \label{proof:2int-to-typing-case23-eq6}
                (x_B:B)_{B\in \LeftCtx_2},\,P:C \types (y_D:D)_{D\in \RightCtx_2}.
            \end{equation}
        \end{subequations}
        Let $\Gamma=(x_B:B)_{B\in \LeftCtx \cup \LeftCtx_1 \cup \LeftCtx_2}$ and $\Delta=(y_D:D)_{D\in \RightCtx \cup (\RightCtx_1\setminus\{A_1\}) \cup (\RightCtx_2\setminus\{A_2\})}$. Choose fresh variables $u$ and $v$ such that $u,v \notin \fv(\Gamma,\Delta,M,N,P)$ and $u\neq v$.

        By admissible weakening on \eqref{proof:2int-to-typing-case23-eq4}, we obtain
        \begin{equation}
            \label{proof:2int-to-typing-case23-eq7}
            \Gamma,\,M:A_1 \wedge A_2 \types \Delta
        \end{equation}

        We now show that the second premise can be brought to the form required by \rlnm{CaseL}. We distinguish two cases.
        \begin{itemize}
            \item If $A_1 \in \RightCtx_1$, then writing $\Delta_1=(y_D:D)_{D\in \RightCtx_1\setminus\{A_1\}}$, we may rewrite \eqref{proof:2int-to-typing-case23-eq5} as $(x_B:B)_{B\in \LeftCtx_1},\,N:C \types y_{A_1}:A_1,\,\Delta_1$. Since $u \notin \fv((x_B:B)_{B\in \LeftCtx_1},\Delta_1,N)$, admissible renaming yields
                  \[
                      (x_B:B)_{B\in \LeftCtx_1},\,\mf{\Subst{N}{u}{y_{A_1}}:C} \types u:A_1,\,\Delta_1.
                  \]
                  Since $(x_B:B)_{B\in \LeftCtx_1} \subseteq \Gamma$ and $\Delta_1 \subseteq \Delta$, admissible weakening gives
                  \[
                      \Gamma,\,\mf{\Subst{N}{u}{y_{A_1}}:C} \types u:A_1,\,\Delta.
                  \]

            \item If $A_1 \notin \RightCtx_1$, then $(y_D:D)_{D\in \RightCtx_1} \subseteq \Delta$, so by admissible weakening applied to \eqref{proof:2int-to-typing-case23-eq5}, we obtain
                  \[
                      \Gamma,\,N:C \types u:A_1,\,\Delta.
                  \]
        \end{itemize}
        Thus, in either case, there exists a term $N_0$ such that
        \begin{equation}
            \label{proof:2int-to-typing-case23-eq8}
            \Gamma,\,N_0:C \types u:A_1,\,\Delta
        \end{equation}

        Similarly, for the third premise, we distinguish two cases.
        \begin{itemize}
            \item If $A_2 \in \RightCtx_2$, then writing $\Delta_2=(y_D:D)_{D\in \RightCtx_2\setminus\{A_2\}}$, we may rewrite \eqref{proof:2int-to-typing-case23-eq6} as $(x_B:B)_{B\in \LeftCtx_2},\,P:C \types y_{A_2}:A_2,\,\Delta_2$. Since $v \notin \fv((x_B:B)_{B\in \LeftCtx_2},\Delta_2,P)$, admissible renaming yields
                  \[
                      (x_B:B)_{B\in \LeftCtx_2},\,\mf{\Subst{P}{v}{y_{A_2}}:C} \types v:A_2,\,\Delta_2.
                  \]
                  Since $(x_B:B)_{B\in \LeftCtx_2} \subseteq \Gamma$ and $\Delta_2 \subseteq \Delta$, admissible weakening gives
                  \[
                      \Gamma,\,\mf{\Subst{P}{v}{y_{A_2}}:C} \types v:A_2,\,\Delta.
                  \]

            \item If $A_2 \notin \RightCtx_2$, then $(y_D:D)_{D\in \RightCtx_2} \subseteq \Delta$, so by admissible weakening applied to \eqref{proof:2int-to-typing-case23-eq6}, we obtain
                  \[
                      \Gamma,\,P:C \types v:A_2,\,\Delta.
                  \]
        \end{itemize}
        Thus, in either case, there exists a term $P_0$ such that
        \begin{equation}
            \label{proof:2int-to-typing-case23-eq9}
            \Gamma,\,P_0:C \types v:A_2,\,\Delta
        \end{equation}

        Since $u,v \notin \fv(\Gamma,\Delta)$, we may apply \rlnm{CaseL} to \eqref{proof:2int-to-typing-case23-eq7}, \eqref{proof:2int-to-typing-case23-eq8}, and \eqref{proof:2int-to-typing-case23-eq9}, obtaining
        \[
            \Gamma,\,\mf{\casetm{M}{N_0}{P_0}:C} \types \Delta.
        \]
        That is,
        \[
            (x_B:B)_{B\in \LeftCtx \cup \LeftCtx_1 \cup \LeftCtx_2},\,
            \mf{\casetm{M}{N_0}{P_0}:C}
            \types
            (y_D:D)_{D\in \RightCtx \cup (\RightCtx_1\setminus\{A_1\}) \cup (\RightCtx_2\setminus\{A_2\})}
        \]
        Therefore, by Definition~\ref{def:logic-single-subject}, $\LeftCtx \cup \LeftCtx_1 \cup \LeftCtx_2;\,C \vdash' \RightCtx \cup (\RightCtx_1\setminus\{A_1\}) \cup (\RightCtx_2\setminus\{A_2\})$, as required.

        \indcase{24}
        Here we have $\wandp{\dblmprfraction{\Pi}{A_1},\,A_1,\,(\LeftCtx;\,\RightCtx)}$ with premise
        \begin{equation}
            \label{proof:2int-to-typing-case24-eq1}
            \wandp{\Pi,\,(A_1 \vee A_2),\,(\LeftCtx;\,\RightCtx)}
        \end{equation}
        Applying the IH to \eqref{proof:2int-to-typing-case24-eq1} yields $\LeftCtx;\,A_1 \vee A_2 \vdash' \RightCtx$. By Definition~\ref{def:logic-single-subject}, there exists a term $M$ such that
        \begin{equation}
            \label{proof:2int-to-typing-case24-eq2}
            (x_B:B)_{B\in \LeftCtx},\,M:A_1 \vee A_2 \types (y_C:C)_{C\in \RightCtx}
        \end{equation}
        Applying \rlnm{PiL} to \eqref{proof:2int-to-typing-case24-eq2}, we obtain
        \[
            (x_B:B)_{B\in \LeftCtx},\,\pi_1(M):A_1 \types (y_C:C)_{C\in \RightCtx}
        \]
        Therefore, by Definition~\ref{def:logic-single-subject}, $\LeftCtx;\,A_1 \vdash' \RightCtx$, as required.

        \indcase{25}
        This is analogous with case (24).

        \indcase{26}
        Here we have $\wanp{\mprfraction{\Pi}{A_1},\,A_1,\,(\LeftCtx;\,\RightCtx)}$ with premise
        \begin{equation}
            \label{proof:2int-to-typing-case26-eq1}
            \wandp{\Pi,\,A_1 \to A_2,\,(\LeftCtx;\,\RightCtx)}
        \end{equation}
        Applying the IH to \eqref{proof:2int-to-typing-case26-eq1} yields $\LeftCtx;\,A_1 \to A_2 \vdash' \RightCtx$. By Definition~\ref{def:logic-single-subject}, there exists a term $M$ such that
        \begin{equation}
            \label{proof:2int-to-typing-case26-eq2}
            (x_B:B)_{B\in \LeftCtx},\,M:A_1 \to A_2 \types (y_C:C)_{C\in \RightCtx}
        \end{equation}
        Applying \rlnm{Iota1R} to \eqref{proof:2int-to-typing-case26-eq2}, we obtain
        \[
            (x_B:B)_{B\in \LeftCtx} \types \iota_1(M):A_1,\,(y_C:C)_{C\in \RightCtx}
        \]
        Therefore, by Definition~\ref{def:logic-single-subject}, $\LeftCtx;\,A_1 \vdash' \RightCtx$, as required.

        \indcase{27}
        Here we have $\wandp{\dblmprfraction{\Pi}{A_2},\,A_2,\,(\LeftCtx;\,\RightCtx)}$ with premise
        \begin{equation}
            \label{proof:2int-to-typing-case27-eq1}
            \wandp{\Pi,\,(A_1 \to A_2),\,(\LeftCtx;\,\RightCtx)}.
        \end{equation}
        Applying the IH to \eqref{proof:2int-to-typing-case27-eq1} yields $\LeftCtx;\,A_1 \to A_2 \vdash' \RightCtx$. By Definition~\ref{def:logic-single-subject}, there exists a term $M$ such that
        \begin{equation}
            \label{proof:2int-to-typing-case27-eq2}
            (x_B:B)_{B\in \LeftCtx},\,M:A_1 \to A_2 \types (y_C:C)_{C\in \RightCtx}.
        \end{equation}
        Applying \rlnm{Iota2L} to \eqref{proof:2int-to-typing-case27-eq2}, we obtain
        \[
            (x_B:B)_{B\in \LeftCtx},\,\iota_2(M):A_2 \types (y_C:C)_{C\in \RightCtx}.
        \]
        Therefore, by Definition~\ref{def:logic-single-subject}, $\LeftCtx;\,A_2 \vdash' \RightCtx$, as required.

        \indcase{28}
        Here we have $\wandp{\dblmprfraction{\Pi_1\quad\Pi_2}{A_1},\,A_1,\,(\LeftCtx_1 \cup \LeftCtx_2;\,\RightCtx_1 \cup \RightCtx_2)}$
        with premises
        \begin{subequations}
            \begin{equation}
                \label{proof:2int-to-typing-case28-eq1}
                \wandp{\Pi_1,\,(A_1 \coto A_2),\,(\LeftCtx_1;\,\RightCtx_1)}
            \end{equation}
            \begin{equation}
                \label{proof:2int-to-typing-case28-eq2}
                \wandp{\Pi_2,\,A_2,\,(\LeftCtx_2;\,\RightCtx_2)}.
            \end{equation}
        \end{subequations}
        Applying the IH to the two premises yields $\LeftCtx_1;\,A_1 \coto A_2 \vdash' \RightCtx_1$ and $\LeftCtx_2;\,A_1 \vdash' \RightCtx_2$, respectively. By Definition~\ref{def:logic-single-subject}, there exist terms $M$ and $N$ such that
        \begin{subequations}
            \begin{equation}
                \label{proof:2int-to-typing-case28-eq3}
                (x_B:B)_{B\in \LeftCtx_1},\,M:A_1 \coto A_2 \types (y_C:C)_{C\in \RightCtx_1}
            \end{equation}
            \begin{equation}
                \label{proof:2int-to-typing-case28-eq4}
                (x_B:B)_{B\in \LeftCtx_2},\,N:A_1 \types (y_C:C)_{C\in \RightCtx_2}.
            \end{equation}
        \end{subequations}
        Let $\Gamma=(x_B:B)_{B\in \LeftCtx_1 \cup \LeftCtx_2}$ and $\Delta=(y_C:C)_{C\in \RightCtx_1 \cup \RightCtx_2}$. Since $(x_B:B)_{B\in \LeftCtx_1} \subseteq \Gamma$ and $(x_B:B)_{B\in \LeftCtx_2} \subseteq \Gamma$, admissible weakening applied to \eqref{proof:2int-to-typing-case28-eq3} and \eqref{proof:2int-to-typing-case28-eq4} yields
        \begin{subequations}
            \begin{equation}
                \label{proof:2int-to-typing-case28-eq5}
                \Gamma,\,M:A_1 \coto A_2 \types \Delta
            \end{equation}
            \begin{equation}
                \label{proof:2int-to-typing-case28-eq6}
                \Gamma,\,N:A_1 \types \Delta.
            \end{equation}
        \end{subequations}
        Applying the \rlnm{AppL} rule to \eqref{proof:2int-to-typing-case28-eq5} and \eqref{proof:2int-to-typing-case28-eq6}, we obtain
        \[
            \Gamma,\,M\,N:A_2 \types \Delta.
        \]
        That is,
        \[
            (x_B:B)_{B\in \LeftCtx_1 \cup \LeftCtx_2},\,M\,N:A_2 \types (y_C:C)_{C\in \RightCtx_1 \cup \RightCtx_2}
        \]
        Therefore, by Definition~\ref{def:logic-single-subject}, $\LeftCtx_1 \cup \LeftCtx_2;\,A_1 \vdash' \RightCtx_1 \cup \RightCtx_2$, as required.

        \indcase{29}
        Here we have $\wanp{\Pi,\,A,\,(\LeftCtx';\,\RightCtx')}$ with premise
        \begin{subequations}
            \begin{equation}
                \label{proof:2int-to-typing-case29-eq1}
                \wanp{\Pi,\,A,\,(\LeftCtx;\,\RightCtx)}
            \end{equation}
            \begin{equation}
                \label{proof:2int-to-typing-case29-eq2}
                \LeftCtx \subseteq \LeftCtx'
            \end{equation}
            \begin{equation}
                \label{proof:2int-to-typing-case29-eq3}
                \RightCtx \subseteq \RightCtx'
            \end{equation}
        \end{subequations}
        where $\LeftCtx'$ and $\RightCtx'$ are finite. Applying the IH to \eqref{proof:2int-to-typing-case28-eq1} yields $\LeftCtx \vdash' A;\,\RightCtx$. By Definition~\ref{def:logic-single-subject}, this means there exists a term $M$ such that
        \begin{equation}
            \label{proof:2int-to-typing-case29-eq4}
            (x_B:B)_{B\in \LeftCtx} \types M:A,\,(y_C:C)_{C\in \RightCtx}
        \end{equation}
        Let $\Gamma=(x_B:B)_{B\in \LeftCtx'}$ and $\Delta=(y_C:C)_{C\in \RightCtx'}$. By \eqref{proof:2int-to-typing-case29-eq2} and \eqref{proof:2int-to-typing-case29-eq3}, we have $(x_B:B)_{B\in \LeftCtx} \subseteq \Gamma$ and $(y_C:C)_{C\in \RightCtx} \subseteq \Delta$. Hence, by admissible weakening applied to \eqref{proof:2int-to-typing-case29-eq4}, we obtain
        \[
            \Gamma \types M:A,\,\Delta.
        \]
        That is,
        \[
            (x_B:B)_{B\in \LeftCtx'} \types M:A,\,(y_C:C)_{C\in \RightCtx'}
        \]
        Therefore, by Definition~\ref{def:logic-single-subject}, $\LeftCtx' \vdash' A;\,\RightCtx'$, as required.
        \indcase{30}
        This is analogous to case (29).

        \indcase{31}
        Here we have
        \[
            \wandp{\dblmprfraction{\Pi}{\sneg{A}},\,\sneg{A},\,(\LeftCtx;\,\RightCtx)}
        \]
        with premise
        \begin{equation}
            \label{proof:2int-to-typing-case31-eq1}
            \wanp{\Pi,\,A,\,(\LeftCtx;\,\RightCtx)}.
        \end{equation}
        Applying the mutual induction hypothesis to
        \eqref{proof:2int-to-typing-case31-eq1} yields
        \[
            \LeftCtx \vdash' A;\,\RightCtx.
        \]
        By Definition~\ref{def:logic-single-subject}, there exists a term $M$
        such that
        \begin{equation}
            \label{proof:2int-to-typing-case31-eq2}
            (x_B:B)_{B\in \LeftCtx} \types M:A,\,(y_C:C)_{C\in \RightCtx}.
        \end{equation}
        Applying \rlnm{$\sneg_{I}$ L} to
        \eqref{proof:2int-to-typing-case31-eq2}, we obtain
        \[
            (x_B:B)_{B\in \LeftCtx},\,M:\sneg{A} \types (y_C:C)_{C\in \RightCtx}.
        \]
        Therefore, by Definition~\ref{def:logic-single-subject},
        $\LeftCtx;\,\sneg{A} \vdash' \RightCtx$, as required.

        \indcase{32}
        Here we have
        \[
            \wandp{\dblmprfraction{\Pi}{A},\,A,\,(\LeftCtx;\,\RightCtx)}
        \]
        with premise
        \begin{equation}
            \label{proof:2int-to-typing-case32-eq1}
            \wanp{\Pi,\,\sneg{A},\,(\LeftCtx;\,\RightCtx)}.
        \end{equation}
        Applying the mutual induction hypothesis to
        \eqref{proof:2int-to-typing-case32-eq1} yields
        \[
            \LeftCtx \vdash' \sneg{A};\,\RightCtx.
        \]
        By Definition~\ref{def:logic-single-subject}, there exists a term $M$
        such that
        \begin{equation}
            \label{proof:2int-to-typing-case32-eq2}
            (x_B:B)_{B\in \LeftCtx} \types M:\sneg{A},\,(y_C:C)_{C\in \RightCtx}.
        \end{equation}
        Applying \rlnm{$\sneg_{E}$ L} to
        \eqref{proof:2int-to-typing-case32-eq2}, we obtain
        \[
            (x_B:B)_{B\in \LeftCtx},\,M:A \types (y_C:C)_{C\in \RightCtx}.
        \]
        Therefore, by Definition~\ref{def:logic-single-subject},
        $\LeftCtx;\,A \vdash' \RightCtx$, as required.
    \end{indproof}
\end{proof}

\begin{lemma}[Strong-negation shifts in Wansing contexts]
    \label{lem:wansing-context-negation-shift}
    Let $L$ and $R$ be finite sets of formulas. The following transformations
    are admissible for proofs:
    \begin{enumerate}[(a)]
        \item If $\wanp{\Pi,\,C,\,(L\cup\{B\};\,R)}$, then, for some $\Pi'$,
              \[
                  \wanp{\Pi',\,C,\,(L;\,R\cup\{\sneg{B}\})}.
              \]
        \item If $\wanp{\Pi,\,C,\,(L\cup\{\sneg{B}\};\,R)}$, then, for some
              $\Pi'$,
              \[
                  \wanp{\Pi',\,C,\,(L;\,R\cup\{B\})}.
              \]
        \item If $\wanp{\Pi,\,C,\,(L;\,R\cup\{B\})}$, then, for some $\Pi'$,
              \[
                  \wanp{\Pi',\,C,\,(L\cup\{\sneg{B}\};\,R)}.
              \]
        \item If $\wanp{\Pi,\,C,\,(L;\,R\cup\{\sneg{B}\})}$, then, for some
              $\Pi'$,
              \[
                  \wanp{\Pi',\,C,\,(L\cup\{B\};\,R)}.
              \]
    \end{enumerate}
    The same four transformations are admissible for dual proofs, with every
    judgement $\wanp{-}$ replaced by the corresponding judgement
    $\wandp{-}$.
\end{lemma}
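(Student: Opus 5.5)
The plan is to prove all four transformations, for proofs and dual proofs together, by one simultaneous induction on the derivation $\Pi$ in \tintc{}. Each transformation replaces the axiomatic leaves that appeal to the shifted formula with a two-step derivation built from the strong negation rules. Concretely:
\begin{enumerate}[(a)]
  \item An axiom (1) for $B$ becomes axiom (2) for $\sneg{B}$, giving a dual proof of $\sneg{B}$ from $(\varnothing;\{\sneg{B}\})$, followed by rule (34), giving a proof of $B$.
  \item An axiom (1) for $\sneg{B}$ becomes axiom (2) for $B$ followed by rule (33).
  \item An axiom (2) for $B$ becomes axiom (1) for $\sneg{B}$ followed by rule (32).
  \item An axiom (2) for $\sneg{B}$ becomes axiom (1) for $B$ followed by rule (31).
\end{enumerate}
Each of these local replacements has exactly the shifted basis. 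Since proofs and dual proofs nest inside one another (rules (9), (16), (20), (26), (27), (31)--(34)), the statement for proofs cannot be separated from the statement for dual proofs. This is why all eight claims are proved at once.

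To make the induction go through, I would first strengthen the statement so that it ignores the ambiguity in the set notation $L\cup\{B\}$. Take transformation (a) as the example. The strengthened claim is that if $\wanp{\Pi,C,(L_0;R_0)}$ and $B\in L_0$, then $\wanp{\Pi',C,(L_0\setminus\{B\};R_0\cup\{\sneg{B}\})}$, and likewise for dual proofs. The original statement then follows: if $B\in L$ as well, weaken back with rule (29) or (30). The cases of the induction go as follows.
\begin{itemize}
  \item \emph{Axioms.} For (1) and (2) there are two sub-cases. Either the axiom's formula is the shifted one, and we use the replacement above, or $B\notin L_0$ and the case is vacuous. Axioms (3) and (4) are vacuous.
  \item \emph{Non-discharging rules.} For rules such as (5), (9), (14) and (20), the conclusion's basis is a union of the premises' bases. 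I apply the IH only to those premises whose basis contains $B$. I then use rules (29) and (30) to bring all premises to the common shifted basis, and reapply the rule.
  \item \emph{Monotonicity rules.} For (29) and (30), if $B$ was already in the smaller basis I apply the IH. Otherwise I weaken the premise directly to the target basis.
  \item \emph{Strong negation rules.} Rules (31)--(34) leave the basis unchanged, so they commute trivially with the IH.
\end{itemize}

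The main obstacle is the family of discharging rules: (8) and (13) on the proof side, and (21) and (23) on the dual-proof side. In these the conclusion's basis is obtained from a premise basis by removing a formula, for example $L_1\setminus\{A_1\}$. Two things must be checked in each case.
\begin{itemize}
  \item If the shifted formula coincides with the discharged one, say $B=A_1$ in rule (13), then $B$ survives in the conclusion only through the other premises. The IH is then applied to those premises only, and the branch where $B$ is discharged is left untouched.
  \item If it does not coincide, the IH is applied to the branch and the discharge is redone.
\end{itemize}
The delicate point is that the shift may add a formula to the opposite side that interferes with a discharge. For instance, in rule (21) the co-assumption $\sneg{B}$ is added to $R$, and this could equal the discharged $A_1$. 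In that case the new $\sneg{B}$ would be wrongly discharged. I would handle this by observing that the rule removes $A_1$ from $R$, so the conclusion's basis no longer records $\sneg{B}$. I then restore it with rule (30); this is sound because weakening is monotone.

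I expect this bookkeeping to be routine but fiddly. Once it is carried out for rules (8) and (21), the cases (13) and (23) follow the same pattern, and transformations (b)--(d) are symmetric to (a).
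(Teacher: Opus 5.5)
Your argument is correct. The four leaf replacements check against rules (31)--(34) and produce exactly the shifted bases. The strengthened form with $B\in L_0$ removes the ambiguity when $B\in L$. Weakening with (29) or (30) is the right repair when the formula you add on the opposite side is caught by the discharge in (8), (13), (21) or (23): the resulting derivation is still valid, only with a smaller basis.

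One precision: because (29) and (30) leave $\Pi$ unchanged, the induction has to be on the derivation of the judgement, i.e. the inductive generation of the proof and dual-proof relations, not structurally on $\Pi$. Your case analysis already treats those two rules that way. The paper states this lemma without proof, so your argument supplies one rather than duplicating the paper's.

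It is worth knowing that the induction can be avoided altogether, because each claim is a single detachment.
\begin{itemize}
\item For (a) on proofs, rule (8) turns the proof of $C$ from $(L\cup\{B\};R)$ into a proof of $B\to C$ from $(L\setminus\{B\};R)$. Axiom (2) for $\sneg{B}$ followed by (34) proves $B$ from $(\varnothing;\{\sneg{B}\})$. Rule (14) detaches, and (29) restores $L$.
\item For (c) on dual proofs, rule (21) gives a dual proof of $B\coto C$ from $(L;R\setminus\{B\})$. Axiom (1) for $\sneg{B}$ followed by (32) gives a dual proof of $B$ from $(\{\sneg{B}\};\varnothing)$. Coimplication elimination (28) detaches, and (30) weakens.
\item Cases (b) and (d) are the same with the roles of $B$ and $\sneg{B}$ exchanged, using (33) and (31) in place of (34) and (32).
\item The remaining four cases reduce to these by switching the judgement form of the conclusion. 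For (a) and (b) on dual proofs, pass to a proof of $\sneg{C}$ with (33), shift, and return with (32). For (c) and (d) on proofs, pass to a dual proof of $\sneg{C}$ with (31), shift, and return with (34).
\end{itemize}

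This route needs no discharge bookkeeping, but it inserts introduction--elimination detours into $\Pi'$. Your leaf substitution instead yields a $\Pi'$ of the same shape as $\Pi$, and uses only the axioms, the strong-negation rules and weakening.
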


\begin{lemma}[From $\tintcty{}$ to $\tintc{}$]
    The following hold simultaneously:
    \begin{enumerate}[(i)]
        \item If $\Gamma \types M:A,\,\Delta$ in \tintcty{} and $\Gamma$ and $\Delta$ are disjoint variable contexts, then there exists some $\Pi$, such that
              $\wanp{\Pi,\,A,\,(\erasure{\Gamma},\,\erasure{\Delta})}$ in \tintc{}.
        \item If $\Gamma,\,M:A \types \Delta$ in \tintcty{} and $\Gamma$ and $\Delta$ are disjoint variable contexts, then there exists some $\Pi$, such that
              $\wandp{\Pi,\,A,\,(\erasure{\Gamma},\,\erasure{\Delta})}$ in \tintc{}.
    \end{enumerate}
\end{lemma}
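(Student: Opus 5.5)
We prove (i) and (ii) simultaneously by induction on the height of the \tintcty{} derivation of the given judgement, generalising the statement as in Lemma~\ref{lem:judgement-form}: for any derivable $\Gamma \types \Delta$ and any chosen typing formula $M:A$ such that the remaining formulas form disjoint variable environments, we produce a proof (if $M:A$ is on the right) or dual proof (if on the left) of $A$ from $(\erasure{\Gamma'};\erasure{\Delta'})$, where $\Gamma',\Delta'$ are the residual contexts. The generalisation is needed because premises of the strong-negation rules move a variable typing across the turnstile, so the chosen formula in a premise need not be the principal one.

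\textbf{Case analysis.} For \rlnm{Id} with chosen formula $y:C$ on the right (the left case is symmetric), disjointness forces the left copy $y:C$ to be in $\Gamma'$. Rule (1) of \tint{} gives a proof of $C$ from $(\{C\};\varnothing)$, and monotonicity (29) enlarges the basis. For every non-negation rule, the principal formula has a non-variable subject. The disjointness hypothesis therefore forces the chosen formula to be principal, exactly as in the proofs of Lemmas~\ref{lem:weakening-left} and~\ref{lem:sbst-coassmpt-alt}. We then apply the induction hypothesis to each premise, whose side contexts are again disjoint variable environments, possibly extended by a bound variable $x:B$ (for \rlnm{$\to$R}, \rlnm{CaseR}) or a co-bound $y:B$ (for \rlnm{$\coto$L}, \rlnm{CaseL}). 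Each rule then matches a \tint{} rule: \rlnm{$\to$R} gives (8) with discharge of $B$, \rlnm{$\coto$L} gives (21), \rlnm{CaseR} and \rlnm{CaseL} give (13) and (23), and so on. Bases are combined with (29)/(30) wherever the \tint{} rule takes unions. One subtlety in erasure: discharging $B$ removes $B$ from $L$ even if another variable in $\Gamma$ also has type $B$. We handle this by monotonicity (29), re-adding $B$ when $\erasure{\Gamma}$ still contains it.

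\textbf{Strong negation (the main obstacle).} For, say, \rlnm{$\sneg_I$L} with conclusion $\Sigma, P:\sneg C \types \Pi$ there are three situations. \emph{Principal chosen:} the induction hypothesis on $\Sigma \types P:C,\Pi$ gives a proof of $C$, and rule (31) gives a dual proof of $\sneg C$. \emph{A passive formula chosen:} then $P$ must be a variable lying in a side context. The premise has $P:C$ on the right as a co-assumption, so the induction hypothesis yields a derivation with $C \in R$. Lemma~\ref{lem:wansing-context-negation-shift}(c) then converts co-assumption $C$ into assumption $\sneg C$, which matches $\erasure{\Gamma'}$. \emph{Repeated subject:} the chosen formula may have the same variable as $P$; Lemma~\ref{lem:judgement-form} controls this via parity of negations, and the parity analysis shows the required erasure still matches up to a shift. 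The other three negation rules are handled symmetrically, using parts (a), (b) and (d) of Lemma~\ref{lem:wansing-context-negation-shift} and rules (32)--(34).

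I expect this bookkeeping to be the delicate step: the erasure of contexts must align exactly with the sets produced by the shift lemma, since sets identify duplicates.
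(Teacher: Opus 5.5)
Your proposal is correct and follows the paper's proof. It uses a simultaneous induction on the typing derivation. In it, every non-negation rule is forced by the variable-environment hypothesis to have the chosen formula as principal, and is matched with the corresponding \tint{} rule, using (29)/(30) to absorb unions and discharges. \textsc{Id} is handled by (1)/(2) plus monotonicity. Each negation rule splits into two cases. In the principal case you use rules (31)--(34). In the passive case the principal subject is a variable, the induction hypothesis is applied to the premise with the same chosen formula, and Lemma~\ref{lem:wansing-context-negation-shift} restores the basis.

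Your separate ``repeated subject'' subcase is not needed. When the chosen formula shares its variable with the passive principal formula, the premise's side contexts are still disjoint variable environments, or the chosen formula coincides with the moved one and monotonicity suffices. So the ordinary passive argument covers it without any appeal to the parity analysis of Lemma~\ref{lem:judgement-form}.
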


\begin{proof}
    We prove (i) and (ii) simultaneously by mutual induction on the given typing derivation. Throughout, the displayed formula $M:A$ is the distinguished subject formula; $\Gamma$ and $\Delta$ are the remaining formulas in the judgement.

    We first record where the disjoint-variable-context hypothesis is used. Apart from \rlnm{Id} and the four strong-negation rules, the principal typing formula in the conclusion of every typing rule has a non-variable term.
    If such the principal formula is not $M:A$, it would remain in $\Gamma$ or in $\Delta$. That is impossible because $\Gamma$ and $\Delta$ are variable contexts. Consequently, in every non-negation case treated below, the principal formula of the last rule is exactly the distinguished subject formula. In particular, for claim (i) every non-negation left rule is impossible and for claim (ii) every non-negation right rule is impossible.
    \smallskip
    \noindent
    For (i), assume $\Gamma \types M:A,\,\Delta$. We proceed by cases on the last typing rule.
    Whenever a premise is of the form $\Gamma',\,N:B \types \Delta'$, we appeal to the mutual induction hypothesis for claim (ii). The cases of (i) are as follows.
    \begin{indproof}
        \indcase{\rlnm{Id}}
        The last rule is an instance of identity,
        \[
            \infer*[left=\rlnm{Id}]{ }{
                \Sigma,\,x:B \types x:B,\,\Pi
            }.
        \]
        The distinguished formula $M:A$ must be the principal formula
        $x:B$ on the right. Indeed, otherwise that right-hand formula
        $x:B$ would belong to $\Delta$, while the principal left-hand
        formula $x:B$ would belong to $\Gamma$, contradicting the
        disjointness of $\Gamma$ and $\Delta$. Hence, up to exchange,
        \[
            M=x,\qquad A=B,\qquad
            \Gamma=\Sigma,x:B,\qquad \Delta=\Pi.
        \]
        By Wansing rule (1),
        \begin{equation}
            \label{proof:typing-to-2int-caseidr-eq1}
            \wanp{\overline{B},\,B,\,(\{B\};\,\varnothing)}.
        \end{equation}
        Since $\{B\}\subseteq\erasure{\Gamma}$ and
        $\varnothing\subseteq\erasure{\Delta}$, structural rule (29)
        yields
        \[
            \wanp{\Pi,\,A,\,(\erasure{\Gamma};\,\erasure{\Delta})}
        \]
        for a suitable $\Pi$, as required.

        \indcase{\rlnm{$\truefm$R}}
        Here the given derivation ends with
        \[
            \infer*[left=\rlnm{$\truefm$R}]{ }{
                \Gamma \types \unittm:\truefm,\,\Delta
            }
        \]
        By Wansing rule (3), we have
        \begin{equation}
            \label{proof:typing-to-2int-casetopr-eq1}
            \wanp{\overline{\truefm},\,\truefm,\,(\varnothing;\,\varnothing)}
        \end{equation}
        Let $\Pi=\overline{\truefm}$. Since $\varnothing \subseteq \erasure{\Gamma}$ and $\varnothing \subseteq \erasure{\Delta}$, applying Wansing's structural rule (29) to \ref{proof:typing-to-2int-casetopr-eq1} gives
        \[
            \wanp{\Pi,\,\truefm,\,(\erasure{\Gamma};\,\erasure{\Delta})}
        \]
        as required.

        \indcase{\rlnm{$\wedge$R}}
        Here the given derivation ends with
        \[
            \infer*[left=\rlnm{$\wedge$R}]{
                \Gamma \types M:A_1,\,\Delta
                \and
                \Gamma \types N:A_2,\,\Delta
            }{
                \Gamma \types (M,\,N):A_1 \wedge A_2,\,\Delta
            }
        \]
        By the induction hypothesis on the two premises, there exist some $\Pi_1$, $\Pi_2$ such that
        \begin{subequations}
            \begin{equation}
                \label{proof:typing-to-2int-caseandr-eq1}
                \wanp{\Pi_1,\,A_1,\,(\erasure{\Gamma};\,\erasure{\Delta})}
            \end{equation}
            \begin{equation}
                \label{proof:typing-to-2int-caseandr-eq2}
                \wanp{\Pi_2,\,A_2,\,(\erasure{\Gamma};\,\erasure{\Delta})}
            \end{equation}
        \end{subequations}
        Applying Wansing rule (5) to \ref{proof:typing-to-2int-caseandr-eq1} and \ref{proof:typing-to-2int-caseandr-eq2} gives
        \[
            \wanp{\mprfraction{\Pi_1 \quad \Pi_2}{(A_1 \wedge A_2)},\,(A_1 \wedge A_2),\,((\erasure{\Gamma} \cup \erasure{\Gamma});\,(\erasure{\Delta} \cup \erasure{\Delta}))}
        \]
        Since $\erasure{\Gamma} \cup \erasure{\Gamma}=\erasure{\Gamma}$ and $\erasure{\Delta} \cup \erasure{\Delta}=\erasure{\Delta}$, this yields
        \[
            \wanp{\Pi,\,A_1 \wedge A_2,\,(\erasure{\Gamma};\,\erasure{\Delta})}
        \]
        as required, where $\Pi=\mprfraction{\Pi_1 \quad \Pi_2}{(A_1 \wedge A_2)}$.

        \indcase{\rlnm{$\vee$R$_1$}}
        Here the given derivation ends with
        \[
            \infer*[left=\rlnm{$\vee$R$_1$}]{
                \Gamma \types M:A_1,\,\Delta
            }{
                \Gamma \types \inj_1(M):A_1 \vee A_2,\,\Delta
            }
        \]
        By the induction hypothesis on the premise, there exists some $\Pi$ such that
        \begin{equation}
            \label{proof:typing-to-2int-caseorr1-eq1}
            \wanp{\Pi,\,A_1,\,(\erasure{\Gamma};\,\erasure{\Delta})}
        \end{equation}
        Applying Wansing rule (6) to \ref{proof:typing-to-2int-caseorr1-eq1} gives
        \[
            \wanp{\mprfraction{\Pi}{(A_1 \vee A_2)},\,(A_1 \vee A_2),\,(\erasure{\Gamma};\,\erasure{\Delta})}
        \]
        as required, where $\Pi'=\mprfraction{\Pi}{(A_1 \vee A_2)}$

        \indcase{\rlnm{$\vee$R$_2$}}
        This is analogous to the case where the last typing rule is \rlnm{$\vee$R$_1$}, using Wansing rule (7) in place of rule (6).

        \indcase{\rlnm{$\to$R}}
        Here the given derivation ends with
        \[
            \infer*[
            left=\rlnm{$\to$R},
            right={$x \notin \fv(\Gamma,\,\Delta)$}
            ]{
                \Gamma,\,x:A_1 \types M:A_2,\,\Delta
            }{
                \Gamma \types \abs{x}{M}:A_1 \to A_2,\,\Delta
            }
        \]
        By the induction hypothesis on the premise, there exists some $\Pi$ such that
        \begin{equation}
            \label{proof:typing-to-2int-caseimpr-eq1}
            \wanp{\Pi,\,A_2,\,(\erasure{\Gamma,\,x:A_1};\,\erasure{\Delta})}
        \end{equation}
        Applying Wansing rule (8) to \ref{proof:typing-to-2int-caseimpr-eq1} gives
        \[
            \wanp{\mprfraction{\Pi}{(A_1 \to A_2)},\,(A_1 \to A_2),\,(\erasure{\Gamma,\,x:A_1}\setminus\{A_1\};\,\erasure{\Delta})}
        \]
        Since $\erasure{\Gamma,\,x:A_1}=\erasure{\Gamma}\cup\{A_1\}$, we have $\erasure{\Gamma,\,x:A_1}\setminus\{A_1\}=\erasure{\Gamma}\setminus\{A_1\}$. Hence
        \[
            \wanp{\mprfraction{\Pi}{(A_1 \to A_2)},\,(A_1 \to A_2),\,(\erasure{\Gamma}\setminus\{A_1\};\,\erasure{\Delta})}
        \]
        Finally, since $\erasure{\Gamma}\setminus\{A_1\} \subseteq \erasure{\Gamma}$, applying Wansing's structural rule (29) yields
        \[
            \wanp{\Pi',\,A_1 \to A_2,\,(\erasure{\Gamma};\,\erasure{\Delta})}
        \]
        as required, where $\Pi'=\mprfraction{\Pi}{(A_1 \to A_2)}$.

        \indcase{\rlnm{$\coto$R}}
        Here the given derivation ends with
        \[
            \infer*[left=\rlnm{$\coto$R}]{
                \Gamma,\,M:A \types \Delta
                \and
                \Gamma \types N:B,\,\Delta
            }{
                \Gamma \types M\cdot N:A \coto B,\,\Delta
            }
        \]
        By the mutual IH on the two premises, there exist some $\Pi_1$, $\Pi_2$ such that
        \begin{subequations}
            \begin{equation}
                \label{proof:typing-to-2int-casecotor-eq2}
                \wandp{\Pi_1,\,A,\,(\erasure{\Gamma};\,\erasure{\Delta})}.
            \end{equation}
        \end{subequations}
        \begin{equation}
            \label{proof:typing-to-2int-casecotor-eq1}
            \wanp{\Pi_2,\,B,\,(\erasure{\Gamma};\,\erasure{\Delta})}
        \end{equation}
        Applying Wansing rule (9) to \ref{proof:typing-to-2int-casecotor-eq1} and \ref{proof:typing-to-2int-casecotor-eq2} gives
        \[
            \wanp{\mprfraction{\Pi_1 \quad \Pi_2}{(A \coto B)},\,(A \coto B),\,((\erasure{\Gamma}\cup\erasure{\Gamma});\,(\erasure{\Delta}\cup\erasure{\Delta}))}
        \]
        Since $\erasure{\Gamma}\cup\erasure{\Gamma}=\erasure{\Gamma}$ and $\erasure{\Delta}\cup\erasure{\Delta}=\erasure{\Delta}$, this yields
        \[
            \wanp{\Pi',\,A \coto B,\,(\erasure{\Gamma};\,\erasure{\Delta})}
        \]
        as required, where $\Pi'=\mprfraction{\Pi_1 \quad \Pi_2}{(A \coto B)}$.

        \indcase{\rlnm{$\falsefm$R-elim}}
        Here the given derivation ends with
        \[
            \infer*[left=\rlnm{$\falsefm$R-elim}]{
                \Gamma \types M:\falsefm,\,\Delta
            }{
                \Gamma \types \abort(M):A,\,\Delta
            }
        \]
        By the induction hypothesis on the premise, there exists some $\Pi$ such that
        \begin{equation}
            \label{proof:typing-to-2int-casebotrelim-eq1}
            \wanp{\Pi,\,\falsefm,\,(\erasure{\Gamma};\,\erasure{\Delta})}
        \end{equation}
        Applying Wansing rule (10) to \ref{proof:typing-to-2int-casebotrelim-eq1} gives
        \[
            \wanp{\Pi',\,A,\,(\erasure{\Gamma};\,\erasure{\Delta})}
        \]
        as required, where $\Pi'=\mprfraction{\Pi}{A}$.

        \indcase{\rlnm{AppR}}
        Here the given derivation ends with
        \[
            \infer*[left=\rlnm{AppR}]{
                \Gamma \types M:A_1 \to A_2,\,\Delta
                \and
                \Gamma \types N:A_1,\,\Delta
            }{
                \Gamma \types M\,N:A_2,\,\Delta
            }
        \]
        By the induction hypothesis on the two premises, there exist some $\Pi_1$, $\Pi_2$ such that
        \begin{subequations}
            \begin{equation}
                \label{proof:typing-to-2int-caseappr-eq1}
                \wanp{\Pi_1,\,A_1 \to A_2,\,(\erasure{\Gamma};\,\erasure{\Delta})}
            \end{equation}
            \begin{equation}
                \label{proof:typing-to-2int-caseappr-eq2}
                \wanp{\Pi_2,\,A_1,\,(\erasure{\Gamma};\,\erasure{\Delta})}
            \end{equation}
        \end{subequations}
        Applying Wansing rule (14) to \ref{proof:typing-to-2int-caseappr-eq2} and \ref{proof:typing-to-2int-caseappr-eq1} gives
        \[
            \wanp{\mprfraction{\Pi_2\quad\Pi_1}{A_2},\,A_2,\,((\erasure{\Gamma}\cup\erasure{\Gamma});\,(\erasure{\Delta}\cup\erasure{\Delta}))}
        \]
        Since $\erasure{\Gamma}\cup\erasure{\Gamma}=\erasure{\Gamma}$ and $\erasure{\Delta}\cup\erasure{\Delta}=\erasure{\Delta}$, this yields
        \[
            \wanp{\Pi,\,A_2,\,(\erasure{\Gamma};\,\erasure{\Delta})}
        \]
        as required, where $\Pi=\mprfraction{\Pi_2\quad\Pi_1}{A_2}$.

        \indcase{\rlnm{PiR}}
        Here the given derivation ends with
        \[
            \infer*[left=\rlnm{PiR}]{
                \Gamma \types M:A_1 \wedge A_2,\,\Delta
            }{
                \Gamma \types \pi_i(M):A_i,\,\Delta
            }
        \]
        By the induction hypothesis on the premise, there exists some $\Pi$ such that
        \begin{equation}
            \label{proof:typing-to-2int-casepir-eq1}
            \wanp{\Pi,\,A_1 \wedge A_2,\,(\erasure{\Gamma};\,\erasure{\Delta})}
        \end{equation}
        If $i=1$, then applying Wansing rule (11) to \ref{proof:typing-to-2int-casepir-eq1}. If $i=2$, then applying Wansing rule (12) to \ref{proof:typing-to-2int-casepir-eq1}. In either case the application gives:
        \[
            \wanp{\Pi',\,A_i,\,(\erasure{\Gamma};\,\erasure{\Delta})}
        \]
        as required, where $\Pi'=\mprfraction{\Pi}{A_i}$.

        \indcase{\rlnm{CaseR}}
        Here the given derivation ends with
        \[
            \infer*[left=\rlnm{CaseR},
            right={$x,\,y \notin \fv(\Gamma,\,\Delta)$}]{
                \Gamma \types M:A_1 \vee A_2,\,\Delta
                \and
                \Gamma,\,x:A_1 \types N:C,\,\Delta
                \and
                \Gamma,\,y:A_2 \types P:C,\,\Delta
            }{
                \Gamma \types \casetm{M}{N}{P}:C,\,\Delta
            }
        \]
        By the induction hypothesis on the three premises, there exist some $\Pi$, $\Pi_1$, $\Pi_2$ such that
        \begin{subequations}
            \begin{equation}
                \label{proof:typing-to-2int-casecaser-eq1}
                \wanp{\Pi,\,A_1 \vee A_2,\,(\erasure{\Gamma};\,\erasure{\Delta})}
            \end{equation}
            \begin{equation}
                \label{proof:typing-to-2int-casecaser-eq2}
                \wanp{\Pi_1,\,C,\,(\erasure{\Gamma,\,x:A_1};\,\erasure{\Delta})}
            \end{equation}
            \begin{equation}
                \label{proof:typing-to-2int-casecaser-eq3}
                \wanp{\Pi_2,\,C,\,(\erasure{\Gamma,\,y:A_2};\,\erasure{\Delta})}
            \end{equation}
        \end{subequations}
        Since $\erasure{\Gamma,\,x:A_1}=\erasure{\Gamma}\cup\{A_1\}$ and $\erasure{\Gamma,\,y:A_2}=\erasure{\Gamma}\cup\{A_2\}$, applying Wansing rule (13) to \ref{proof:typing-to-2int-casecaser-eq1}, \ref{proof:typing-to-2int-casecaser-eq2}, and \ref{proof:typing-to-2int-casecaser-eq3} gives
        \[
            \wanp{\mprfraction{\Pi\;\;\Pi_1\;\;\Pi_2}{C},\,C,\,(\erasure{\Gamma} \cup ((\erasure{\Gamma}\cup\{A_1\}) \setminus \{A_1\}) \cup ((\erasure{\Gamma}\cup\{A_2\}) \setminus \{A_2\});\,\erasure{\Delta} \cup \erasure{\Delta} \cup \erasure{\Delta})}.
        \]
        Since $(\erasure{\Gamma}\cup\{A_1\}) \setminus \{A_1\} \subseteq \erasure{\Gamma}$, $(\erasure{\Gamma}\cup\{A_2\}) \setminus \{A_2\} \subseteq \erasure{\Gamma}$, and $\erasure{\Delta} \cup \erasure{\Delta} \cup \erasure{\Delta}=\erasure{\Delta}$, this yields
        \[
            \wanp{\Pi',\,C,\,(\erasure{\Gamma};\,\erasure{\Delta})}
        \]
        as required, where $\Pi'=\mprfraction{\Pi\;\;\Pi_1\;\;\Pi_2}{C}$.

        \indcase{\rlnm{Iota1R}}
        Here the given derivation ends with
        \[
            \infer*[left=\rlnm{Iota1R}]{
                \Gamma,\,M:A_1 \to A_2 \types \Delta
            }{
                \Gamma \types \iota_1(M):A_1,\, \Delta
            }
        \]
        By the induction hypothesis on the premise, there exists some $\Pi$ such that
        \begin{equation}
            \label{proof:typing-to-2int-caseiota1r-eq1}
            \wandp{\Pi,\,A_1 \to A_2,\,(\erasure{\Gamma};\,\erasure{\Delta})}
        \end{equation}
        Applying Wansing rule (26) to \ref{proof:typing-to-2int-caseiota1r-eq1} gives
        \[
            \wanp{\Pi',\,A_1,\,(\erasure{\Gamma};\,\erasure{\Delta})}
        \]
        as required, where $\Pi'=\mprfraction{\Pi}{A_1}$.

        \indcase{\rlnm{Iota2R}}
        Here the given derivation ends with
        \[
            \infer*[left=\rlnm{Iota2R}]{
                \Gamma \types M:A_1 \coto A_2,\,\Delta
            }{
                \Gamma \types \iota_2(M):A_2,\,\Delta
            }
        \]
        By the mutual induction hypothesis on the premise, there exists some $\Pi$ such that
        \begin{equation}
            \label{proof:typing-to-2int-caseiota2r-eq1}
            \wanp{\Pi,\,A_1 \coto A_2,\,(\erasure{\Gamma};\,\erasure{\Delta})}
        \end{equation}
        Applying Wansing rule (15) to \ref{proof:typing-to-2int-caseiota2r-eq1} gives
        \[
            \wanp{\Pi',\,A_2,\,(\erasure{\Gamma};\,\erasure{\Delta})}
        \]
        as required, where $\Pi'=\mprfraction{\Pi}{A_2}$.

        \indcase{\rlnm{$\sneg_I$ R}}
        The last rule has the form
        \[
            \infer*[left=\rlnm{$\sneg_I$ R}]{
                \Sigma,\,P:B \types \Pi
            }{
                \Sigma \types P:\sneg{B},\,\Pi
            }.
        \]
        There are two possibilities.
        \begin{enumerate}
            \item Suppose that the principal formula $P:\sneg{B}$ is the
                  distinguished subject. Then
                  \[
                      M=P,\qquad A=\sneg{B},\qquad
                      \Gamma=\Sigma,\qquad \Delta=\Pi.
                  \]
                  The mutual induction hypothesis for the premise gives, for
                  some $\Theta$,
                  \[
                      \wandp{\Theta,\,B,\,
                          (\erasure{\Gamma};\,\erasure{\Delta})}.
                  \]
                  Wansing rule (34) therefore gives
                  \[
                      \wanp{\Theta',\,\sneg{B},\,
                          (\erasure{\Gamma};\,\erasure{\Delta})},
                  \]
                  which is the required conclusion.

            \item Suppose that the principal formula is passive. Since it
                  lies in the variable context $\Delta$, its term is a
                  variable, say $P=z$. Up to exchange, write
                  \[
                      \Delta=z:\sneg{B},\,\Delta_0,
                      \qquad
                      \Pi=M:A,\,\Delta_0,
                      \qquad
                      \Gamma=\Sigma.
                  \]
                  Disjointness of $\Gamma$ and $\Delta$ implies that
                  $\Gamma,z:B$ and $\Delta_0$ are disjoint variable contexts.
                  Applying the induction hypothesis for claim (i) to the
                  premise, with $M:A$ still distinguished, gives
                  \[
                      \wanp{\Theta,\,A,\,
                          (\erasure{\Gamma,z:B};\,
                          \erasure{\Delta_0})}.
                  \]
                  Since
                  \[
                      \erasure{\Gamma,z:B}=\erasure{\Gamma}\cup\{B\},
                      \qquad
                      \erasure{\Delta}
                      =\erasure{\Delta_0}\cup\{\sneg{B}\},
                  \]
                  clause (a) of
                  \cref{lem:wansing-context-negation-shift} yields
                  \[
                      \wanp{\Theta',\,A,\,
                          (\erasure{\Gamma};\,\erasure{\Delta})}.
                  \]
        \end{enumerate}

        \indcase{\rlnm{$\sneg_E$ R}}
        The last rule has the form
        \[
            \infer*[left=\rlnm{$\sneg_E$ R}]{
                \Sigma,\,P:\sneg{B} \types \Pi
            }{
                \Sigma \types P:B,\,\Pi
            }.
        \]
        If the principal formula $P:B$ is the distinguished subject, the
        mutual induction hypothesis gives a dual proof of $\sneg{B}$ with
        context $(\erasure{\Gamma};\erasure{\Delta})$, and Wansing rule (35)
        gives the required proof of $B$.

        Otherwise the principal formula is passive in $\Delta$, so
        $P=z$ is a variable and, up to exchange,
        \[
            \Delta=z:B,\,\Delta_0,
            \qquad
            \Pi=M:A,\,\Delta_0,
            \qquad
            \Gamma=\Sigma.
        \]
        The induction hypothesis applied to the premise with $M:A$
        distinguished gives
        \[
            \wanp{\Theta,\,A,\,
                (\erasure{\Gamma,z:\sneg{B}};\,
                \erasure{\Delta_0})}.
        \]
        Clause (b) of \cref{lem:wansing-context-negation-shift} gives
        \[
            \wanp{\Theta',\,A,\,
                (\erasure{\Gamma};\,\erasure{\Delta})},
        \]
        as required.

        \indcase{\rlnm{$\sneg_I$ L}}
        The last rule has the form
        \[
            \infer*[left=\rlnm{$\sneg_I$ L}]{
                \Sigma \types P:B,\,\Pi
            }{
                \Sigma,\,P:\sneg{B} \types \Pi
            }.
        \]
        Its principal formula is on the left and hence is passive relative
        to claim (i). Because it belongs to the variable context $\Gamma$,
        $P=z$ is a variable. Up to exchange,
        \[
            \Gamma=\Sigma,z:\sneg{B},
            \qquad
            \Pi=M:A,\,\Delta.
        \]
        The premise, with $M:A$ distinguished, is
        \[
            \Sigma \types M:A,\,z:B,\,\Delta.
        \]
        Its side contexts $\Sigma$ and $z:B,\Delta$ are disjoint variable
        contexts, so the induction hypothesis gives
        \[
            \wanp{\Theta,\,A,\,
                (\erasure{\Sigma};\,
                \erasure{z:B,\Delta})}.
        \]
        Clause (c) of \cref{lem:wansing-context-negation-shift} moves $B$
        from the right context to $\sneg{B}$ in the left context and yields
        \[
            \wanp{\Theta',\,A,\,
                (\erasure{\Gamma};\,\erasure{\Delta})}.
        \]

        \indcase{\rlnm{$\sneg_E$ L}}
        The last rule has the form
        \[
            \infer*[left=\rlnm{$\sneg_E$ L}]{
                \Sigma \types P:\sneg{B},\,\Pi
            }{
                \Sigma,\,P:B \types \Pi
            }.
        \]
        Again the principal formula is passive, hence $P=z$ is a variable.
        Up to exchange,
        \[
            \Gamma=\Sigma,z:B,
            \qquad
            \Pi=M:A,\,\Delta.
        \]
        Applying the induction hypothesis to
        \[
            \Sigma \types M:A,\,z:\sneg{B},\,\Delta
        \]
        gives a proof of $A$ with context
        $(\erasure{\Sigma};\erasure{z:\sneg{B},\Delta})$.
        Clause (d) of \cref{lem:wansing-context-negation-shift} therefore
        yields the required proof with context
        $(\erasure{\Gamma};\erasure{\Delta})$.
    \end{indproof}

    \smallskip
    \noindent
    For (ii), assume $\Gamma,\,M:A \types \Delta$. Again we proceed by cases on the last typing rule.
    Whenever a premise is of the form $\Gamma' \types N:B,\,\Delta'$, we appeal to the mutual induction hypothesis for claim (i). The cases are as follows.
    \begin{indproof}
        \indcase{\rlnm{Id}}
        The last rule is an instance of identity,
        \[
            \infer*[left=\rlnm{Id}]{ }{
                \Sigma,\,x:B \types x:B,\,\Pi
            }.
        \]
        The distinguished formula $M:A$ must be the principal formula
        $x:B$ on the left. Otherwise the principal left-hand formula
        $x:B$ would belong to $\Gamma$, while the principal right-hand
        formula $x:B$ would belong to $\Delta$, contradicting
        disjointness. Hence, up to exchange,
        \[
            M=x,\qquad A=B,\qquad
            \Gamma=\Sigma,\qquad \Delta=x:B,\Pi.
        \]
        By Wansing rule (2),
        \begin{equation}
            \label{proof:typing-to-2int-caseidl-eq1}
            \wandp{\dbar{B},\,B,\,(\varnothing;\,\{B\})}.
        \end{equation}
        Since $\varnothing\subseteq\erasure{\Gamma}$ and
        $\{B\}\subseteq\erasure{\Delta}$, structural rule (30)
        yields
        \[
            \wandp{\Pi,\,A,\,(\erasure{\Gamma};\,\erasure{\Delta})}
        \]
        for a suitable $\Pi$, as required.

        \indcase{\rlnm{$\falsefm$L}}
        Here the given derivation ends with
        \[
            \infer*[left=\rlnm{$\falsefm$L}]{ }{
                \Gamma,\,\unittm:\falsefm \types \Delta
            }
        \]
        By Wansing rule (4), we have
        \begin{equation}
            \label{proof:typing-to-2int-casebotl-eq1}
            \wandp{\dbar{\falsefm},\,\falsefm,\,(\varnothing;\,\varnothing)}
        \end{equation}
        Let $\Pi=\dbar{\falsefm}$. Since $\varnothing \subseteq \erasure{\Gamma}$ and $\varnothing \subseteq \erasure{\Delta}$, applying Wansing's structural rule (30) to \ref{proof:typing-to-2int-casebotl-eq1} gives
        \[
            \wandp{\Pi,\,\falsefm,\,(\erasure{\Gamma};\,\erasure{\Delta})}
        \]
        as required.

        \indcase{\rlnm{$\wedge$L$_1$}}
        Here the given derivation ends with
        \[
            \infer*[left=\rlnm{$\wedge$L$_1$}]{
                \Gamma,\,M:A_1 \types \Delta
            }{
                \Gamma,\,\inj_1(M):A_1 \wedge A_2 \types \Delta
            }
        \]
        By the induction hypothesis on the premise, there exists some $\Pi$ such that
        \begin{equation}
            \label{proof:typing-to-2int-caseandl1-eq1}
            \wandp{\Pi,\,A_1,\,(\erasure{\Gamma};\,\erasure{\Delta})}
        \end{equation}
        Applying Wansing rule (17) to \ref{proof:typing-to-2int-caseandl1-eq1} gives
        \[
            \wandp{\dblmprfraction{\Pi}{A_1 \wedge A_2},\,(A_1 \wedge A_2),\,(\erasure{\Gamma};\,\erasure{\Delta})}
        \]
        as required.

        \indcase{\rlnm{$\wedge$L$_2$}}
        This is analogous to the case where the last typing rule is \rlnm{$\wedge$L$_1$}, using Wansing rule (18) in place of rule (17).

        \indcase{\rlnm{$\vee$L}}
        Here the given derivation ends with
        \[
            \infer*[left=\rlnm{$\vee$L}]{
                \Gamma,\,M:A_1 \types \Delta
                \and
                \Gamma,\,N:A_2 \types \Delta
            }{
                \Gamma,\,(M,\,N):A_1 \vee A_2 \types \Delta
            }
        \]
        By the induction hypothesis on the two premises, there exist some $\Pi_1$, $\Pi_2$ such that
        \begin{subequations}
            \begin{equation}
                \label{proof:typing-to-2int-caseorl-eq1}
                \wandp{\Pi_1,\,A_1,\,(\erasure{\Gamma};\,\erasure{\Delta})}
            \end{equation}
            \begin{equation}
                \label{proof:typing-to-2int-caseorl-eq2}
                \wandp{\Pi_2,\,A_2,\,(\erasure{\Gamma};\,\erasure{\Delta})}.
            \end{equation}
        \end{subequations}
        Applying Wansing rule (19) to \ref{proof:typing-to-2int-caseorl-eq1} and \ref{proof:typing-to-2int-caseorl-eq2} gives
        \[
            \wandp{\dblmprfraction{\Pi_1 \quad \Pi_2}{(A_1 \vee A_2)},\,(A_1 \vee A_2),\,((\erasure{\Gamma} \cup \erasure{\Gamma});\,(\erasure{\Delta} \cup \erasure{\Delta}))}
        \]
        Since $\erasure{\Gamma} \cup \erasure{\Gamma}=\erasure{\Gamma}$ and $\erasure{\Delta} \cup \erasure{\Delta}=\erasure{\Delta}$, this yields
        \[
            \wandp{\Pi,\,A_1 \vee A_2,\,(\erasure{\Gamma};\,\erasure{\Delta})}
        \]
        as required, where $\Pi=\dblmprfraction{\Pi_1 \quad \Pi_2}{(A_1 \vee A_2)}$.

        \indcase{\rlnm{$\to$L}}
        Here the given derivation ends with
        \[
            \infer*[left=\rlnm{$\to$L}]{
                \Gamma \types M:A_1,\,\Delta
                \and
                \Gamma,\,N:A_2 \types \Delta
            }{
                \Gamma,\,M\cdot N:A_1 \to A_2 \types \Delta
            }
        \]
        By the IH on the two premises, there exist some $\Pi_1$, $\Pi_2$ such that
        \begin{subequations}
            \begin{equation}
                \label{proof:typing-to-2int-caseimpl-eq1}
                \wanp{\Pi_1,\,A_1,\,(\erasure{\Gamma};\,\erasure{\Delta})}
            \end{equation}
            \begin{equation}
                \label{proof:typing-to-2int-caseimpl-eq2}
                \wandp{\Pi_2,\,A_2,\,(\erasure{\Gamma};\,\erasure{\Delta})}.
            \end{equation}
        \end{subequations}
        Applying Wansing rule (20) to \ref{proof:typing-to-2int-caseimpl-eq1} and \ref{proof:typing-to-2int-caseimpl-eq2} gives
        \[
            \wandp{\dblmprfraction{\Pi_1 \quad \Pi_2}{(A_1 \to A_2)},\,(A_1 \to A_2),\,((\erasure{\Gamma} \cup \erasure{\Gamma});\,(\erasure{\Delta} \cup \erasure{\Delta}))}
        \]
        Since $\erasure{\Gamma} \cup \erasure{\Gamma}=\erasure{\Gamma}$ and $\erasure{\Delta} \cup \erasure{\Delta}=\erasure{\Delta}$, this yields
        \[
            \wandp{\Pi,\,A_1 \to A_2,\,(\erasure{\Gamma};\,\erasure{\Delta})}
        \]
        as required, where $\Pi=\dblmprfraction{\Pi_1 \quad \Pi_2}{(A_1 \to A_2)}$.

        \indcase{\rlnm{$\coto$L}}
        Here the given derivation ends with
        \[
            \infer*[
            left=\rlnm{$\coto$L},
            right={$x \notin \fv(\Gamma,\,\Delta)$}]{
                \Gamma,\,M:B \types x:A,\,\Delta
            }{
                \Gamma,\,\abs{x}{M}:A \coto B \types \Delta
            }
        \]
        By the induction hypothesis on the premise, there exists some $\Pi$ such that
        \begin{equation}
            \label{proof:typing-to-2int-casecotol-eq1}
            \wandp{\Pi,\,B,\,(\erasure{\Gamma};\,\erasure{x:A,\,\Delta})}
        \end{equation}
        Applying Wansing rule (21) to \ref{proof:typing-to-2int-casecotol-eq1} gives
        \[
            \wandp{\dblmprfraction{\Pi}{(A \coto B)},\,(A \coto B),\,(\erasure{\Gamma};\,\erasure{x:A,\,\Delta}\setminus\{A\})}
        \]
        Let $\Pi'=\dblmprfraction{\Pi}{(A \coto B)}$. Since $\erasure{x:A,\,\Delta}\setminus\{A\} \subseteq \erasure{\Delta}$, applying Wansing's structural rule (30) yields
        \[
            \wandp{\Pi',\,A \coto B,\,(\erasure{\Gamma};\,\erasure{\Delta})}
        \]
        as required.

        \indcase{\rlnm{$\truefm$L-elim}}
        Here the given derivation ends with
        \[
            \infer*[left=\rlnm{$\truefm$L-elim}]{
                \Gamma,\,M:\truefm \types \Delta
            }{
                \Gamma,\,\abort(M):A \types \Delta
            }
        \]
        By the induction hypothesis on the premise, there exists some $\Pi$ such that
        \begin{equation}
            \label{proof:typing-to-2int-casetoprelim-eq1}
            \wandp{\Pi,\,\truefm,\,(\erasure{\Gamma};\,\erasure{\Delta})}
        \end{equation}
        Applying Wansing rule (22) to \ref{proof:typing-to-2int-casetoprelim-eq1} gives
        \[
            \wandp{\Pi',\,A,\,(\erasure{\Gamma};\,\erasure{\Delta})}
        \]
        as required, where $\Pi'=\dblmprfraction{\Pi}{A}$.

        \indcase{\rlnm{AppL}}
        Here the given derivation ends with
        \[
            \infer*[left=\rlnm{AppL}]{
                \Gamma,\,M:A \coto B \types \Delta
                \and
                \Gamma,\,N:A \types \Delta
            }{
                \Gamma,\,M\,N:B \types \Delta
            }.
        \]
        By the induction hypothesis on the two premises, there exist some $\Pi_1$, $\Pi_2$ such that
        \begin{subequations}
            \begin{equation}
                \label{proof:typing-to-2int-caseappl-eq1}
                \wandp{\Pi_1,\,A \coto B,\,(\erasure{\Gamma};\,\erasure{\Delta})}
            \end{equation}
            \begin{equation}
                \label{proof:typing-to-2int-caseappl-eq2}
                \wandp{\Pi_2,\,A,\,(\erasure{\Gamma};\,\erasure{\Delta})}.
            \end{equation}
        \end{subequations}
        Applying Wansing rule (28) to \ref{proof:typing-to-2int-caseappl-eq1} and \ref{proof:typing-to-2int-caseappl-eq2} gives
        \[
            \wandp{\dblmprfraction{\Pi_1\enspace\Pi_2}{B},\,B,\,((\erasure{\Gamma}\cup\erasure{\Gamma});\,(\erasure{\Delta}\cup\erasure{\Delta}))}.
        \]
        Since $\erasure{\Gamma}\cup\erasure{\Gamma}=\erasure{\Gamma}$ and $\erasure{\Delta}\cup\erasure{\Delta}=\erasure{\Delta}$, this yields
        \[
            \wandp{\Pi,\,B,\,(\erasure{\Gamma};\,\erasure{\Delta})}
        \]
        as required, where $\Pi=\dblmprfraction{\Pi_1\enspace\Pi_2}{B}$.

        \indcase{\rlnm{PiL}}
        Here the given derivation ends with
        \[
            \infer*[left=\rlnm{PiL}]{
                \Gamma,\,M:A_1 \vee A_2 \types \Delta
            }{
                \Gamma,\,\pi_i(M):A_i \types \Delta
            }
        \]
        By the induction hypothesis on the premise, there exists some $\Pi$ such that
        \begin{equation}
            \label{proof:typing-to-2int-casepil-eq1}
            \wandp{\Pi,\,A_1 \vee A_2,\,(\erasure{\Gamma};\,\erasure{\Delta})}
        \end{equation}
        If $i=1$, then applying Wansing rule (24) to \ref{proof:typing-to-2int-casepil-eq1}. If $i=2$, then applying Wansing rule (25) to \ref{proof:typing-to-2int-casepil-eq1}. In either case, the application gives
        \[
            \wandp{\Pi',\,A_i,\,(\erasure{\Gamma};\,\erasure{\Delta})}
        \]
        as required, where $\Pi'=\dblmprfraction{\Pi}{A_i}$.

        \indcase{\rlnm{CaseL}}
        Here the given derivation ends with
        \[
            \infer*[
            left=\rlnm{CaseL},
            right={$x,\,y \notin \fv(\Gamma,\,\Delta)$}]{
                \Gamma,\,M:A_1 \wedge A_2 \types \Delta
                \and
                \Gamma,\,N:C \types x:A_1,\,\Delta
                \and
                \Gamma,\,P:C \types y:A_2,\,\Delta
            }{
                \Gamma,\,\casetm{M}{N}{P}:C \types \Delta
            }
        \]
        By the induction hypothesis on the three premises, there exists some $\Pi$, $\Pi_1$, $\Pi_2$ such that
        \begin{subequations}
            \begin{equation}
                \label{proof:typing-to-2int-casecasel-eq1}
                \wandp{\Pi,\,A_1 \wedge A_2,\,(\erasure{\Gamma};\,\erasure{\Delta})}
            \end{equation}
            \begin{equation}
                \label{proof:typing-to-2int-casecasel-eq2}
                \wandp{\Pi_1,\,C,\,(\erasure{\Gamma};\,\erasure{x:A_1,\,\Delta})}
            \end{equation}
            \begin{equation}
                \label{proof:typing-to-2int-casecasel-eq3}
                \wandp{\Pi_2,\,C,\,(\erasure{\Gamma};\,\erasure{y:A_2,\,\Delta})}.
            \end{equation}
        \end{subequations}
        Since $\erasure{x:A_1,\,\Delta}=\{A_1\}\cup\erasure{\Delta}$ and $\erasure{y:A_2,\,\Delta}=\{A_2\}\cup\erasure{\Delta}$, applying Wansing rule (23) to \ref{proof:typing-to-2int-casecasel-eq1}, \ref{proof:typing-to-2int-casecasel-eq2}, and \ref{proof:typing-to-2int-casecasel-eq3} gives
        \[
            \wandp{\dblmprfraction{\Pi\;\;\Pi_1\;\;\Pi_2}{C},\,C,\,(\erasure{\Gamma} \cup \erasure{\Gamma} \cup \erasure{\Gamma};\,\erasure{\Delta} \cup ((\{A_1\}\cup\erasure{\Delta}) \setminus \{A_1\}) \cup ((\{A_2\}\cup\erasure{\Delta}) \setminus \{A_2\}))}.
        \]
        Since $\erasure{\Gamma} \cup \erasure{\Gamma} \cup \erasure{\Gamma}=\erasure{\Gamma}$, $(\{A_1\}\cup\erasure{\Delta}) \setminus \{A_1\} \subseteq \erasure{\Delta}$, and $(\{A_2\}\cup\erasure{\Delta}) \setminus \{A_2\} \subseteq \erasure{\Delta}$, this yields
        \[
            \wandp{\Pi',\,C,\,(\erasure{\Gamma};\,\erasure{\Delta})}
        \]
        as required, where $\Pi'=\dblmprfraction{\Pi\;\;\Pi_1\;\;\Pi_2}{C}$.
        \indcase{\rlnm{Iota1L}}
        Here the given derivation ends with
        \[
            \infer*[left=\rlnm{Iota1L}]{
                \Gamma \types M:A_1 \coto A_2,\,\Delta
            }{
                \Gamma,\,\iota_1(M):A_1 \types \Delta
            }.
        \]
        By the induction hypothesis on the premise, there exists some $\Pi$ such that
        \begin{equation}
            \label{proof:typing-to-2int-caseiota1l-eq1}
            \wanp{\Pi,\,A_1 \coto A_2,\,(\erasure{\Gamma};\,\erasure{\Delta})}.
        \end{equation}
        Applying Wansing rule (16) to \ref{proof:typing-to-2int-caseiota1l-eq1} gives
        \[
            \wandp{\Pi',\,A_1,\,(\erasure{\Gamma};\,\erasure{\Delta})}
        \]
        as required, where $\Pi'=\dblmprfraction{\Pi}{A_1}$.

        \indcase{\rlnm{Iota2L}}
        Here the given derivation ends with
        \[
            \infer*[left=\rlnm{Iota2L}]{
                \Gamma,\,M:A_1 \to A_2 \types \Delta
            }{
                \Gamma,\,\iota_2(M):A_2 \types \Delta
            }
        \]
        By the induction hypothesis on the premise, there exists some $\Pi$ such that
        \begin{equation}
            \label{proof:typing-to-2int-caseiota2l-eq1}
            \wandp{\Pi,\,A_1 \to A_2,\,(\erasure{\Gamma};\,\erasure{\Delta})}
        \end{equation}
        Applying Wansing rule (27) to \ref{proof:typing-to-2int-caseiota2l-eq1} gives
        \[
            \wandp{\Pi',\,A_2,\,(\erasure{\Gamma};\,\erasure{\Delta})}
        \]
        as required, where $\Pi'=\dblmprfraction{\Pi}{A_2}$.

        \indcase{\rlnm{$\sneg_I$ L}}
        The last rule has the form
        \[
            \infer*[left=\rlnm{$\sneg_I$ L}]{
                \Sigma \types P:B,\,\Pi
            }{
                \Sigma,\,P:\sneg{B} \types \Pi
            }.
        \]
        We distinguish whether the principal formula is the distinguished
        subject.
        \begin{enumerate}
            \item If $P:\sneg{B}$ is the distinguished subject, then
                  \[
                      M=P,\qquad A=\sneg{B},\qquad
                      \Gamma=\Sigma,\qquad \Delta=\Pi.
                  \]
                  The induction hypothesis for claim (i) gives a proof of
                  $B$ with context
                  $(\erasure{\Gamma};\erasure{\Delta})$. Wansing rule (32)
                  gives the required dual proof of $\sneg{B}$.

            \item If the principal formula is passive, it belongs to the
                  variable context $\Gamma$, so $P=z$ is a variable. Up to
                  exchange, there is a variable context $\Sigma'$ such that
                  \[
                      \Sigma=\Sigma',M:A,
                      \qquad
                      \Gamma=\Sigma',z:\sneg{B},
                      \qquad
                      \Delta=\Pi.
                  \]
                  The premise, with $M:A$ distinguished, is
                  \[
                      \Sigma',M:A \types z:B,\,\Delta.
                  \]
                  Its side contexts are disjoint variable contexts, so the
                  induction hypothesis for claim (ii) gives
                  \[
                      \wandp{\Theta,\,A,\,
                          (\erasure{\Sigma'};\,
                          \erasure{z:B,\Delta})}.
                  \]
                  Clause (c) of
                  \cref{lem:wansing-context-negation-shift}, in its dual-proof
                  form, gives
                  \[
                      \wandp{\Theta',\,A,\,
                          (\erasure{\Gamma};\,\erasure{\Delta})}.
                  \]
        \end{enumerate}

        \indcase{\rlnm{$\sneg_E$ L}}
        The last rule has the form
        \[
            \infer*[left=\rlnm{$\sneg_E$ L}]{
                \Sigma \types P:\sneg{B},\,\Pi
            }{
                \Sigma,\,P:B \types \Pi
            }.
        \]
        If the principal formula $P:B$ is the distinguished subject, the
        induction hypothesis gives a proof of $\sneg{B}$ with context
        $(\erasure{\Gamma};\erasure{\Delta})$, and Wansing rule (33) gives the
        required dual proof of $B$.

        Otherwise $P=z$ is a variable and, up to exchange,
        \[
            \Sigma=\Sigma',M:A,
            \qquad
            \Gamma=\Sigma',z:B,
            \qquad
            \Delta=\Pi.
        \]
        Applying the induction hypothesis for claim (ii) to
        \[
            \Sigma',M:A \types z:\sneg{B},\,\Delta
        \]
        gives a dual proof of $A$ with context
        $(\erasure{\Sigma'};\erasure{z:\sneg{B},\Delta})$. Clause (d) of
        \cref{lem:wansing-context-negation-shift}, in its dual-proof form,
        yields the required dual proof with context
        $(\erasure{\Gamma};\erasure{\Delta})$.

        \indcase{\rlnm{$\sneg_I$ R}}
        The last rule has the form
        \[
            \infer*[left=\rlnm{$\sneg_I$ R}]{
                \Sigma,\,P:B \types \Pi
            }{
                \Sigma \types P:\sneg{B},\,\Pi
            }.
        \]
        Its principal formula is on the right and hence is passive relative
        to claim (ii). Thus $P=z$ is a variable. Up to exchange,
        \[
            \Sigma=\Gamma,M:A,
            \qquad
            \Delta=z:\sneg{B},\,\Pi.
        \]
        Applying the induction hypothesis for claim (ii) to the premise
        \[
            \Gamma,z:B,\,M:A \types \Pi
        \]
        gives a dual proof of $A$ with context
        $(\erasure{\Gamma,z:B};\erasure{\Pi})$. Clause (a) of
        \cref{lem:wansing-context-negation-shift}, in its dual-proof form,
        moves $B$ to $\sneg{B}$ on the right and gives the required context
        $(\erasure{\Gamma};\erasure{\Delta})$.

        \indcase{\rlnm{$\sneg_E$ R}}
        The last rule has the form
        \[
            \infer*[left=\rlnm{$\sneg_E$ R}]{
                \Sigma,\,P:\sneg{B} \types \Pi
            }{
                \Sigma \types P:B,\,\Pi
            }.
        \]
        Again the principal formula is passive, so $P=z$ is a variable and,
        up to exchange,
        \[
            \Sigma=\Gamma,M:A,
            \qquad
            \Delta=z:B,\,\Pi.
        \]
        The induction hypothesis applied to
        \[
            \Gamma,z:\sneg{B},\,M:A \types \Pi
        \]
        gives a dual proof of $A$ with context
        $(\erasure{\Gamma,z:\sneg{B}};\erasure{\Pi})$. Clause (b) of
        \cref{lem:wansing-context-negation-shift}, in its dual-proof form,
        yields the required dual proof with context
        $(\erasure{\Gamma};\erasure{\Delta})$.
    \end{indproof}
\end{proof}


\section{Supplementary Material for Section~\ref{sec:2hol}}

This appendix details the admissibility of the context switching rules.  For this, we first note that the system satisfies weakening and exchange among assumptions and co-assumptions.

\begin{lemma}[Exchange]\label{lem:hol2-exchange}
  In the following, let $\alpha$ and $\beta$ be, each, either a type variable or a term variable and let $S$ and $T$ be, each, either a kind or a type.
  \begin{enumerate}[(i)]
    \item If $\Gamma_1,\alpha\ty:S,\beta\ty:T,\Gamma_2;\,M:A \types \Delta$ and $\alpha \notin \fv(T)$, then $\Gamma_1,\beta\ty:T,\alpha\ty:S,\Gamma_2;\,M:A \types \Delta$.
    \item If $\Gamma_1,\alpha\ty:S,\beta\ty:T,\Gamma_2 \types M:A;\, \Delta$ and $\alpha \notin \fv(T)$, then $\Gamma_1,\beta\ty:T,\alpha\ty:S,\Gamma_2;\,M:A \types \Delta$.
    \item If $\Gamma;\,M:A \types \Delta_1,x\ty:A,y\ty:B,\Delta_2$ then $\Gamma;\,M:A \types \Delta_1,x\ty:A,y\ty:B,\Delta_2$
    \item If $\Gamma \types M:A;\, \Delta_1,x\ty:A,y\ty:B,\Delta_2$ then $\Gamma;\,M:A \types \Delta_1,x\ty:A,y\ty:B,\Delta_2$
  \end{enumerate}
\end{lemma}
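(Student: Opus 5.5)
We prove all four clauses simultaneously by induction on the derivation of the given judgement, proceeding by case analysis on the last rule applied. Clauses (iii) and (iv), as literally stated, have identical premise and conclusion apart from the position of the principal formula. We read them as the intended exchange of adjacent co-assumptions, yielding $\Delta_1,y\ty:B,x\ty:A,\Delta_2$, with the principal formula kept in place. Likewise, in (ii) we read the conclusion as $\Gamma_1,\beta\ty:T,\alpha\ty:S,\Gamma_2 \types M:A;\,\Delta$, since the rules never move the principal formula.

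Simultaneity is necessary because rules such as \rlnm{$\to$-Intro-L}, \rlnm{$\sneg{}$-Intro-L} and \rlnm{$\to$-Elim-L$_1$} pass between the proof and refutation judgements. In addition, \rlnm{$\forall$-Elim-L} extends the co-assumption context with $x\ty:B$, so the assumption-side clauses can call on the co-assumption-side ones.

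\textbf{Base cases.} These are \rlnm{Id-L} and \rlnm{Id-R}. The side condition $x\ty:A \in \Delta$ (respectively $x\ty:A\in\Gamma$) is insensitive to reordering. What remains is to show $\types \Gamma',\Delta'^R$ for the permuted environments.

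For this we first prove an auxiliary exchange lemma for well-formedness and kinding, by induction on the kinding derivation. If $\types \Gamma_1,\alpha\ty:S,\beta\ty:T,\Gamma_2$ and $\alpha\notin\fv(T)$, then $\types \Gamma_1,\beta\ty:T,\alpha\ty:S,\Gamma_2$, and kinding judgements are preserved as well. The key point is that $T$ was kinded in $\Gamma_1,\alpha\ty:S$ without using $\alpha$. This needs a small strengthening lemma: a kinding derivation $\Gamma_1,\alpha\ty:S \types T:\starsort$ with $\alpha\notin\fv(T)$ can be replayed in $\Gamma_1$. This holds because \rlnm{Var} only looks up free variables, and term-variable assignments never appear in types.

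For co-assumptions, $\Delta$ contains only term typings, which never occur free in types. So swapping them in $\Delta^R$ needs no side condition, and the auxiliary lemma applies directly.

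\textbf{Inductive cases.} For every other rule, we apply the induction hypothesis to each premise and reapply the rule. Premises carrying an extended context, such as $\Gamma,a\ty:K$, $\Gamma,x\ty:A$, or $x\ty:B,\Delta$ in \rlnm{$\forall$-Elim-L}, keep the swapped pair in the same relative position, so the hypothesis applies unchanged. The freshness side conditions are preserved because they refer only to sets of free variables. Premises of shape $\Gamma\types A:K$ (in \rlnm{Conv}, \rlnm{$\forall$-Intro-L} and \rlnm{$\forall$-Elim-R}) are handled by the auxiliary kinding exchange lemma.

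\textbf{Expected obstacle.} The main difficulty is the interaction with the side condition $\alpha\notin\fv(T)$ and the strengthening step in the auxiliary lemma. The typing rules themselves commute with exchange routinely.
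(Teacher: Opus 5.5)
Your proposal is correct and follows the same route as the paper. The paper argues by induction on the derivation, with \rlnm{Id-L}/\rlnm{Id-R} the only substantive cases, where $\alpha\notin\fv(T)$ is exactly what keeps the premise $\types\Gamma,\Delta^R$ valid after the swap. You go further than the paper by isolating the exchange/strengthening lemma for kinding, which also covers the kinding premises of \rlnm{Conv}, \rlnm{$\forall$-Intro-L} and \rlnm{$\forall$-Elim-R} that the paper treats as immediate. Your reading of the garbled clauses (ii)--(iv) is the intended one. One small correction: \rlnm{$\forall$-Elim-L} does not make the assumption-side clauses depend on the co-assumption-side ones, though proving all four clauses together does no harm.
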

\begin{proof}
  All by induction on the derivation.  The only interesting cases are when the proof is concluded by \rlnm{Id-L} or \rlnm{Id-R}.  In these cases, note that the side condition on these rules do not depend on the order of assumptions or co-assumptions, and that the condition $\alpha \notin \fv(T)$ ensures that the premise concerning the well-formedness of the context remains valid after exchange.  All the remaining cases follow immediately from the induction hypotheses.
\end{proof}

Note that the height of the proof tree is preserved through the the above operations of exchange.

\begin{lemma}[Weakening]\label{lem:hol2-weakening}
  Both of the following:
  \begin{enumerate}
    \item If $\Gamma \types M:A;\,\Delta$ and $\types \Gamma,x\ty:B,\Delta^R$, then $\Gamma,x\ty:B \types M : A;\,\Delta$ and $\Gamma \types M:A;\,x\ty:B,\Delta$.
    \item If $\Gamma;\,M:A \types \Delta$ and $\types \Gamma,x\ty:B,\Delta^R$, then $\Gamma,x\ty:B;\,M:A \types \Delta$ and $\Gamma;\,M:A \types x\ty:B,\Delta$.
  \end{enumerate}
\end{lemma}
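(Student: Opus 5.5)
The plan is to prove both parts simultaneously by induction on the height of the given derivation, handling the two conclusions (weakening on the assumption side and on the co-assumption side) together in each case. Since the rules of \holtty{} are stated with the ambient environments $\Gamma$ and $\Delta$ carried unchanged to most premises, the bulk of the argument is routine: apply the induction hypothesis to each premise and reapply the same rule. The side condition $\types \Gamma,x\ty:B,\Delta^R$ has two uses. First, it guarantees that $x$ is fresh, so no clash arises with existing subjects. Second, it guarantees that $B$ is well-kinded in $\Gamma$ (by inversion of \rlnm{Typing}, since the prefix $\Gamma,x\ty:B$ is well formed), which gives well-formedness of the extended environment needed in premises.

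The base cases are \rlnm{Id-L} and \rlnm{Id-R}. Their premise is $\types \Gamma,\Delta^R$, and we must produce $\types \Gamma,x\ty:B,\Delta^R$ (which is exactly the hypothesis) for the assumption-side weakening. For the co-assumption-side weakening we need $\types \Gamma,(x\ty:B,\Delta)^R = \types \Gamma,\Delta^R,x\ty:B$. I would obtain this from the hypothesis by a small auxiliary fact about well-formed environments. Namely, if $\types \Gamma,x\ty:B,\Gamma'$ and $\Gamma'$ contains only type assignments, then $\types \Gamma,\Gamma',x\ty:B$. This holds because typing entries in $\Gamma'$ do not bind type variables used in $B$, so a kinding-weakening lemma for $\Gamma\types B:\starsort$ transports the kinding of $B$ past them. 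The side condition $x\ty:A\in\Gamma$ or $\in\Delta$ is clearly preserved. The \rlnm{Conv} case requires weakening of kinding judgements $\Gamma\types A:\starsort$ to $\Gamma,x\ty:B\types A:\starsort$, and this holds by the standard Pure Type System style weakening for the kinding system, which I would state and prove first as an auxiliary lemma by induction on kinding derivations.

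The binding rules \rlnm{$\forall$-Intro-R}, \rlnm{$\forall$-Elim-L}, \rlnm{$\to$-Intro-R} and \rlnm{$\forall$-Intro-L} (the last via its kinding premise) extend $\Gamma$ by $a\ty:K$ or $y\ty:C$ at the right end. Here I first $\alpha$-rename the bound variable to avoid $x$ and any variable in $B$. To apply the induction hypothesis to a premise $\Gamma,a\ty:K\vdash\ldots;\Delta$ I need $\types\Gamma,a\ty:K,x\ty:B,\Delta^R$, derived from the hypothesis by inserting the fresh $a\ty:K$ (again environment weakening). The induction hypothesis then yields $\Gamma,a\ty:K,x\ty:B\vdash\ldots$. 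Lemma~\ref{lem:hol2-exchange} swaps this to $\Gamma,x\ty:B,a\ty:K$, which is legitimate since $a\notin\fv(B)$ by freshness. Exchange is height-preserving, which is why I induct on height rather than on structure. \rlnm{$\forall$-Elim-L} also adds $y\ty:B'$ to the co-assumptions, and the co-assumption version of the induction hypothesis is handled with the co-assumption exchange clauses.

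The main obstacle I expect is precisely this bookkeeping of well-formedness of $\Gamma,x\ty:B,\Delta^R$ under reordering, that is, showing the extended environment remains well formed in each premise context. I would isolate it as a single lemma stating that well-formedness is preserved under inserting a fresh well-kinded declaration anywhere after its free variables are declared, and under permuting declarations that respect dependencies. I would prove it by induction on $\types$ derivations together with kinding weakening, and after that all rule cases become mechanical.
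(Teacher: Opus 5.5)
Your proposal is correct and follows essentially the paper's proof. Like the paper, you argue by simultaneous induction on derivations, reapply each rule to the induction-hypothesis outputs, and use exchange after the IH in the binding cases, as the paper does for \rlnm{$\forall$-Elim-L}. You are also more careful than the paper about well-formedness bookkeeping, for instance turning $\types \Gamma,x\ty:B,\Delta^R$ into $\types \Gamma,\Delta^R,x\ty:B$ in the \rlnm{Id} cases, which the paper passes over.

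One small correction to your rationale: you only apply exchange to the output of the IH, so its height-preservation plays no role here. It matters in the paper's context-switching proof, where exchange is applied before the IH, so inducting on height is harmless here but not forced by exchange.
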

\begin{proof}
  The proof is a straightforward induction on derivations.  We show only a few of the more interesting cases:
  \begin{description}
    \item[\rlnm{Id-L},\rlnm{Id-R}] These cases are symmetrical, so we do the former only.  In this case, $M$ is some variable $x'$, we have $\types \Gamma,\Delta^R$ and $x'\ty:A \in \Delta$.  Suppose $\types \Gamma,x\ty:B,\Delta^R$, then the result follows from \rlnm{Id-L}.
    \item[\rlnm{$\forall$-Elim-L}] In this case, $M$ has shape $\unpack_K\,[\tyabs{a}{K}{B}]\,M'\,[A]\,(\Tyabs{a}{K}{\tyabs{x'}{\sneg{B'}}{N}})$.  Spp $\types \Gamma,x\ty:B,\Delta^R$.  Then, by the freshness of $a$ and $x'$, it follows that $\Gamma,a\ty:K,x\ty:B,x'\ty:B',\Delta^R$.  It follows from the induction hypothesis that all of the following: 
    \begin{enumerate}[(i)]
      \item $\Gamma,x\ty:B;\,M':\PiType{a}{K}{B'} \types \Delta$
      \item $\Gamma;\,M':\PiType{a}{K}{B'} \types x\ty:B,\Delta$
      \item $\Gamma,a\ty:K,x\ty:B;\,N:A \types x'\ty:B',\Delta$ 
      \item $\Gamma,a\ty:K;\,N:A \types x\ty:B,x'\ty:B',\Delta$
    \end{enumerate}
    By exchange, we can obtain $\Gamma,x:B,a:K;\,N:A \types x':B',\Delta$ from (iii), and together with (i), using \rlnm{$\forall$-Elim-R} we obtain the first part of the conclusion.  Similarly, we can obtain $\Gamma,a:K;\,N:A \types x':B',x:B,\Delta$, and then, together with (ii), we obtain the second part of the conclusion.
    \item[\rlnm{$\sneg{}$-Intro-L}]  In this case, $A$ is of shape $\sneg{C}$ and we aim to show (2).  Suppose, $\types \Gamma,x\ty:B,\Delta^R$.  Then it follows from the induction hypothesis, case (1), that $\Gamma,x\ty:B \types M:C;\,\Delta$ and $\Gamma \types M:C;\,x\ty:B,\Delta$.  Therefore, the conclusion follows from \rlnm{$\sneg{}$-Intro-L}.
    \item[\rlnm{$\sneg{}$-Intro-R}]  In this case, $A$ is of shape $\sneg{C}$ and we aim to show (1).  Suppose, $\types \Gamma,x\ty:B,\Delta^R$.  Then it follows from the induction hypothesis, case (2), that $\Gamma,x\ty:B;\,M:C \types \Delta$ and $\Gamma;\,M:C \types x\ty:B,\Delta$.  Therefore, the conclusion follows from \rlnm{$\sneg{}$-Intro-R}.
  \end{description}
\end{proof}

\begin{lemma}[Context Switching]
  All of the following:
  \begin{enumerate}[(i)]
    \item $\Gamma;\,M:B \types x\ty:A,\,\Delta$ implies $\Gamma,\,x\ty:\sneg{A};\,M:B \types \Delta$
    \item $\Gamma \types M:B;\,x\ty:A,\,\Delta$ implies $\Gamma,\,x\ty:\sneg{A} \types M:B;\,\Delta$
    \item $\Gamma,\,x\ty:A \types M:B;\,\Delta$ implies $\Gamma \types M:B;\,x\ty:\sneg{A},\, \Delta$
    \item $\Gamma,\,x\ty:A;\,M:B \types \Delta$ implies $\Gamma;\,M:B \types x\ty:\sneg{A},\,\Delta$
  \end{enumerate}
\end{lemma}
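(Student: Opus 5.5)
We prove (i)--(iv) simultaneously by induction on the height of the given derivation. Exchange (Lemma~\ref{lem:hol2-exchange}) is height-preserving, so we may always move the distinguished variable $x$ to the end (or front) of its environment before applying a rule. In each case the variable $x$ migrates from one side to the other. Its type is wrapped in (or unwrapped from) a strong negation, so the well-formedness side condition $\types \Gamma,\Delta^R$ is preserved: $\Gamma \types A : \starsort$ holds iff $\Gamma \types \sneg{A}:\starsort$, by \rlnm{SNeg}.

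The base cases are \rlnm{Id-L} and \rlnm{Id-R}, and they are the only genuinely interesting ones. Take (i) with \rlnm{Id-L}. Here $M$ is a variable $y$ with $y\ty:B$ in the co-assumptions $x\ty:A,\Delta$.
\begin{itemize}
\item If $y \neq x$, then $y\ty:B \in \Delta$, and \rlnm{Id-L} applies directly in the new context $\Gamma,x\ty:\sneg{A}$.
\item If $y = x$, then $B = A$, and we must derive $\Gamma,x\ty:\sneg{A};\,x:A \types \Delta$. We do this by \rlnm{$\sneg{}$-Elim-R} applied to $\Gamma,x\ty:\sneg{A} \types x:\sneg{A};\,\Delta$, which is an instance of \rlnm{Id-R}.
\end{itemize}
Symmetrically:
\begin{itemize}
\item For (ii) with \rlnm{Id-R}, the variable lies in $\Gamma$, so no switching is needed.
\item For (iii) with \rlnm{Id-R} and $M=x$, we use \rlnm{$\sneg{}$-Elim-L} on $\Gamma;\,x:\sneg{A} \types x\ty:\sneg{A},\Delta$.
\item For (iv) with \rlnm{Id-L}, the variable is in $\Delta$ and nothing changes. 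When the principal variable is instead the switched $x$ itself, we use \rlnm{$\sneg{}$-Intro-R} or \rlnm{$\sneg{}$-Intro-L} on an identity of type $\sneg{A}$ or $A$, as appropriate.
\end{itemize}
So we need both identity rules together with all four strong-negation rules on the principal formula to bridge between $x\ty:A$ on one side and $x\ty:\sneg{A}$ on the other.

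All other rules are handled uniformly: apply the induction hypothesis to each premise, then reapply the same rule.
\begin{itemize}
\item Rules whose premises keep the same $\Gamma,\Delta$ (\rlnm{Conv}, $\to$-rules, $\sneg{}$-rules, \rlnm{$\forall$-Elim-R}) are immediate. For mixed rules such as \rlnm{$\to$-Intro-L} or \rlnm{$\to$-Elim-L$_1$}, we invoke the appropriate clause (i)/(ii) or (iii)/(iv), because the principal formula changes side while the switched variable does not.
\item For the kinding premises ($\Gamma \types A':K$) of \rlnm{$\forall$-Intro-L}, \rlnm{$\forall$-Elim-R} and \rlnm{Conv}, adding or removing a term assumption does not affect kinding. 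This is a standard strengthening/weakening fact for kinding, since kinding never looks at term variables; we note it as a routine auxiliary lemma.
\item Binding rules (\rlnm{$\to$-Intro-R}, \rlnm{$\forall$-Intro-R}, \rlnm{$\forall$-Elim-L}) extend $\Gamma$ with $y\ty:A'$ or $a\ty:K$, or extend $\Delta$ with $x'\ty:B'$. We first $\alpha$-rename the bound variable to be fresh for $x$, then use height-preserving exchange to bring the switched variable to the correct position. In particular, a new assumption $a\ty:K$ must precede $x\ty:\sneg{A}$ only when $a\in\fv(A)$, which freshness excludes, so exchange is licensed. We then apply the induction hypothesis, which is legitimate since exchange preserves height, and reapply the rule.
\end{itemize}

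The main obstacle is this bookkeeping in the \rlnm{$\forall$-Elim-L} case and in \rlnm{$\forall$-Intro-R}. The environment is an ordered, dependency-sensitive sequence, and $\Delta$ is read reversed in $\types\Gamma,\Delta^R$. Moving $x$ from the front of $\Delta$ to the end of $\Gamma$ keeps its position in $\Gamma,\Delta^R$ unchanged, so well-formedness is preserved exactly. This observation is why the statement places $x$ at the front of $\Delta$ and the end of $\Gamma$. It is also why height-preserving exchange is needed: the induction hypothesis must apply after reordering.
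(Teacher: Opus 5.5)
Your proposal is correct and is essentially the paper's proof. It uses simultaneous induction on derivation height. The only non-trivial base cases are \rlnm{Id-L} in (i) and \rlnm{Id-R} in (iii), closed by \rlnm{$\sneg{}$-Elim-R} (resp.\ \rlnm{$\sneg{}$-Elim-L}) applied to an identity on $x\ty:\sneg{A}$. The binder cases such as \rlnm{$\forall$-Elim-L} are handled by height-preserving exchange around the induction hypothesis.

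Two small corrections. First, since $(x\ty:A,\Delta)^R = \Delta^R,x\ty:A$, switching does move $x$ within $\Gamma,\Delta^R$, from the end to just after $\Gamma$, so it is not ``in place''. Well-formedness is preserved instead because $\Delta$ holds only term typings and kinding ignores term variables, which is the strengthening fact you already invoke. Second, the \rlnm{$\sneg{}$-Intro} variants you mention for identities never arise, because in (ii) and (iv) the principal variable of an identity cannot be $x$.
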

\begin{proof}
  We prove cases (i) and (ii) simultaneously by induction on the height of the derivation.  Cases (iii) and (iv) are completely symmetrical.
  \begin{description}
    \item[\rlnm{Id-L}] Here we only consider (i).  There are two subcases.  If $M=x$ then, by the well-formedness of the context, necessarily $A=B$ and we can obtain a proof of $\Gamma,\,x\ty:\sneg{A};\,x:A \types \Delta$, therefore, we can obtain a proof of the conclusion by \rlnm{$\sneg{}$-Elim-R} and \rlnm{Id-R}.  If $M \neq x$, then the conclusion follows immediately by \rlnm{Id-L}.
    \item[\rlnm{Id-R}] Here we only consider (ii) and the proof is symmetrical to the previous case.
    \item[\rlnm{$\sneg{}$-Intro-L}]  Here $B$ has shape $\sneg{C}$ we only consider (i).  It follows from the induction hypothesis, case (ii), that $\Gamma,\,x\ty:\sneg{A} \types M:C;\,\Delta$ and so the conclusion follows from \rlnm{$\sneg{}$-Intro-L}.
    \item[\rlnm{$\sneg{}$-Intro-R}] Here $B$ has shape $\sneg{C}$ and we only consider (ii).  It follows from the induction hypothesis, case (i), that $\Gamma,x\ty:\sneg{A};\,M:C \types \Delta$ and so the conclusion follows from \rlnm{$\sneg{}$-Intro-R}.
    \item[\rlnm{$\sneg{}$-Elim-L}] Here we only consider (ii).  It follows from the induction hypothesis, case (i), that $\Gamma,\,x\ty:\sneg{A};\,M:\sneg{B} \types \Delta$ and so the conclusion follows from \rlnm{$\sneg{}$-Elim-L}.
    \item[\rlnm{$\sneg{}$-Elim-R}] Here we only consider (i).  It follows from the induction hypothesis, case (ii), that $\Gamma,\,x\ty:\sneg{A} \types M : \sneg{B};\,\Delta$ and so the conclusion follows from \rlnm{$\sneg{}$-Elim-R}.
    \item[\rlnm{$\forall$-Elim-L}] In this case, the term $M$ is of shape $\unpack_K\,[\tyabs{a}{K}{B'}]\,M'\,[C]\,(\Tyabs{a}{K}{\tyabs{y}{B'}{N}})$.  We only consider (i).  It follows from the induction hypothesis that $\Gamma,\,y\ty:\sneg{A};\,M':\PiType{a}{K}{B'} \types \Delta$.  By applying exchange to the right side of the second premise we obtain a proof of $\Gamma,\,a\ty:K;\,N:C \types x\ty:A,\,y\ty:B',\,\Delta$ of the same height.  Therefore, we can apply the induction hypothesis to obtain $\Gamma,\,a\ty:K,\,x\ty:\sneg{A};\,N:C \types y\ty:B',\,\Delta$.  By applying exchange to the left-hand side, we obtain $\Gamma,\,x\ty:\sneg{A},\,a\ty:K;\,N:C \types y\ty:B',\,\Delta$ and then the result follows from \rlnm{$\forall$-Elim-L}.
    \item[\rlnm{$\forall$-Intro-R}, \rlnm{$\to$-Intro-R}] These cases can be obtained in a similar way, as a combination of exchange and the induction hypotheses.
    \item All other cases follow immediately from the induction hypotheses.
  \end{description}
\end{proof}
\section{Supplementary Material for Section~\ref{sec:hol2-power}}

This appendix contains some details concerning the development of the expressive power of \holtty{} that were omitted from the main text.

\subsection{The Type $A \coto B$}
We defined coimplication by:
\begin{align*}
    A \coto B &\coloneqq \sneg{(\sneg{A} \to \sneg{B})}
\end{align*}

Due to the admissibility of the context switching rules (Figure~\ref{fig:ctx-switching}), the introduction rules for coimplication are admissible:
\begin{mathpar}
  \infer*{
    \infer*{
      \infer*{
        \infer*{
          \Gamma;\,M:B \types x:A,\,\Delta
        }
        {
          \Gamma,\,x:\sneg{A};\,M:B \types \Delta
        }
      }
      {
        \Gamma,\,x:\sneg{A} \types M:\sneg{B};\, \Delta
      }
    }
    {
        \Gamma \types \tyabs{x}{\sneg{A}}{M} : \sneg{A} \to \sneg{B};\,\Delta
    }
  }
  {
    \Gamma;\,\tyabs{x}{\sneg{A}}{M} : A \coto B \types \Delta
  }
  \and
  \infer*{
    \infer*{
      \infer*{
        \Gamma;\,M:A \types \Delta
      }
      {
        \Gamma \types M : \sneg{A};\,\Delta
      }
      \and
      \infer*{
        \Gamma \types N : B;\,\Delta
      }
      {
        \Gamma;\,N:\sneg{B} \types \Delta
      }
    }
    {
      \Gamma;\, M \cdot_{\sneg{A},\sneg{B}} N : \sneg{A} \to \sneg{B} \types \Delta
    }
  }
  {
    \Gamma \types M \cdot_{\sneg{A},\sneg{B}} N : A \coto B;\,\Delta
  }
\end{mathpar}
Similarly, the elimination rules can be derived, $i \in \{1,2\}$:
\begin{mathpar}
  \infer*{
    \infer*{
      \infer*{
        \Gamma;\,M:A \coto B \types \Delta
      }
      {
        \Gamma \types M : \sneg{A} \to \sneg{B};\,\Delta
      }
      \and
      \infer*{
        \Gamma;\,N:A \types \Delta
      }
      {
        \Gamma \types N:\sneg{A};\,\Delta
      }
    }
    {
      \Gamma \types M\,N : \sneg{B};\,\Delta
    }
  }
  {
    \Gamma;\,M\,N:B \types \Delta
  }
\and
  \infer*{
    \infer*{
      \infer*{
         \Gamma \types M : A_1 \coto A_2;\,\Delta
      }
      {
          \Gamma;\,M : \sneg{A_1} \to \sneg{A_2} \types \Delta
      }
    }
    {
      \Gamma \types \iota_i\,[\sneg{A_1}]\,[\sneg{A_2}]\,M : \sneg{A_i};\,\Delta
    }
  }
  {
    \Gamma;\,\iota_i\,[\sneg{A_1}]\,[\sneg{A_2}]\,M : A_i \types \Delta
  }
\end{mathpar}

\subsection{The Type $A \refwedge B$}

In Section~\ref{sec:hol2-power}, we defined:
\begin{align*}
  A \refwedge B &\coloneqq \SigmaType{b}{\starsort}{(A \coto b) \coto (B \coto b) \coto b}\\
  \inj_1 &\coloneqq \Tyabs{a}{\starsort}{\Tyabs{b}{\starsort}{\tyabs{z}{a}{\Tyabs{c}{\starsort}{\tyabs{x}{\sneg{(a \coto c)}}{\tyabs{y}{\sneg{(b \coto c)}}{x\,z}}}}}}\\
  \inj_2 &\coloneqq \Tyabs{a}{\starsort}{\Tyabs{b}{\starsort}{\tyabs{z}{b}{\Tyabs{c}{\starsort}{\tyabs{x}{\sneg{(a \coto c)}}{\tyabs{y}{\sneg{(b \coto c)}}{y\,z}}}}}}\\
  \refcase &\coloneqq \Tyabs{a}{\starsort}{\Tyabs{b}{\starsort}{\tyabs{x}{(a \refwedge b)}{\Tyabs{c}{\starsort}{\tyabs{f}{\sneg{(a \coto c)}}{\tyabs{g}{\sneg{(b \coto c)}}{x\,[c]\,f\,g}}}}}}
\end{align*}

We now establish the types of these constructions in detail.
Below, the first of the injections, the other is symmetrical.  Let $\Gamma \coloneqq [a:\starsort,b:\starsort,c:\starsort]$ in the following:
\begin{mathpar}
  \infer*{
    \infer*{
      \infer*{
        \infer*{
          \infer*{
            \infer*{
              \infer*{
                \infer*{ }
                {
                  \Gamma;\,x:a \coto c \types x\ty:b \coto c,\,y\ty:a \coto c,\,z\ty:a
                }
                \and
                \infer*{ }
                {
                  \Gamma;\,z:A \types y\ty:b \coto c,\,x\ty:a \coto c,\,z\ty:a
                }
              }
              {
                \Gamma;\,x\,z:c \types y\ty:b \coto c,\,x\ty:a \coto c,\,z\ty:a
              }
            }
            {
              \Gamma;\,\tyabs{y}{\sneg{(b \coto c)}}{x\,z}:(b \coto c) \coto c \types x\ty:a \coto c,\,z\ty:a
            }
          }
          {
            \Gamma;\,\tyabs{x}{\sneg{(a \coto c)}}{\tyabs{y}{\sneg{(b \coto c)}}{x\,z}}:(a \coto c) \coto (b \coto c) \coto c \types z\ty:a
          }
        }
        {
          \Gamma;\,\Tyabs{c}{\starsort}{\tyabs{x}{\sneg{(a \coto c)}}{\tyabs{y}{\sneg{(b \coto c)}}{x\,z}}}:a \refwedge b \types z\ty:a
        }
      }
      {
        \Gamma;\,a\ty:\starsort,\,b\ty:\starsort,\,\tyabs{z}{a}{\Tyabs{c}{\starsort}{\tyabs{x}{\sneg{(a \coto c)}}{\tyabs{y}{\sneg{(b \coto c)}}{x\,z}}}}:a \coto (a \refwedge b) \types
      }
    }
    {
      \Gamma;\,a\ty:\starsort,\,\Tyabs{b}{\starsort}{\tyabs{z}{a}{\tyabs{x}{\sneg{(a \coto c)}}{\tyabs{y}{\sneg{(b \coto c)}}{x\,z}}}}:\SigmaType{b}{\starsort}{a \coto (a \refwedge b)} \types
    }
  }
  {
    \Gamma,\,\inj_1 : \SigmaType{a}{\starsort}{\SigmaType{b}{\starsort}{a \coto (a \refwedge b)}} \types
  }
\end{mathpar}
To save on the page width, let $\Gamma \coloneqq [a:\starsort,b:\starsort,c:\starsort]$ and $\Delta \coloneqq [x\ty:(a \refwedge b),\,f\ty:a \coto c,\,g\ty:b \coto c]$ in the following.
\begin{mathpar}
  \infer*{
    \infer*{
      \infer*{
        \infer*{
          \infer*{
            \infer*{
                \infer*{
    \infer*{
      \infer*{
        \infer*{ }{\Gamma;\,x:A \refwedge B \types \Delta}
        \and
        \infer*{ }{\Gamma \types c : \starsort}
      }
      {
        \Gamma;\,x\,[c] : (a \coto c) \coto (b \coto c) \coto c \types \Delta
      }
      \and
      \infer*{ }
      {
        \Gamma;\,f: a \coto c \types \Delta
      }
    }
    {
      \Gamma;\,x\,[c]\,f:\sneg(b \coto c) \coto c \types \Delta
    }
    \and
    \infer*{ }
    {
      \Gamma;\,g:b \coto c \types \Delta
    }
  }
  {
    \Gamma;\,x\,[c]\,f\,g : c \types \Delta
  }
            }
            {
              \Gamma;\,\tyabs{g}{\sneg{(b \coto c)}}{x\,[c]\,f\,g} : (b \coto c) \coto c,\,x\ty:(a  \refwedge b),\,f:a \coto c
            }
          }
          {
            \Gamma;\,\tyabs{f}{\sneg{(a \coto c)}}{\tyabs{g}{\sneg{(b \coto c)}}{x\,[c]\,f\,g}} : (a \coto c) \coto (b \coto c) \coto c,\,x\ty:(a  \refwedge b)
          }
        }
        {
          a,b\ty:\starsort;\,\Tyabs{c}{\starsort}{\tyabs{f}{\sneg{(a \coto c)}}{\tyabs{g}{\sneg{(b \coto c)}}{x\,[c]\,f\,g}}} : (a \refwedge b),\,x\ty:(a  \refwedge b)
        }
      }
      {
        a,b\ty:\starsort;\,\tyabs{x}{\sneg{(a \refwedge b)}}{\Tyabs{c}{\starsort}{\tyabs{f}{\sneg{(a \coto c)}}{\tyabs{g}{\sneg{(b \coto c)}}{x\,[c]\,f\,g}}}} : (a \refwedge b) \coto (a \refwedge b)
      }
    }
    {
      a\ty:\starsort;\,\Tyabs{b}{\starsort}{\tyabs{x}{\sneg{(a \refwedge b)}}{\Tyabs{c}{\starsort}{\tyabs{f}{\sneg{(a \coto c)}}{\tyabs{g}{\sneg{(b \coto c)}}{x\,[c]\,f\,g}}}}} : \SigmaType{b}{\starsort}{(a \refwedge b) \coto (a \refwedge b)}
    }
  }
  {
    \refcase\,: \SigmaType{a}{\starsort}{\SigmaType{b}{\starsort}{(a \refwedge b) \coto (a \refwedge b)}} \types
  }
\end{mathpar}

\section{Supplementary Material in Connection with Section~\ref{sec:hol2-consistency}}\label{sec:apx-hol2-consistency}

In this appendix, we give an explicit definitition of Geuvers' \holty{} and then give the details of the consistency proof for our \holtty{} that were omitted from the main text.

\subsection{Geuvers' \holty{}}

\begin{definition}[\holty{} Syntax]\label{def:hol2-syntax}
  Let us assume countably infinite supplies of kind variables $k$; type variables $a$, $b$, $c$; and term variables $x$, $y$, $z$.  The \emph{kinds}, typically $K$; \emph{types}, typically $A, B$; \emph{terms}, typically $M$, $N$; and environments, typically $\Gamma,\Delta$; of \holty{} are described by the following grammars:
  \[
    \begin{array}{rrcl}
    \rlnm{Kinds} & K   &\Coloneqq& k \mid \starsort \mid K_1 \to K_2 \\
    \rlnm{Types} & A,B &\Coloneqq& a \mid A\,B \mid \tyabs{a}{K}{A} \mid A \to B \mid \PiType{a}{K}{A} \\
    \rlnm{Terms} & M,N &\Coloneqq& x \mid M\,N \mid M\,[A] \mid \tyabs{x}{A}{M} \mid \Tyabs{a}{K}{M} \\
    \rlnm{Envs} & \Gamma,\Delta &\Coloneqq& \varnothing \mid \Gamma,a\mathord{:}K \mid \Gamma,x\mathord{:}A
    \end{array}
  \]
  We consider the types $\tyabs{a}{K}{A}$, $\PiType{a}{K}{A}$; and the terms $\tyabs{x}{A}{M}$ and $\Tyabs{a}{K}{M}$; as variable binders, and we treat them accordingly.  
\end{definition}

\begin{definition}[\holty{} Type Conversion]
Conversion $A \conv B$ between two types $A$ and $B$, is defined as the reflexive, symmetric, transitive closure of the following inductively defined relation:
\begin{mathpar}
  \infer[CRedex]{ }{
    (\tyabs{a}{K}{A})\,B \conv A[B/a]
  }
  \\
  \infer[CAbs]{
    A \conv B
  }
  {
    \tyabs{a}{K}{A} \conv \tyabs{a}{K}{B}
  }
  \and
  \infer[CApp]{
    A \conv B
    \and
    C \conv D
  }
  {
    A\,C \conv B\,D
  }
  \and
  \infer[CForall]{
    A \conv B
  }
  {
    \PiType{a}{K}{A} \conv \PiType{a}{K}{B}
  }
  \and
  \infer[CArrow]{
    A \conv B
    \and
    C \conv D
  }
  {
    A \to C \conv B \to D
  }
\end{mathpar}
\end{definition}

\begin{definition}[\holty{} Kinding]
Well-formedness of an environment $\Gamma$, written $\types \Gamma$, and the assignment of a kind $K$ to a type $A$ in the environment $\Gamma$, written $\Gamma \types A : K$, are given simultaneously by the following inductive definition.
\begin{mathpar}
  \infer[Empty]{ }{
    \types \varnothing
  }
  \and
  \infer[Kinding]{
    \types \Gamma
  }
  {
    \types \Gamma,a\ty:K
  }\;a \notin \subj(\Gamma)
  \and
  \infer[Typing]{
    \Gamma \types A : \starsort
  }
  {
    \types \Gamma,x\ty:A
  }\;x \notin \subj(\Gamma)
  \\
  \infer[Var]{
    \types \Gamma
  }
  {
    \Gamma \types a : K
  }\;a\mathord{:}K \in \Gamma
  \and
  \infer[Abs]{
    \Gamma,\,a\ty:K_1 \types B : K_2
  }
  {
    \Gamma \types \tyabs{a}{K_1}{B} : K_1 \to K_2
  }
  \and
  \infer[App]{
    \Gamma \types A : K_1 \to K_2
    \and
    \Gamma \types B : K_1
  }
  {
    \Gamma \types A\,B : K_2
  }
  \and
  \infer[Arr]{
    \Gamma \types A : \starsort
    \and
    \Gamma \types B : \starsort
  }
  {
    \Gamma \types A \to B : \starsort
  }
  \and
  \infer[All]{
    \Gamma,\,a\ty:K \types A : \starsort
  }
  {
    \Gamma \types \PiType{a}{K}{A} : \starsort
  }
\end{mathpar}
\end{definition}

\begin{definition}[Proof Terms]
The assignment of a type $A$ to a (proof) term $M$ under assumptions $\Gamma$ is written $\Gamma \types M : A$.  The theory is defined by induction.
\begin{mathpar}
  %
  %
  %
  %
  \infer[Id]
  { 
    \types \Gamma
  }
  {
    \Gamma \types x:A
  }\;x\mathord{:}A \in \Gamma
  \and
  %
  %
  \infer[Conv]{
    \Gamma \types M : B
    \and
    \Gamma \types A : \starsort
  }
  {
    \Gamma \types M : A
  }\;A =_\beta B
  \\
  %
  %
  %
  \infer[$\forall$-Intro-R]{
    \Gamma,a : K \vdash M : B
  }
  {
    \Gamma \vdash \Tyabs{a}{K}{M} : \PiType{a}{K}{B}
  }
  \and
  %
  %
  \infer[$\forall$-Elim-R]{
    \Gamma \types M : \PiType{a}{K}{B}
    \and
    \Gamma \types A : K
  }
  {
    \Gamma \types M\,[A] : B[A/a]
  }
  \\
  \infer[$\to$-Intro-R]{
    \Gamma,\,x\ty:A \types M : B
  }
  {
    \Gamma \types \tyabs{x}{A}{M} : A \to B
  }
  \and
  \infer[$\to$-Elim-R]{
    \Gamma \types M : A \to B
    \and
    \Gamma \types N : A
  }
  {
    \Gamma \types M\,N : B
  }
\end{mathpar}
In \rlnm{$\forall$-Intro-R} and \rlnm{$\to$-Intro-R} we require that the variables $a$ and $x$ are fresh for their context, in the sense that neither occur free in $\Gamma$.
\end{definition}

\subsection{Consistency of \holtty{}}

We assume the following abbreviations in \holty{}:
\begin{align*}
    A \wedge B &\coloneqq \PiType{a}{\starsort}{(A \to B \to a) \to a}\\
    \mkPair &\coloneqq \Tyabs{a}{\starsort}{\Tyabs{b}{\starsort}{\tyabs{x}{a}{\tyabs{y}{b}{\Tyabs{c}{\starsort}{\tyabs{z}{a \to b \to c}{z\,x\,y}}}}}}\\
    \pi_1 &\coloneqq \Tyabs{a}{\starsort}{\Tyabs{b}{\starsort}{\tyabs{x}{(a \wedge b)}{x\,a\,(\tyabs{y}{a}{\tyabs{z}{b}{y}})}}}\\
    \pi_2 &\coloneqq \Tyabs{a}{\starsort}{\Tyabs{b}{\starsort}{\tyabs{x}{(a \wedge b)}{x\,a\,(\tyabs{y}{a}{\tyabs{z}{b}{y}})}}}\\
    \Sigma_K &\coloneqq \tyabs{a}{K \to \starsort}{\PiType{b}{\starsort}{(\PiType{c}{K}{a\,c \to b}) \to b}}\\
    \pack_K &\coloneqq \Tyabs{a}{K \to \starsort}{\Tyabs{b}{K}{\tyabs{x}{a\,b}{\Tyabs{c}{\starsort}{\tyabs{y}{(\PiType{b}{K}{a\,b \to c})}{y\,b\,x}}}}}\\
    \mathsf{unpack}_K &\coloneqq \Tyabs{a}{K \to \starsort}{\tyabs{x}{\Sigma_K\,a}{x}}
\end{align*}

We require the following results concerning environment permutation, type weakening and environment consistency, respectively, in \holty{}:
\begin{lemma}\label{lem:holty-prereqs}
  In \holty{}:
  \begin{enumerate}[(i)]
    \item If $\Gamma_1,\,\alpha_1:T_1,\alpha_2:T_2,\,\Gamma_2 \types M : A$ and $\alpha_1,\alpha_2$ are term or type variables, and $T_1, T_2$ are kinds or types, and $\alpha_1$ does not occur in $\fv(T_2)$, then $\Gamma_1,\,\alpha_2:T_2,\alpha_1:T_1,\,\Gamma_2 \types M : A$.
    \item If $\Gamma_1 \types M : B$ and $ \types \Gamma_1,\Gamma_2$ then $\Gamma_1,\Gamma_2 \types M : B$.
    \item If $\Gamma \types M : A$ then $\types \Gamma$.
  \end{enumerate}
\end{lemma}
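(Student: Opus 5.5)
These three facts are standard metatheory of \holty{}, which is a plain (Church-style) pure type system extended only by kind variables. I would prove each by induction on derivations, in the order (iii), (ii), (i), because the later parts use the earlier ones.

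For (iii), induct on the derivation of $\Gamma \types M : A$. The \rlnm{Id} case has $\types \Gamma$ as its premise. \rlnm{Conv}, \rlnm{$\forall$-Elim-R} and \rlnm{$\to$-Elim-R} follow directly from the induction hypothesis on the first premise. For \rlnm{$\to$-Intro-R} and \rlnm{$\forall$-Intro-R}, the induction hypothesis gives $\types \Gamma,x\ty:A$ or $\types \Gamma,a\ty:K$. The only rules concluding such judgements are \rlnm{Typing} and \rlnm{Kinding}, both of which have $\types \Gamma$ as a premise, so inversion finishes the case.

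For (ii), I first prove the kinding analogue, simultaneously with the environment part: if $\Gamma_1 \types B : K$ and $\types \Gamma_1,\Gamma_2$ then $\Gamma_1,\Gamma_2 \types B : K$. This is by induction on the kinding derivation. In the binder cases (\rlnm{Abs}, \rlnm{All}) I $\alpha$-rename the bound variable away from $\subj(\Gamma_2)$. Then I prove the typing statement by induction on the typing derivation, using the kinding version for \rlnm{Conv} and \rlnm{$\forall$-Elim-R}. For \rlnm{$\to$-Intro-R} I rename $x$ fresh for $\Gamma_2$. I must then show $\types \Gamma_1,\Gamma_2,x\ty:A$: part (iii) on the premise gives $\Gamma_1 \types A:\starsort$, kinding weakening lifts this to $\Gamma_1,\Gamma_2$, and \rlnm{Typing} applies. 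Only after that do I apply the induction hypothesis to a permuted premise. The premise is $\Gamma_1,x\ty:A \types M:B$, whereas the target environment is $\Gamma_1,\Gamma_2,x\ty:A$. The cleanest fix is to generalise the statement to weakening in the middle: if $\Gamma_1,\Gamma_3 \types M:B$ and $\types \Gamma_1,\Gamma_2,\Gamma_3$ then $\Gamma_1,\Gamma_2,\Gamma_3 \types M:B$. With this form the binder case just extends $\Gamma_3$, and (ii) is the instance $\Gamma_3=\varnothing$. \rlnm{$\forall$-Intro-R} is handled identically.

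For (i), induct on derivations, again generalised to include kinding judgements and environment well-formedness. Environment well-formedness is the crux: the swapped environment $\Gamma_1,\alpha_2\ty:T_2,\alpha_1\ty:T_1,\Gamma_2$ is well-formed because $\alpha_1 \notin \fv(T_2)$. Concretely, the derivation of $\Gamma_1,\alpha_1\ty:T_1 \types T_2 : s$ can be strengthened to $\Gamma_1 \types T_2 : s$, since $\alpha_1$ is unused. Weakening by $\alpha_2\ty:T_2$ then justifies $T_1$. This requires a small strengthening lemma for kinding: a variable that is free in neither the subject nor the kind can be removed. That lemma is proved by induction, using that kinds contain no type variables. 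All other cases are immediate from the induction hypotheses, with binder cases again handled by extending the suffix $\Gamma_2$.

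The main obstacle is this strengthening step in (i). I would try to avoid a general strengthening theorem by proving a restricted version that covers only kinding judgements. That suffices, because environment entries are only ever justified by kinding derivations, and unused variables are easy to eliminate there.
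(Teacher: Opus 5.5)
The paper gives no proof of this lemma. It is listed in the appendix as standard \holty{} metatheory and simply assumed, so you are supplying an argument the authors leave implicit. Your plan is the standard one and goes through: environment consistency first, then weakening generalised to insertion in the middle, then exchange, with the swapped environment re-justified by strengthening a kinding derivation and then weakening. However, one step is wrong as written, and two auxiliary lemmas need a more general form than you give them.

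The wrong step is the \rlnm{$\to$-Intro-R} case of (iii). Inverting $\types \Gamma, x:A$ through \rlnm{Typing} does not give $\types \Gamma$, because the premise of \rlnm{Typing} is $\Gamma \types A : \starsort$, not $\types \Gamma$. You need the kinding counterpart of (iii), that $\Gamma \types A : K$ implies $\types \Gamma$, proved first by its own induction:
\rlnm{Var} has $\types \Gamma$ as its premise; \rlnm{App} and \rlnm{Arr} follow by the induction hypothesis; for \rlnm{Abs} and \rlnm{All} the induction hypothesis gives $\types \Gamma, a:K$, and inverting \rlnm{Kinding} gives $\types \Gamma$.

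The first lemma that needs generalising is kinding weakening. Its \rlnm{Abs} and \rlnm{All} cases have exactly the ordering problem you noticed for \rlnm{$\to$-Intro-R}. The induction hypothesis on $\Gamma_1, a:K_1 \types B : K_2$ only yields $\Gamma_1, a:K_1, \Gamma_2 \types B : K_2$, and renaming $a$ away from $\subj(\Gamma_2)$ does not fix the order. So the kinding version must also be stated in middle form: $\Gamma_1,\Gamma_3 \types B : K$ and $\types \Gamma_1,\Gamma_2,\Gamma_3$ imply $\Gamma_1,\Gamma_2,\Gamma_3 \types B : K$. Your middle-form typing weakening needs this anyway, since its \rlnm{Conv}, \rlnm{$\forall$-Elim-R} and \rlnm{$\to$-Intro-R} cases invoke kinding weakening under an arbitrary suffix $\Gamma_3$.

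The second is your strengthening lemma, which must likewise allow a suffix to survive the \rlnm{Abs} and \rlnm{All} cases. State it as: if $\Gamma, \alpha:T, \Gamma' \types B : K$, where $\Gamma'$ consists only of kind declarations and $\alpha \notin \fv(B)$, then $\Gamma, \Gamma' \types B : K$. This is where your observation that kinds contain no type variables matters: it makes $\alpha \notin \fv(\Gamma')$ automatic.

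Strengthening is only needed when $\alpha_2$ is a term variable. If $\alpha_2 : T_2$ is a kind declaration, \rlnm{Kinding} imposes no condition on $T_2$. With these repairs your proof is complete.
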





We aim to map the assumption environment $\Gamma$ to $\derefT_1(\Gamma)$ and the co-assumptions environment $\Delta$ to $\derefT_2(\Delta^R)$ and then concatenate them $\derefT_1(\Gamma),\derefT_2(\Delta^R)$ to obtain a single \holty{} environment containing all the (co-)assumptions.  However, this mixed environment consists of types obtained by both $\derefT_1$ and $\derefT_2$, so it is neither $\theta_1(\Gamma)$ nor $\derefT_2(\Gamma)$, for any $\Gamma$.  Hence, it is convenient to define a notion of \holty{} environments that are related to a given \holtty{} environment (e.g. $\Gamma,\Delta^R$) but agnostic about which of $\derefT_1$ and $\derefT_2$ are used at each component.
\begin{definition}
Define the set $\derefT[\Gamma]$ of $\derefT$-acceptable \holty{} environments, indexed by \holtty{} environments $\Gamma$, by induction:
\begin{itemize}
  \item $\emptyset \in \derefT[\Gamma]$.
  \item If $\Gamma' \in \derefT[\Gamma]$ then $\Gamma',a\mathord{:}\derefK(K) \in \derefT[\Gamma,\,a\mathord{:}K]$.
  \item If $\Gamma' \in \derefT[\Gamma]$ and $i \in \{1,2\}$ then $\Gamma',x\mathord{:}\derefT_i(A) \in \derefT[\Gamma',x\mathord{:}A]$.
\end{itemize}
Clearly, for each $i \in \{1,2\}$, $\derefT_i(\Gamma) \in \derefT[\Gamma]$.
\end{definition}

Using this definition, we can state and prove the preservation of kinding.
\begin{lemma}\label{lem:hol2-kind-mapping-preservation}
  Both of the following:
  \begin{enumerate}[(i)]
    \item If $\ \types \Gamma$ in \holtty{} and $\Gamma' \in \derefT[\Gamma]$ then $\ \types \Gamma'$ in $\lambda$HOL.
    \item If $\Gamma \types A : K$ in \holtty{} and $\Gamma' \in \derefT[\Gamma]$, then $\Gamma' \types \derefT(A) : \derefK(K)$ in $\lambda$HOL.
  \end{enumerate}
\end{lemma}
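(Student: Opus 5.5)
We prove (i) and (ii) simultaneously by induction on the derivations of $\types \Gamma$ and $\Gamma \types A : K$ in \holtty{}, for \emph{all} $\Gamma' \in \derefT[\Gamma]$ at once. Quantifying over every acceptable environment, rather than a fixed one, is what lets the induction pass through the mixed environments $\derefT_1(\Gamma),\derefT_2(\Delta^R)$ used later in Lemma~\ref{lem:2hol-proof-translations}. One auxiliary fact is needed first: for any \holty{} environment $\Gamma'$, if $\Gamma' \types C : \starsort^2$ then $\Gamma' \types \pi_1(C) : \starsort$ and $\Gamma' \types \pi_2(C) : \starsort$. Likewise, if $\Gamma' \types A : \starsort$ and $\Gamma' \types B : \starsort$ then $\Gamma' \types \abra{A,B} : \starsort^2$. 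Both follow directly from the \holty{} rules \rlnm{Abs}, \rlnm{App} and \rlnm{Var}, after weakening the context with the bound variable of kind $\starsort \to \starsort \to \starsort$.

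For (i) the cases are as follows.
\begin{itemize}
\item \rlnm{Empty} is immediate.
\item \rlnm{Kinding}: here $\Gamma'$ has the form $\Gamma'',a\ty:\derefK(K)$ with $\Gamma'' \in \derefT[\Gamma]$. The IH gives $\types \Gamma''$, and $a$ remains fresh because the subjects are unchanged. We also need that $\derefK(K)$ is a well-formed \holty{} kind, which is immediate since kinds are not checked. Kind variables $k$ are also unchecked in \holty{}, so they need no separate treatment.
\item \rlnm{Typing}: here $\Gamma' = \Gamma'',x\ty:\derefT_i(A)$ with $\Gamma \types A : \starsort$. The IH for (ii) gives $\Gamma'' \types \derefT(A) : \starsort^2$, so the auxiliary fact gives $\Gamma'' \types \derefT_i(A) : \starsort$, and \rlnm{Typing} applies.
\end{itemize}

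For (ii) the cases are as follows.
\begin{itemize}
\item \rlnm{Var}: uses (i), together with the observation that $a\ty:K \in \Gamma$ implies $a\ty:\derefK(K) \in \Gamma'$, since type-variable entries are translated uniformly by $\derefK$.
\item \rlnm{App}: by the IH and because $\derefK$ commutes with $\to$.
\item \rlnm{Abs}: extend $\Gamma'$ to $\Gamma',a\ty:\derefK(K_1) \in \derefT[\Gamma,a\ty:K_1]$ and apply the IH.
\item \rlnm{Arr}: the IH gives $\derefT(A),\derefT(B) : \starsort^2$. The auxiliary fact yields $\derefT_1(A) \to \derefT_1(B) : \starsort$. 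The second component $\derefT_1(A) \wedge \derefT_2(B)$ is kinded $\starsort$ using the standard \holty{} encoding $\wedge : \starsort\to\starsort\to\starsort$, whose kinding is routine. Pairing then gives kind $\starsort^2$.
\item \rlnm{All}: use the extended environment $\Gamma',a\ty:\derefK(K)$ and the IH. We need that $\PiType{a}{\derefK(K)}{\derefT_1(A)}$ and $\SigmaType{a}{\derefK(K)}{\derefT_2(A)}$ both have kind $\starsort$; the latter unfolds to $\exists_{\derefK(K)}(\tyabs{a}{\derefK(K)}{\derefT_2(A)})$ and is kinded using $\exists_{K}:(K\to\starsort)\to\starsort$.
\item \rlnm{SNeg}: immediate, since it is just a swapped pair of projections.
\end{itemize}

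The main obstacle is the binding cases, \rlnm{Abs}, \rlnm{All} and the existential encoding. Freshness conditions must be checked there: the kinding of the encoded connectives introduces auxiliary bound variables (in $\abra{-,-}$, $\wedge$ and $\exists$), and these must be chosen fresh for $\Gamma'$. The cases may also need the \holty{} weakening of Lemma~\ref{lem:holty-prereqs}(ii), which in turn requires $\types \Gamma'$, so the mutual induction with (i) is essential there. Everything else is bookkeeping.
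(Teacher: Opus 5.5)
Your proposal is correct and follows the paper's proof: a simultaneous induction on the derivations of $\vdash \Gamma$ and $\Gamma \vdash A : K$, quantified over all $\Gamma' \in \theta[\Gamma]$, with the same case split (Empty, Kinding, Typing, Var, Abs, App, Arr, All, SNeg). The only difference is that you make explicit several steps the paper's proof leaves implicit: the kinding of pairs and projections, the unfolding of the existential via $\exists_K$ in place of the paper's unstated ``Ex'' rule, and the weakening and freshness bookkeeping.
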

\begin{proof}
  The proof is by induction.  In each case, we suppose $\Gamma' \in \derefT[\Gamma]$.
  \begin{description}
    \item[\rlnm{Var}] In this case, $A$ is a type variable $a$ and $a\mathord{:}K \in \Gamma$ and so $a\mathord{:}\derefK(K) \in \Gamma'$.  It follows from the induction hypothesis that $\types \Gamma'$, whence the result follows by \rlnm{Var} in $\lambda$HOL.
    \item[\rlnm{Abs}] In this case, $A$ is an abstraction $\tyabs{a}{K_1}{B}$ and $K$ is of shape $K_1 \to K_2$.  By definition $\Gamma',a\mathord{:}\derefK(K_1) \in \derefT[\Gamma,a\mathord{:}K_1]$ and so it follows from the induction hypothesis that $\Gamma',a\mathord{:}\derefK(K_1) \types \derefT(B) : \derefK(K_2)$ holds in $\lambda$HOL.  Therefore, it follows by \rlnm{Abs} that $\Gamma' \types \tyabs{a}{\derefK(K_1)}{\theta(B)} : \derefK(K_1) \to \derefK(K_2)$ is derivable in $\lambda$HOL.
    \item[\rlnm{App}] In this case, $A$ is an application $A'B$ and it follows from the induction hypotheses that there is some kind $K_1$ such that $\Gamma' \types \derefT(A') : \derefK(K_1) \to \derefK(K)$ and $\Gamma' \types \derefT(B) : \derefK(K_1)$ in $\lambda$HOL.  Therefore, by \rlnm{App}, $\Gamma' \types \derefT(A'\,B) : \derefK(K)$ in $\lambda$HOL.
    \item[\rlnm{Arr}] In this case, $A$ is of shape $A' \to B$ and $K$ is just $\starsort$.  Then it follows from the induction hypotheses that $\Gamma' \types \derefT(A') : \starsort^2$ and $\Gamma' \types \derefT(B) : \starsort^2$ in $\lambda$HOL.  Therefore, for $i \in \{1,2\}$, $\Gamma' \types \pi_i(\derefT(A')) : \starsort$ and $\Gamma' \types \pi_i(\derefT(B))$ in $\lambda$HOL.  Hence, it follows that $\Gamma' \types \abra{\pi_1(\derefT(A)) \to \pi_1(\derefT(B)),\,\pi_1(\derefT(A)) \wedge \pi_2(\derefT(B))} : \starsort^2$ in $\lambda$HOL, as required.
    \item[\rlnm{All}] In this case, $A$ is of shape $\PiType{a}{K'}{A'}$ and $K$ is just $\starsort$.  Then, by definition, $\Gamma',a\mathord{:}\derefK(K') \in \derefT(\Gamma,a\mathord{:}K')$ and it follows from the induction hypothesis that $\Gamma',a\mathord{:}\derefK(K') \types \derefT(A') : \starsort^2$ in $\lambda$HOL.  Therefore, for $i \in \{1,2\}$, $\Gamma',a\mathord{:}\derefK(K') \types \pi_i(\derefT(A')) : \starsort$ in $\lambda$HOL.  Hence, by \rlnm{All}, $\Gamma' \types \PiType{a}{\derefK(K')}{\pi_1(\derefT(A'))} : \starsort$ and by \rlnm{Ex}, $\Gamma' \types \SigmaType{a}{\derefK{K'}}{\pi_2(\derefT(A'))} : \starsort$  in $\lambda$HOL.  Therefore, also $\Gamma' \types \abra{\PiType{a}{\derefK(K)}{\pi_1(\derefT(A))},\,\SigmaType{a}{\derefK(K)}{\pi_2(\derefT(A))}} : \starsort$ in $\lambda$HOL.
    \item[\rlnm{Sneg}] In this case, $A$ is of shape $\sneg{A'}$ and $K$ is just $\starsort$.  Then it follows from the induction hypothesis that $\Gamma' \types \derefT(A') : \starsort^2$ in $\lambda$HOL. Therefore, $\Gamma' \types \abra{\pi_2(\derefT(A')),\pi_1(\derefT(A'))} : \starsort^2$ in $\lambda$HOL.
    \item[\rlnm{Empty}] Clear.
    \item[\rlnm{Kinding}] Then $\Gamma$ has shape $\Gamma_0,a\mathord{:}K$ and $a \notin \subj(\Gamma_0)$.  Therefore, $\Gamma'$ has shape $\Gamma'_0,a\mathord{:}\derefK(K)$ with $\Gamma'_0 \in \derefT[\Gamma_0]$.  It follows from the induction hypothesis that $\types \Gamma'_0$ in $\lambda$HOL.  Therefore, the result follows from \rlnm{Kinding} in $\lambda$HOL.
    \item[\rlnm{Typing}] Then $\Gamma$ has shape $\Gamma_0,x\mathord{:}A$ and $x \notin \subj(\Gamma')$.  Therefore, $\Gamma'$ has shape $\Gamma'_0,a\mathord{:}\derefT_i(A)$ for some $i \in \{1,2\}$ with $\Gamma'_0 \in \derefT[\Gamma_0]$.  It follows from the induction hypothesis that $\Gamma'_0 \types \derefT(A) : \starsort^2$ in $\lambda$HOL, and therefore also $\Gamma'_0 \types \derefT_i(A) : \starsort$.  Hence, $\types \Gamma'_0,x\mathord{:}\derefT_i(A)$ by \rlnm{Typing} in $\lambda$HOL.
  \end{description}
\end{proof}

When considering the mapping of proofs, we need to take into account that a proof step \rlnm{Conv} may involve conversion from one formula to another.  

\begin{lemma}[Translation Type Substitution]\label{lem:translation-subst}
  $\derefT(A[B/a]) = \derefT(A)[\derefT(B)/a]$
\end{lemma}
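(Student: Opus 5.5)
The plan is structural induction on the \holtty{} type $A$, using the fact that $\derefT$ is defined compositionally, clause by clause. Throughout, work up to $\alpha$-equivalence: bound type variables in $A$ are renamed to avoid $a$ and $\fv(B)$, and hence also $\fv(\derefT(B))$. This relies on a small side fact, proved by an immediate induction: $\fv(\derefT(C)) = \fv(C)$. The fact holds because the only variables introduced by $\derefT$ are the bound ones inside $\abra{-,-}$, $\pi_i$ and the \holty{} encodings of $\wedge$ and $\Sigma$, and these are closed combinators or fresh binders.

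The cases are as follows.
\begin{itemize}
\item Variable case. If $A = a$, both sides are $\derefT(B)$. If $A = a' \neq a$, both sides are $a'$.
\item Application and type abstraction. These follow directly from the induction hypothesis, since $\derefT$ commutes with the constructors and substitution distributes over them. For $\tyabs{a'}{K}{A'}$ we need $a' \notin \{a\} \cup \fv(\derefT(B))$, which the renaming convention guarantees.
\item Arrow, universal and strong negation. Here $\derefT(C)$ is built from $\derefT$ of the immediate subtypes by the type-level context $\abra{-,-}$, the projections $\pi_i$, and the \holty{} type abbreviations $\wedge$ and $\Sigma_{\derefK(K)}$. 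Each of these is a closed type expression applied to its arguments, or introduces only fresh bound variables. So substitution commutes syntactically: $\pi_i(C)[D/a] = \pi_i(C[D/a])$, $\abra{C_1,C_2}[D/a] = \abra{C_1[D/a],C_2[D/a]}$, and $(C_1 \wedge C_2)[D/a] = C_1[D/a] \wedge C_2[D/a]$. For example, in the $\to$ case, $\derefT((A_1 \to A_2)[B/a])$ unfolds to $\abra{\pi_1(\derefT(A_1[B/a])) \to \pi_1(\derefT(A_2[B/a])),\,\ldots}$. Applying the induction hypothesis rewrites this as $\abra{\pi_1(\derefT(A_1))[\derefT(B)/a] \to \ldots}$, and pulling the substitution outward gives the claim.
\item Universal case. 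The same reasoning applies, with $a' \neq a$ and $a' \notin \fv(\derefT(B))$ under the binder, and with $\derefK(K)$ unaffected by type substitution.
\end{itemize}

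The only real obstacle is bookkeeping of bound variables inside the abbreviations. The fresh variable $a$ used in $\abra{A,B}$, and the bound variables in the encodings of $\wedge$ and $\Sigma$, must be chosen not to clash with the substituted variable or with $\fv(\derefT(B))$. The first step is to state explicitly that these abbreviations are taken up to $\alpha$-equivalence with fresh binders. Once that is fixed, every case is a syntactic identity, so the lemma holds as a genuine equality up to $\alpha$, not merely up to $\conv$.
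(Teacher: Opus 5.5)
Your proposal is correct and takes essentially the same route as the paper. Both argue by structural induction on $A$: the variable, application and abstraction cases come straight from the induction hypothesis, and the arrow, universal and strong-negation cases follow because the projection and pairing combinators are closed, so substitution passes through them. Your explicit bookkeeping of bound-variable freshness, and your treatment of the strong-negation clause as the swapped pair of projections (where the paper's write-up has a slip), make your version somewhat more careful, but the idea is the same.
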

\begin{proof}
  By induction on the structure of $A$:
  \begin{itemize}
    \item If $A$ is a variable there are two cases.  If $A$ is the distinguished variable $a$ of the statement, then $\derefT(A[B/a]) = \derefT(B) = \derefT(A)[\derefT(B)/a]$.  Otherwise, $\derefT(A[B/a]) = \derefT(A) = \derefT(A)[\derefT(B)/a]$.
    \item If $A$ is an application $A_1\,A_2$ then $\derefT(A[B/a]) = \derefT(A_1[B/a]\,A_2[B/a])$.  It follows from the induction hypotheses that $\derefT(A_i[B/a]) = \derefT(A_i)[\derefT(B)/a]$ from which the result follows by definition.
    \item If $A$ is an abstraction $\tyabs{a}{K}{A'}$ then $\derefT(A[B/a]) = \tyabs{a}{\derefK(K)}{\derefT(A'[B/a])}$.  Then the result follows from the induction hypothesis and by definition.
    \item If $A$ is an arrow $A_1 \to A_2$ then $\derefT(A[B/a])$ is:
    \[
      \abra{\pi_1(\derefT(A_1[B/a])) \to \pi_1(\derefT(A_2[B/a])),\pi_1(\derefT(A_1[B/a])) \wedge \pi_2(\derefT(A_2[B/a]))}
    \]
    Then it follows from the induction hypotheses that this is:
    \[
      \abra{\pi_1(\derefT(A_1)[\derefT(B)/a]) \to \pi_1(\derefT(A_2)[\derefT(B)/a]),\pi_1(\derefT(A_1)[\derefT(B)/a]) \wedge \pi_2(\derefT(A_2)[\derefT(B)/a])}
    \]
    which, since each $\pi_i$ is closed, is exactly $\derefT(A)[\derefT(B)/a]$.
    \item If $A$ is a universal type $\PiType{c}{K}{C}$ then $\derefT(A[B/a])$ is:
    \[
      \abra{\PiType{c}{\derefK(K)}{\pi_1(\derefT(A[B/a]))}, \SigmaType{c}{\derefK(K)}{\pi_2(\derefT(A[B/a]))}}
    \]
    It follows from the induction hypotheses that this is just:
    \[
      \abra{\PiType{c}{\derefK(K)}{\pi_1(\derefT(A)[\derefT(B)/a])}, \SigmaType{c}{\derefK(K)}{\pi_2(\derefT(A)[\derefT(B)/a])}}
    \]
    which, since each $\pi_i$ is closed, is exactly $\derefT(A)[\derefT(B)/a]$.
    \item If $A$ is a strong negation $\sneg{B}$ then $\derefT(A[B/a]) = \sneg{\derefT(A[B/a])}$ and the result follows from the induction hypothesis.
  \end{itemize}
\end{proof}

\begin{lemma}[Conversion Translation]\label{lem:2hol-conv-trans}
  If $A \conv B$ in \holtty{} then $\derefT(A) \conv \derefT(B)$ in \holty{}.
\end{lemma}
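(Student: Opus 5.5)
The plan is to argue by induction on the derivation of $A \conv B$. Since $\conv$ is the reflexive, symmetric, transitive closure of the inductively defined one-step relation, it suffices to show that each generating rule is mapped to a conversion in \holty{}. Reflexivity, symmetry and transitivity are then inherited from the corresponding closure properties of $\conv$ in \holty{}, which is itself an equivalence relation.

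The key case is \rlnm{CRedex}, namely $(\tyabs{a}{K}{A})\,B \conv A[B/a]$. By definition,
\[
\derefT((\tyabs{a}{K}{A})\,B) = (\tyabs{a}{\derefK(K)}{\derefT(A)})\,\derefT(B),
\]
which is a \holty{} redex. It converts by \rlnm{CRedex} to $\derefT(A)[\derefT(B)/a]$. By Lemma~\ref{lem:translation-subst}, this is exactly $\derefT(A[B/a])$.

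The congruence cases go through using the compatibility of \holty{} conversion with the type formers:
\begin{itemize}
\item For \rlnm{CAbs} and \rlnm{CApp}, $\derefT$ is homomorphic, so the result is immediate from the induction hypothesis and the same rule in \holty{}.
\item For \rlnm{CArrow}, \rlnm{CForall} and \rlnm{CSneg}, the image is a pair $\abra{-,-}$ built from $\pi_1(\derefT(\cdot))$ and $\pi_2(\derefT(\cdot))$, combined with $\to$, $\wedge$, $\Pi$ and $\Sigma$. From $\derefT(A) \conv \derefT(B)$ we get $\pi_i(\derefT(A)) \conv \pi_i(\derefT(B))$ by \rlnm{CApp}, since $\pi_i(-)$ is application to a closed type, so reflexivity handles the second argument.
\item The defined connectives $\wedge$ and $\Sigma$ are themselves applications or open types built from $\to$ and $\Pi$. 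Substituting convertible components therefore gives convertible results, by \rlnm{CArrow}, \rlnm{CForall} and \rlnm{CApp}.
\item The outer pairing $\abra{C,D} = \tyabs{a}{}{a\,C\,D}$ is compatible via \rlnm{CApp} followed by \rlnm{CAbs}.
\end{itemize}
To keep this clean, I would first record a small auxiliary fact: if $C \conv C'$ then $E[C/b] \conv E[C'/b]$ for any type context $E$. This follows by induction on $E$ using the congruence rules, and it discharges all of these cases uniformly.

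The main obstacle is minor bookkeeping. The freshness of the bound variable $a$ inside $\abra{-,-}$ and in the $\pi_i$ must be respected, which is why it matters that the $\pi_i$ are closed. The substitution lemma must also apply under binders, where $\alpha$-renaming is needed so that $a$ is not captured in the \rlnm{CRedex} case. Everything else is routine.
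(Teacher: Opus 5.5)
Your proposal is correct and follows essentially the same route as the paper. You argue by induction on the derivation of $A \conv B$, discharging \rlnm{CRedex} via Lemma~\ref{lem:translation-subst} and the congruence cases (\rlnm{CArrow}, \rlnm{CForall}, \rlnm{CSneg}, \rlnm{CAbs}, \rlnm{CApp}) via compatibility of \holty{} conversion with the type formers; your auxiliary context lemma just packages that compatibility uniformly.

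One small caution: state the auxiliary fact for one-hole contexts that may capture variables, not as capture-avoiding substitution $E[C/b]$. In the \rlnm{CForall} case the plugged-in body $\pi_1(\theta(A))$ has the bound variable $a$ free, so capture-avoiding substitution would rename the binder instead of placing the body under it.
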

\begin{proof}
The proof is by induction on convertibility.
\begin{description}
  \item[\rlnm{CRefl}] In this case, the result is immediate.
  \item[\rlnm{CSymm}] In this case, it follows from the induction hypothesis that $\derefT(N) \conv \derefT(M)$ in \holty, whence the result follows from the symmetry of $\conv$ in \holty{}.
  \item[\rlnm{CTrans}] In this case, it follows from the induction hypotheses that, for some $P$, $\derefT(M) \conv \derefT(P)$ and $\derefT(P) \conv \derefT(N)$ in \holty{}, and the result follows from the transitivity of that relation.
  \item[\rlnm{CArrow}] In this case, $A$ is of shape $A' \to C$ and $B$ of shape $B' \to D$ and it follows from the induction hypotheses that $\derefT(A') \conv \derefT(B')$ and $\derefT(C) \conv \derefT(D)$ in \holty{}.  Then, by the compatibility rules of conversion in \holty{}, as required: 
  \[
    \abra{\pi_1(\derefT(A')) \to \pi_1(\derefT(C)), \pi_1(\derefT(A')) \wedge \pi_2(\derefT(C))} \conv \abra{\pi_1(\derefT(B')) \to \pi_1(\derefT(D)), \pi_1(\derefT(B')) \wedge \pi_2(\derefT(D))}
  \] 
  \item[\rlnm{CAbs}] In this case, $A$ is of shape $\tyabs{a}{K}{A'}$ and $B$ of shape $\tyabs{a}{K}{B'}$ and it follows from the induction hypothesis that $A' \conv B'$ in \holty{}.  Therefore, the result follows by \rlnm{CAbs} in \holty{}.
  \item[\rlnm{CForall}] In this case, $A$ is of shape $\PiType{a}{K}{A'}$ and $B$ of shape $\PiType{a}{K}{B'}$ and it follows from the induction hypothesis that $A' \conv B'$ in \holty{}.  Then it follows from the compatibility rules of conversion in \holty{} that, as required:
  \[
    \abra{\PiType{a}{\derefK(K)}{\derefT_1(A')},\SigmaType{a}{\derefK(K)}{\derefT_2(A')}} \conv \abra{\PiType{a}{\derefK(K)}{\derefT_1(B')},\SigmaType{a}{\derefK(K)}{\derefT_2(B')}}
  \]
  \item[\rlnm{CApp}] In this case, $A$ is of shape $A'\,C$ and $B$ is of shape $B'\,D$ and it follows from the induction hypotheses that $\derefT(A') \conv \derefT(B')$ and $\derefT(C) \conv \derefT(D)$ in \holty{}.  Therefore $\derefT(A'\,C) \conv \derefT(B'\,D)$ follows by \rlnm{CApp} in \holty{}.
  \item[\rlnm{CSneg}] In this case, $A$ is of shape $\sneg{A'}$ and $B$ is of shape $\sneg{B'}$ and it follows from the induction hypothesis that $\derefT(A') \conv \derefT(B')$ in \holty{}.  Therefore, it follows from the compatibility rules of conversion in \holty{} that $\abra{\derefT_2(A'),\derefT_1(A')} \conv \abra{\derefT_2(B'),\derefT_1(B')}$.
  \item[\rlnm{CRedex}] In this case, $A$ is of shape $(\tyabs{a}{K}{A'})\,B'$ and $B$ is of shape $A'\,[B'/a]$.  It follows from \rlnm{CRedex} in \holty{} that $(\tyabs{a}{\derefK(K)}{\derefT(A')})\,\derefT(B') \conv \derefT(A')[\derefT(B')/a]$, and then the result follows by definition and by Lemma~\ref{lem:translation-subst}.
\end{description}
\end{proof}

\begin{ulemma}[Restatement of Lemma~\ref{lem:2hol-proof-translations}]
  All of the following:
    \begin{enumerate}
      \item If $\Gamma \types M : A;\,\Delta$ in \holtty{} then $\derefT_1(\Gamma),\derefT_2(\Delta^R) \types \derefP(M) : \derefT_1(A)$ in \holty{}.
      \item If $\Gamma;\,M : A \types \Delta$ in \holtty{} then $\derefT_1(\Gamma),\derefT_2(\Delta^R) \types \derefP(M) : \derefT_2(A)$ in \holty{}.
    \end{enumerate}
\end{ulemma}
\begin{proof}
  The proof is by mutual induction on the derivations.
  \begin{description}
    \item[\rlnm{Id-L}] In this case, $M$ is a variable $x$ and we may assume $\types \Gamma,\Delta$ and $x:A \in \Delta$.  By Lemma~\ref{lem:hol2-kind-mapping-preservation}, since $\derefT_1(\Gamma),\derefT_2(\Delta^R) \in \derefT[\Gamma,\Delta^R]$, it follows that $\types \derefT_1(\Gamma),\derefT_2(\Delta^R)$.  Moreover, $x:\derefT_2(A) \in \derefT_2(\Delta^R)$.  Hence, the result follows by \rlnm{Var} in \holty{}.
    \item[\rlnm{Id-R}] Symmetrical to the previous case.
    \item[\rlnm{Conv}] In this case, we may assume $A \conv B$, $\Gamma \types A : \starsort$, and therefore that $\derefT(A) \conv \derefT(B)$ and $\derefT_1(\Gamma) \types \derefT(A) : \starsort^2$ by Lemmas~\ref{lem:2hol-conv-trans},~\ref{lem:hol2-kind-mapping-preservation}.  It follows from the induction hypothesis that $\derefT_1(\Gamma),\derefT_2(\Delta^R) \types \derefP(M) : \derefT_1(B)$ in \holty{} and so, by Lemma~\ref{lem:holty-prereqs}, $\types \derefT_1(\Gamma),\derefT_2(\Delta^R)$ in \holty{}.  Therefore, by Lemma~\ref{lem:holty-prereqs}, $\derefT_1(\Gamma),\derefT_2(\Delta^R) \types \derefT(A) : \starsort^2$ in \holty{}, and therefore that $\derefT_1(\Gamma),\derefT_2(\Delta^R) \types \derefT_1(A) : \starsort$.  Then, the result follows by \rlnm{Conv} in \holty{}.  
    \item[\rlnm{$\forall$-Intro-L}] In this case, $M$ is of shape $\pack_K\,[\tyabs{a}{K}{B}]\,[A']\,N$ and $A$ is $\PiType{a}{K}{B}$.  We may assume $\Gamma \types A' : K$ and so it follows from Lemma~\ref{lem:hol2-kind-mapping-preservation} that $\derefT_1(\Gamma) \types \derefT(A') : \derefK(K)$.  It follows from the induction hypothesis that $\derefT_1(\Gamma),\derefT_2(\Delta^R) \types \derefP(N) : \derefT_1(B[A'/a])$.  Therefore, it follows from \holty{} Environment Consistency that $\types \derefT_1(\Gamma),\derefT_2(\Delta^R)$ and so by Type Weakening (Lemma~\ref{lem:holty-prereqs}), $\derefT_1(\Gamma),\derefT_2(\Delta^R) \types \derefT(A') : \derefK(K)$.  By Lemma~\ref{lem:translation-subst}, $\derefT_1(B[A'/a]) = \pi_1\,(\derefT(B)[\derefT(A')/a]) = \derefT_1(B)[\derefT(A')/a]$.  Hence, in \holty{}, we can use \rlnm{$\exists$-Intro} to obtain:
    \[
      \derefT_1(\Gamma),\derefT_2(\Delta^R) \types \pack_{\derefK(K)}\,[\tyabs{a}{\derefK(K)}{\derefT_1(B)}]\,[\derefT(A')]\,\derefP(N) : \SigmaType{a}{\derefK(K)}{\derefT_1(B)}
    \] 
    This final type is, by definition, $\derefT_2(\PiType{a}{K}{B})$, as required.     
    \item[\rlnm{$\forall$-Elim-L}] In this case, $M$ is of shape $\unpack_K\,[\tyabs{a}{K}{B}]\,M'\,[A]\,(\Tyabs{a}{K}{\tyabs{x}{B}{N}})$.  We may assume that $a \notin \subj(\Gamma,\Delta)$ and $x \notin \subj(\Gamma,\Delta)$.  It follows from the induction hypotheses that both of the following are derivable in \holty{}:
    \begin{itemize}
      \item $\derefT_1(\Gamma),\derefT_2(\Delta^R) \types \derefP(M') : \derefT_2(\PiType{a}{K}{B})$
      \item $\derefT_1(\Gamma),a:\derefK(K),\derefT_2(\Delta^R),x:\derefT_2(B) \types \derefP(N) : \derefT_2(A)$
    \end{itemize}
    It follows from the \rlnm{Conv} rule in \holty{} that, therefore also $\derefT_1(\Gamma),\derefT_2(\Delta^R) \types \derefP(M') : \SigmaType{a}{\derefK(K)}{\derefT_2(B)}$.
    Since $a$ does not occur freely in $\Delta$, by an environment permutation in \holty{} we may also obtain:
    \[
      \derefT_1(\Gamma),\derefT_2(\Delta^R),a:\derefK(K),x:\derefT_2(B) \types \derefP(N) : \derefT_2(A)
    \]
    From this, it follows in \holty{} that:
    \[
      \derefT_1(\Gamma),\derefT_2(\Delta^R) \types \Tyabs{a}{\derefK(K)}{\tyabs{x}{\derefT_2(B)}{\derefP(N)}} : \PiType{a}{\derefK(K)}{\derefT_2(B) \to \derefT_2(A)}
    \]
    and, therefore, that, in \holty{}:
    \[
      \derefT_1(\Gamma),\derefT_2(\Delta^R) \types \unpack_{\derefK(K)}\,[\tyabs{a}{\derefK(K)}{\derefT_2(B)}]\,\derefP(M')\,[\derefT_2(A)]\,(\Tyabs{a}{\derefK(K)}{\tyabs{x}{\derefT_2(B)}{\derefP(N)}}) : \derefT_2(A)
    \]
    \item[\rlnm{$\forall$-Intro-R}] In this case, $M$ is of shape $\Tyabs{a}{K}{M'}$ and $A$ of shape $\PiType{a}{K}{B}$ and we may assume that $a \notin \fv(\Gamma,\Delta)$.  It follows from the induction hypothesis that $\derefT_1(\Gamma),a:\derefK(K),\derefT_2(\Delta^R) \types M':B$ in \holty{}.  After a permutation of environments, we obtain $\derefT_1(\Gamma),\derefT_2(\Delta^R) \types \derefP(\Tyabs{a}{K}{M'}) : \PiType{a}{\derefK(K)}{\derefT(B)}$ by \rlnm{$\forall$-Intro} in \holty{}.  The result then follows from \rlnm{Conv} in \holty{}.
    \item[\rlnm{$\forall$-Elim-R}] In this case, $M$ is of shape $M'\,[A']$ and $A$ is of shape $B[A'/a]$ and we may assume $\Gamma \types A' : K$.  It follows from the induction hypothesis that $\derefT_1(\Gamma),\derefT_2(\Delta^R) \types \derefP(M') : \derefT_1(\PiType{a}{K}{B})$.  By \rlnm{Conv}, hence $\derefT_1(\Gamma),\derefT_2(\Delta^R) \types \derefP(M') \types \PiType{a}{\derefK(K)}{\derefT_1(B)}$.  Therefore, $\types \derefT_1(\Gamma),\derefT_2(\Delta^R)$ in \holty{}.  It follows from Lemma~\ref{lem:hol2-kind-mapping-preservation} that, in \holty{}, $\derefT_1(\Gamma) \types \derefT_1(A') : \derefK(K)$, and, by Type Weakening (Lemma~\ref{lem:holty-prereqs}), $\derefT_1(\Gamma),\derefT_2(\Delta^R) \types \derefT_1(A') : \derefK(K)$.  Therefore, we obtain $\derefT_1(\Gamma),\derefT_2(\Delta^R) \types \derefP(M')\,[\derefT_1(A')] : \derefT_1(B)[\derefT_1(A')/a]$ in \holty{} by \rlnm{$\forall$-Elim} and the result follows from Translation Substitution (Lemma~\ref{lem:translation-subst}).
    \item[\rlnm{$\to$-Intro-L}] In this case, $M$ is of shape $M' \cdot{} N$ and $A$ is of shape $A' \to B$.  It follows from the induction hypotheses that $\derefT_1(\Gamma),\derefT_2(\Delta^R) \types \derefP(M') : \derefT_1(A')$ and $\derefT_1(\Gamma),\derefT_2(\Delta^R) \types \derefP(N) : \derefT_2(B)$ in \holty{}.  Therefore, by the derived rule \rlnm{$\wedge$-Intro} we obtain $\derefT_1(\Gamma),\derefT_2(\Delta^R) \types (\derefP(M'), \derefP(N)) : \derefT_1(A') \wedge \derefT_2(B)$ in \holty{}, and the required result follows from \rlnm{Conv} in \holty{}.
    \item[\rlnm{$\to$-Elim-L$_1$}] In this case, $M$ is of shape $\iota_1\,M'$.  It follows from the induction hypothesis that $\derefT_1(\Gamma),\derefT_2(\Delta^R) \types \derefP(M') : \derefT_2(A \to B)$ in \holty{}.  By \rlnm{Conv}, then $\derefT_1(\Gamma),\derefT_2(\Delta^R) \types \derefP(M') : \derefT_1(A) \wedge \derefT_2(B)$ in \holty{}. The result then follows by the derived rule \rlnm{$\wedge$-Elim$_1$}.  
    \item[\rlnm{$\to$-Elim-L$_2$}] In this case, $M$ is of shape $\iota_2\,M'$.  It follows from the induction hypothesis that $\derefT_1(\Gamma),\derefT_2(\Delta^R) \types \derefP(M') : \derefT_2(A' \to A)$ in \holty{}.  By \rlnm{Conv}, then $\derefT_1(\Gamma),\derefT_2(\Delta^R) \types \derefP(M') : \derefT_1(A') \wedge \derefT_2(A)$ in \holty{}. The result then follows by the derived rule \rlnm{$\wedge$-Elim$_2$}. 
    \item[\rlnm{$\to$-Intro-R}] In this case, $M$ is of shape $\tyabs{x}{A'}{M}$ and $A$ is of shape $A' \to B$.  We may assume that $x$ does not occur in $\Gamma$ or $\Delta$.  It follows from the induction hypothesis that $\derefT_1(\Gamma),x:\derefT_1(A),\derefT_2(\Delta^R) \types \derefP(M') : \derefT_1(B)$ in \holty{}.  By an environment permutation, we obtain $\derefT_1(\Gamma),\derefT_2(\Delta^R),x:\derefT_1(A) \types \derefP(M') : \derefT_1(B)$ in \holty{} and $\derefT_1(\Gamma),\derefT_2(\Delta^R) \types \tyabs{x}{\derefT_1(A')}{\derefP(M')} : \derefT_1(A') \to \derefT_1(B)$.  The result then follows from \rlnm{Conv} in \holty{}.
    \item[\rlnm{$\to$-Elim-R}] In this case $M$ is of shape $M'\,N$.  It follows from the induction hypotheses that there is some type $A'$ such that the following are derivable in \holty{}:
      \begin{itemize}
      \item $\derefT_1(\Gamma),\derefT_2(\Delta^R) \types \derefP(M') : \derefT_1(A' \to A)$
      \item $\derefT_1(\Gamma),\derefT_2(\Delta^R) \types \derefP(N) : \derefT_1(A')$
      \end{itemize} 
    Therefore, by \rlnm{Conv}, we also obtain $\derefT_1(\Gamma),\derefT_2(\Delta^R) \types \derefP(M') : \derefT_1(A') \to \derefT_1(A)$ in \holty{}.  The result then follows by \rlnm{$\to$-Elim} in \holty{}.
    \item[\rlnm{$\sneg{}$-Intro-L}] In this case, $A$ is of shape $\sneg{A'}$ and it follows from the induction hypothesis that $\derefT_1(\Gamma),\derefT_2(\Delta^R) \types \derefP(M) : \derefT_1(A')$.  Therefore, by \rlnm{Conv}, $\derefT_1(\Gamma),\derefT_2(\Delta^R) \types \derefP(M) : \pi_2(\abra{\derefT_2(A'),\derefT_1(A')})$, in \holty{}, as required.
    \item[\rlnm{$\sneg{}$-Intro-R}] In this case, $A$ is of shape $\sneg{A'}$ and it follows from the induction hypothesis that $\derefT_1(\Gamma),\derefT_2(\Delta^R) \types \derefP(M) : \derefT_2(A')$.  Therefore, by \rlnm{Conv}, we can obtain $\derefT_1(\Gamma),\derefT_2(\Delta^R) \types \derefP(M) : \pi_1(\abra{\derefT_2(A'),\derefT_1(A')})$, in \holty{}, as required.
    \item[\rlnm{$\sneg{}$-Elim-L}] In this case, it follows from the induction hypothesis that $\derefT_1(\Gamma),\derefT_2(\Delta^R) \types \derefP(M) : \derefT_2(\sneg{A})$ in \holty{}.  Therefore, we can also obtain $\derefT_1(\Gamma),\derefT_2(\Delta^R) \types \derefP(M) : \derefT_1(A)$, in \holty{}, as required.
    \item[\rlnm{$\sneg{}$-Elim-R}] In this case, it follows from the induction hypothesis that $\derefT_1(\Gamma),\derefT_2(\Delta^R) \types \derefP(M) : \derefT_1(\sneg{A})$ in \holty{}.  Therefore, we can also obtain $\derefT_1(\Gamma),\derefT_2(\Delta^R) \types \derefP(M) : \derefT_2(A)$, in \holty{}, as required.
  \end{description}
\end{proof}
\section{Supplementary Material for Section~\ref{sec:hol2-proofs}}

In this appendix we include the details of the development of the existence property for \holtty{}.

\begin{lemma}[Translation Term Substitution]\label{lem:2hol-derefP-subst}
  Both of the following:
  \begin{enumerate}[(i)]
    \item $\derefP(M)[\derefP(N)/x] = \derefP(M[N/x])$
    \item $\derefP(M)[\derefT(A)/a] = \derefP(M[A/a])$
  \end{enumerate}
\end{lemma}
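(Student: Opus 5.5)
Both equations are proved by structural induction on the term $M$, with $N$, $x$, $A$ and $a$ held fixed. We assume the usual Barendregt convention: bound term and type variables of $M$ are chosen distinct from $x$ and $a$, and from the free variables of $N$ and $A$. This is harmless, because $\derefP$, $\derefT$ and $\derefK$ are defined homomorphically on binders and so commute with $\alpha$-renaming.

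\textbf{Part (i).} The cases go as follows.
\begin{enumerate}
\item For a variable $y$: if $y = x$, both sides are $\derefP(N)$; otherwise both sides are $y$.
\item For the constants $\cdot$, $\iota_1$, $\iota_2$, $\pack_K$ and $\unpack_K$: each is sent to a closed \holty{} term ($\mkPair$, $\pi_1$, $\pi_2$, $\pack_{\derefK(K)}$, $\mathsf{unpack}_{\derefK(K)}$). Substitution into a closed term is the identity, so both sides agree.
\item For an application $M_1\,M_2$: the result is immediate from the two induction hypotheses.
\item For a type application $M'\,[B]$: the type $B$ contains no term variables, so $\derefT(B)$ is unaffected by the substitution, and the induction hypothesis on $M'$ suffices.
\item For the binders $\tyabs{y}{B}{M'}$ and $\Tyabs{b}{K}{M'}$: after the convention is applied, the substitution passes under the binder. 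The annotation $\derefT_1(B)$ contains no term variables and is unchanged, and the induction hypothesis handles the body.
\end{enumerate}

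\textbf{Part (ii).} The induction has the same shape, but type-level data is now where the work happens.
\begin{enumerate}
\item Term variables and the constants are unaffected by the type substitution. The constants remain closed because their kind subscript $\derefK(K)$ mentions no type variables.
\item For a type application $M'\,[B]$: we need $\derefT(B)[\derefT(A)/a] = \derefT(B[A/a])$, which is exactly Lemma~\ref{lem:translation-subst}. The induction hypothesis then handles $M'$.
\item For an abstraction $\tyabs{y}{B}{M'}$: the annotation becomes $\derefT_1(B)[\derefT(A)/a] = \pi_1(\derefT(B))[\derefT(A)/a]$. Since the selector inside $\pi_1$ is closed, this equals $\pi_1(\derefT(B)[\derefT(A)/a])$, which is $\derefT_1(B[A/a])$ by Lemma~\ref{lem:translation-subst}.
\item For $\Tyabs{b}{K}{M'}$: kinds contain no type variables, so $\derefK(K)$ is untouched, and the induction hypothesis handles the body.
\end{enumerate}

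The main obstacle is a gap in the clause for $\sneg$ in the proof of Lemma~\ref{lem:translation-subst}: it is stated loosely there, and the clause is $\derefT(\sneg{B}) = \abra{\derefT_2(B),\derefT_1(B)}$. We would make it precise by the same closed-projection argument used for the arrow and $\forall$ clauses. The body $\abra{\cdot,\cdot}$ binds a fresh variable not free in $\derefT(A)$, so substitution commutes with the pairing and with both projections. With that in hand, every case of (ii) is a direct calculation.
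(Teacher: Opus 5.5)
Your proposal is correct and follows the paper's proof essentially case for case. Both arguments are a structural induction on $M$, and the only non-trivial cases of (ii) are the type application and the annotation of a term abstraction; both are discharged by Lemma~\ref{lem:translation-subst} together with the fact that the projection selector is closed. Your remark that the $\sneg{}$ clause in the proof of Lemma~\ref{lem:translation-subst} is misstated is accurate (the image there is the swapped pair $\abra{\theta_2(B),\theta_1(B)}$, not a strong negation in \holty{}), and your fix of commuting the substitution with the pairing is the right one.
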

\begin{proof}
Part (i) is a straightforward induction on $M$.
Part (ii) is an induction on $M$:
\begin{itemize}
  \item If $M$ is a variable, $\cdot$, $\iota_1$, $\iota_2$, $\pack_K$ or $\unpack_K$ then the result is clear since $a$ cannot occur free.
  \item If $M$ is a term application, then the result follows from the induction hypotheses.
  \item If $M$ is a type application $M'\,[A']$, then $(\derefP(M')\,[\derefT(A')])[\derefT(A)/a] = \derefP(M')[\derefT(A)/a]\,[A'[\derefT(A)/a]]$ and the result follows from the induction hypothesis and Lemma~\ref{lem:translation-subst}.
  \item If $M$ is a term abstraction $\tyabs{x}{A'}{M'}$ then: 
    \[
      (\tyabs{x}{\derefT_1(A')}{\derefP(M')})[\derefT(A)] = \tyabs{x}{\derefT_1(A')[\derefT(A)/a]}{\derefP(M')[\derefT(A)/a]}
    \]
    Note: $\derefT_1(A')[\derefT(A)/a] = \pi_1(\derefT(A'))[\derefT(A)/a] = \pi_1(\derefT(A')[\derefT(A)/a])$ and this is $\derefT_1(A'[A/a])$ by Lemma~\ref{lem:translation-subst}.  Then the result follows from the induction hypothesis.
  \item If $M$ is a type abstraction $\Tyabs{b}{K}{M'}$ then $(\Tyabs{b}{K}{\derefP(M')})[\derefT(A)/a] = \Tyabs{b}{K}{\derefP(M')[\derefT(A)/a]}$ and the result follows immediately from the induction hypothesis.
\end{itemize}
\end{proof}

\begin{ulemma}[Restatement of Lemma ~\ref{lem:2hol-reduction-translation}]
  If $M \ped N$ in \holtty{} then $\derefP(M) \pedp \derefP(N)$ in \holty{}.
\end{ulemma}
\begin{proof}
  The proof is by induction on the reduction.
  \begin{itemize}
    \item If the reduction is the contraction of a term $\beta$-redex then $M$ is of shape $(\tyabs{x}{A}{M'})\,N'$ and $N$ is $M'[N'/x]$.  Then $\derefP(M) = (\tyabs{x}{\derefT_1(A')}{\derefP(M')})\,\derefP(N')$.  This contracts to $\derefP(M')[\derefP(N')/x]$ in \holty{} which is exactly $\derefP(N)$ by Lemma~\ref{lem:2hol-derefP-subst}.
    \item If the reduction is the contraction of a type $\beta$-redex then $M$ is of shape $(\Tyabs{a}{K}{M'})\,[A]$ and $N$ is $M'[A/a]$.  Then $\derefP(M) = (\Tyabs{a}{K}{\derefP(M')})\,[\derefT(A)/a]$ and it contracts to $\derefP(M')[\derefT(A)/a]$ in \holty{}.  This is exactly $\derefP(N)$ by Lemma~\ref{lem:2hol-derefP-subst}.
    \item If the reduction is the contraction of an unpack-pack redex, then $M$ is of shape:
    \[
      \unpack_K\,[\tyabs{a}{K}{A}]\,(\pack_K\,[\tyabs{a}{K}{B}]\,[C]\,N)\,[D]\,(\Tyabs{a}{K}{\tyabs{x}{E}{M'}})
    \]  
    Then it follows that $\derefP(M)$ is exactly:
    \[
      \unpack_K\,[\tyabs{a}{K}{\derefT(A)}]\,(\pack_K\,[\tyabs{a}{K}{\derefT(B)}]\,[\derefT(C)]\,\derefP(N))\,[\derefT(D)]\,(\Tyabs{a}{K}{\tyabs{x}{\derefT_1(E)}{\derefP(M')}})
    \]
    This reduces, in \holty{}, to $\derefP(M')[\derefT(C)/a][\derefP(N)/x]$ in 9 steps, and this is exactly $\derefP(N)$.
    \item If the reduction is the contraction of a dot-redex, then $M$ is of shape $\iota_i\,[A]\,[B]\,(M_1 \cdot_{C,D} M_2)$ and $N$ is $M_i$.  Then $\derefP(M) = \pi_i\,[\derefT(A)]\,[\derefT(B)]\,(\mathsf{mkPair}\,[\derefT(C)]\,[\derefT(D)]\,\derefP(M_1)\,\derefP(M_2))$, which reduces to $\derefP(M_i)$ in 11 steps in \holty{}, as required.  
    \item The remaining cases follow directly from the induction hypothesis.
  \end{itemize}
\end{proof}

\begin{lemma}[Basic Properties of Neg-Conversion]
  All of the following:
  \begin{enumerate}[(i)]
    \item $\nconv$ is an equivalence.
    \item If $A_1 \to A_2 \nconv B_1 \to B_2$ then $A_1 \conv B_1$ and $A_1 \conv B_2$.
    \item If $\PiType{a}{K}{A} \nconv \PiType{a}{K'}{A'}$ then $K=K'$ and $A \conv A'$.
    \item $\PiType{a}{K}{A} \not\nconv B \to C$.
    \item If $\PiType{a}{K}{A} \nconv \sneg{B}$ then $B \nconv \sneg{(\PiType{a}{K}{A})}$.
    \item If $A_1 \to A_2 \nconv \sneg{B}$ then $B \nconv \sneg{(A_1 \to A_2)}$.
  \end{enumerate}
\end{lemma}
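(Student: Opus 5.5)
The approach is to attach to each (well-kinded) type a \emph{root invariant} that is unchanged by both kinds of generating step of $\nconv$, and then read off every clause from it. By the note following the definition of type conversion, we may regard $\sneg{}$ as a free constant applied to a type. Then $\conv$ is ordinary $\beta$-convertibility of a simply-typed calculus at the type level, which is Church--Rosser and (on kindable types) strongly normalising. Two consequences of Church--Rosser will be used throughout. First, $\sneg{X} \conv \sneg{Y}$ iff $X \conv Y$. Second, no type convertible to a strong negation is convertible to an arrow or a universal, and arrows are never convertible to universals.

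\textbf{Step 1 (the invariant).} For a kindable $A$, take its $\beta$-normal form. Strip leading strong negations to write it as $\sneg^n C$, where the normal form $C$ is not a strong negation. Define $I(A) \coloneqq (n \bmod 2,\, C)$.
\begin{itemize}
  \item Well-definedness: this follows from uniqueness of normal forms.
  \item Invariance under the generators: if $A \conv B$ then $I(A)=I(B)$, by uniqueness of normal forms. Adding or removing a root pair $\sneg{\sneg{}}$ changes $n$ by $2$ and leaves $C$ unchanged.
  \item Converse: if $I(A)=I(B)=(p,C)$ then $A \nconv B$. Indeed $A \conv \sneg^n C$, and removing root double negations gives $\sneg^n C \nconv \sneg^p C$; the same holds for $B$, and we chain the two.
\end{itemize}
Hence $A \nconv B$ iff $I(A)=I(B)$.

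\textbf{Step 2 (the six clauses).}
\begin{itemize}
  \item Clause (i) is immediate from this characterisation: equality of invariants is an equivalence. Directly, reflexivity comes from $\conv$, and symmetry holds because the generating set is symmetric.
  \item For (ii) (I read the intended conclusion as $A_1 \conv B_1$ and $A_2 \conv B_2$), both sides have parity $0$, so their normal forms coincide. Church--Rosser then gives convertibility of the components.
  \item Clause (iii) is the same argument for universals. $K=K'$ holds because conversion never alters binder kinds.
  \item Clause (iv) holds because the cores of a universal and an arrow are headed by different constructors.
  \item For (v), write $I(\PiType{a}{K}{A})=(0,C)$ with $C$ a universal. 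Then $I(\sneg{B})=(0,C)$ forces $I(B)=(1,C)$, and $I(\sneg{(\PiType{a}{K}{A})})=(1,C)$. The converse direction of Step 1 then gives $B \nconv \sneg{(\PiType{a}{K}{A})}$.
  \item Clause (vi) is the same argument with an arrow in place of the universal.
\end{itemize}

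\textbf{Main obstacle.} I expect the delicate point to be Step 1, namely making the invariant well defined. This needs normal forms, so it relies on restricting to kindable types, where type-level strong normalisation of the simply-typed fragment is available. Every use of $\nconv$ in the Inversion and Canonical Forms lemmas concerns such types, so the restriction is harmless. If one wants to avoid normalisation, the fallback is to define $I$ using only Church--Rosser: show that a maximal sequence of root-$\sneg{}$ peelings up to $\conv$ is finite (bounded by the size of a normal form) and that its parity and residual core are unique up to $\conv$.
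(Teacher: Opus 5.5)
Your argument is correct on normalising types, which covers every downstream use. It rests on the same invariant as the paper's: the parity of root strong negations together with the residual core up to $\conv$. You organise it differently, though.

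\textbf{How the paper does it.} The paper anchors the invariant at a type that is not a strong negation: if $\sneg{}^k(A) \nconv B$ then $B=\sneg{}^m(C)$ with $m\equiv k \pmod 2$ and $C\conv A$. It uses this only for (v) and (vi), and calls (i)--(iv) immediate from properties of $\beta$-conversion.

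\textbf{How you do it.} You normalise first and prove the global characterisation $A\nconv B \iff I(A)=I(B)$, from which all six clauses follow uniformly. This gains you two things.
\begin{enumerate}
  \item (ii)--(iv) are handled properly. A chain from one arrow to another may detour through double negations and $\beta$-redexes, so these clauses need the invariant as well.
  \item You avoid a slip in the paper's formulation. It uses syntactic equality and only requires the anchor not to be syntactically a strong negation. Already $\sneg{X} \nconv (\tyabs{a}{\starsort}{\sneg{a}})\,X$ violates it (take $k=1$, $A=X$). So it must be read modulo $\conv$, with an anchor not convertible to a strong negation, and that is exactly what your normal-form core guarantees.
\end{enumerate}
You also rightly read (ii) as concluding $A_2 \conv B_2$.

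\textbf{The cost of your route.} You need normalisation. Strictly, you should define $I$ on all weakly normalising types, a class closed under $\nconv$ by Church--Rosser, rather than on kindable ones. This is because $\nconv$-chains between kindable types can pass through unkindable ones.

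\textbf{Your fallback does not work.} Take $F \coloneqq \tyabs{a}{\starsort}{\sneg{(a\,a)}}$ and $Y \coloneqq F\,F$. Then $Y \conv \sneg{Y}$, so root peeling never terminates and $Y$ has no parity at all. No global invariant exists on all of $\mathbb{T}$.

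The normalisation-free route is the paper's anchored one, stated correctly. By induction on the chain, every $T \nconv \PiType{a}{K}{A}$ satisfies $T \conv \sneg{}^{2j}(\PiType{a}{K}{A'})$ for some $j$ and some $A'\conv A$. When a step strips a root double negation from $T$, Church--Rosser strips it from the other side as well and rules out $j=0$. The same argument works for arrows.

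This proves every clause for arbitrary types, as the lemma literally states. Your restriction to normalising types nonetheless already suffices for the Inversion and Canonical Forms lemmas.
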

\begin{proof}
  Parts (i) -- (iv) follows immediately by the definition and the properties of $\beta$-conversion.  For (v) and (vi) we can observe the following invariant: if $\sneg{}^k(A) \nconv B$ and $A$ is not a strong negation, then $B = \sneg{}^m(C)$ with $k \equiv m\,(\text{mod}\,2)$ and $C \conv A$. 
\end{proof}

\begin{remark}
  The introduction of the concept of neg-conversion is natural given our system's commitment to the idea that the refutations of some type $A$ are identified with proofs of $\sneg{A}$\footnote{The same approach is taken in Wansing's $\lambda^C$, a term assignment system for a propositional logic with strong negation \cite{wansing93}.}.  So, the strong negation rules of the type system regard the movement between refutations of $A$ and proofs of $\sneg{A}$ implicitly (and similarly refutations of $\sneg{A}$ and proofs of $A$), similarly to the treatment of conversion between types.  The development could be done with introduction and elimination forms, say $\sigma$ and $\sigma^{-1}$, with the contraction $\sigma^{-1}\,[A]\,(\sigma\,[B]\,M) \ped M$ and the rules:
  \begin{mathpar}
  \infer{
    \Gamma \types M:A;\, \Delta
  }
  {
    \Gamma;\, \sigma\,[A]\,M : \sneg{A} \types \Delta
  }
  \and
  \infer{
    \Gamma;\,M:A \types \Delta
  }
  {
    \Gamma \types \sigma\,[A]\,M : \sneg{A};\,\Delta
  }
  \and
  \infer{
    \Gamma;\,M:\sneg{A} \types \Delta
  }
  {
    \Gamma \types \sigma^{-1}\,[A]\,M : A;\,\Delta
  }
  \and
  \infer{
    \Gamma \types M:\sneg{A};\, \Delta
  }
  {
    \Gamma;\, \sigma^{-1}\,[A]\,M : A \types \Delta
  }
  \end{mathpar}
  So that proofs of $A$ and refutations of $\sneg{A}$ are merely in correspondence.  Then the translation into \holty{} could interpret both $\sigma$ and $\sigma^{-1}$ as the identity, $\derefP(\sigma) = \derefP(\sigma^{-1}) = \Tyabs{a}{\starsort^2}{\tyabs{x}{a}{x}}$.
\end{remark}

\begin{ulemma}[Restatement of Inversion, Lemma~\ref{lem:2hol-inversion}]
  All of the following:
  \begin{enumerate}[(1)]
    \item If $\Gamma \types \tyabs{x}{A}{M} : B;\,\Delta$ and $B \nconv B_1 \to B_2$ then $A \conv B_1$ and $\Gamma,\,x\ty:A \types M : B_2;\,\Delta$.
    \item If $\Gamma;\,\tyabs{x}{A}{M} : B \types \Delta$ and $B \nconv \sneg{(B_1 \to B_2)}$ then $A \conv B_1$ and $\Gamma,\,x\ty:A \types M : B_2;\,\Delta$.
    \item If $\Gamma \types \Tyabs{a}{K}{M} : B;\,\Delta$ and $B \nconv \PiType{a}{K'}{C}$ then $K = K'$ and $\Gamma,\,a\ty:K \types M : C;\,\Delta$.
    \item If $\Gamma;\,\Tyabs{a}{K}{M} : B;\,\Delta$ and $B \nconv \sneg{(\PiType{a}{K'}{C})}$ then $K = K'$ and $\Gamma,\,a\ty:K \types M : C;\,\Delta$.
    \item If $\Gamma;\,M \cdot_{A_1,A_2} N : B \types \Delta$ and $B \nconv B_1 \to B_2$ then $A_1 \conv B_1$, $A_2 \conv B_2$, $\Gamma \types M : B_1;\,\Delta$ and $\Gamma;\,N:B_2 \types \Delta$.
    \item If $\Gamma \types M \cdot_{A_1,A_2} N : B;\,\Delta$ and $B \nconv \sneg{(B_1 \to B_2)}$ then $A_1 \conv B_1$, $A_2 \conv B_2$, $\Gamma \types M : B_1;\,\Delta$ and $\Gamma;\,N:B_2 \types \Delta$.
    \item If $\Gamma;\,\pack_K\,[\tyabs{a}{K}{A}]\,[C]\,N : B \types \Delta$ and $B \nconv \PiType{a}{K'}{D}$ then $K = K'$ and $A \conv D$ and $\Gamma \types C : K$ and $\Gamma;\,N:A[C/a] \types \Delta$.
    \item If $\Gamma \types \pack_K\,[\tyabs{a}{K}{A}]\,[C]\,N : B;\,\Delta$ and $B \nconv \sneg{(\PiType{a}{K'}{D})}$ then $K = K'$ and $A \conv D$ and $\Gamma \types C : K$ and $\Gamma;\,N:A[C/a] \types \Delta$.
  \end{enumerate}
\end{ulemma}
\begin{proof}
We prove all parts simultaneously by induction on derivability.
\begin{description}
  \item[\rlnm{Id-L},\rlnm{Id-R}] Do not apply.
  \item[\rlnm{Conv}] Each part then follows immediately from the induction hypothesis.
  \item[\rlnm{$\forall$-Intro-L}] In this case, $B$ is of shape $\PiType{a}{K}{A}$ and we aim to verify the conclusion of case (7).  By inspection, $\Gamma \types C : K$ and $\Gamma;\,N:A[C/a] \types \Delta$.  Assume $\PiType{a}{K}{A} \nconv \PiType{a}{K'}{D}$ and it follows from the properties of neg-conversion that $K=K'$ and $A \conv D$, as required.
  \item[\rlnm{$\forall$-Intro-R}] In this case, $B$ is of shape $\PiType{a}{K}{B'}$ and we aim to show (3).  By inspection, $\Gamma,\,a:K \types M : B';\,\Delta$.  Suppose $B \nconv \PiType{a}{K'}{C}$.  Then it follows from the properties of neg-conversion that $K=K'$ and $B' \conv C$.  Therefore, also $\Gamma,\,a:K \types M : C;\,\Delta$, as required.
  \item[\rlnm{$\forall$-Elim-L},\rlnm{$\forall$-Elim-R}] The result holds vacuously.
  \item[\rlnm{$\to$-Intro-L}] In this case, $B$ is of shape $A_1 \to A_2$ and the premises give $\Gamma \types M :A_1;\,\Delta$ and $\Gamma;\, N:A_2 \types \Delta$.  By the properties of neg-conversion, $B_1 \conv A_1$ and $B_2 \conv A_2$.  Therefore, the result follows by \rlnm{Conv} (\rlnm{$\sneg{}$-Intro-R} and \rlnm{$\sneg{}$-Elim-R}).
  \item[\rlnm{$\to$-Elim-L$_i$},\rlnm{$\to$-Elim-R}] The result holds vacuously. 
  \item[\rlnm{$\to$-Intro-R}] In this case, $B$ is of shape $A \to A'$ and the premise gives $\Gamma,x:A \types M:A';\,\Delta$.  By the properties of neg-conversion, $B_1 \conv A$ and $B_2 \conv A'$.  Therefore, the result follows by \rlnm{Conv}.  
  \item[\rlnm{$\sneg{}$-Intro-L}] In this case, each of the clauses (2), (4), (5) and (7) follow from the induction hypothesis clauses (1), (3), (6) and (8) respectively.
  \item[\rlnm{$\sneg{}$-Intro-R}] In this case, each of the clauses (1), (3), (6) and (8) follow from the induction hypothesis clauses (2), (4), (5) and (7) respectively.
  \item[\rlnm{$\sneg{}$-Elim-L}]  In this case, each of the clauses (1), (3), (6) and (8) follow from the induction hypothesis clauses (2), (4), (5) and (7) respectively.
  \item[\rlnm{$\sneg{}$-Elim-R}]  In this case, each of the clauses (2), (4), (5) and (7) follow from the induction hypothesis clauses (1), (3), (6) and (8) respectively.
\end{description}
\end{proof}

\begin{ulemma}[Restatement of Canonical Proofs and Refutations, Lemma~\ref{lem:2hol-can-forms}]
  Suppose $\Gamma$ is a kinding environment and $N$ is in normal form.
  \begin{enumerate}[(i)]
    \item If $\Gamma \types N : A;\,\Delta$ then:
      \begin{enumerate}[(a)]
        \item If $A$ is neg-convertible with an arrow then $N$ is a term abstraction.
        \item If $A$ is neg-convertible with a universal type over $K$ then $N$ is a type abstraction over $K$.
        \item If $A$ is neg-convertible with the strong negation of an arrow, then $N$ is a dot construction.
        \item If $A$ is neg-convertible with the strong negation of a universal type over $K$, then $N$ is a $K$-pack.
      \end{enumerate}
    \item If $\Gamma;\,N:A \types \Delta$ then:
      \begin{enumerate}[(a)]
        \item If $A$ is neg-convertible with an arrow then $N$ is a dot construction.
        \item If $A$ is neg-convertible with a universal type over $K$ then $N$ is a $K$-pack.
        \item If $A$ is neg-convertible with the strong negation of an arrow, then $N$ is a term abstraction.
        \item If $A$ is neg-convertible with the strong negation of a universal type over $K$, then $N$ is a type abstraction over $K$.
      \end{enumerate}
  \end{enumerate}
\end{ulemma}
\begin{proof}
  The proof is by induction on the derivation relation.
  \begin{description}
    \item[\rlnm{Id-L},\rlnm{Id-R}] These cases cannot occur since $\Gamma$ is a kinding environment.
    \item[\rlnm{Conv}] In this case, each case of (i) follows immediately from the induction hypothesis and (ii) cannot occur.
    \item[\rlnm{$\forall$-Intro-L}] In this case, (ii)(b) follows immediately.  By the properties of neg-conversion, no other subcase can occur.
    \item[\rlnm{$\forall$-Intro-R}] In this case, (i)(b) follows immediately.   By the properties of neg-conversion, no other subcase can occur.
    \item[\rlnm{$\forall$-Elim-L}] In this case, $N$ has shape $\unpack_K\,[\tyabs{a}{K}{B}]\,M\,[C]\,(\Tyabs{a}{K}{\tyabs{x}{B}{N'}})$ and it follows from the induction hypothesis that $M$ is a $K$-pack, contradicting the assumption that $N$ is a normal form.
    \item[\rlnm{$\forall$-Elim-R}] In this case, $N$ has shape $M\,[A']$ and it follows from the induction hypothesis that $M$ is a type abstraction, contradicting the assumption that $N$ is in normal form.
    \item[\rlnm{$\to$-Intro-L}] In this case, (ii)(a) follows immediately.  By the properties of neg-conversion, no other subcase can occur.
    \item[\rlnm{$\to$-Intro-R}] In this case, (i)(a) follows immediately.  By the properties of neg-conversion, no other subcase can occur.
    \item[\rlnm{$\to$-Elim-L$_1$}] In this case, $N$ is of shape $\iota_1\,M$ and it follows from the induction hypothesis that $M$ is a dot construction, contradicting the assumption that $N$ is in normal form.
    \item[\rlnm{$\to$-Elim-L$_2$}] In this case, $N$ is of shape $\iota_2\,M$ and it follows from the induction hypothesis that $M$ is a dot construction, contradicting the assumption that $N$ is in normal form.
    \item[\rlnm{$\to$-Elim-R}] In this case, $N$ is of shape $M\,N'$ and it follows from the induction hypothesis that $M$ is an abstraction, contradicting the assumption that $N$ is in normal form.
    \item[\rlnm{$\sneg{}$-Intro-L}] In this case, $A$ is of shape $\sneg{A'}$.  If $A$ is neg-convertible with an arrow, then it must be that $A'$ is neg-convertible with the strong negation of an arrow, and then we obtain (ii)(a) by induction hypothesis (i)(c).  If $A$ is neg-convertible with a universal type over $K$, then $A'$ must be neg-convertible with the strong negation of a universal type over $K$, and then we obtain (ii)(b) by induction hypothesis (i)(d).  If $A$ is neg-convertible with the strong negation of an arrow, it must be that $A'$ is neg-convertible with an arrow, then (ii)(c) follows from induction hypothesis (i)(a).  If $A$ is neg-convertible with the strong negation of a universal over $K$, then it must be that $A'$ is neg-convertible with a universal over $K$ and (ii)(d) follows from induction hypothesis (i)(b).  By the properties of neg-conversion, no other subcase can occur.
    \item[\rlnm{$\sneg{}$-Intro-R}] Symmetrical to the above.
    \item[\rlnm{$\sneg{}$-Elim-L}] There are four subcases but all are similar so we only treat the first.  If $A$ is neg-convertible with an arrow, then $\sneg{A}$ is neg-convertible with the strong negation of an arrow and so (i)(a) follows from induction hypothesis (ii)(c).
    \item[\rlnm{$\sneg{}$-Elim-R}] Symmetrical to the above. 
  \end{description}
\end{proof}

\begin{lemma}[Type Substitution]\label{lem:2hol-ty-subst}
  Both of the following:
  \begin{enumerate}[(i)]
    \item If $\Gamma_1,\,a\ty:K,\,\Gamma_2;\,M:A \types \Delta$ and $\Gamma_1 \types B : K$ then $\Gamma_1,\,\Gamma_2[B/a];\,M[B/a]:A[B/a] \types \Delta[B/a]$.
    \item If $\Gamma_1,\,a\ty:K,\,\Gamma_2 \types M:A;\, \Delta$ and $\Gamma_1 \types B : K$ then $\Gamma_1,\,\Gamma_2[B/a] \types M[B/a]:A[B/a];\, \Delta[B/a]$.
  \end{enumerate}
\end{lemma}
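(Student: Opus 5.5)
We prove (i) and (ii) simultaneously by induction on the derivation of the premise judgement, keeping $\Gamma_1 \types B : K$ fixed throughout. The first step is to establish the auxiliary kinding fact on which every case relies: type substitution for kinding and for well-formedness of environments. If $\Gamma_1,a\ty:K,\Gamma_2 \types C : K'$ then $\Gamma_1,\Gamma_2[B/a] \types C[B/a] : K'$. Likewise, if $\types \Gamma_1,a\ty:K,\Gamma_2$ then $\types \Gamma_1,\Gamma_2[B/a]$. This is the standard System F$\omega$ argument, by simultaneous induction on kinding and well-formedness. The only non-routine case is \rlnm{Var} at $a$ itself, which needs weakening of $\Gamma_1 \types B : K$ to the longer environment; we prove that weakening first by the same induction. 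The \rlnm{SNeg} case is trivial. We apply the result with $\Gamma_2$ extended by $\Delta^R$, since the identity rules require $\types \Gamma,\Delta^R$.

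Next we treat the proof-term cases, noting that type substitution commutes with all syntax. The identity cases \rlnm{Id-L} and \rlnm{Id-R} follow from the well-formedness fact together with $x\ty:A[B/a] \in \Delta[B/a]$ (respectively $\Gamma_2[B/a]$). For \rlnm{Conv}, we use that $\beta$-conversion is closed under substitution, i.e. $A \conv A'$ implies $A[B/a] \conv A'[B/a]$, together with the kinding fact for the side premise. The four strong-negation rules, \rlnm{$\to$-Intro/Elim-L/R}, and \rlnm{$\to$-Elim-L$_i$} are immediate from the induction hypotheses, because $(\sneg{A})[B/a] = \sneg{(A[B/a])}$ and the type annotations on $\cdot$ and $\iota_i$ are simply substituted.

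The binder cases \rlnm{$\forall$-Intro-R}, \rlnm{$\to$-Intro-R}, \rlnm{$\forall$-Elim-L}, and \rlnm{$\forall$-Intro-L} / \rlnm{$\forall$-Elim-R} need more care. In the first three we $\alpha$-rename the bound type variable $c$ so that $c \neq a$ and $c \notin \fv(B)$. We then apply the induction hypothesis with $\Gamma_2$ replaced by $\Gamma_2,c\ty:K'$ (or $\Gamma_2,x\ty:A'$), which is legitimate since $\Gamma_2$ is arbitrary. For \rlnm{$\forall$-Elim-L} the second premise also has the co-assumption $x\ty:C'$ added to $\Delta$; this is handled the same way, because $\Delta[B/a]$ is just extended. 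In \rlnm{$\forall$-Intro-L} and \rlnm{$\forall$-Elim-R} we need the commutation lemma
\[ C[A'/c][B/a] = C[B/a][A'[B/a]/c] \qquad (c \neq a,\ c\notin\fv(B)), \]
which is the standard substitution lemma. We also need the kinding fact to transport the premise $\Gamma_1,a\ty:K,\Gamma_2 \types A' : K'$.

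The main obstacle is bookkeeping rather than ideas. The freshness side conditions must survive substitution: $a$ must not reoccur, and bound variables must avoid $\fv(B)$. The environments $\Gamma_2[B/a],\Delta[B/a]^R$ must remain well formed, and we have to confirm that $\Gamma_1$ contains no term assumptions mentioning $a$, which the ordering of environments guarantees. We would handle this by first proving weakening for kinding, then the kinding substitution, and only then the typing induction.
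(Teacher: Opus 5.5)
Your proposal is correct and follows essentially the same route as the paper. The paper also uses a simultaneous induction on the typing derivation, with these ingredients:
\begin{itemize}
\item substitution for kinding and environment well-formedness, applied to $\Gamma_1,a:K,\Gamma_2,\Delta^R$, in the \rlnm{Id-L}/\rlnm{Id-R} cases;
\item closure of $\beta$-conversion under substitution for \rlnm{Conv};
\item the induction hypothesis at an extended $\Gamma_2$, and extended co-assumptions, in binder cases such as \rlnm{$\forall$-Elim-L}.
\end{itemize}
Your version is more careful than the paper's in three places: it states the kinding fact with $a$ not necessarily last, it makes the substitution-commutation needed for \rlnm{$\forall$-Intro-L}/\rlnm{$\forall$-Elim-R} explicit, and it separates the $\Gamma_1$ and $\Gamma_2$ cases in \rlnm{Id-R}.
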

\begin{proof}
  First note that, by a standard argument on simply typed $\lambda$-calculus, if $\Gamma,\,a:K_1 \types A:K_2$ and $\Gamma \types B:K_1$, then $\Gamma \types A[B/a] : K_2$.  The proof proper is by induction on the derivations, but the majority of cases follow immediately from the induction hypothesis and the definition of substitution, so we just give the more interesting cases below.
  \begin{description}
    \item[\rlnm{Id-L},\rlnm{Id-R}] Both cases are symmetrical, so we consider the first.  In this case, $M$ is some variable $x$ and $x:A \in \Delta$. We have $\types \Gamma_1,a:K,\Gamma_2,\Delta^R$.   Hence, by the noted property of kinding above, since we assume $\Gamma_1 \types B:K$, we obtain by induction $\types \Gamma_1,\Gamma_2[B/a],(\Delta[B/a])^R$.   Therefore, $\Gamma_1, \Gamma_2[B/a]; x:A[B/a] \types \Delta[B/a]$ follows again from \rlnm{Id-L}.
    \item[\rlnm{Conv}] In this case, we have $A \conv B'$ for some $B'$.  It follows from the properties of convertibility that, therefore, $A[B/a] \conv B'[B/a]$.  Therefore, the result follows from the induction hypothesis.
    \item[\rlnm{$\forall$-Elim-L}] In this case, $M$ is of shape $\unpack_K\,[\tyabs{a'}{K'}{B'}]\,M'\,[A]\,(\Tyabs{a'}{K'}{\tyabs{x}{\sneg{B'}}{N}})$.  We may assume, by the variable convention, that $a \neq a'$.  It follows from the induction hypothesis that both of the following are derivable: 
      \begin{enumerate}[(i)]
        \item $\Gamma_1,\Gamma_2[B/a];\,M' : \PiType{a'}{K'}{B'[B/a]} \types \Delta[B/a]$ 
        \item $\Gamma_1,\Gamma_2[B/a],a'\ty:K';\,N : A[B/a] \types x\ty:B'[B/a],\,\Delta[B/a]$
      \end{enumerate}
    Therefore, we can immediately conclude, using \rlnm{$\forall$-Elim-L}: 
    \[
      \Gamma_1,\Gamma_2[B/a];\,\unpack_K\,[\tyabs{a'}{K'}{B'[B/a]}]\,M'\,[A[B/a]]\,(\Tyabs{a'}{K'}{\tyabs{x}{\sneg{B'[B/a]}}{N}}) : A[B/a] \types \Delta[B/a]
    \]
  \end{description}

\end{proof}

\begin{lemma}[Term Substitution]\label{lem:2hol-term-subst}
  All of the following:
  \begin{enumerate}[(i)]
    \item If $\Gamma_1,\,x\ty:A,\,\Gamma_2;\,M:B \types \Delta$ and $\Gamma_1 \types N:A;\,\Delta$ then $\Gamma_1,\Gamma_2;\,M[N/x] : B \types \Delta$.
    \item If $\Gamma_1,\,x\ty:A,\,\Gamma_2 \types M:B;\, \Delta$ and $\Gamma_1 \types N:A;\,\Delta$ then $\Gamma_1,\Gamma_2 \types M[N/x] : B;\, \Delta$.
    \item If $\Gamma;\,M:B \types \Delta_2,x\ty:A,\Delta_1$ and $\Gamma;\,N:A \types \Delta_1$ then $\Gamma;\,M[N/x] : B \types \Delta_2,\Delta_1$.
    \item If $\Gamma \types M:B;\,\Delta_2,x\ty:A,\Delta_1$ and $\Gamma;\,N:A \types \Delta_1$ then $\Gamma \types M[N/x] : B;\, \Delta_2,\Delta_1$.
  \end{enumerate}
\end{lemma}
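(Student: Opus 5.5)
We prove all four parts simultaneously by induction on the derivation of the judgement containing the distinguished variable $x$, that is, the left premise in each clause. The second premise (a proof $\Gamma_1 \types N:A;\,\Delta$ in (i)/(ii), or a refutation $\Gamma;\,N:A \types \Delta_1$ in (iii)/(iv)) is carried along unchanged, apart from weakening.

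Whenever the induction passes under a binder, we will enlarge the second premise using Weakening (Lemma~\ref{lem:hol2-weakening}) and Exchange (Lemma~\ref{lem:hol2-exchange}). The binders in question are the fresh $a\ty:K$ of \rlnm{$\forall$-Intro-R} and \rlnm{$\forall$-Elim-L}, the $x'\ty:B$ of \rlnm{$\to$-Intro-R}, and the co-assumption $x'\ty:B$ of \rlnm{$\forall$-Elim-L}. Before weakening, we rename bound variables so that they are fresh for $N$, $x$ and the contexts. The well-formedness side condition needed for weakening, $\types \Gamma_1,\Gamma_2,\ldots$, is read off from the premise of the rule under consideration via the \rlnm{Id} premises. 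Such a premise is always present, because every derivation bottoms out in \rlnm{Id-L} or \rlnm{Id-R}, and the context only grows upward through the rules.

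\textbf{Axiom cases.} These are the only genuinely non-trivial leaves.

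\emph{Clause (ii), ending in \rlnm{Id-R}.} Here $M$ is a variable $y$ with $y\ty:B$ in the assumptions. If $y = x$, then $B = A$ and $M[N/x] = N$. The required judgement then follows from the second premise by weakening with $\Gamma_2$; this needs $\types \Gamma_1,\Gamma_2,\Delta^R$, which follows from the \rlnm{Id} premise with $x$ removed, since $x \notin \fv(\Gamma_2,\Delta)$ as it is a term variable. If $y \neq x$, we simply reapply \rlnm{Id-R}.

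\emph{Clause (i), ending in \rlnm{Id-L}.} Here $y$ cannot be $x$, because $x$ is an assumption while $y\ty:B$ lies in $\Delta$ and subjects are distinct. So we just reapply the rule.

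\emph{Clauses (iii)/(iv).} These are symmetric: the case $y = x$ arises from \rlnm{Id-L} in (iii). The one point to check is that the result context is $\Delta_2,\Delta_1$, whereas the second premise lives over $\Delta_1$. We weaken it on the right by $\Delta_2$, using Exchange to place the new co-assumptions correctly.

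\textbf{Structural rules.} \rlnm{Conv} follows directly from the induction hypothesis. So do the four strong-negation rules, since they do not alter the term and only switch which clause applies. For instance, \rlnm{$\sneg{}$-Intro-L} in clause (i) calls the induction hypothesis for clause (ii). This is exactly why the four clauses must be proved mutually.

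\textbf{Remaining rules.} \rlnm{$\to$-Intro/Elim}, \rlnm{$\forall$-Intro/Elim}, and the dot and pack rules are handled by applying the induction hypothesis to each premise and reassembling, using that substitution commutes with the term formers. The type annotations are unaffected because $x$ is a term variable.

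\textbf{Main obstacle.} The delicate case is \rlnm{$\forall$-Elim-L}. Its second premise adds both a kind assumption $a\ty:K$ and a co-assumption $x'\ty:B'$ at the front of $\Delta$. For clause (i) we must therefore first weaken $\Gamma_1 \types N:A;\,\Delta$ to $\Gamma_1,a\ty:K \types N:A;\,x'\ty:B',\Delta$, and then exchange so that the context matches the form $\Gamma_1',x\ty:A,\Gamma_2'$. The well-formedness requirements and the order of the reversed $\Delta^R$ need care here, and this is where Exchange's side condition $\alpha\notin\fv(T)$ is used (it holds by freshness of $a$). The analogous clause (iii)/(iv) bookkeeping for \rlnm{$\forall$-Elim-L} requires placing $x'$ to the left of $\Delta_2$; Exchange handles this as well. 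I expect the rest to be routine.
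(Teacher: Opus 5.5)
Your plan is the paper's: a simultaneous induction on the derivation, in which the only substantive leaves are the \rlnm{Id} cases (with weakening when the variable is $x$ itself). The strong-negation rules just switch clauses, and everything else is reassembled from the induction hypotheses. You are also right that \rlnm{$\forall$-Elim-L} does not follow ``immediately'' in clause (i); the paper's proof passes over this.

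However, the repair you propose for that case fails. The rule's second premise is $\Gamma_1,x\ty:A,\Gamma_2,a\ty:K;\,N':C \types x'\ty:B',\Delta$, and $B'$ is only guaranteed to be kinded in $\Gamma_1,\Gamma_2,a\ty:K$. It may mention type variables declared in $\Gamma_2$: take $\Gamma_2 = b\ty:\starsort$ and eliminate a refutation of $\PiType{a}{\starsort}{b \to a}$, so that $B' = b \to a$. Your weakened hypothesis $\Gamma_1,a\ty:K \types N:A;\,x'\ty:B',\Delta$ would then require $\types \Gamma_1,a\ty:K,\Delta^R,x'\ty:B'$, which is false here. So the induction hypothesis cannot be invoked with $\Gamma_1,a\ty:K$ as the prefix.

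There are two ways to fix this.
\begin{enumerate}
\item[(a)] Move \emph{every} type-variable declaration of $\Gamma_2$, not just $a$, in front of $x\ty:A$. This is always legal because kinds contain no type or term variables; that, rather than the freshness of $a$, is why Exchange's side condition holds. You then apply the hypothesis to an exchanged derivation rather than a subderivation, so the induction must be on height (Exchange is height-preserving).
\item[(b)] More cleanly, strengthen clauses (i)/(ii) so that the substituted proof is $\Gamma_1,\Gamma_2 \types N:A;\,\Delta$. The stated form follows by one initial weakening, using $\types \Gamma_1,\Gamma_2,\Delta^R$ obtained from the first premise. In this form the \rlnm{$\forall$-Elim-L} case needs only weakening by $a\ty:K$ and $x'\ty:B'$, and the \rlnm{Id-R} case with $y = x$ is literally the hypothesis. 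Plain structural induction then suffices.
\end{enumerate}

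Two smaller points. First, the paper's Weakening lemma only covers term-variable typings, so weakening by $a\ty:K$ (and by the kind declarations in $\Gamma_2$ in your \rlnm{Id-R} case) needs its routine extension. Second, in clause (iii) no exchange is needed for \rlnm{$\forall$-Elim-L}: $x'\ty:B'$ already sits at the front of $\Delta_2$, and only $\Gamma;\,N:A \types \Delta_1$ has to be weakened by $a\ty:K$.
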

\begin{proof}
  Clauses (i),(ii) are symmetrical with (iii),(iv) so we only do the former.  The proof is by induction on the derivation, but most of the cases follow immediately from the induction hypothesis, so we only show the more interesting cases below.
  \begin{description}
    \item[\rlnm{Id-L}] In this case, we aim to show (i).  We have that $M$ is a variable $x'$ and $x'\ty:B \in \Delta$.  It follows, therefore, from the premise $\types \Gamma,\Delta^R$ that $x \neq x'$.  Hence, $x'[N/x] = x$ and so the result follows again from \rlnm{Id-L}.
    \item[\rlnm{Id-R}] In this case, we aim to show (ii).  We have that $M$ is a variable $x'$ and $x'\ty:B \in \Gamma$ and $\Gamma = \Gamma_1,x:A,\Gamma_2$.  If $x=x'$ then, by the premise $\types \Gamma,\Delta^R$ it follows that $A=B$.  By definition, $x'[N/x'] = N$ and so we have $\Gamma_1,\Gamma_2 \types N:A;\,\Delta$ by weakening the assumption.  If $x \neq x'$ then, $x'[N/x] = x'$ and so the result follows immediately by \rlnm{Id-R}.
    \item[\rlnm{$\sneg{}$-Intro-L}] In this case, we aim to show (i).  We have that $B$ is of shape $\sneg{C}$.  It follows from the induction hypothesis, case (ii), that $\Gamma_1,\Gamma_2 \types M[N/x] : C;\,\Delta$ and so the result follows from \rlnm{$\sneg{}$-Intro-L}.
    \item[\rlnm{$\sneg{}$-Intro-R}] In this case, we aim to show (ii).  We have that $B$ is of shape $\sneg{C}$.  It follows from the induction hypothesis, case (i), that $\Gamma_1,\Gamma_2;\,M[N/x] : C \types \Delta$ and so the result follows from \rlnm{$\sneg{}$-Intro-R}.
  \end{description}
\end{proof}

\begin{utheorem}[Restatement of Preservation, Theorem~\ref{thm:hol2-preservation}]
  Both of the following:
  \begin{itemize}
    \item If $\Gamma;\,M : A \types \Delta$ and $M \ped N$ then $\Gamma;\,N:A \types \Delta$
    \item If $\Gamma \types M : A;\,\Delta$ and $M \ped N$ then $\Gamma \types N:A;\,\Delta$
  \end{itemize}
\end{utheorem}
\begin{proof}
  We prove both parts simultaneously by induction on the derivations.
  \begin{description}
    \item[\rlnm{Id-L},\rlnm{Id-R}] Subject cannot reduce.
    \item[\rlnm{Conv}] The result follows immediately from the induction hypothesis.
    \item[\rlnm{$\forall$-Intro-L}] In this case, $M$ is of shape $\pack_K\,[\tyabs{a}{K}{B}]\,[A']\,N$ and so, the only reduction may be via a reduction step of $N$.  In this case, the result follows immediately from the induction hypothesis.
    \item[\rlnm{$\forall$-Intro-R}] In this case, $M$ is of shape $\Tyabs{a}{K}{M'}$ and the only reduction may be via a reduction step of $M'$.  Therefore, the result follows from the induction hypothesis.
    \item[\rlnm{$\forall$-Elim-L}] In this case, $M$ is of shape $\unpack_K\,[\tyabs{a}{K}{B}]\,M'\,[A]\,(\Tyabs{a}{K}{\tyabs{x}{B}{N}})$.  If the reduction step is via $M'$ or $N$, in both cases the result follows from the induction hypotheses.  Otherwise, the whole term may be a redex when $M'$ is of shape $\pack_K\,[\tyabs{a}{K}{D}]\,[E]\,N'$.  By inspection of the rule, $\Gamma,\,a\ty:K;\,N:A \types x\ty:B,\,\Delta$. By Inversion (Lemma~\ref{lem:2hol-inversion}(7)), $D \conv B$ and $\Gamma \types E : K$ and $\Gamma;\;N' : D[E/a] \types \Delta$.  By the compatibility of conversion, therefore $\Gamma;\;N':B[E/a] \types \Delta$.  Hence, it follows from the type substitution lemma (Lemma~\ref{lem:2hol-ty-subst}) and the freshness of $a$ that $\Gamma;\,N[E/a]:A \types x\ty:B[E/a]\,\Delta$.  Then it follows from the term substitution lemma (Lemma~\ref{lem:2hol-term-subst}) and the freshness of $x$ that $\Gamma;\,N[E/a][N'/x] : A \types \Delta$, as required.
    \item[\rlnm{$\forall$-Elim-R}] In this case, $M$ is of shape $M'\,[A']$ and $A$ is of shape $B[A'/a]$.  We may assume, by the variable convention, that $a$ does not occur in $\Gamma$ or $\Delta$.  In case the reduction is a step inside $M$, the result follows from the induction hypothesis.  Otherwise, it must be that $M'$ is a type abstraction $\Tyabs{a}{K}{Q}$ and $N$ is $Q[A'/a]$.  It follows from Inversion (Lemma~\ref{lem:2hol-inversion}) and the first premise that $\Gamma,a\ty:K \types Q : B$.  By the type substitution lemma (Lemma~\ref{lem:2hol-ty-subst}(ii)) and the second premise we obtain $\Gamma \types Q[A'/a] : B[A'/a];\,\Delta[A'/a]$.  However, $a$ was assumed not to occur in $\Delta$ and so the result follows.
    \item[\rlnm{$\to$-Intro-L}] In this case, $M$ is of shape $M' \cdot_{A,B} N'$ and the only reduction step may have occurred in either $M'$ or $N'$ separately.  In both cases, the result follows from the induction hypotheses.
    \item[\rlnm{$\to$-Intro-R}] In this case, $M$ is of shape $\tyabs{x}{A'}{M'}$ and $A$ of shape $A' \to B$.  The reduction step must have occurred inside $M'$ and so the result follows from the induction hypothesis.
    \item[\rlnm{$\to$-Elim-L$_i$}] Both subcases are similar, so we treat $i=1$.  Here, $M$ is of shape $\iota_1\,[A_1]\,[A_2]\,M'$ and $A$ is $A_1$.  If the reduction step was internal to $M'$ then the result follows from the induction hypothesis.  Otherwise, $M'$ must itself be of shape $M_1 \cdot_{B_1,B_2} M_2$ and $N$ therefore $M_1$.  By inversion (Lemma~\ref{lem:2hol-inversion}), and the first premise, $A_1 \conv B_1$, $A_2 \conv B_2$ and $\Gamma \types M_1 : A_1;\,\Delta$, as required.
    \item[\rlnm{$\to$-Elim-R}] In this case, $M$ is of shape $M'\,N'$ and the first premise of shape $\Gamma \types M' : A' \to A$.  If the reduction was internal to $M'$ or $N'$, then the result follows from the induction hypothesis.  Otherwise, $M'$ must be of shape $\tyabs{x}{A''}{P}$ and $N$ is therefore $P[N'/x]$.  It follows from the first premise and inversion (Lemma~\ref{lem:2hol-inversion}) that $A'' \conv A'$ and $\Gamma,x\ty:A'' \types P : A;\,\Delta$.  By the second premise and \rlnm{Conv}, $\Gamma \types N' : A'';\,\Delta$.  Then the result follows by substitution (Lemma~\ref{lem:2hol-term-subst}(ii)).
    \item[\rlnm{$\sneg{}$-Intro-L},\rlnm{$\sneg{}$-Intro-R},\rlnm{$\sneg{}$-Elim-L},\rlnm{$\sneg{}$-Elim-R}] In these cases, the result follows from the induction hypothesis.
  \end{description}
\end{proof}
}{%
}
\end{document}
\endinput